\tikzset{mydeco/.style={decoration={random steps,segment length=.08em,amplitude=.05em},decorate,line cap=round}}
\definecolor{darkblue}{rgb}{0.0, 0.0, 0.55}
\def\starn{{\bigstar_{n}}}
\newtheorem*{rep@thm}{\rep@title}
\newcommand{\newreptheorem}[2]{%
	\newenvironment{rep#1}[1]{%
		\def\rep@title{#2 \ref{##1}}%
		\begin{rep@thm}}%
		{\end{rep@thm}}}
\def\la{\lambda}
\DeclareMathOperator{\spec}{spec}
\DeclareMathOperator{\EE}{E}
\DeclareMathOperator{\VV}{V}
\DeclareMathOperator{\het}{ht}
\DeclareMathOperator{\id}{id}
\DeclareMathOperator{\Tr}{tr}
\newcommand{\CR}[1]{{\color{red} #1}}
\newcommand{\floor}[1]{\left\lfloor #1 \right\rfloor}
\def\la{\lambda}
\def\bal{\tikz[baseline=(char.base)]{
  \node[scale=.6,opacity=0.6] (char) {\faBalanceScale}}
}
\def\lab{\lambda_{\!\bal}}
\numberwithin{equation}{section}
\newcommand{\RR}{\mathbb R}
\newcommand{\NN}{\mathbb N}
\newcommand{\N}{\mathbb N}
\newcommand{\ZZ}{\mathbb Z}
\newcommand{\CC}{\mathbb C}
\newcommand{\cA}{\mathcal A}
\newcommand{\cB}{\mathcal B}
\newcommand{\cF}{\mathcal F}
\newtheorem{theorem}{Theorem}[section]
\newtheorem{corollary}[theorem]{Corollary}
\newtheorem{lemma}[theorem]{Lemma}
\newtheorem{proposition}[theorem]{Proposition}
\newtheorem{conjecture}[theorem]{Conjecture}
\theoremstyle{definition}
\newtheorem{definition}[theorem]{Definition}
\newtheorem{remark}[theorem]{Remark}
\newtheorem{example}[theorem]{Example}
\begin{document}

\title{Quantum Max $d$-Cut via qudit swap operators}

\author[I. Klep]{Igor Klep}
\address{Faculty of Mathematics and Physics, Department of Mathematics,  University of Ljubljana \& 
FAMNIT, University of Primorska, Koper \&
Institute of Mathematics, Physics and Mechanics, Ljubljana, Slovenia}
\email{igor.klep@fmf.uni-lj.si}

\author[T. \v{S}trekelj]{Tea \v{S}trekelj}
\address{Department of Mathematics, FAMNIT, University of Primorska, Koper, Slovenia}
\email{tea.strekelj@famnit.upr.si}

\author[J. Vol\v{c}i\v{c}]{Jurij Vol\v{c}i\v{c}}
\address{Department of Mathematics, University of Auckland, Auckland, New Zealand}
\email{jurij.volcic@auckland.ac.nz}

\thanks{IK was supported by the Slovenian Research Agency program P1-0222 and grants J1-50002, N1-0217, J1-3004,
J1-50001, J1-60011, J1-60025.
Partially supported by the Fondation de l'École polytechnique as part of the Gaspard Monge Visiting Professor Program.
IK thanks École Polytechnique and Inria for hospitality during the preparation of this manuscript.
IK's work was performed within the project
COMPUTE, funded within the QuantERA II Programme that has
received funding from the EU’s H2020 research and innovation
programme under the GA No 101017733. {\normalsize\euflag}\\
TŠ was supported by the Slovenian Research Agency grant J1-60011.\\
JV was supported by the NSF grant DMS-2348720.}

\subjclass[2020]{46N50, 05E15, 20C30, 81P45, 81R05, 90C22}

\keywords{Quantum max cut, qudit, swap operators, local Hamiltonian problem, semidefinite program, symmetric group, noncommutative polynomial optimization}

\begin{abstract}
Quantum Max Cut (QMC) problem for systems of qubits is an example of a 2-local Hamiltonian problem, and a prominent paradigm in computational complexity theory. This paper investigates the algebraic structure of a higher-dimensional analog of the QMC problem for systems of qudits. The Quantum Max $d$-Cut ($d$-QMC) problem asks for the largest eigenvalue of a Hamiltonian on a graph with $n$ vertices whose edges correspond to swap operators acting on $(\CC^d)^{\otimes n}$.
The algebra generated by the swap operators is identified as a quotient of a free algebra modulo symmetric group relations and a single additional relation of degree $d$. This presentation leads to a tailored hierarchy of semidefinite programs,
leveraging noncommutative polynomial optimization (NPO) methods,
that converges to the solution of the $d$-QMC problem. 
For a large class of complete bipartite graphs, exact solutions for the $d$-QMC problem are derived using the representation theory of symmetric groups and Littlewood-Richardson coefficients. 
Lastly, the paper addresses a refined $d$-QMC problem focused on finding the largest eigenvalue within each isotypic component (irreducible block) of the graph Hamiltonian. 
It is shown that the spectrum of the star graph Hamiltonian distinguishes between isotypic components of the $3$-QMC problem. 
For general $d$, low-degree relations for separating isotypic components are presented, enabling adaptation of the global NPO hierarchy to efficiently compute the largest eigenvalue in each isotypic component.
\end{abstract}

\maketitle

\setstretch{1.2}

\tableofcontents

\setstretch{1.09}

\newpage 

\section{Introduction}

The local Hamiltonian problem
is a renowned problem in quantum computational complexity theory. It involves determining the largest (or smallest) eigenvalue of a given self-adjoint matrix $H.$ The input matrix $H$ acts on a space of $n$ qubits and is hence of size $2^n \times 2^n.$ It is expressed as a sum of \textit{local} terms, i.e., for a chosen $k\leq n,$
$$
H = \sum_{\substack{S \subseteq \{1,\ldots,n\} \\[.5mm] |S| = k}} H_S.
$$
Here each $H_S$ acts nontrivially only on a subset (of $S$) of at most $k$ qubits. Such an $H$ is called a $k$-local Hamiltonian.

The general $k$-local Hamiltonian problem is hard to solve; in fact, it belongs to the Quantum Merlin Arthur (QMA)-hard complexity class \cite{KSV02,KKR06}, which is a quantum analog of the NP-hard class. Hence, it is easier to approach by considering its specific instances, either by computing exact arithmetic solutions \cite{LM62} or designing efficient (polynomial-time) high-precision algorithms to approximate the largest eigenvalue \cite{LVV15}. Additional work was done on approximating the maximum eigenvalue up to a constant factor \cite{GK12,BH13,BGKT19,HM17}, and exploring hardness of computing ground space properties \cite{GH}.

We investigate generalizations of the Quantum Max Cut (QMC) problem, which is a special instance of the $2$-local Hamiltonian problem, and was named by Gharibian and Parekh \cite{GP19} as a quantum analog of the classical Max Cut problem for the Ising model (Section \ref{sec:clasMC}). The QMC problem naturally arises in physics as it 
seeks the ground state energy of
the anti-ferromagnetic Heisenberg 
model for a system of interacting particles. 
The latter is used to describe magnetic properties of insular crystals, under the assumption that only the interactions of neighbor electrons in a lattice are significant (2-locality) \cite{Aue,BDZ}.
The QMC problem has recently become popular within the field of computational complexity theory. It is a simple prototype of a QMA-complete problem \cite{PM17} and can hence be used for designing approximation algorithms to solve other QMA-hard problems \cite{AMG20,PT21,PT22,Lee22,Kin23}. Arithmetic solutions to the QMC problem are known for certain families of graphs, such as complete bipartite graphs \cite{LM62} and one-dimensional chains \cite{LM16}.
More recently, second order cone relaxations of the QMC problem capable of providing approximations for large graphs were introduced \cite{HTPG}, and approximation algorithms tailored to triangle-free and  bipartite graphs were designed and analyzed \cite{gribling}.

The  main objects used to define the QMC Hamiltonian are the \textbf{Pauli matrices} 
\begin{equation}\label{eq:pauli}\tag{Pauli}
\sigma_X = 
\begin{bmatrix}
0 & 1 \\
1 & 0
\end{bmatrix},
\quad
\sigma_Y = 
\begin{bmatrix}
0 & -\mathfrak{i} \\
\mathfrak{i} & 0
\end{bmatrix},
\quad
\sigma_Z = 
\begin{bmatrix}
1 & 0 \\
0 & -1
\end{bmatrix}.
\end{equation}
Together with the identity $\sigma_I:=I,$ they form a basis for $M_2(\CC).$ For fixed $n$ let
\[\sigma_W^k = \underbrace{I_{2} \otimes \cdots\otimes I_2}_{k-1} \otimes\, \sigma_W \otimes \underbrace{I_{2} \otimes \cdots\otimes I_2}_{n-k} \in M_{2}(\CC)^{\otimes n}=M_{2^n}(\CC)\]
with $W\in\{X,Y,Z\}$ and $k\in\NN.$ Now 
\begin{equation}\label{eq:basis2}
    \{\sigma^1_{W_1} \sigma^2_{W_2}\cdots \sigma^n_{W_n} \mid W_j \in \{I,X,Y,Z\} \}
\end{equation}
is a basis for $M_{2^n}(\CC).$

A QMC Hamiltonian pertains to a given graph $G$ on say $n$ vertices. We denote by V$(G)$ the vertex set of $G$ and by E$(G)$ the edge set of $G.$ 

\begin{definition}
    Let $G$ be a graph on $n$ vertices and edge weights $\{w_{ij} \ | \ (i,j) \in \textnormal{E}(G)\}.$  The Quantum Max Cut (QMC) Hamiltonian is defined as
    \begin{equation}\label{eq:qmc}\tag{$H_G$}
         H_G = \sum_{(i,j) \in \EE(G)} w_{ij}\big(I - \sigma_X^i \sigma_X^j - \sigma_Y^i \sigma_Y^j - \sigma_Z^i \sigma_Z^j\big)\in M_{2^n}(\CC)_{\rm sa}.
    \end{equation}
    The Quantum Max Cut (QMC) problem is about finding the largest eigenvalue of the QMC Hamiltonian $H_G$; that is, the ground state energy of $-H_G$. 
\end{definition}

\subsection{Connection to the classical Max Cut}\label{sec:clasMC}
The QMC problem is named after the classical Max Cut (MC) problem \cite{BPT13}
of partitioning the vertices of a given graph into two sets such that the number or weight of the edges between the two sets is maximized. Equivalently, if the given graph $G$ has edge set $\EE(G)$ and edge weights $w_{ij}\geq0,$ maximize
$$
\sum_{(i,j) \in \EE(G)} w_{ij}\frac{1-x_ix_j}{2}
$$
over all possible evaluations at $x_i \in \{\pm 1\}.$ 
Note that the MC problem is equivalent to the ``diagonal'' modification of the QMC problem, where the $\sigma_X^i \sigma_X^j$ and $\sigma_Y^i \sigma_Y^j$ terms in \eqref{eq:qmc} are dropped.
Alternatively, while the QMC problem seeks the ground state energy of the Heisenberg XXX model, classical MC problem seeks the ground state energy of the Ising model (without an external field).

Solving the MC problem in general is NP-hard, thus several approximation algorithms were developed. 
The most famous approximation algorithm is by Goemans and Williamson \cite{GW95}, and is based on semidefinite programming (SDP) \cite{BPT13}. 
It can be understood as the first level of
Lasserre's Moment-SOS (Sum-of-Squares) hierarchy of SDP relaxations \cite{Las} (see also \cite{Lau,HKL,Nie23}) that give a converging sequence of upper bounds to the exact solution of the MC problem. 
Raghavendra \cite{Rag08,Rag09} showed, assuming the Unique Games Conjecture of Khot \cite{Kho02}, that no polynomial-time algorithm for the MC problem is better than the Goemans-Williamson algorithm (unless P$=$NP).

\subsection{Quantum Max Cut}
To tackle the QMC problem, the algebraic structure of the QMC Hamiltonian 
is investigated in \cite{BCEHK24,TRZ}. This approach starts by rephrasing $H_G$ in terms of the {swap matrices} Swap$_{ij}.$\footnote{Physics literature often calls these SWAP or exchange operators \cite{NC}.
} 

\begin{definition}
For fixed $n$ and  $1\leq i < j \leq n,$   the swap matrix \textnormal{Swap}$_{ij} \in M_{2^n}(\CC)$ is defined by sending any rank one tensor
    \[v_1\otimes \cdots \otimes v_i \otimes \cdots \otimes v_j \otimes \cdots \otimes v_n\in(\CC^2)^{\otimes n}\phantom{,}\] 
        to the rank one tensor
        \[
        v_1\otimes \cdots \otimes v_j \otimes \cdots \otimes v_i \otimes \cdots \otimes v_n\in(\CC^2)^{\otimes n},\] 
        where $v_k\in\CC^2$.
\end{definition}

One can directly compute that any swap matrix Swap$_{ij}$ is expressed in terms of the Pauli matrices as 
\begin{equation}\label{eq:swapviap}
        \textnormal{Swap}_{ij} = \frac{1}{2} (I + \sigma_X^i \sigma_X^j + \sigma_Y^i \sigma_Y^j + \sigma_Z^i \sigma_Z^j).
\end{equation}
Using \eqref{eq:swapviap}, the QMC Hamiltonian  \eqref{eq:qmc} can be expressed in terms of the swap matrices rather than the Pauli matrices \eqref{eq:basis2}.

\begin{proposition}
    The QMC Hamiltonian from \eqref{eq:qmc} is given in terms of the swap matrices \textnormal{Swap}$_{ij}$ as
    \begin{equation}\label{eq:HamSw}
        H_G = \sum_{(i,j) \in \EE(G)} 2 w_{ij} (I - \textnormal{Swap}_{ij}).
    \end{equation}
\end{proposition}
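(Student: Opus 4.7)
The proposition is an immediate algebraic consequence of the Pauli-swap identity \eqref{eq:swapviap} that has just been stated, so my plan is essentially a one-line substitution, with the only nontrivial work being to justify \eqref{eq:swapviap} itself (which the paper describes as a direct computation).

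First I would verify \eqref{eq:swapviap}, since it is the engine of the proof. Because $\textnormal{Swap}_{ij}$ acts as the identity on the tensor factors outside positions $i$ and $j$, it suffices to prove the two-qubit identity $\textnormal{Swap} = \tfrac12(I\otimes I + \sigma_X\otimes\sigma_X + \sigma_Y\otimes\sigma_Y + \sigma_Z\otimes\sigma_Z)$ on $\CC^2\otimes\CC^2$ and then tensor with identities on the remaining factors. The two-qubit identity can be checked by evaluating both sides on the computational basis $\{|00\rangle,|01\rangle,|10\rangle,|11\rangle\}$; a direct $4\times 4$ matrix computation shows that the right-hand side is the permutation matrix exchanging $|01\rangle$ with $|10\rangle$ and fixing $|00\rangle,|11\rangle$, which is exactly $\textnormal{Swap}$.

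Next, I would rearrange \eqref{eq:swapviap} as
\[
\sigma_X^i \sigma_X^j + \sigma_Y^i \sigma_Y^j + \sigma_Z^i \sigma_Z^j \;=\; 2\,\textnormal{Swap}_{ij} - I,
\]
so that
\[
I - \sigma_X^i \sigma_X^j - \sigma_Y^i \sigma_Y^j - \sigma_Z^i \sigma_Z^j \;=\; 2(I - \textnormal{Swap}_{ij}).
\]
Substituting this termwise into the definition \eqref{eq:qmc} of $H_G$ and pulling the factor $2$ into the weights yields the claimed formula \eqref{eq:HamSw}.

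There is no real obstacle here beyond the bookkeeping of the tensor positions. The only step requiring any care is the verification of \eqref{eq:swapviap}, and even that reduces to a $4\times 4$ check; everything else is linearity of the sum and a sign count.
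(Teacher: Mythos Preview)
Your proposal is correct and matches the paper's approach exactly: the paper states \eqref{eq:swapviap} as a direct computation and then presents the proposition without further proof, since it follows by the substitution you describe. Your verification of \eqref{eq:swapviap} on the two-qubit factor and the subsequent termwise substitution into \eqref{eq:qmc} is precisely what is intended.
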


\subsection{Swap matrices on general qudit spaces}
In this article we consider the QMC problem
on qudits instead of qubits. As qudits store more information than qubits, systems of interacting qudits are a natural framework for quantum computing with less resources \cite{WHSK}.
Here, the swap matrices Swap$_{ij}^{(d)}$  act on $(\CC^d)^{\otimes n}$ for some $d\geq 2.$ In analogy with the $d=2$ case,  they act as transpositions on $n$-qudit states.

\begin{definition} For fixed $n$ and $1\leq i < j \leq n,$ the (\textbf{qudit}) \textbf{swap matrix} Swap$_{ij}^{(d)}$  is defined by its action on rank one tensors as
	$$
	\text{Swap}_{ij}^{(d)}(v_1 \otimes \cdots \otimes v_i \otimes \cdots \otimes v_j \otimes \cdots \otimes v_n) = v_1 \otimes \cdots \otimes v_j \otimes \cdots \otimes v_i \otimes \cdots \otimes v_n
	$$
 for any $v_1, \ldots, v_n \in \CC^d.$
\end{definition}

	The action of swap matrices on qudits yields a representation $\rho_n^{(d)}$ of $S_n$ on $(\mathbb{C}^d)^{\otimes n}$ defined by
	$$
	\rho_n^{(d)}(\pi) (v_1 \otimes \cdots \otimes v_n) = v_{\pi^{-1}(1)} \otimes \cdots \otimes v_{\pi^{-1}(n)}.
	$$
	We denote the image $\rho_n^{(d)}\big(\CC[S_n] \big),$ which is a subalgebra of $M_{d^n}(\CC),$ by $M^{\text{Sw}_d}_n(\mathbb{C}).$ It is called the \textbf{$d$-swap algebra}. Guided by the expression \eqref{eq:HamSw} of the QMC Hamiltonian in terms of the swap matrices, one can define the Quantum Max $d$-Cut Hamiltonian via the qudit swap matrices Swap$^{(d)}_{ij}.$

\begin{definition}
     Let $G$ be a graph on $n$ vertices and edge weights $\{w_{ij} \ | \ (i,j) \in \textnormal{E}(G)\}.$  The \textbf{Quantum Max $d$-Cut} (\textbf{$d$-QMC}) \textbf{Hamiltonian} is defined as
    \begin{equation}\label{eq:qmcd}\tag{$H_G^d$}
        H_G^{d} = \sum_{(i,j) \in \EE(G)} 2 w_{ij} \left(I - \textnormal{Swap}^{(d)}_{ij}\right).
    \end{equation}
\end{definition}

The \textbf{$d$-QMC problem} again asks for the largest eigenvalue of the $d$-QMC Hamiltonian $H_G^d$ in \eqref{eq:qmcd}. 
The problem is motivated by determining ground state energies of $\operatorname{SU}(d)$-Heisenberg models on lattices \cite{KT,BAMC,PM21}.
While the QMC problem is the quantum analog of the classical MC problem, the $d$-QMC problem is the quantum analog of the $d$-MC problem pertaining to maximal $d$-colorable subgraphs, and the anti-ferromagnetic $d$-state Potts model \cite{FJ}. 
The $d$-QMC problem was first considered in this context by \cite{CJKKW23+} in 2023.
There the authors define the $d$-QMC Hamiltonian  with the use of the \textit{Gell-Mann matrices}, which are a generalization of the Pauli matrices to any size $d \times d.$ We give more insight into this approach 
and show that is equivalent to ours in Section \ref{sec:gm} below.

In order to develop an algebraic toolbox for solving the $d$-QMC problem, it is essential to determine the precise relations that define the $d$-swap algebra.
 Since the transpositions $(i,j)$ generate $S_n,$ the swap matrices generate $M^{\text{Sw}_d}_n(\mathbb{C}).$ Hence, similar to the transpositions, for distinct indices $i,j,k,l,$ the swap matrices satisfy the relations
	\begin{equation}\label{eq:snrel}
	\begin{split}
		(\text{Swap}^{(d)}_{ij})^2 &= 1, \\
		\text{Swap}^{(d)}_{ij}\,\text{Swap}^{(d)}_{kl} &= \text{Swap}^{(d)}_{kl}\,\text{Swap}^{(d)}_{ij},\\
		\text{Swap}^{(d)}_{ij}\,\text{Swap}^{(d)}_{jk} &= \text{Swap}^{(d)}_{ik}\,\text{Swap}^{(d)}_{ij} = \text{Swap}^{(d)}_{jk}\,\text{Swap}^{(d)}_{ik}.\\
	\end{split}
\end{equation}

For $d=2$, 
it is known (see \cite[Theorem 3.6]{BCEHK24} and \cite[Theorem 3.8]{TRZ}, and \cite[Theorem 11.6.1]{Probook} for general $d$) 
that
the swap matrices additionally satisfy the \textbf{degree-reducing relation}
\begin{equation}\label{eq:degred}
    \text{Swap}^{(2)}_{ij} \,\text{Swap}^{(2)}_{jk} + \text{Swap}^{(2)}_{jk} \,\text{Swap}^{(2)}_{ij} = \text{Swap}^{(2)}_{ij}+\text{Swap}^{(2)}_{jk}+\text{Swap}^{(2)}_{ik} - 1,
\end{equation}
and that the symmetric group relations \eqref{eq:snrel} together with the degree-reducing relation \eqref{eq:degred} precisely define $M^{\text{Sw}_2}_n(\mathbb{C}).$ In Section \ref{sec:qudit-rel} we show that the general swap matrices Swap$_{ij}^{(d)}$ are also characterized by a (slightly more complicated) degree-reducing relation;
see 
Proposition \ref{prop:degred}.

\subsection{Main results}

This paper applies the representation theory of the symmetric group $S_n$ to explore and take advantage of the algebraic structure and symmetries inherent to the $d$-QMC problem. 
Throughout the text, we refer to irreducible representations as \textbf{irreps}.
Our contributions are as follows.

\subsubsection{Defining relations of the $d$-swap algebra}

In Section \ref{s:quotalg}, we identify the $d$-swap algebra $M^{\text{Sw}_d}_n(\mathbb{C})$ as a quotient of the free algebra generated by the $\binom{n}{2}$ freely noncommuting variables $\text{swap}_{ij}$ for $1\le i<j\le n$.
For $k\in\N$ denote 
\begin{equation}\label{e:ck}
c_k=\sum_{\substack{
1\le i_0,\dots,i_k \le d\\[.5mm]
\text{pairwise distinct},\\[.5mm]
i_0<i_j\text{ for }j\ge1
}}
\text{swap}_{i_0i_1}\text{swap}_{i_0i_2}\cdots\text{swap}_{i_0i_k}.
\end{equation}
Theorem \ref{th:iso} below states that $M^{\text{Sw}_d}_n(\mathbb{C})$ is isomorphic to 
the quotient of the free algebra
$\CC\langle\text{swap}_{ij}\colon 1\le i<j\le n\rangle$ modulo the relations
\begin{equation}\label{eq:defeqs_intro}
\begin{split}
\text{swap}_{ij}^2 &= 1,\\
\text{swap}_{ij}\text{swap}_{jk} &= \text{swap}_{ik}\text{swap}_{ij} = \text{swap}_{jk}\text{swap}_{ik},\\
\text{swap}_{ij}\text{swap}_{kl} &= \text{swap}_{kl}\text{swap}_{ij},\\
c_d&=c_{d-1}-c_{d-2}+\cdots+(-1)^{d-1}c_1+(-1)^d.
\end{split}
\end{equation}
We acknowledge that this isomorphism may not be new to experts in representation theory, who will recognize the last equation in \eqref{eq:defeqs_intro} as the vanishing of an antisymmetrizer of $d+1$ vectors on $(\CC^d)^{\otimes n}$ 
{(e.g., \cite[Section 11.6]{Probook})}.
Nevertheless, in Section \ref{sec:qudit-rel} we provide an elementary and self-contained proof that the last relation in \eqref{eq:defeqs_intro} completely determines $M^{\text{Sw}_d}_n(\mathbb{C})$ as the quotient of the group algebra of $S_n$. To achieve this, the Schur-Weyl duality is invoked to assess the precise decomposition of $M^{\text{Sw}_d}_n(\mathbb{C})$ into irreps, as follows.
\begin{repthm}{th: s-w-swaps}
	The $d$-swap algebra $M^{\text{Sw}_d}_n(\mathbb{C})$ decomposes into a direct sum of simple algebras generated by the irreps $\rho_\lambda$ of $S_n$ corresponding to partitions of $n$ with at most $d$ rows,
	$$
	M^{\text{\textnormal{Sw}}_d}_n(\mathbb{C}) \cong \bigoplus_{\substack{\lambda\vdash n \\[.5mm] \het(\lambda)\le d}} \rho_{\lambda} (\mathbb{C}S_n).
	$$
\end{repthm}

\subsubsection{NPO hierarchy for the $d$-QMC problem}
A widely used approach for solving local Hamiltonian problems is through semidefinite programming (SDP) relaxations and noncommutative polynomial optimization (NPO) \cite{NPA08,NPA,DLTW,BKP16}.
While the $d$-QMC problem is already an SDP of the form
$$
\max_\rho \Tr (\rho H_G^d)\quad\text{ subject to }
\rho \succeq 0\text{ and }\Tr(\rho)=1,
$$
this formulation is hopeless for large $n$ because the semidefinite constraint is a $d^n\times d^n$ matrix.
Instead, one needs to explore the 2-locality of the $d$-QMC problem. As in the case of the classical MC, one can define a hierarchy of SDP relaxations which can be computed efficiently, i.e., in polynomial time, and give upper bounds to the true maximum eigenvalue of $H$
\cite{BH13, BGKT19, GP19, PT21, HO22}.
However, due to the exponential growth of the size of the matrices, only the first few levels are tractable. 

Having identified the $d$-swap algebra as a quotient of the free algebra $\CC\langle\text{swap}_{ij}\rangle$ in Section \ref{s:quotalg}, the $d$-QMC problem is written as a more efficient instance of a NPO problem in Section \ref{s:npo}. 
The $d$-QMC Hamiltonian $H^d_G$ is represented by an element $h_G\in \CC\langle\text{swap}_{ij}\rangle$, and 
its largest eigenvalue is
\begin{align*}
\alpha_*=&\min\left\{
\alpha:\alpha-h_G\text{ is a sum of hermitian squares in } \CC\langle\text{swap}_{ij}\rangle \text{ modulo \eqref{eq:defeqs_intro}}
\right\} .
\end{align*}
By adapting the non-commutative Sum-of-Squares hierarchy (ncSoS) from \cite{BCEHK24}, we give a sequence of semidefinite programs (SDPs) whose solutions approximate $\alpha_*$ from above. This scheme is specifically tailored to the algebraic structure of the swap matrices defining the $d$-QMC Hamiltonian. 
Since the $d$-swap algebra satisfies the symmetric group relations,
this hierarchy is exact at level $n-1.$
For large graphs $G$, only the first few levels of the hierarchy are practical for computations. For this reason we also focus on low-degree relations of swap matrices, which play a role in the construction of the SDPs for the first two levels of the hierarchy. Appendices \ref{a:d-1} and \ref{a:3and4} provide explicit bases for products of swap operators of low degree.

\subsubsection{Exact solutions for cliques and star graphs}
In Section \ref{sec:6} we turn our attention to computing the exact solutions to the $d$-QMC problem for certain families of graphs. To achieve this, we explore the isotypic structure of $d$-QMC Hamiltonians. Given a partition $\lambda\vdash n$, the $\lambda$-block of a $d$-QMC Hamiltonian is its isotypic component corresponding to $\rho_\lambda$ under the isomorphism of Theorem \ref{th: s-w-swaps} above.
The $d$-QMC problem for cliques $K_n$ on $n$ vertices is easiest to address as the isotypic blocks of the corresponding $d$-QMC Hamiltonian are scalar matrices (see Lemma \ref{lemma scalar}). Let $\eta_\lambda$ denote the eigenvalue of the block corresponding to the partition $\lambda.$ The following theorem gives an explicit expression of $\lambda$ in terms of its rows $\lambda_1,\ldots,\lambda_d$, and identifies the partition $\lambda$ that maximizes $\eta_\lambda$; i.e., the solution to the $d$-QMC problem for an $n$-clique $K_n$ is computed.

\begin{theorem}\label{th:clique}
    For any $\lambda \vdash n$ with rows $\lambda_1\ge\cdots\ge \lambda_d\geq 1$,
\begin{equation*}
\eta_\lambda=
n^2 +\frac{d(d-1)(2d-1)}{6}
-\sum_{k=1}^d\big( \lambda_k - (k-1)\big)^2.
\end{equation*}
The maximum value of $\eta_\lambda$ among all partitions $\lambda\vdash n$ with $\het(\lambda)\leq d$ is obtained at
	\begin{equation*}
	\lambda=\Big( 
\underbrace{1+\frac{n-r}d, \ldots, 1+\frac{n-r}d}_{r},\ 
\underbrace{\frac{n-r}d, \ldots, \frac{n-r}d}_{d-r}
	\Big)
	\end{equation*}
	for $n\equiv r\mod d.$ The solution to the $d$-QMC problem for an $n$-clique hence equals 
 \begin{equation*}
	     n^2 + (d-1) n + r^2 - r(d+1) - \frac{n^2-r^2}{d}.
	\end{equation*}
\end{theorem}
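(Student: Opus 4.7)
The plan is to compute $\eta_\lambda$ via the central action of the class sum of transpositions, and then to solve the resulting combinatorial optimization.

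For the formula, the clique Hamiltonian with unit edge weights equals
$$
H_{K_n}^d \;=\; n(n-1)\,I \;-\; 2\,\rho_n^{(d)}(T),\qquad T=\sum_{1\le i<j\le n}(i,j)\in\CC[S_n].
$$
Since $T$ is the conjugacy class sum of all transpositions, it is central in $\CC[S_n]$, and therefore acts on each irrep $\rho_\lambda$ as a single scalar $\kappa_\lambda$ (Schur's lemma). By Lemma~\ref{lemma scalar} this passes to a scalar action on the $\lambda$-isotypic block of $H_{K_n}^d$, giving $\eta_\lambda=n(n-1)-2\kappa_\lambda$.

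The scalar $\kappa_\lambda$ is computed by the classical content formula: writing $T=\sum_{k=2}^n J_k$ for the Jucys--Murphy elements $J_k=\sum_{i<k}(i,k)$ and noting that their joint eigenvalues on a standard tableau of shape $\lambda$ are the cell contents $j-i$, one obtains
$$
\kappa_\lambda \;=\; \sum_{(i,j)\in\lambda}(j-i) \;=\; \tfrac12\sum_{k=1}^d\bigl(\lambda_k^2-(2k-1)\lambda_k\bigr).
$$
A direct algebraic manipulation, using $\sum_k\lambda_k=n$, expanding $(\lambda_k-(k-1))^2$, and summing $\sum_{k=1}^d(k-1)^2=d(d-1)(2d-1)/6$, rewrites $\eta_\lambda=n(n-1)-2\kappa_\lambda$ in the claimed closed form. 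The formula also extends to partitions of height strictly less than $d$ by padding with zeros.

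For the maximization, note that $\eta_\lambda$ equals the constant $n^2+d(d-1)(2d-1)/6$ minus $\sum_{k=1}^d(\lambda_k-(k-1))^2$, so I must minimize the latter. By Theorem~\ref{th: s-w-swaps} only partitions with $\het(\lambda)\le d$ contribute. The key substitution $\mu_k:=\lambda_k-(k-1)$ turns the ordering $\lambda_k\ge\lambda_{k+1}$ into $\mu_k\ge\mu_{k+1}+1$, i.e., into the statement that the $\mu_k$ are strictly decreasing integers; moreover $\sum_k\mu_k=n-\binom{d}{2}$ is a fixed constant. Thus the task becomes: minimize $\sum_k\mu_k^2$ over $d$-tuples of distinct integers with prescribed sum.

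This is a standard equidistribution problem. The elementary exchange inequality $(a-1)^2+(b+1)^2<a^2+b^2$ valid for $a-b\ge 2$ (under which the new values remain distinct) shows that any optimal tuple must be contained in a window of $d+1$ consecutive integers. An easy check (matching the prescribed residue $r$ modulo $d$) identifies the optimal $\mu$'s as exactly the $d$ elements of the window $\{q-d+1,\dots,q+1\}$ with $q=(n-r)/d$, omitting the value $q-r+1$; equivalently, the corresponding partition has $r$ rows of length $q+1$ and $d-r$ rows of length $q$, which is precisely the $\lambda$ stated in the theorem. A direct substitution of this $\lambda$ into the formula for $\eta_\lambda$ yields the final closed form. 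The mildly delicate step is this last optimization: because the $\mu_k$ are forced to be distinct \emph{integers}, the optimum is a discretized rather than exact equidistribution, and one must take care with the residue $r$ when $d\nmid n$.
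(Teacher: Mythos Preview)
Your proof is correct and parallels the paper's argument, but with somewhat different tools at each stage.

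For the formula, the paper (Proposition~\ref{prop:etaD}) cites Frobenius's character formula for transpositions together with an identity of Stanley to obtain the content sum; you instead invoke the Jucys--Murphy elements and the content formula directly. These are two standard routes to the same identity $\kappa_\lambda=\sum_{(i,j)\in\lambda}(j-i)$, so the difference is one of citation rather than substance. Your route is arguably more self-contained.

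For the optimization, the paper (Corollary~\ref{prop:69++}) works directly in the $\lambda$-coordinates: it first shows by an explicit box move that any maximizer has height exactly $d$, and then shows by another box move that $\lambda_1-\lambda_d\le 1$. Your substitution $\mu_k=\lambda_k-(k-1)$ is a genuine simplification: it turns the partition constraints into ``$d$ distinct integers with prescribed sum,'' absorbs the height-$<d$ case automatically (padding with zeros still gives distinct $\mu_k$), and reduces the problem to a standard equidistribution lemma. Both arguments are exchange arguments at heart, but yours is in cleaner coordinates.

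One imprecision worth tightening: your exchange $(a,b)\mapsto(a-1,b+1)$ with $a=\mu_1$, $b=\mu_d$ need not preserve distinctness, since $a-1$ or $b+1$ may already occur among the other $\mu_k$ (e.g.\ $\mu=(5,4,0)$). The clean fix is to exchange at the \emph{gaps}: if $\mu_1-\mu_d\ge d+1$ then $[\mu_d,\mu_1]$ contains at least two integers $g<G$ not in the tuple, and replacing $G+1\in S$ by $G$ and $g-1\in S$ by $g$ preserves distinctness and strictly decreases $\sum\mu_k^2$. You should also note (it is automatic but unstated) that the unconstrained optimum over distinct integers has $\mu_d=\lfloor n/d\rfloor+1-d\ge 1-d$, so the hidden constraint $\lambda_d\ge 0$ is not binding.
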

For the proof of Theorem \ref{th:clique} see Proposition \ref{prop:etaD}
and   Corollary \ref{prop:69++}.
We acknowledge that the solution to the $d$-QMC problem for an $n$-clique $K_n$ (the second part of Theorem \ref{th:clique}) has already been derived in \cite[Theorem 4.1]{grinko}, in the context of exchangeability (or clique graph extendibility) of Werner states. Nevertheless, the explicit formula for $\eta_\lambda$ in the first part of Theorem \ref{th:clique} is essential for tackling the $d$-QMC problem on a more general class of graphs.

Towards this goal, we refine a principle from \cite{BCEHK24} called \textit{clique decomposition}. 
It expresses the $d$-QMC Hamiltonian of a given graph as an alternating sum of Hamiltonians of cliques and simpler graphs in a way that is suitable for eigenvalue analysis. A graph with a simple clique decomposition is the star graph $\starn$ on $n$ vertices, on which we focus in Section \ref{ss:star}.
The relation
$$
 \starn = K_n - K_{n-1}
$$
holds, where the minus sign means deleting from $K_n$ the edges that appear in $K_{n-1}.$
This decomposition was used in \cite{BCEHK24} to show that the solution to the $2$-QMC problem for the star graph $\starn$ is $2n$, attained at the partition $\lambda=(n-1,1)$.
Extending this result, we solve the $d$-QMC problem for $\starn.$
\begin{repthm}{th:Eigstarn}
    If $\lambda = (\lambda_1,\ldots, \lambda_d) \vdash n$ has $d$ rows $\lambda_1 \geq \cdots \geq \lambda_d$, then the eigenvalues of the $\lambda$-block of the $d$-QMC Hamiltonian $H^d_\starn$  form a subset of 
$$
\{2(n-\lambda_1), \,2(n-\lambda_2+1),\, \ldots,\, 2(n-\lambda_d+d-1)\}
$$
containing the value $\eta_\star = 2(n-\lambda_d + d-1).$ Hence, the solution to the $d$-QMC problem for $\starn$ is $2(n + d-2),$ obtained by plugging $\lambda_d=1$ into $2(n-\lambda_d + d-1).$
\end{repthm}

\subsubsection{Exact solutions for complete bipartite graphs}

Star graphs are special cases of complete bipartite graphs.
In Section \ref{sec:bipartite}, we use the clique decomposition to exactly solve the $d$-QMC problem for complete bipartite graphs $K_{n-k,k}$ if $k\le 4$ or $d\le 3$.

The statement of Theorem \ref{th:bipintro} below involves three parameters associated to the problem data $(n,k,d)$: 
\begin{equation}\label{eq:defe0intro}
\begin{split}
    e_0 & =\max \left\{e\in\{1,\ldots,d-1\} \colon \floor{\frac{n-k}e} \geq \bigg\lceil\frac{k}{d-e}\bigg\rceil  \right\},\\
    e_1 & =\min \left\{e\in\{1,\ldots,d-1\} \colon
\floor{\frac{k}{d-e}} \geq 
     \bigg\lceil\frac{n-k}e\bigg\rceil \right\}, \\
    e^* & = \frac{d}{2} + \frac{n-2k}{2(q+1)}, \quad 
    q=\floor{\frac nd}.
\end{split}
\end{equation}
If the set in the definition of $e_1$ is empty
(e.g., $n=5$, $k=2$, $d=2$), we set $e_1=d-1$. 

A partition $\lambda= (\lambda_1,\ldots,\lambda_\ell)\vdash m$ is \textbf{balanced} if $\lambda_1-\lambda_\ell\leq1$ (e.g., the optimal partition in Theorem \ref{th:clique}); note that such $\lambda$ is uniquely determined by $m$ and $\ell$. 
A subpartition of $\lambda$ is a partition obtained from $\lambda$ by discarding some rows. Then,
\[
\begin{split}
    \mathfrak E & :=\{e \mid \text{the balanced partition of $n-k$ of height $e$} \text{ is a subpartition of } \\
    &  \phantom{:=\{e \mid \ }
\text{ the balanced partition of $n$ of height $d$}     \}
\end{split}
\]
is a discrete interval in $\{e_0, e_0+1,\ldots,e_1\}$
by Lemma \ref{lem:frakinterval}, and contains
$\{e_0+1, e_0+2,\ldots,e_1-1\}$ by Lemma \ref{lem:frakinterval+}.

\begin{theorem}\label{th:bipintro}
Let $2k\le n$.
\begin{enumerate}[\rm (1)]
\item
The solution to the $2$-QMC problem for $K_{n-k,k}$ is
$2k(n-k+1)$. The solution to the $3$-QMC problem for $K_{n-k,k}$ is $2 ( k +1) (n-k)$ if $n<3k$, and $2k(n-k+2)$ if $n\ge3k$.

\item Let $k\le 4$ and $d<n$.
Let $e_\star$ be the closest integer in 
$\mathfrak E$ to $e^*$. Then the solution to the $d$-QMC problem for $K_{n-k,k}$ is the biggest of the three values of
\[\eta_\la-\eta_\mu-\eta_\nu\]
for the balanced $\mu\vdash n-k$ of height $e$, the balanced $\nu\vdash k$ of height $d-e$, and $\la\vdash n$ obtained by merging (and sorting) $\mu$ and $\nu$, where $e$ is one of $e_0,e_\star,e_1$.
\end{enumerate}
\end{theorem}

The proof of Theorem \ref{th:bipintro} (see Corollaries, \ref{cor:d=2}, \ref{cor:d=3} and \ref{cor:dbip}) occupies the entire 
Section \ref{sec:bipartite}.

\begin{example}\label{exa:jakab}
Let us compare the $d$-QMC problem for complete bipartite graphs and the formula in Theorem \ref{th:bipintro} with the assertions in the earlier work \cite{jakab}. Therein, the authors investigate the extendibility of bipartite Werner states of local dimension $d$. The parameters $n_A=n-k$ and $n_B=k$ refer to the number of parties (divided into two groups pertaining to the original two parties) that share an extension of the original bipartite state. By \cite[Eq. (4)]{jakab}, the investigated problem reduces to finding the smallest eigenvalue $\alpha_{n-k,k,d}$ of
$$\frac{1}{k(n-k)}\sum_{\substack{i=1,\dots,k \\ j=k+1,\dots,n}} \textnormal{Swap}^{(d)}_{ij}.$$
If $\beta_{n-k,k,d}$ denotes the solution to the $d$-QMC problem for $K_{n-k,k}$, then $\beta_{n-k,k,d}$ is the largest eigenvalue of
$$H_{K_{n-k,k}}^{d} = \sum_{\substack{i=1,\dots,k \\ j=k+1,\dots,n}} 
2\left(I - \textnormal{Swap}^{(d)}_{ij}\right),$$
and therefore
\begin{equation}\label{e:comparison}
\beta_{n-k,k,d} = 2k(n-k)(1-\alpha_{n-k,k,d}).
\end{equation}
A formula for $\alpha_{k,n-k,d}$ is proposed in \cite[Eq. (16)]{jakab}. There may be a typographical lapse in \cite[Eq. (16)]{jakab}; the said formula involves the expression $\frac{n_A}{n_A+n_B}=\frac{n-k}{n}$, while the preceding paragraph in \cite{jakab} perhaps suggests that $\frac{d n_A}{n_A+n_B}=\frac{d(n-k)}{n}$ should be used in its place instead. 
Both variants of this formula yield values compatible with those from Theorem \ref{th:bipintro} for small values of parameters $n,k,d$. 
However, in general there are disparities, as follows.

Let $(n,k,d)=(6,3,4)$. This is a balancing triple, and $\beta_{n-k,k,d}=28$ by Theorem \ref{th:dbip}. 
More precisely, in this case we have $\mathfrak{E}=\{2\}$ and $(e_0,e^*,e_1)=(1,2,3)$, and the maximal value 28 is given by Theorem \ref{th:bipintro} at $\lambda=(2,2,1,1)$ and $\mu=\nu=(2,1)$.
Alternatively, this can be verified by a brute-force calculation of the largest eigenvalue of the $4096\times 4096$ Hamiltonian $H_{K_{n-k,k}}^{d}$. In contrast, both variants of \cite[Eq. (16)]{jakab} evaluate $\alpha_{n-k,k,d}$ as $-\frac13$, which contradicts the straightforward relation \eqref{e:comparison}.
\end{example}

\subsubsection{Separation of irreps} 

In Section \ref{sec:sep}, we refine the NPO hierarchy for the $d$-QMC problem, with the intention of calculating the maximum eigenvalue of the $\lambda$-block in $H_G^d$ corresponding to a given irrep of $S_n$ given by the partition $\lambda\vdash n$. The idea is to find low-degree polynomials that distinguish distinct irreps indexed by partitions with at most $d$ rows. 
In Theorem \ref{t:sep_all_rows} we show that irreps of $S_n$ are separated by the polynomials $c_k$ of degree $k$ from \eqref{e:ck}. This leads to an NPO formulation of the localized $d$-QMC problem.

\begin{theorem}\label{t:localized}
Let $\lambda\vdash n$ have at most $d$ rows.
There are constants $\gamma_1,\dots,\gamma_{d-1}\in\ZZ$ such that
the largest eigenvalue of the $\lambda$-block in $H_G^d$ equals
\begin{align*}
&\min\big\{
\alpha:\alpha-h_G\text{ is a sum of hermitian squares in } \CC\langle {\rm swap}_{ij}\rangle \\
&\qquad\qquad\qquad\qquad\text{ modulo \eqref{eq:defeqs_intro} and }c_1=\gamma_1,\, \dots,\, c_{d-1}=\gamma_{d-1}
\big\} .
\end{align*}
\end{theorem}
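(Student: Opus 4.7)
The plan is to combine three ingredients: the isotypic decomposition of the $d$-swap algebra from Theorem \ref{th: s-w-swaps}, the separation of irreps by the polynomials $c_k$ from Theorem \ref{t:sep_all_rows}, and the matrix-algebra characterization of the largest eigenvalue via sums of hermitian squares.

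First, I observe that, viewed inside $\CC S_n$ via the isomorphism of Theorem \ref{th:iso}, each $c_k$ is a class sum and hence central. Indeed, the product $\text{swap}_{i_0i_1}\cdots \text{swap}_{i_0i_k}$ corresponds to the $(k+1)$-cycle obtained by composing the transpositions $(i_0,i_1),\dots,(i_0,i_k)$, and as $(i_1,\dots,i_k)$ ranges over all orderings of the non-minimal elements of a fixed $(k+1)$-subset of $\{1,\dots,n\}$, each $(k+1)$-cycle in $S_n$ arises exactly once (distinct orderings produce distinct cycles starting with the fixed smallest element $i_0$, and a counting argument matches the number $k!$ of orderings with the number $k!$ of $(k+1)$-cycles on a given support). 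Hence $c_k=\sum_\sigma \sigma$, summed over all $(k+1)$-cycles, and by Schur's lemma $\rho_\mu(c_k)=\gamma_k(\mu)\cdot I$ with the central character $\gamma_k(\mu)=|C_{k+1}|\chi_\mu(c)/\dim\rho_\mu\in\ZZ$, using the integrality of symmetric-group characters together with the algebraic integrality of central characters. Set $\gamma_k:=\gamma_k(\lambda)$.

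Next, Theorem \ref{th: s-w-swaps} decomposes the $d$-swap algebra as the direct sum $\bigoplus_\mu \rho_\mu(\CC S_n)$ over partitions $\mu\vdash n$ with $\het(\mu)\le d$, and on each simple summand $c_k$ acts as the scalar $\gamma_k(\mu)$. By Theorem \ref{t:sep_all_rows}, the tuple $(\gamma_1(\mu),\dots,\gamma_{d-1}(\mu))$ determines $\mu$ uniquely, so imposing the relations $c_1=\gamma_1,\dots,c_{d-1}=\gamma_{d-1}$ annihilates every summand with $\mu\neq\lambda$ and acts trivially on the $\lambda$-summand. The resulting quotient is therefore isomorphic to the full matrix algebra $\rho_\lambda(\CC S_n)$, in which the image of $h_G$ is the $\lambda$-block of $H_G^d$. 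Finally, in any finite-dimensional semisimple $*$-algebra $A$ the spectral theorem yields $\min\{\alpha:\alpha-h\text{ is a sum of hermitian squares in }A\}=\lambda_{\max}(h)$ for every self-adjoint $h\in A$; applying this to $A=\rho_\lambda(\CC S_n)$ with $h$ the image of $h_G$ produces the NPO formula asserted by the theorem.

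The substantive input is Theorem \ref{t:sep_all_rows}; the identification of $c_k$ as the conjugacy class sum of $(k+1)$-cycles is a brief combinatorial verification, and the sum-of-squares equality is a classical consequence of the spectral theorem for matrix algebras. The only step requiring mild care is ensuring that the asserted constants satisfy $\gamma_k\in\ZZ$, which is exactly the integrality of symmetric-group central characters invoked above.
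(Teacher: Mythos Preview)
Your proposal is correct and follows essentially the same approach as the paper: identify $c_k$ with the class sum $q_{k+1}$ of $(k+1)$-cycles (so that $c_k$ acts as the scalar $\gamma_{k+1,\lambda}$ on the $\lambda$-block), invoke Theorem~\ref{t:sep_all_rows} to conclude that imposing $c_1=\gamma_1,\dots,c_{d-1}=\gamma_{d-1}$ cuts the $d$-swap algebra down to the single matrix block $\rho_\lambda(\CC S_n)$, and then use the C*-algebra characterization of the maximal eigenvalue via hermitian sums of squares exactly as in Section~\ref{s:npo}. Your justification of integrality via algebraic integrality of central characters combined with rationality of $S_n$-characters is a clean alternative to the paper's reference to Lassalle's formula.
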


The NPO problem in Theorem \ref{t:localized}  can be tackled with an NPA-like hierarchy of SDPs, and the values $\gamma_k$ can be evaluated using explicit Lassalle's character formulas for cycles in $S_n$ \cite{Lassalle}.

We also consider distinguishability of irreps of $S_n$ from the perspective of the $d$-QMC problem. As $d$-QMC Hamiltonians can only admit $\lambda$-blocks for $\lambda\vdash n$ with at most $d$ rows, one can only hope to distinguish such irreps through the $d$-QMC problem.
For $d=2$, it is known that the value $\eta_\lambda$ itself separates irreps with at most two rows \cite{BCEHK24}; in other words, the spectrum of the Hamiltonian for $K_n$ separates irreps with at most two rows.
This is not the case anymore for $d>2$.
However, we show that for $d=3$, the spectrum of the $3$-QMC Hamiltonian for $\starn$ separates irreps with at most three rows.

\begin{theorem}\label{th:sep}
Let $\lambda,\mu\vdash n$ be partitions with at most three rows.
Then $\lambda=\mu$ if and only if the spectra of the $\lambda$-block and the $\mu$-block of $H_{\starn}^3$ coincide.
\end{theorem}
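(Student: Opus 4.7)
The forward direction is immediate. For the converse, the plan is to first rewrite $H_{\starn}^3 = 2(n-1)\,I - 2\rho_n^{(3)}(X)$, where $X=\sum_{j=2}^n(1,j)\in\CC S_n$ is the sum of transpositions involving the index $1$; consequently, it suffices to show that the spectrum of $X$ acting on the irreducible $\rho_\lambda$ uniquely determines $\lambda$ among partitions of $n$ with $\het(\lambda)\le 3$.

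The first main step is to pin down this spectrum. The element $X$ centralizes the Young subgroup $S_{n-1}\le S_n$ fixing the index $1$, since conjugation by $S_{n-1}$ permutes the set $\{(1,j):j\ge 2\}$. By the classical branching rule, the restriction $\rho_\lambda|_{S_{n-1}}$ decomposes multiplicity-freely as $\bigoplus_{\nu\subset\lambda,\,|\nu|=n-1}\rho_\nu$, so by Schur's lemma $X$ acts on each summand $\rho_\nu$ as a scalar $x_\nu$. To identify $x_\nu$, use the identity $T_n=X+T_{n-1}$, where $T_m$ is the sum of all transpositions of $S_m$, together with the well-known fact that $T_m$ acts on $\rho_\kappa$ as multiplication by the content sum $c(\kappa)=\sum_{(i,j)\in\kappa}(j-i)$. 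This yields $x_\nu=c(\lambda)-c(\nu)$, the content of the unique box in $\lambda\setminus\nu$. Hence the eigenvalues of $X$ on $\rho_\lambda$ are exactly the contents $\lambda_k-k$ at the removable corners of $\lambda$ (rows with $\lambda_k>\lambda_{k+1}$), each with multiplicity $f_{\lambda^{(k)}}$, where $\lambda^{(k)}$ denotes $\lambda$ with the $k$-th removable box deleted.

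The remaining task is combinatorial: show that for partitions of $n$ with $\het(\lambda)\le 3$, the set of these eigenvalues together with $n$ uniquely pins down $\lambda$. The number of distinct eigenvalues equals the number of distinct positive parts of $\lambda$, giving three mutually exclusive cases. In the rectangle case, the candidates $(n)$, $(\tfrac{n}{2},\tfrac{n}{2})$, $(\tfrac{n}{3},\tfrac{n}{3},\tfrac{n}{3})$ have single eigenvalues $n-1$, $\tfrac{n}{2}-2$, $\tfrac{n}{3}-3$, which cannot coincide for any positive $n$. For three distinct parts $\lambda=(a,b,c)$ with $a>b>c\ge 1$, the eigenvalues $a-1>b-2>c-3$ already sort in decreasing order and recover $(a,b,c)$. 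The main technical subcase is two distinct parts, with three subtypes $(a,b)$, $(a,b,b)$, $(a,a,b)$: given eigenvalues $\alpha>\beta$, these subtypes are parametrized by $(a,b)=(\alpha+1,\beta+2)$, $(\alpha+1,\beta+3)$, $(\alpha+2,\beta+3)$, yielding $n=\alpha+\beta+3$, $\alpha+2\beta+7$, $2\alpha+\beta+7$ respectively; a short check on the valid range $\alpha\ge 0$, $\beta\ge -2$, $\alpha>\beta$ shows these three values of $n$ are pairwise distinct, so $(n,\alpha,\beta)$ identifies the subtype and hence $\lambda$.

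The main obstacle I anticipate is the exact identification of the spectrum (with multiplicities) via the centralizer/branching argument sketched above, since Theorem \ref{th:Eigstarn} on its own only gives a containment; the subsequent combinatorial case analysis is then routine arithmetic.
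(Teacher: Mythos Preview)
Your proposal is correct and follows essentially the same route as the paper's proof (Proposition~\ref{p:sep3}): compute the spectrum of $H_\starn^\lambda$ via the branching rule, then perform a case analysis on the number of distinct eigenvalues. Your identification of the eigenvalues through the identity $T_n=X+T_{n-1}$ and the content formula is precisely the clique decomposition $\starn=K_n-K_{n-1}$ of Theorem~\ref{th:Eigstarn}, rephrased in Jucys--Murphy language; the resulting eigenvalues $\lambda_k-k$ differ from the paper's $\lambda_k-(k-1)$ only by your choice of normalization $X=(n-1)I-\tfrac12 H_\starn$ versus the paper's $nI-\tfrac12 H_\starn$. Your explicit treatment of the two-eigenvalue case, distinguishing the subtypes $(a,b)$, $(a,b,b)$, $(a,a,b)$ by computing the three corresponding values of $n$ and checking they are pairwise distinct, is more detailed than the paper's rather terse appeal to~\eqref{e:samen}, and this added explicitness is a genuine improvement in readability. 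One cosmetic point: the paper labels the central vertex of $\starn$ as $n$ rather than $1$, so your $X$ should strictly be $\sum_{j=1}^{n-1}(j,n)$, but this is only a relabeling.
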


See Proposition \ref{p:sep3} for the proof.
For general $d,n\in\N$, we leave it as an open problem whether there exist graphs $G_1,\dots,G_\ell$ on $n$ vertices such that for all $\lambda,\mu\vdash n$ with at most $d$ rows, $\lambda=\mu$ if and only if the spectra of the $\lambda$-block and the $\mu$-block of $H_{G_i}^d$ coincide for all $i=1,\dots,\ell$.

\subsection{Comparison with the work of Carlson, Jorquera, Kolla, Kordonowy, Wayland}\label{sec:gm} The $d$-QMC problem was  introduced in \cite{CJKKW23+}, where it was defined via a generalization of the Pauli matrices to arbitrary size $d\times d,$ called {Gell-Mann matrices}. 

\subsubsection{The Gell-Mann matrices} \label{subsec:gm} For each $d \ge 2,$ there is a family of $d^2-1$ trace zero self-adjoint matrices, which, together with the identity $I_d,$ form a basis for $M_d(\CC).$
More concretely, for $d=2$ these are the Pauli matrices, and for  $d\ge3$ there are three kinds of $d \times d$ Gell-Mann matrices (see Appendices \ref{subsec: gm3} and \ref{subsec: gm4}, where these matrices are given explicitly for $d=3$ and $d=4$, respectively):
\begin{equation}
\begin{split}
	\text{symmetric:}\quad \quad \quad \ \  &\lambda_{a,b}^{\text{sym}_d} = E_{a,b} + E_{b,a},\quad \quad \quad \ 1\leq a<b \leq d, \\
	\text{antisymmetric:} \quad \quad&\lambda_{a,b}^{\text{asym}_d} = \mathfrak{i}(E_{b,a} - E_{a,b}),\quad \quad 1\leq a<b \leq d, \\
	\text{diagonal:}\quad \quad \quad \quad \ \,
	&\lambda_k^d = \lambda_k^{d-1} \oplus 0, \quad \quad \quad \quad \quad \ \ 2\leq k < d,\\
	&\lambda_d^d = \sqrt{\frac{2}{d(d-1)}}\big(I_{d-1} \oplus (1-d)\big).
\end{split}
\end{equation}
Here, $E_{a,b}$ are standard matrix units, and $I_{d-1}$ is the identity  of size $d-1.$
Note that there are $\binom{d}{2}$ (non-diagonal) symmetric, $\binom{d}{2}$ antisymmetric and $d-1$ diagonal matrices. Summing up we get $2\binom{d}{2} + d-1 = d(d-1)+d-1 = d^2-1$ as expected.
For fixed $d$ denote by $GM(d)$ the set consisting of the $d^2-1$ Gell-Mann $d\times d$ matrices of size $d \times d$, together with the identity $I_d$.

As for the Pauli matrices, define for a fixed $n \in \NN,$
$$
\lambda^{j} := \underbrace{I \otimes \cdots \otimes I}_{j-1} \otimes \lambda \otimes I \otimes \cdots \otimes I \in M_{d^n}(\CC)
$$
for any $\lambda \in GM(d).$ By definition, $\lambda_1^i$ and $\lambda_2^j$ commute for any $i \neq j$ and  $\lambda_1,\lambda_2 \in GM(d)$, and
\begin{align}\label{eq: basis-gm d}
	\{\lambda_{1}^1 \lambda_{2}^2 \cdots \lambda_{n}^n \ | \ \lambda_{j} \in GM(d),\  j=1,\ldots,n\}
\end{align}
is a basis of $M_{d^n}(\CC).$ 

\subsubsection{The $d$-QMC Hamiltonian via the Gell-Mann matrices}\label{ssec:gelmann}
The formula \eqref{eq:swapviap} expressing the swap matrices Swap$^{(2)}_{ij}$ in terms of products of Pauli matrices, i.e., with respect to the basis \eqref{eq: basis-gm d} for $d=2,$ can be generalized to an arbitrary $d$ as shown below.

\begin{proposition}\label{prop:swaptogm}
	 For any $i<j$ we have
	\begin{align}\label{eq: sw-id}
		\text{\textnormal{Swap}}^{(d)}_{ij} = \frac{1}{d} \,I + \frac{1}{2} \sum_{\lambda \in GM(d)} \lambda^i \lambda^j.
	\end{align}
\end{proposition}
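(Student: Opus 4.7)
My plan is to reduce the identity to the two-factor case. Since $\textnormal{Swap}^{(d)}_{ij}$ acts as the standard swap $S$ on the $i,j$ tensor factors of $(\CC^d)^{\otimes n}$ and as the identity on the remaining $n-2$ factors, and $\lambda^i\lambda^j$ acts analogously as $\lambda\otimes\lambda$ on factors $i,j$, it is enough to establish the two-factor identity
\begin{equation*}
S \;=\; \tfrac{1}{d}\, I_d\otimes I_d + \tfrac{1}{2}\sum_{\lambda\in GM(d)\setminus\{I_d\}}\lambda\otimes\lambda
\end{equation*}
on $\CC^d\otimes\CC^d$. Tensoring both sides with the identity on the remaining $n-2$ positions then recovers \eqref{eq: sw-id}; here the contribution of $I_d$ to the right-hand side is absorbed into the scalar term $\tfrac{1}{d}I$, which is why I read the sum in the proposition as running over $GM(d)\setminus\{I_d\}$ (the $d=2$ case in \eqref{eq:swapviap} serves as a consistency check).

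The key ingredient is the general formula $S=\sum_k B_k\otimes B_k$, valid for every orthonormal basis $\{B_k\}$ of $M_d(\CC)$ consisting of Hermitian matrices, with respect to the Hilbert--Schmidt inner product $\langle A,B\rangle=\Tr(A^*B)$. To prove it I would start from the matrix-unit decomposition $S=\sum_{a,b=1}^d E_{a,b}\otimes E_{b,a}$, which follows by direct evaluation on product basis vectors $e_r\otimes e_s$. Expanding $E_{a,b}=\sum_k\Tr(B_k E_{a,b})\,B_k=\sum_k (B_k)_{b,a}B_k$ in the orthonormal basis and substituting, the coefficient of $B_k\otimes B_\ell$ collapses to $\sum_{a,b}(B_k)_{b,a}(B_\ell)_{a,b}=\Tr(B_kB_\ell)$. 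Using Hermiticity this equals $\Tr(B_\ell^* B_k)=\langle B_\ell, B_k\rangle=\delta_{k\ell}$, yielding the claimed identity.

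It then suffices to verify that $\mathcal{B}=\{\tfrac{1}{\sqrt d}I_d\}\cup\{\tfrac{1}{\sqrt 2}\,\lambda:\lambda\in GM(d)\setminus\{I_d\}\}$ is indeed such an orthonormal Hermitian basis. Hermiticity is built into each of the three Gell-Mann families. Orthogonality follows by inspection: the symmetric and antisymmetric generators have disjoint off-diagonal matrix-unit supports, and the diagonal $\lambda_k^d$ are traceless (hence orthogonal to $I_d$) with distinct nested profiles that make them pairwise orthogonal. The normalization $\Tr(\lambda^2)=2$ is a short case check, namely $\Tr((E_{a,b}+E_{b,a})^2)=2$, $\Tr\bigl((\mathfrak{i}(E_{b,a}-E_{a,b}))^2\bigr)=2$, and for the diagonal $\lambda_k^d$ the prefactor $\sqrt{2/(k(k-1))}$ is engineered precisely to force $\Tr((\lambda_k^d)^2)=2$. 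Plugging $\mathcal{B}$ into the identity of the previous paragraph produces the two-factor formula, which combined with the first paragraph gives \eqref{eq: sw-id}.

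The only nontrivial step is the basis-free identity; without the Hermiticity hypothesis one would obtain $S=\sum_k B_k\otimes B_k^*$ instead, so the clean form really uses the self-adjointness of the Gell-Mann matrices. Everything else is routine bookkeeping with Hilbert--Schmidt normalizations.
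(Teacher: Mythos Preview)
Your proof is correct and takes a genuinely different route from the paper. The paper proceeds by brute-force evaluation: it reduces to $n=2$ and then checks on each basis vector $e_p\otimes e_q$ that the right-hand side returns $e_q\otimes e_p$, splitting into the cases $p=q$ and $p\neq q$ and tracking which Gell-Mann matrices act nontrivially in each case. Your argument instead invokes the basis-independent identity $S=\sum_k B_k\otimes B_k$ for any Hilbert--Schmidt--orthonormal Hermitian basis of $M_d(\CC)$, derived cleanly from the matrix-unit expansion $S=\sum_{a,b}E_{a,b}\otimes E_{b,a}$, and then only has to verify the normalization $\Tr(\lambda^2)=2$ and pairwise orthogonality of the Gell-Mann family. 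This is more conceptual: it explains \emph{why} the formula holds (the Gell-Mann matrices are just one particular orthonormal Hermitian basis, and the $\tfrac{1}{d}I$ term is the contribution of the normalized identity $\tfrac{1}{\sqrt d}I_d$), whereas the paper's computation is opaque on this point. The paper's approach, on the other hand, requires no prior knowledge of the swap-as-sum-over-basis identity and is entirely self-contained. Your reading of the sum as ranging over $GM(d)\setminus\{I_d\}$ is also the correct one, consistent with the paper's own $d=2$ and $d=3$ formulas and with its proof.
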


\begin{proof} Denote the right-hand side of \eqref{eq: sw-id} by $R.$
	Since \eqref{eq: sw-id} only involves two indices $i,j$ we may assume $n=2$ and prove that $R$ acts the same as Swap$_{ij}^{(d)}$ on basis vectors of the form $v_{p,q} = e_p \otimes e_q \in \mathbb{C}^d \otimes  \mathbb{C}^d$ for $1 \leq p,q \leq d.$ 
	
	If $p=q,$ then only the diagonal Gell-Mann matrices $\lambda_k^d$ for $p\leq k \leq d$ act nontrivially on $v_{p,q},$
	\begin{align*}
		Rv_{p,p} & = \frac{1}{d} e_p \otimes e_p + 
		\frac{1}{2} \left(\sqrt{\frac{2}{p(p-1)}}(1-p) e_p\right) \otimes \left(\sqrt{\frac{2}{p(p-1)}}(1-p) e_p\right)\\
		& \phantom{{}={}} + \frac{1}{2} \sum_{j=p+1}^d \left(\sqrt{\frac{2}{j(j-1)}} e_p\right) \otimes \left(\sqrt{\frac{2}{j(j-1)}}e_p \right) \\
		&= \left(\frac{1}{d} + \frac{p-1}{p} + \sum_{j=p+1}^d \frac{1}{j(j-1)}  \right)v_{p,p} =v_{p,p}.
	\end{align*}
 	Finally, if $p<q,$ then in addition to $\lambda_k^d$ for $q\leq k \leq d,$ also $\lambda_{p,q}^{\text{sym}_d}$ and $\lambda_{p,q}^{\text{asym}_d}$ act nontrivially on $v_{p,q}$ and they both map it to $v_{q,p},$
 	\begin{align*}
 		Rv_{p,q} &= \frac{1}{d} e_p \otimes e_q + 
 		\frac{1}{2} \left(\sqrt{\frac{2}{q(q-1)}} e_p\right) \otimes \left(\sqrt{\frac{2}{q(q-1)}}(1-q) e_q\right)\\
 		& \phantom{{}={}} + \frac{1}{2} \sum_{j=q+1}^d \left(\sqrt{\frac{2}{j(j-1)}} e_p\right) \otimes \left(\sqrt{\frac{2}{j(j-1)}}e_q \right) \\
 		& \phantom{{}={}} + \frac{1}{2}e_q \otimes e_p + \frac{1}{2}e_q \otimes e_p \\
 		&=\left(\frac{1}{d} - \frac{1}{q} + \sum_{j=q+1}^d \frac{1}{j(j-1)}  \right)v_{p,q} + v_{q,p} =v_{q,p}. \qedhere
 	\end{align*}
\end{proof}

Note that by \eqref{eq: sw-id}, the $d$-QMC Hamiltonian can be expressed in terms of the $d \times d$ Gell-Mann matrices as 
\begin{align}\label{eq:HviaGM}
	H_G^{d} = \sum_{(i,j) \in \text{E}(G)} 
	2w_{ij} \, \left(  
	\frac{d-1}{d}I - 
	\frac{1}{2} \sum_{\lambda \in GM(d)} \lambda^i \lambda^j
	\right),
\end{align}
where $GM(d)$ denotes the set of all $d \times d$ Gell-Mann matrices. This is in fact the form of the $d$-QMC Hamiltonian used in \cite{CJKKW23+}. 

An advantage of our approach is that it incorporates algebraically \emph{all} symmetries inherent to the $d$-QMC problem.
Computations (such as the NPO relaxations in Section \ref{s:npo}, the clique decomposition in Section \ref{sec:6} or the decompositions along irreps) with $H_G^d$ as in \eqref{eq:qmcd} scale better with both $n$ and $d$ or are only made possible once one passes to qudit swap matrices championed in this paper.

\subsection*{Acknowledgments}
The authors thank 
Darij Grinberg for sharing enlightening comments and his expertise in representation theory, and
Dmitry Grinko for bringing to our attention the relationship between the Quantum Max $d$-Cut and extendibility of symmetric bipartite states, and recent developments on the latter topic.

 \section{Preliminaries on the representations of the symmetric group}

In this section we review some standard elements of the representation theory of symmetric groups that are used throughout this paper; for a comprehensive source, see, e.g., \cite{FH,Probook}.
For $n \in \NN$ we denote by $S_n$ the symmetric group on $n$ elements, i.e., the group of permutations of $n$ elements. A \textbf{representation} of $S_n$ is a group homomorphism $\rho : S_n \to \text{GL}(V)$ for a vector space $V,$ also called $S_n$-module.  Any representation $\rho$ of $S_n$ naturally defines a representation $\tilde{\rho}$ of the \textbf{group algebra} $\CC [S_n]$ of $S_n.$ The resulting representation $\tilde{\rho} : \CC [S_n] \to \text{End}(V)$ is defined by
 $$
 \tilde{\rho}\left(\sum_{\pi\in S_n} \alpha_\pi \pi  \right) = \sum_{\pi\in S_n} \alpha_\pi \rho(\pi).
 $$
 For simplicity, the letter $\rho$ often refers to both the representation of $S_n$ and the representation of $\CC[S_n].$ 

\subsection{Irreducible representations of the symmetric group}
 An $S_n$-module $V$ is \textbf{irreducible} if its only nontrivial submodule is $V.$ Throughout we abbreviate irreducible representation by \textit{irrep} and use the terms irrep and irreducible module interchangeably.  Note that by Maschke's theorem \cite[Section 6.1.5]{Probook}, any $S_n$-module $V$ decomposes as a direct sum of irreducible $S_n$-modules.

 It is well-known that the irreducible $S_n$-modules are indexed by partitions $\lambda$ of $n$ (often denoted by $\lambda \vdash n$), where
 $$
 \lambda = (\lambda_1, \ldots, \lambda_k)\in\N^k, \quad \lambda_1 \geq \cdots \geq \lambda_k > 0, 
 \quad \sum_{i=1}^k \lambda_i = n.
 $$
The number of summands $k$ is called the \textbf{height} of $\lambda$ and denoted $k=\het(\lambda)$. For $\ell<k$, the partition $(\lambda_1,\ldots,\lambda_\ell)$ is a \textbf{head}, and the partition $(\lambda_{e+1},\ldots,\lambda_f)$ is a \textbf{tail} of $\lambda$. 
A partition $\lambda \vdash n$ is usually depicted by its \textbf{Young diagram}. A Young diagram of shape $\lambda$ has $k$ rows and the $i$th row consists of $\lambda_i$ boxes. For example, if $\lambda= (5,3,2) \vdash 10,$ then $\het(\lambda)=3$ and its Young diagram is
\[
\ytableausetup{smalltableaux}
\ydiagram{5,3,2}
\]

\vspace{2mm}

A \textbf{Young tableau} of shape $\lambda$ is a Young diagram whose boxes are filled with numbers $1,\ldots,n$ such that each box gets a different integer.
The symmetric group $S_n$ acts on a Young tableau $t$ of shape $\lambda \vdash n$ by permuting the entries of $t.$ This action defines an equivalence relation, where two tableaux are equivalent if one can be obtained from the other by permuting the entries within each of the rows.
Equivalence classes with respect to this relation are called \textbf{tabloids}.

The irreducible $S_n$-module $V_\lambda$ corresponding to the partition $\lambda \vdash n$ is called a \textbf{Specht module} and it has a well-known basis consisting of  \textbf{polytabloids} 
	$$
	e_T = \sum_{\pi \in C_T} \text{sgn}(\pi) \pi\{T\}.
	$$ 
	Here $T$ ranges over all tabloids of shape $\lambda$, $C_T$ is the set of all permutations that permute the elements only within the columns of $T$ and for each $\pi \in C_T,$  $\pi\{T\}$ is the tabloid obtained from $T$ by permuting the entries according to $\pi.$

\subsection{Schur-Weyl duality}\label{ssec:sw}

As a complex representation of $S_n,$ the $d$-swap algebra $M^{\text{Sw}_d}_n(\mathbb{C})$ is semisimple. Key to solving the $d$-QMC problem for certain graphs is the precise knowledge of the block decomposition of  $M^{\text{Sw}_d}_n(\mathbb{C})$ into simple matrix algebras. 
We now explain how this block decomposition 
can be deduced using the Schur-Weyl duality of the actions of $S_n$ and GL$_d(\mathbb{C})$ on $\big(\mathbb{C}^d\big)^{\otimes n}$ (see e.g. \cite[Section 6.1]{FH} or \cite[Section 7.1.2]{Probook}). 
The natural representation of GL$_d(\mathbb{C})$ on $\big(\mathbb{C}^d\big)^{\otimes n}$ is defined by the diagonal action; $g \in $ GL$_d(\mathbb{C})$ acts on the tensor product of $v_1, \ldots,v_n \in \mathbb{C}^d$ by
$$
\zeta_n(g) (v_1 \otimes \cdots \otimes v_n) = g(v_1) \otimes \cdots \otimes g(v_n).
$$
The actions of $S_n$ and GL$_d(\mathbb{C})$ on $\big(\mathbb{C}^d\big)^{\otimes n}$ commute and there is a bijection between the irreducible modules of $S_n$ and GL$_d(\mathbb{C}).$ 
This interplay between permutations of particles and change of coordinates is indispensable in investigating qudit systems, see e.g. \cite{GNW21}.
Furthermore, if $\lambda$ is a partition of $n,$ then to the irreducible module $V_\lambda$ of $S_n$ corresponds (up to isomorphism) exactly one irreducible module $L_\lambda$ of GL$_d(\mathbb{C})$ and $L_\lambda$ are precisely the maps from $V_\lambda$ to $\big(\mathbb{C}^d\big)^{\otimes n}$ that commute with the action of $S_n,$
$$
L_\lambda = \text{Hom}_{S_n}(V_\lambda,\big(\mathbb{C}^d\big)^{\otimes n}).
$$

It is well-known that $L_\lambda$ is nonzero precisely when $\lambda$ is a partition with at most $d$ rows \cite[Proposition 9.3.1]{Probook}. In fact, the dimensions of the modules $L_\lambda$ can be explicitly computed by the Weyl character formula \cite[Section 9.6.2]{Probook}.

The next proposition is a restatement of the Schur-Weyl duality \cite[Theorem 9.3.1]{Probook} for $S_n$ and GL$_d(\mathbb{C})$, taking into account \cite[Proposition 9.3.1]{Probook}.

\begin{proposition}
	The algebras $M^{\text{\textnormal{Sw}}_d}_n(\mathbb{C})$  and $\zeta_n(\text{\textnormal{GL}}_d(\mathbb{C}))$ are centralizers of one another inside \text{\textnormal{End\big($\big(\mathbb{C}^d\big)^{\otimes n}$\big)}} = $M_{d^n}(\mathbb{C}),$ and with respect to the action of the direct product $\text{GL}_d(\mathbb{C}) \times S_n,$ the space $\big(\mathbb{C}^d\big)^{\otimes n}$ decomposes as
	\begin{equation}\label{eq:decomp}
	    \big(\mathbb{C}^d\big)^{\otimes n} \cong \bigoplus_{\substack{\lambda\vdash n \\[.5mm] \het(\lambda)\le d}} L_\lambda \otimes V_\lambda. 
	\end{equation}
\end{proposition}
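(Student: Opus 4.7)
The proposition is a repackaging of the classical Schur–Weyl duality for the commuting actions of $S_n$ and $\mathrm{GL}_d(\CC)$ on $(\CC^d)^{\otimes n}$, so the plan is to explain how it reduces to that result together with the vanishing criterion for $L_\lambda$. I would organise the argument in three steps.

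First, I would check the easy half of the double centralizer statement. The permutation action $\rho_n^{(d)}$ shuffles tensor factors while the diagonal action $\zeta_n$ applies the same matrix to every factor, and a short computation on elementary tensors shows $\rho_n^{(d)}(\pi)\zeta_n(g)=\zeta_n(g)\rho_n^{(d)}(\pi)$ for all $\pi\in S_n$ and $g\in\mathrm{GL}_d(\CC)$. Hence each of the two algebras is contained in the centralizer of the other inside $M_{d^n}(\CC)$.

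Second, I would invoke the substantive direction of Schur–Weyl duality, namely that $\zeta_n(\mathrm{GL}_d(\CC))$ spans the full $S_n$-invariant subalgebra of $\mathrm{End}((\CC^d)^{\otimes n})$ and, conversely, that $M^{\mathrm{Sw}_d}_n(\CC)$ is all of the $\mathrm{GL}_d(\CC)$-invariants. This is exactly \cite[Theorem 9.3.1]{Probook} and is the one nontrivial input; I would simply cite it rather than reprove it. Combined with the first step, this gives the mutual centralizer property.

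Third, having the double centralizer in hand, the decomposition follows from standard semisimple bimodule theory. Since $\CC[S_n]$ is semisimple, $(\CC^d)^{\otimes n}$ decomposes as an $S_n$-module into isotypic components
\[
(\CC^d)^{\otimes n}\cong\bigoplus_{\lambda\vdash n} L_\lambda\otimes V_\lambda,\qquad L_\lambda=\mathrm{Hom}_{S_n}\bigl(V_\lambda,(\CC^d)^{\otimes n}\bigr),
\]
and by the double centralizer theorem each nonzero multiplicity space $L_\lambda$ carries an irreducible action of the centralizer $\zeta_n(\mathrm{GL}_d(\CC))$. To restrict the sum to partitions with $\het(\lambda)\le d$, I would cite \cite[Proposition 9.3.1]{Probook}, which identifies $L_\lambda\ne 0$ precisely with this height condition (the intuition is that an antisymmetrizer on more than $d$ indices must vanish on $(\CC^d)^{\otimes n}$, exactly matching the degree-$d$ relation of the preceding subsection).

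The main obstacle, if one wished to be self-contained, would be the second step; but since the paper explicitly states the result as a restatement of \cite[Theorem 9.3.1]{Probook} together with \cite[Proposition 9.3.1]{Probook}, no new work is required beyond assembling these ingredients.
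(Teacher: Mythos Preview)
Your proposal is correct and matches the paper's approach exactly: the paper does not give an independent proof but simply states that the proposition is a restatement of \cite[Theorem 9.3.1]{Probook} together with \cite[Proposition 9.3.1]{Probook}, which are precisely the two inputs you assemble.
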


Since $S_n$ acts trivially on each $L_\lambda,$ the space $\big(\mathbb{C}^d\big)^{\otimes n}$ decomposes as an $S_n$-module into irreducible modules $V_\lambda$ (with multiplicities) as follows
$$
\big(\mathbb{C}^d\big)^{\otimes n} \cong \bigoplus_{\substack{\lambda\vdash n \\[.5mm] \het(\lambda)\le d}} V_\lambda^{\text{ dim}(L_\lambda)}. 
$$

As a corollary we get the desired decomposition of the $d$-swap algebra $	M^{\text{Sw}_d}_n(\mathbb{C}).$
\begin{theorem}\label{th: s-w-swaps}
	The $d$-swap algebra decomposes into a direct sum of simple algebras generated by the irreps $\rho_\lambda$ of $S_n$ corresponding to partitions of $n$ with at most $d$ rows,
	$$
	M^{\text{\textnormal{Sw}}_d}_n(\mathbb{C}) \cong \bigoplus_{\substack{\lambda\vdash n \\[.5mm] \het(\lambda)\le d}} \rho_{\lambda} (\mathbb{C}S_n) \cong \bigoplus_{\substack{\lambda\vdash n \\[.5mm] \het(\lambda)\le d}}M_{{\rm dim}(V_\lambda)}(\CC).
	$$
\end{theorem}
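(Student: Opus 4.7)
The plan is to deduce Theorem \ref{th: s-w-swaps} directly from the Schur-Weyl decomposition stated in the preceding proposition. By definition, $M^{\text{Sw}_d}_n(\mathbb{C})$ is the image $\rho_n^{(d)}(\mathbb{C}[S_n])$ inside $\text{End}\bigl((\mathbb{C}^d)^{\otimes n}\bigr)$. Under the isomorphism \eqref{eq:decomp}, the $S_n$-action by definition is trivial on each $L_\lambda$ factor and acts as the irrep $\rho_\lambda$ on each $V_\lambda$ factor, so for every $\pi\in S_n$ we have
\[
\rho_n^{(d)}(\pi)\;\longleftrightarrow\;\bigoplus_{\substack{\lambda\vdash n\\ \het(\lambda)\le d}} \id_{L_\lambda}\otimes\rho_\lambda(\pi).
\]
Extending linearly to $\mathbb{C}[S_n]$, I would conclude that $M^{\text{Sw}_d}_n(\mathbb{C})$ is isomorphic as an algebra to $\bigoplus_\lambda \id_{L_\lambda}\otimes \rho_\lambda(\mathbb{C}[S_n])$, which collapses to $\bigoplus_\lambda \rho_\lambda(\mathbb{C}[S_n])$ after noting that the $L_\lambda$-tensor factors contribute only a multiplicity.

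The remaining step, and really the only substantive one, is to identify $\rho_\lambda(\mathbb{C}[S_n])$ with the full matrix algebra $M_{\dim V_\lambda}(\mathbb{C})=\End(V_\lambda)$ for each $\lambda$. For this I would invoke the Jacobson density (Burnside) theorem: since $V_\lambda$ is a finite-dimensional irreducible $\mathbb{C}[S_n]$-module and the base field $\mathbb{C}$ is algebraically closed, the image of $\mathbb{C}[S_n]$ in $\End(V_\lambda)$ must be all of $\End(V_\lambda)$. Equivalently, one can appeal to Wedderburn's structure theorem for the semisimple algebra $\mathbb{C}[S_n]$, whose isotypic decomposition gives exactly this block structure.

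Putting the two ingredients together yields the claimed two isomorphisms
\[
M^{\text{Sw}_d}_n(\mathbb{C})\;\cong\;\bigoplus_{\substack{\lambda\vdash n\\\het(\lambda)\le d}}\rho_\lambda(\mathbb{C}[S_n])\;\cong\;\bigoplus_{\substack{\lambda\vdash n\\\het(\lambda)\le d}}M_{\dim(V_\lambda)}(\mathbb{C}),
\]
and the direct sum runs precisely over partitions with $\het(\lambda)\le d$ because, by the cited fact $L_\lambda\neq 0 \iff \het(\lambda)\le d$ \cite[Proposition 9.3.1]{Probook}, any irrep $V_\lambda$ with more than $d$ rows does not occur in $(\mathbb{C}^d)^{\otimes n}$ and hence is killed by $\rho_n^{(d)}$. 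No step involves a genuine obstacle; the proof is essentially a one-line corollary of Schur-Weyl duality together with the Jacobson density theorem, and I would present it as such, keeping the emphasis on why the height condition on $\lambda$ enters.
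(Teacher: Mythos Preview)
Your proof is correct and follows essentially the same strategy as the paper: both deduce the result as an immediate corollary of the Schur--Weyl decomposition \eqref{eq:decomp}. The execution differs only in which side of the duality is emphasized. The paper views $(\mathbb{C}^d)^{\otimes n}$ as a $\mathrm{GL}_d(\mathbb{C})$-module $\bigoplus_\lambda L_\lambda^{\dim V_\lambda}$ and computes its $\mathrm{GL}_d$-endomorphism algebra, invoking the centralizer statement of Schur--Weyl to identify this with $M^{\mathrm{Sw}_d}_n(\mathbb{C})$; Schur's lemma on the irreducible $L_\lambda$ then yields the matrix blocks. You instead view the same space as an $S_n$-module $\bigoplus_\lambda V_\lambda^{\dim L_\lambda}$, read off the block form of $\rho_n^{(d)}$ directly, and appeal to Burnside/Jacobson density (equivalently, Wedderburn for $\mathbb{C}[S_n]$) to see that each $\rho_\lambda(\mathbb{C}[S_n])$ is the full matrix algebra. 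The two arguments are dual and equally short; yours is perhaps marginally more direct since it avoids computing an endomorphism algebra, while the paper's version makes the role of the $\mathrm{GL}_d$-centralizer more explicit.
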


\begin{proof}
    Using \eqref{eq:decomp} we deduce that as a GL$_d(\CC)$-module, the space  $(\CC^d)^{\otimes n}$ decomposes as
    $$
\big(\mathbb{C}^d\big)^{\otimes n} \cong \bigoplus_{\substack{\lambda\vdash n \\ \het(\lambda)\le d}} L_\lambda^{\text{ dim}(V_\lambda)}. 
$$
Now considering the GL$_d(\CC)$-endomorphisms on both sides gives the desired result. Indeed, the GL$_d(\CC)$-endomorphisms of $(\CC^d)^{\otimes n}$ are by definition the endomorphisms of $(\CC^d)^{\otimes n}$ that commute with the action of GL$_d(\CC).$ By the Schur-Weyl duality these are precisely the elements from the $d$-swap algebra $M^{\text{\textnormal{Sw}}_d}_n(\mathbb{C}) .$ On the other hand, 
\begin{align*}
    \text{End}_{\text{GL}_d(\CC)}\Bigg( \bigoplus_{\substack{\lambda\vdash n \\[.5mm] \het(\lambda)\le d}} L_\lambda^{\text{ dim}(V_\lambda)} \Bigg) &=
 \bigoplus_{\substack{\lambda\vdash n \\[.5mm] \het(\lambda)\le d}} \text{End}_{\text{GL}_d(\CC)} \Big(L_\lambda^{\text{ dim}(V_\lambda)}\Big)\\
 &= \bigoplus_{\substack{\lambda\vdash n \\[.5mm] \het(\lambda)\le d}} M_{\text{ dim}(V_\lambda)} \Big( \text{End}(L_\lambda)^{\text{op}} \Big)\\
 &\cong \bigoplus_{\substack{\lambda\vdash n \\[.5mm] \het(\lambda)\le d}}M_{\text{ dim}(V_\lambda)}(\CC) 
 \qedhere
\end{align*}
\end{proof}

\begin{remark}
    The dimension of any irreducible $S_n$-module $V_\lambda$ can be computed via the well-known hook length formula (see Section \ref{sec:6} for some explicit calculations).
\end{remark}

\section{Degree-reducing relation for qudit swap matrices}\label{sec:qudit-rel}

    It is known (see, e.g. \cite[Section 9.3]{Probook}) that, in addition to the symmetric group axioms, the swap matrices Swap$^{(d)}_{ij}$ also satisfy a degree-reducing relation 
    of degree $d.$ For instance, 
    \eqref{eq:degred} above is such an equation for $d=2$.
    We now present the general form of 
    the degree-reducing relation for general $d$,
    and, for the reader's convenience, give an elementary and self-contained proof. 

Since the symmetric group is generated by transpositions, each permutation can be written as a product of transpositions (non-uniquely).
For fixed $d$ and $k=1,\ldots, d$ let $C_k$ be the set of all products of $k$ swap matrices that arise from permutations on a subset of $d+1$ letters, which cannot be written as a product of less than $k$ transpositions (i.e., to each permutation we assign one product).
The next proposition gives 
the analog of the relation \eqref{eq:degred} in the case of a general $d.$

\begin{proposition}\label{prop:degred} The swap matrices Swap$^{(d)}_{ij}$ satisfy the following degree-reducing relation
    \begin{align}\label{eq:qdit-rel}
	\sum_{s \in C_d} s = \sum_{s \in C_{d-1}} s - \sum_{s \in C_{d-2}} s + \cdots + (-1)^{d-1} \sum_{s \in C_1} s + (-1)^d \cdot 1.
\end{align}
\end{proposition}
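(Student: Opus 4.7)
The plan is to derive \eqref{eq:qdit-rel} from the classical fact that the antisymmetrizer of $d+1$ vectors vanishes on $(\CC^d)^{\otimes(d+1)}$, a concrete form of $\Lambda^{d+1}(\CC^d)=0$. Fix any $d+1$ indices $J=\{i_0,\dots,i_d\}\subseteq\{1,\dots,n\}$, view $S_{d+1}$ as acting on $(\CC^d)^{\otimes n}$ via $\rho_n^{(d)}$ by permuting only the coordinates in $J$, and consider
$$
A_J=\sum_{\pi\in S_{d+1}}\operatorname{sgn}(\pi)\,\rho_n^{(d)}(\pi).
$$
First I would verify $A_J=0$ by applying it to a pure tensor $e_{j_0}\otimes\cdots\otimes e_{j_d}$ tensored with arbitrary factors at the remaining slots: since each $j_a\in\{1,\dots,d\}$, pigeonhole yields $j_a=j_b$ for some $a\neq b$. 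With $\tau=(a,b)$, the involution $\pi\mapsto\pi\tau$ on $S_{d+1}$ fixes this tensor while flipping $\operatorname{sgn}$, so contributions cancel in pairs and $A_J$ vanishes on every basis vector.

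Second, I would reorganize $A_J$ by stratifying $S_{d+1}$ according to the minimal number $k$ of transpositions needed to express $\pi$. For $\pi$ with $c$ cycles (including fixed points) this minimum equals $(d+1)-c$, and correspondingly $\operatorname{sgn}(\pi)=(-1)^{(d+1)-c}$. Setting $k=(d+1)-c$, the partial sum of $\rho_n^{(d)}(\pi)$ over permutations of minimal length $k$ is exactly $\sum_{s\in C_k}s$; this sum is a well-defined element of the swap algebra since any two reduced transposition words for the same $\pi$ yield the same operator under $\rho_n^{(d)}$. Consequently
$$
0=A_J=\sum_{k=0}^{d}(-1)^{k}\sum_{s\in C_k}s.
$$

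Isolating the top-degree piece $\sum_{s\in C_d}s$ and moving the remaining terms to the right-hand side then delivers \eqref{eq:qdit-rel}; since $J$ was arbitrary, the identity holds for every $(d+1)$-subset of indices. The only routine point is the sign bookkeeping between the cycle-count stratification and the stated form of \eqref{eq:qdit-rel}, which I would verify against the $d=2$ specialisation \eqref{eq:degred} as a sanity check. There is no substantive obstacle: the proposition is essentially the expansion of a vanishing antisymmetrizer, combined with the standard identity that the Cayley distance in $S_{d+1}$ with respect to the transposition generating set equals $(d+1)$ minus the number of cycles.
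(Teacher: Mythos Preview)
Your argument is correct. Both you and the paper begin with the same pigeonhole step: on a basis tensor in $(\CC^d)^{\otimes(d+1)}$ at least two slots carry the same standard basis vector. From there the routes diverge. You finish in one stroke with the fixed-point-free involution $\pi\mapsto\pi\tau$ (right multiplication by the transposition swapping the two equal slots), which preserves the action on the tensor while flipping the sign, so $A_J$ annihilates every basis vector; the stratification by Cayley length then just repackages $A_J=0$ as \eqref{eq:qdit-rel}. The paper instead proves a preparatory lemma (Lemma~\ref{lemma 1.1}) showing that a $(k{+}1)$-cycle acting on a tensor with one repeated index factors through a product of two shorter disjoint cycles, and uses this to set up an explicit matching between terms of $\sum_{C_d}$ and terms of $\sum_{C_{d-1}}$, then iterates down through the $C_k$'s. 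Your approach is shorter and conceptually cleaner; the paper's approach has the side benefit of making the term-by-term cancellation between adjacent $C_k$'s visible, which is closer in spirit to how one would discover the ``degree-reducing'' form of the relation rather than merely verify it. The paper itself notes (in the remark following the proposition) that \eqref{eq:qdit-rel} is nothing but the vanishing of the antisymmetrizer, so your reading of the statement is exactly the intended one.
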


\begin{remark}\label{rem:nonred} We often simplify the notation of sums involving products of swap matrices (e.g., the ones in \eqref{eq:qdit-rel}) by summing over (a subset of) the symmetric group and using the fact that every product of swap matrices corresponds to a permutation of the tensor factors of $(\CC^d)^{\otimes n}.$ On the other hand, every such permutation can be written as a product of swap matrices Swap$^{(d)}_{ij}$ in a non-redundant way, i.e., relations \eqref{eq:snrel} are applied to simplify the expression as much as possible. 
\end{remark}

\begin{remark}
    Note that it is enough to assume that the $d+1$ letters in Equation \eqref{eq:qdit-rel} are the numbers $1,\ldots,d+1.$ In fact, an analogous equation with indices from some other $(d+1)$-subset  $J$ of $\{1,\ldots,n\}$ can be obtained by conjugating Equation \eqref{eq:qdit-rel} with any permutation sending $\{1,\ldots,d+1\}$ to $J.$
\end{remark}

\begin{example}
    Equation \eqref{eq:qdit-rel} for $d=3$  with $(i,j,k,l)=(1,2,3,4)$ is as follows:
\begin{equation*}\label{eq:swap-rel}
	\begin{split}
		\text{Swap}^{(3)}_{12}\,&\text{Swap}^{(3)}_{23}\,\text{Swap}^{(3)}_{34}\,  + \, \text{Swap}^{(3)}_{12}\,\text{Swap}^{(3)}_{24}\,\text{Swap}^{(3)}_{4,3}\,  + \, \text{Swap}^{(3)}_{13}\,\text{Swap}^{(3)}_{3,2}\,\text{Swap}^{(3)}_{24} \,  + \\ \, \text{Swap}^{(3)}_{13}\,&\text{Swap}^{(3)}_{34}\,\text{Swap}^{(3)}_{4,2} \,  + \, \text{Swap}^{(3)}_{14}\,\text{Swap}^{(3)}_{4,2}\,\text{Swap}^{(3)}_{23} \,  + \, \text{Swap}^{(3)}_{14}\,\text{Swap}^{(3)}_{4,3}\,\text{Swap}^{(3)}_{3,2}  = \\
		\text{Swap}^{(3)}_{12}&\,\text{Swap}^{(3)}_{13}\,  + \,\text{Swap}^{(3)}_{12}\,\text{Swap}^{(3)}_{14}\,  + \, \text{Swap}^{(3)}_{12}\,\text{Swap}^{(3)}_{23}\,  + \,\text{Swap}^{(3)}_{12}\,\text{Swap}^{(3)}_{24}\,  + \\ 
		\text{Swap}^{(3)}_{12}&\,\text{Swap}^{(3)}_{34}\,  + \, 
		\text{Swap}^{(3)}_{13}\,\text{Swap}^{(3)}_{14}\,  + \,\text{Swap}^{(3)}_{13}\,\text{Swap}^{(3)}_{24}\,  + \, \text{Swap}^{(3)}_{13}\,\text{Swap}^{(3)}_{34}\,  + \\
		& \ \text{Swap}^{(3)}_{14}\,\text{Swap}^{(3)}_{23}\,  + \, \text{Swap}^{(3)}_{23}\,\text{Swap}^{(3)}_{24}\,  + \, \text{Swap}^{(3)}_{23}\,\text{Swap}^{(3)}_{34}\,  - \\
		\text{Swap}^{(3)}_{12}&\,  - \,
		\text{Swap}^{(3)}_{13}\,  - \,\text{Swap}^{(3)}_{14}\,  - \,\text{Swap}^{(3)}_{23}\,  - \,\text{Swap}^{(3)}_{24}\,  - \,\text{Swap}^{(3)}_{34}\,  + \, 1.
	\end{split}
\end{equation*}
\end{example}

\begin{remark} 
Equation \eqref{eq:qdit-rel} can be written in a more condensed form as 
$$
\sum_{s \in S_{d+1}} \text{sgn}(s)\, s = 0,
$$
saying that the \textit{antisymmetrizer} on $d+1$ letters equals zero (see \cite[Section 9.3.1]{Probook}). Here sgn denotes the sign of a permutation and each $s$ is expressed in terms of the swap matrices Swap$^{(d)}_{ij}$ in a non-redundant way as explained in Remark \ref{rem:nonred}.
\end{remark}

Next is a preliminary lemma on the way to give an elementary proof of Proposition \ref{prop:degred}.

\begin{lemma}\label{lemma 1.1} Let $v$ be of the form $e_{i_1} \otimes e_{i_2} \otimes \cdots \otimes e_{i_{k+1}}$ where $i_m \in \{1,2,\ldots, d\}$ for $m=1, \ldots, k+1$ and exactly two of the $i_m$ are the same. 
	
	Then any cycle of length $k+1$ acts on $v$ as a product of $k-1$ transpositions and corresponds to a product of two smaller disjoint cycles (so of length at most $k$, singletons also count)  such that none of them acts on a subset of the tensor factors of $v$ containing two equal factors.
\end{lemma}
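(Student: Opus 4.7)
The plan is to exploit the fact that the transposition of two tensor factors carrying equal basis vectors acts as the identity on $v$, which lets us realize a $(k+1)$-cycle by one fewer transposition than the usual $k$.

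First I would identify the unique pair of positions $1\le p<q\le k+1$ with $i_p=i_q$ and observe that the transposition $\tau=(p\,q)\in S_{k+1}$ acts trivially on $v=e_{i_1}\otimes\cdots\otimes e_{i_{k+1}}$, since swapping two copies of the same basis vector returns the same tensor. Consequently, for the given $(k+1)$-cycle $\sigma$ we have $\sigma\cdot v=(\sigma\tau)\cdot v$, so it suffices to analyze $\sigma\tau$ as a permutation.

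Next I would invoke the following elementary identity in $S_{k+1}$: writing $\sigma=(c_1\,c_2\,\ldots\,c_{k+1})$ with $c_r=p$ and $c_s=q$ for some $1\le r<s\le k+1$, a direct check on the image of each $c_i$ yields
\[
\sigma\tau=(c_1\,c_2\,\ldots\,c_r\,c_{s+1}\,\ldots\,c_{k+1})\,(c_{r+1}\,\ldots\,c_s),
\]
a product of two disjoint cycles $\sigma_1,\sigma_2$ of lengths $\ell_1=k+1-(s-r)$ and $\ell_2=s-r$ summing to $k+1$. In the edge case $s=r+1$ the second factor is the singleton $(c_s)$, i.e.\ the identity; this is precisely the case covered by the phrase \emph{singletons also count} in the statement.

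Each $\sigma_j$ is a product of $\ell_j-1$ transpositions, so $\sigma_1\sigma_2$ is a product of $(\ell_1-1)+(\ell_2-1)=k-1$ transpositions, establishing the first assertion. For the second, note that by construction $p$ lies in the support of $\sigma_1$ while $q$ lies in the support of $\sigma_2$ (possibly as a fixed point), so the two positions carrying equal indices are separated into different cycles. Since by hypothesis the remaining indices are pairwise distinct and also distinct from $i_p=i_q$, the tensor factors moved by either $\sigma_1$ or $\sigma_2$ carry pairwise distinct basis vectors, proving the second assertion. The only subtlety is the bookkeeping for $s=r+1$ (where one of the cycles degenerates to a singleton), but this is exactly the scenario the statement anticipates, so no serious obstacle arises.
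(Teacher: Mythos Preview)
Your argument is correct and rests on the same idea as the paper: the action of the $(k+1)$-cycle $\sigma$ on $v$ coincides with that of a permutation splitting into two disjoint cycles which separate the two positions $p,q$ carrying the repeated index. The paper obtains this decomposition by an index-tracing algorithm (starting at one copy of the repeated index and following $\sigma$ until the other copy is reached), whereas you obtain it in one stroke via the algebraic identity that multiplying a cycle by the transposition $(p\,q)$ of two of its entries splits it into two disjoint cycles; this is cleaner and makes the transposition count $(\ell_1-1)+(\ell_2-1)=k-1$ immediate. One small cosmetic point: when you write ``$c_r=p$ and $c_s=q$ with $r<s$'' you are implicitly using that the cyclic representation of $\sigma$ can be rotated so that $p$ precedes $q$; this is harmless but worth a word.
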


\begin{proof}
 Let $\sigma$ be a cycle of length $k+1$ and suppose without loss of generality that $1$ is the index $i_m$ that appears twice in $v$ (i.e., $v$ has two copies of $e_1$). Divide the letters $1,\ldots,k+1$ into disjoint tuples $B_1, B_2$ such that for each $i,$ the indices of the factors of $v$ and $\sigma(v)$ at position $k \in B_i$ are in the same tuple and none of the tuples contains two copies of $1.$ Then $\sigma$ is the product of two disjoint cycles, represented by the tuples $B_1,B_2.$ This will also prove that the above set decomposition of $\{1,\ldots,k+1\}$ can be done in a unique way. 
	
	The algorithm to find the tuples $B_i$ is the following: find a position $j_1 \in \{1,\ldots,k+1\}$ of one of the two $e_1$ in $v$ and assign $j_1$ to $B_1.$ Then consider the factor $e_{i_{m_1}}$ of $\sigma(v)$ at position $j_1.$ If $i_{m_1} = 1,$ then we add no more elements to $B_1$ and start the process all over again with the second factor $e_1$ of $v$ whose position $j_2$ is assigned to $B_2.$ Otherwise, if $m_1 \neq 1,$ add $j_2$ to $B_1$ and find the position $j_2$ of $e_{i_{m_1}}$ in $v.$  Consider the factor $e_{i_{m_2}}$ of $\sigma(v)$ at position $j_2$ and repeat the procedure from before according to whether $i_{m_2}$ equals $1$ or not. Proceed until the basis vector $e_{i_{m_r}}$ equals $e_1$ for some $r$ and we cover all the letters.
	
	Since $v$ only has one index that repeats twice and all the other indices are distinct, this construction gives the desired cyclic decomposition of $\sigma$ into precisely two shorter cycles.
\end{proof}

\begin{example}
	We provide a concrete example for Lemma \ref{lemma 1.1} in the case $d = 5$ and $n=6.$ Let $v = e_1 \otimes e_2 \otimes e_3 \otimes e_4 \otimes e_5 \otimes e_1$ and $\pi = (1 2 3 4 5 6)$ and set
	\begin{align*}
		\sigma_i = \rho^{(5)}_ 6(\pi^i),  \quad i = 1, \ldots, 6.
	\end{align*}
	Then 
	\begin{align*}
		\sigma_1 (v) &= e_1 \otimes e_1 \otimes e_2 \otimes e_3 \otimes e_4 \otimes e_5 =
		\rho^{(5)}_6 \big( (1)(23456) \big) (v),\\
		\sigma_2 (v) &= e_5 \otimes e_1 \otimes e_1 \otimes e_2 \otimes e_3 \otimes e_4 =
		\rho^{(5)}_6 \big( (135)(246) \big) (v),  \\
		\sigma_3 (v) &= e_4 \otimes e_5 \otimes e_1 \otimes e_1 \otimes e_2 \otimes e_3 =
		\rho^{(5)}_6 \big( (14)(25)(36) \big) (v),  \\
		\sigma_4 (v) &= e_3 \otimes e_4 \otimes e_5 \otimes e_1 \otimes e_1 \otimes e_2 =
		\rho^{(5)}_6 \big( (135)^2(246)^2 \big) (v),  \\
		\sigma_5 (v) &= e_2 \otimes e_3 \otimes e_4 \otimes e_5 \otimes e_1 \otimes e_1 =
		\rho^{(5)}_6 \big( (12345)(6) \big) (v).  
	\end{align*}
\end{example}

\begin{proof}[Proof of Proposition \ref{prop:degred}] 
	First note that it is enough to verify Equation \eqref{eq:qdit-rel} on basis vectors $v$ of $\big(\mathbb{C}^d\big)^{\otimes\, (d+1)}$ of the form $e_{i_1}\otimes e_{i_2} \otimes \cdots \otimes e_{i_{d+1}},$ where $i_m \in \{1,2,\ldots,d\}$ for $m=1, \ldots, d+1$ and exactly two of the $i_m$ are the same. Indeed, if such vectors satisfy \eqref{eq:qdit-rel}, then the basis vectors with more recurring indices also satisfy \eqref{eq:qdit-rel} (introduce new indices for the recurring indices, apply the results for the basis vectors as above and then bring back the old indices). 
 Since \eqref{eq:qdit-rel} is invariant under permutation of indices, it is enough to prove that \ref{eq:qdit-rel} holds for basis vectors $v$ of the form 
 $$
 v = e_{1} \otimes e_2 \otimes e_3 \otimes \cdots \otimes e_d \otimes e_1.
$$	
Lemma \ref{lemma 1.1} shows that, when evaluated on such a basis vector, each term $s^\prime$ in the sum over $C_d$ cancels with a different term $s$ in the sum over $C_{d-1}$ such that none of the disjoint cycles of $s$ acts on a subset of the factors of $v$ with two equal factors. More precisely, if $s^\prime = (1\,2\,3\cdots d+1) \in C_d,$ the images of $v$ under the powers $s^\prime, (s^\prime)^2,\ldots,(s^\prime)^d$ are the $d$ permutations with corresponding cyclic structures $(1,d),(2,d-1),\ldots,(d,1)$ such that the two factors $e_1$ of $v$ are not in the same cycle.
Here $(i,j)$ with $i+j=d+1$ stands for a product of two disjoint cycles, one of length $i$ and one of length $j.$   By permuting the letters of $(s^\prime)^j,$
we obtain all the elements $s$ in the sum over $C_{d-1}$ 
corresponding to products with cyclic structure $(i,j)$ 
for any $i,j$ with $i+j=d+1$ 
that separate the two factors $e_1$ of $v.$

This means that by applying on $v$ all the terms $s^\prime$ in the sum over $C_d,$  we obtain the actions on $v$ of all the terms $s$  in the sum over $C_{d-1}$ that act on subsets of the factors of $v$ with no equal factors. 
 Hence, the remaining terms in the sum over $C_{d-1}$ are such that one of their disjoint cycles acts on a subset of the factors of $v$ with two equal factors. We then again apply Lemma \ref{lemma 1.1} and proceed inductively.
\end{proof}

\section{Identifying the qudit swap algebra $M^{\text{Sw}_d}_n(\mathbb{C})$ as a quotient of the free algebra}\label{s:quotalg}

Let $\CC\langle \text{swap}_{ij} \ | \ 1 \leq i < j \leq n\rangle$ be the free $*$-algebra on $\binom{n}{2}$ generators endowed with the involution $*$ that fixes each $\text{swap}_{ij}$ and acts as conjugation on $\CC.$ For $d\in\NN$ let $\mathcal{I}^{\text{Sw}_d}_n$ be the its ideal generated by
	\begin{equation}\label{eq:from sn}
	\begin{split}
		\text{swap}_{ij}^2 &= 1,\\
		\text{swap}_{ij}\text{swap}_{jk} &= \text{swap}_{ik}\text{swap}_{ij} = \text{swap}_{jk}\text{swap}_{ik},\\
        \text{swap}_{ij}\text{swap}_{kl} &= \text{swap}_{kl}\text{swap}_{ij},
	\end{split}
\end{equation}
	for all distinct indices $i,j,k,l,$
 and the relation in \eqref{eq:qdit-rel} with the swap matrices $\text{Swap}^{(d)}_{ij}$ replaced by the free variables $\text{swap}_{ij}.$
 We use the convention that whenever $i>j,$ then $\text{swap}_{ij}$ is interpreted as $\text{swap}_{ji}.$ Denote
	$$
	\mathcal{A}^{\text{Sw}_d}_n := \CC\langle \text{swap}_{ij} \ | \ 1 \leq i < j \leq n\rangle /\, \mathcal{I}^{\text{Sw}_d}_n
	$$
	and observe that there is a natural surjective $*$-homomorphism $\rho: \CC S_n \to \mathcal{A}^{\text{Sw}_d}_n$ defined by
	\begin{equation}\label{eq:eval}
		\rho((i,j)) = \text{swap}_{ij}+\mathcal{I}^{\text{Sw}_d}_n.
	\end{equation}
The algebras $\mathcal{A}^{\text{Sw}_d}_n$ and $M^{\text{Sw}_d}_n(\mathbb{C})$ are isomorphic.
{Indeed, this follows from \cite[Theorem 11.6.1]{Probook}: namely, $M^{\text{Sw}_d}_n(\mathbb{C})$ is isomorphic to $\CC[S_n]$ modulo the two-sided ideal generated by the antisymmetrizer on $d+1$ letters (see also \cite[Theorem 4.2]{dCP} or \cite[Theorem 2.8.1]{Gri}). We now present an elementary self-contained argument.
}

\begin{proposition}\label{prop: geq d}
	The generators of $\mathcal{I}^{\text{Sw}_d}_n$ do not all vanish under the irreps of $S_n$ corresponding to partitions 
	$\lambda$ of $n$ 
with $\het(\lambda)> d$.
\end{proposition}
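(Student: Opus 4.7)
The plan is to reduce the proposition to one non-vanishing statement about the antisymmetrizer. The ideal $\mathcal{I}^{\text{Sw}_d}_n$ is generated by the symmetric group relations \eqref{eq:from sn}, together with the degree-reducing relation \eqref{eq:qdit-rel}, which (as noted in the remark following Proposition \ref{prop:degred}) is just the identity $A_{d+1}=0$ for the antisymmetrizer $A_{d+1}:=\sum_{\sigma\in S_{d+1}}\text{sgn}(\sigma)\,\sigma\in\CC[S_n]$ on the indices $\{1,\dots,d+1\}$. The generators in \eqref{eq:from sn} are genuine identities in $\CC[S_n]$, so they vanish under every irrep $\rho_\lambda$; the proposition therefore reduces to verifying that $\rho_\lambda(A_{d+1})\neq 0$ whenever $\het(\lambda)>d$.

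For this I would fix such a $\lambda$ with $k:=\het(\lambda)\ge d+1$ and choose a Young tableau $T$ of shape $\lambda$ whose first column contains $1,2,\dots,k$. Since the columns of $T$ have pairwise disjoint supports, the column stabilizer splits as a direct product, and the column antisymmetrizer factors as $b_T=b_T^{(1)}\cdot b_T^{\rm rest}$ with $b_T^{(1)}=\sum_{\sigma\in S_k}\text{sgn}(\sigma)\,\sigma$ supported on $\{1,\dots,k\}$ and $b_T^{\rm rest}$ supported on the remaining indices (hence commuting with $A_{d+1}$). Picking a transversal for the left cosets $S_k=\bigsqcup_g gS_{d+1}$ and using multiplicativity of the sign,
\[
b_T^{(1)}=\Bigl(\sum_{g}\text{sgn}(g)\,g\Bigr)\cdot A_{d+1},
\]
so $A_{d+1}$ appears as a right factor of the Young symmetrizer $c_T=a_Tb_T$, where $a_T$ is the row symmetrizer.

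To conclude, suppose for contradiction that $\rho_\lambda(A_{d+1})=0$. Then $\rho_\lambda(c_T)=0$ as well, since $c_T$ contains $A_{d+1}$ as a factor. This contradicts the classical fact (see e.g.~\cite{FH,Probook}) that the Young symmetrizer of shape $\lambda$ acts as a nonzero scalar multiple of a rank-one projection on the Specht module $V_\lambda$—equivalently, $c_T$ is a quasi-idempotent generating a left ideal of $\CC[S_n]$ isomorphic to $V_\lambda$, and in particular its image under $\rho_\lambda$ is nonzero. The only genuine obstacle is invoking this last fact cleanly; once available, the coset bookkeeping is routine.
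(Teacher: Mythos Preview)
Your argument is correct. Both proofs hinge on the same geometric fact—that when $\het(\lambda)\ge d+1$ one can place the indices $1,\dots,d+1$ into a single column of a tableau $T$ of shape $\lambda$—but they exploit it differently. The paper works inside the Specht module: it applies the antisymmetrizer $g$ to the polytabloid $e_T$ and reads off the coefficient of the tabloid $\{T\}$, checking by a direct count over the signed sums $C_k$ that this coefficient is strictly positive. Your route is more structural: you observe that the column antisymmetrizer $b_T$ absorbs $A_{d+1}$ as a right factor via the coset decomposition of $S_k$ over $S_{d+1}$, so $c_T=a_Tb_T$ inherits $A_{d+1}$ as a right factor, and then you invoke the classical nonvanishing of $\rho_\lambda(c_T)$. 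Your approach is cleaner and avoids the explicit coefficient bookkeeping, at the cost of citing the Young-symmetrizer theory as a black box; the paper's computation is more self-contained in the spirit of its ``elementary'' presentation but is correspondingly heavier on notation. Both are entirely standard.
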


\begin{proof}
	To show that the swap relations \eqref{eq:qdit-rel} are not satisfied by any irrep of $S_n$ corresponding to a partition with at least $d+1$ rows, consider an irrep corresponding to a partition $\lambda \vdash n$ of shape $(\lambda_1,\ldots,\lambda_k)$ with $k\geq d+1.$ We know that any irrep of $S_n$ is spanned by polytabloids 
	$$
	e_T = \sum_{\pi \in C_T} \text{sgn}(\pi) \pi\{T\},
	$$ 
	where $T$ ranges over all tabloids of shape $\lambda$, $C_T$ is the set of all permutations that permute the elements only within the columns of T and for each $\pi \in C_T,$  $\pi\{T\}$ is the tabloid obtained from $T$ by permuting the entries according to $\pi.$ 
	
	 Let $T$ be the standard Young tableaux of shape $\lambda$ and consider the action of the polynomial 
	$$
	g = (-1)^{d-1} \sum_{s \in C_d} s + (-1)^{d} \sum_{s \in C_{d-1}} s + (-1)^{d-1}\sum_{s \in C_{d-2}} s \pm \cdots - \sum_{s \in C_1} s + 1
	$$ 
	on the polytabloid $e_T$ via
	$$
	s_{ij}\, e_T = e_{(i,j)T}.
	$$
	Choose the indices $i_1, \ldots, i_{d+1}$ to be $1, \lambda_1 +1, \ldots ,\lambda_{d} +1$ respectively and note that the coefficient at $T$ in the resulting polytabloid is 
	$$
	(-1)^{d+1}\,|C_{d}| \cdot\text{sgn}(s \in C_{d}) +(-1)^d \,|C_{d-1}| \cdot\text{sgn}(s \in C_{d-1})+    \cdots+|C_2|\cdot 1-|C_1|\cdot(-1)+1.
	$$
	But the latter is strictly positive since, for $k=1,\ldots,d,$ the sign of the elements in $C_k$ is $(-1)^k.$ This shows that the polynomial $g$ does not vanish under the evaluation $s_{ij} = \rho_\lambda(i\,j)$ for the chosen $\lambda.$ Thus
	the swap relations \eqref{eq:qdit-rel} are incompatible with any Young Tableaux with more than $d$ rows. 
\end{proof}

	\begin{proposition} \label{prop: leq d}
		 All the irreps of $S_n$ corresponding to a partition of $n$ with at most $d$ rows in its Young tableaux satisfy \eqref{eq:qdit-rel}.
	\end{proposition}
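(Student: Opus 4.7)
The plan is to recognize relation \eqref{eq:qdit-rel} as the vanishing of the antisymmetrizer on $d+1$ letters, and to verify its vanishing on each relevant irrep via Schur-Weyl duality.

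First, I rewrite \eqref{eq:qdit-rel} in a more convenient form. Since every element of $C_k$ is a product of exactly $k$ transpositions, it has sign $(-1)^k$; moreover, the sets $\{1\}, C_1, C_2, \dots, C_d$ are (by construction) a system of representatives that exhaust the permutations of $d+1$ letters. Thus, if the swap-variables $\text{swap}_{ij}$ are evaluated at $\rho_\lambda\big((i,j)\big)$ for an irrep $\rho_\lambda$, the relation \eqref{eq:qdit-rel} reduces to the single identity
$$
\rho_\lambda(A_{d+1})=0,\qquad\text{where}\qquad A_{d+1}=\sum_{\pi\in S_{d+1}} \text{sgn}(\pi)\,\pi,
$$
with $S_{d+1}\le S_n$ acting on the letters $1,\dots,d+1$ and fixing $d+2,\dots,n$. (An analogous statement with any other $(d{+}1)$-subset of $\{1,\dots,n\}$ follows by conjugation by a suitable permutation, as already noted in the paper.)

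Second, I verify that $A_{d+1}$ acts as zero on the full tensor space $(\CC^d)^{\otimes n}$. For any standard basis tensor $e_{i_1}\otimes\cdots\otimes e_{i_n}$ with $i_j\in\{1,\dots,d\}$, the pigeonhole principle forces $i_a=i_b$ for some $1\le a<b\le d+1$ (since there are $d+1$ indices and only $d$ possible values). The transposition $\tau=(a,b)$ then defines a sign-reversing involution $\pi\mapsto \pi\tau$ on $S_{d+1}$ whose action on the chosen basis tensor satisfies
$$
(\pi\tau)\cdot(e_{i_1}\otimes\cdots\otimes e_{i_n})=\pi\cdot(e_{i_1}\otimes\cdots\otimes e_{i_n}),
$$
because $\tau$ only swaps the two equal entries $e_{i_a}$ and $e_{i_b}$. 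Hence the terms of $A_{d+1}\cdot(e_{i_1}\otimes\cdots\otimes e_{i_n})$ pair up with opposite signs and cancel.

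Third, I invoke Schur-Weyl duality (Section \ref{ssec:sw}). The decomposition
$$
(\CC^d)^{\otimes n}\cong\bigoplus_{\substack{\lambda\vdash n\\ \het(\lambda)\le d}}V_\lambda^{\,\dim L_\lambda}
$$
as an $S_n$-module shows that every irrep $V_\lambda$ with $\het(\lambda)\le d$ occurs with positive multiplicity. Since $A_{d+1}$ annihilates the left-hand side by the previous paragraph, it must annihilate each direct summand, whence $\rho_\lambda(A_{d+1})=0$ for all $\lambda\vdash n$ with $\het(\lambda)\le d$. Combined with the first paragraph, this is exactly the claim that \eqref{eq:qdit-rel} holds under every such $\rho_\lambda$.

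There is essentially no obstacle here: the only bookkeeping is the (straightforward) identification of \eqref{eq:qdit-rel} with $A_{d+1}=0$ via the sign-by-length grouping of $S_{d+1}$, and the pigeonhole cancellation on basis tensors. Together with Proposition \ref{prop: geq d}, this completes the characterization of exactly which irreps are compatible with \eqref{eq:qdit-rel}, paving the way to identify $\mathcal{A}_n^{\text{Sw}_d}\cong M_n^{\text{Sw}_d}(\CC)$.
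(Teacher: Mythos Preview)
Your proof is correct. Both you and the paper reduce to showing the antisymmetrizer $A_{d+1}$ annihilates $(\CC^d)^{\otimes n}$ and then pass to each irrep $V_\lambda$ with $\het(\lambda)\le d$, but the implementations differ. For the vanishing on $(\CC^d)^{\otimes n}$, the paper simply invokes its earlier proof of Proposition~\ref{prop:degred} (which goes through the cycle-splitting Lemma~\ref{lemma 1.1}); your pigeonhole plus sign-reversing involution $\pi\mapsto\pi\tau$ is a cleaner, self-contained alternative. For the transfer to $V_\lambda$, the paper uses the explicit bijection between tabloids of shape $\lambda$ and standard basis tensors of the matching content, thereby realizing the Specht module concretely inside $(\CC^d)^{\otimes n}$ via polytabloids; you instead invoke the Schur--Weyl decomposition of Section~\ref{ssec:sw} abstractly to conclude that each such $V_\lambda$ occurs as a summand. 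Your route is shorter and conceptually tidier; the paper's is more hands-on, in keeping with its stated aim of an elementary treatment independent of the Schur--Weyl machinery at this particular step.
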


 \begin{proof} For any irrep of $S_n$ corresponding to a partition $\lambda \vdash n$ with at most $d$ rows 
      it is enough to prove that \eqref{eq:qdit-rel} holds when evaluated at basis vectors, i.e., polytabloids
     \begin{equation}\label{eq:polyt}
         e_T = \sum_{\pi \in C_T} \text{sgn}(\pi) \pi\{T\},
     \end{equation}
 where $T$ is any tabloid of shape $\lambda.$  We use the canonical identification between tabloids $T$ and rank one vectors $v \in (\CC^d)^{\otimes n}$ of the form $v= e_{i_1}\otimes \cdots \otimes e_{i_k}$ with $i_j \in \{1,\ldots, d\}.$  Suppose $\lambda = (\lambda_1,\ldots,\lambda_k)$ with $\lambda_i \geq \lambda_{i+1}$ and let $T$ be a tabloid of shape $\lambda.$ Define $v$ to be the vector 
 with tensor factors $e_k$ at positions, which are the numbers appearing in the $k$th row of $T.$ Now permuting the tensor factors of $v$ is the same as permuting the entries of $T.$ The inverse of this procedure assigns to a rank one vector $v$ the tabloid $T$ with $k$th row consisting of the numbers that index the positions of tensor factors $e_k$ in $v.$
 Now the  proof of Equation \eqref{eq:qdit-rel} in Section \ref{sec:qudit-rel} implies that Equation \eqref{eq:qdit-rel} holds when evaluated at each summand in \eqref{eq:polyt}, from which the claim follows. 
 \end{proof}

\begin{theorem}\label{th:iso}
    The algebras $\mathcal{A}^{\text{Sw}_d}_n$ and $M^{\text{Sw}_d}_n(\mathbb{C})$ are isomorphic.
\end{theorem}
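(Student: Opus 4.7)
The plan is to use the semisimplicity of $\CC S_n$ together with Propositions \ref{prop: geq d} and \ref{prop: leq d} to pin down the kernel of the natural surjection $\rho\colon \CC S_n \to \mathcal{A}^{\text{Sw}_d}_n$, and then compare with the Schur--Weyl decomposition of $M^{\text{\textnormal{Sw}}_d}_n(\CC)$ from Theorem \ref{th: s-w-swaps}.

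First I would observe that there is a canonical surjective $*$-homomorphism
$$
\Phi\colon \mathcal{A}^{\text{Sw}_d}_n \longrightarrow M^{\text{Sw}_d}_n(\mathbb{C}), \qquad \text{swap}_{ij}+\mathcal{I}^{\text{Sw}_d}_n \longmapsto \text{Swap}^{(d)}_{ij},
$$
which is well-defined because the qudit swap matrices satisfy the symmetric group relations \eqref{eq:from sn} tautologically and the degree-reducing relation \eqref{eq:qdit-rel} by Proposition \ref{prop:degred}. Surjectivity is immediate since the swap matrices generate $M^{\text{Sw}_d}_n(\mathbb{C})$. The goal is therefore to show $\Phi$ is injective, equivalently that $\dim \mathcal{A}^{\text{Sw}_d}_n \le \dim M^{\text{Sw}_d}_n(\mathbb{C})$.

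Next I would use the surjection $\rho\colon \CC S_n \to \mathcal{A}^{\text{Sw}_d}_n$ from \eqref{eq:eval} to identify $\mathcal{A}^{\text{Sw}_d}_n$ with a quotient $\CC S_n / J$ for some two-sided ideal $J$. Since $\CC S_n$ is semisimple with Wedderburn decomposition $\bigoplus_{\lambda\vdash n}\rho_\lambda(\CC S_n)$, any such quotient has the form
$$
\mathcal{A}^{\text{Sw}_d}_n \;\cong\; \bigoplus_{\lambda\in\Lambda}\rho_\lambda(\CC S_n)
$$
for a uniquely determined set $\Lambda\subseteq\{\lambda\vdash n\}$, where $\lambda\in\Lambda$ precisely when the generators of $\mathcal{I}^{\text{Sw}_d}_n$ all map to zero under $\rho_\lambda$. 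Proposition \ref{prop: leq d} ensures that every partition $\lambda$ with $\het(\lambda)\le d$ lies in $\Lambda$, while Proposition \ref{prop: geq d} ensures that every partition with $\het(\lambda)>d$ is excluded. Hence
$$
\Lambda=\{\lambda\vdash n:\het(\lambda)\le d\}.
$$

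Finally, comparing with Theorem \ref{th: s-w-swaps}, which gives
$$
M^{\text{\textnormal{Sw}}_d}_n(\CC)\;\cong\;\bigoplus_{\substack{\lambda\vdash n \\ \het(\lambda)\le d}}\rho_\lambda(\CC S_n),
$$
we obtain $\dim \mathcal{A}^{\text{Sw}_d}_n = \dim M^{\text{Sw}_d}_n(\CC)$, so the surjection $\Phi$ must be an isomorphism. The only nontrivial input is the combination of Propositions \ref{prop: geq d} and \ref{prop: leq d}, which has already been done; the remaining argument is a clean semisimplicity/dimension count and should not present any real obstacle. The most delicate conceptual point is simply verifying that $\mathcal{A}^{\text{Sw}_d}_n$ is itself semisimple, which is automatic here because it is a homomorphic image of the semisimple algebra $\CC S_n$.
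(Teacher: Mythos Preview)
Your proposal is correct and follows essentially the same approach as the paper: both arguments identify $\mathcal{A}^{\text{Sw}_d}_n$ as a semisimple quotient of $\CC S_n$, use Propositions \ref{prop: geq d} and \ref{prop: leq d} to determine that its simple summands are exactly those $\rho_\lambda(\CC S_n)$ with $\het(\lambda)\le d$, and compare with the Schur--Weyl decomposition of $M^{\text{Sw}_d}_n(\CC)$ from Theorem \ref{th: s-w-swaps}. Your version is slightly more explicit in constructing the surjection $\Phi$ and phrasing the conclusion as a dimension count, but the substance is the same.
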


\begin{proof} The algebras $M^{\text{Sw}_d}_n(\mathbb{C})$ and $\mathcal{A}^{\text{Sw}_d}_n$ are both homomorphic images of the semisimple finite-dimensional algebra $\CC[S_n]$. Therefore $M^{\text{Sw}_d}_n(\mathbb{C})$ and $\mathcal{A}^{\text{Sw}_d}_n$ and semisimple and finite-dimensional as well.
To show that they are isomorphic, we prove that $\mathcal{A}^{\text{Sw}_d}_n$ and $M^{\text{Sw}_d}_n(\mathbb{C})$ have the same block decomposition into simple matrix algebras.

The block decomposition of $M^{\text{Sw}_d}_n(\mathbb{C})$ is described in Theorem \ref{th: s-w-swaps}. Recall that $
	\mathcal{A}^{\text{Sw}_d}_n = \CC\langle \text{swap}_{ij} \ | \ 1 \leq i < j \leq n\rangle /\, \mathcal{I}^{\text{Sw}_d}_n,
	$
 where  $\mathcal{I}^{\text{Sw}_d}_n$ is the ideal generated by the relations \eqref{eq:from sn} defining the symmetric group $S_n$ and the degree-reducing relation \eqref{eq:qdit-rel}, which holds precisely on the irreps of $S_n$ indexed by partitions with at most $d$ rows (see Proposition \ref{prop: geq d} and Proposition \ref{prop: leq d}). It is now immediate that the algebras $\mathcal{A}^{\text{Sw}_d}_n$ and $M^{\text{Sw}_d}_n(\mathbb{C})$ have the same semisimple decomposition.
\end{proof}

\section{NPO hierarchy}\label{s:npo}

The identification of the swap algebra $M^{\text{Sw}_d}_n(\mathbb{C})$ as a quotient of the free algebra in Section \ref{s:quotalg} allows one to view the $d$-QMC as an example of a noncommutative polynomial optimization (NPO) problem.

Let $\cF_n=\CC\langle \text{swap}_{ij} \ | \ 1 \leq i < j \leq n\rangle$ be the $*$-free algebra on $\binom{n}{2}$ generators, and $V_\ell=\{s\in\cF_n\colon \deg s\le \ell\}$ its subspace spanned by the products of at most $\ell$ \text{swap} symbols.
Recall the isomorphism between $M^{\text{Sw}_d}_n(\CC)$ and $\cA^{\text{Sw}_d}_n=\cF_n /\, \mathcal{I}^{\text{Sw}_d}_n$ from Theorem \ref{th:iso}.
We can view the Hamiltonian $H_G^{d}$ from \eqref{eq:qmcd} as an element of $\cA^{\text{Sw}_d}_n$, and let
$$h_G = \sum_{(i,j) \in \EE(G)} 2 w_{ij} \left(1 - \textnormal{swap}_{ij}\right)$$
be the corresponding element in $\cF_n$.
Since the $*$-algebra $M^{\text{Sw}_d}_n(\CC)$ is finite-dimensional, it is a C*-algebra. Therefore the largest eigenvalue of $H_G^{d}$ equals
\begin{align*}
\alpha_*=&\min\left\{
\alpha:\alpha-H_G^{d}=a^*a \text{ for some } a\in \cA^{\text{Sw}_d}_n
\right\} \\
=&\min\left\{
\alpha:\alpha-h_G=\sum_k s_k^*s_k+q  
\text{ for some } 
s_k\in\cF_n,\ q\in \mathcal{I}^{\text{Sw}_d}_n
\right\}.
\end{align*}
For $\ell=1,\dots,n$ define two sequences,
\begin{equation}\label{e:relax2}
\alpha'_\ell
=
\min\left\{
\alpha:\alpha-h_G={\bf u}_\ell^*\left(A+\sum_m g_m A_m\right){\bf u}_\ell \text{ for some }A_m=A_m^\intercal,\text{ and }A\succeq0 
\right\}
\end{equation}
and
\begin{equation}\label{e:relax}
\begin{split}
\alpha_\ell
&=
\min\left\{
\alpha:\alpha-h_G=\sum_k s_k^*s_k+q  
\text{ for some } 
s_k\in V_\ell,\ q\in \mathcal{I}^{\text{Sw}_d}_n
\right\}\\
&=
\min\left\{
\alpha:\alpha-h_G\equiv{\bf u}_\ell^*A{\bf u}_\ell \mod\mathcal{I}^{\text{Sw}_d}_n \text{ for some } A\succeq0 
\right\},
\end{split}
\end{equation}
where ${\bf u}_\ell$ is a column of products of at most $\ell$ \text{swap} symbols, and the $g_m$ are the generators of the ideal $\mathcal{I}^{\text{Sw}_d}_n$ as in Section \ref{s:quotalg}.
Then $\alpha_\ell\le \alpha'_\ell$ for every $\ell$, the sequences $\{\alpha_\ell\}_\ell$ and $\{\alpha^\prime_\ell\}_\ell$ are decreasing, and $\alpha_{n-1}=\alpha'_{n-1}=\alpha_*$. The last equality holds since every permutation in $S_n$ is a product of at most $n-1$ transpositions, and $\cA^{\text{Sw}_d}_n$ is a quotient of $\CC[S_n]$.

Clearly, \eqref{e:relax2} is a semidefinite program (SDP). 
The second line in \eqref{e:relax} is likewise an SDP once the calculation modulo $\mathcal{I}^{\text{Sw}_d}_n$ is resolved (see the paragraph below). We refer to them as the $\ell$\textsuperscript{th} relaxations of the $d$-QMC. 
Thus we obtained hierarchies of SDPs whose solutions converge to the solution of the $d$-QMC from below. The hierarchy associated with $\alpha'_\ell$ is a very special case of the analog of the Lasserre hierarchy \cite{Las} for NPO that is based on a noncommutative Positivstellensatz \cite{HM}, and whose dual is the Navascu\'es-Pironio-Ac\'in hierarchy \cite{NPA08,NPA} in quantum physics.

While the expression \eqref{e:relax2} is readily an SDP, it involves more unknowns than \eqref{e:relax} (i.e., in addition to unknowns $\alpha$ and $A\succeq0$, it also involves several unknown symmetric $A_m$). Thus, it is preferable to work with \eqref{e:relax}. To prepare the linear constraints in the SDP \eqref{e:relax} that arise from $\alpha-h_G\equiv{\bf u}_\ell^*A{\bf u}_\ell \mod\mathcal{I}^{\text{Sw}_d}_n$ (note that the right-hand side involves products of at most $2\ell$ swap symbols), one needs to identify a subset $\cB^{d}_{2\ell}$ in $V_{2\ell}$ that maps to a basis under the quotient map $q:V_{2\ell}\to(V_{2\ell}+\mathcal{I}^{\text{Sw}_d}_n)/\mathcal{I}^{\text{Sw}_d}_n$. 
To do this, one can start with a basis of $V_{2\ell}$, reduce it modulo $\mathcal{I}^{\text{Sw}_d}_n$ via a noncommutative Gr\"obner basis algorithm \cite{Mor86}, and then identify a basis $\cB^{d}_{2\ell}$ in the resulting set.
Alternatively, one can obtain a concrete instance of $\cB^{d}_{2\ell}$ as follows.
In \cite{Pro21}, a permutation $\pi\in S_n$ is called {\bf $(d+1)$-good} if there is no increasing sequence $j_0<\cdots<j_d$ such that $\pi(j_0)>\cdots>\pi(j_d)$. By \cite[Theorem 8]{Pro21}, $(d+1)$-good permutations form a basis of $M^{\text{Sw}_d}_n(\CC)$. For $\cB^{d}_{2\ell}$ one can thus choose the set of all $(d+1)$-good permutations that are products of at most $2\ell$ transpositions.

When $\ell$ is large, the size of the SDP for $\alpha_\ell$ (i.e., the number of variables, linear constraints, and the size of the semidefinite constrain) is typically too large for available SDP solvers. In practice, one thus often has to settle for computing only the first two relaxations of $\alpha_*$, namely $\alpha_1$ and $\alpha_2$.
To solve these two SDPs, the sets $\cB^{d}_2$ and $\cB^{d}_4$ are required in view of the preceding paragraph. For $d=2$, these are given in \cite[Subsection 4.3.2 and Appendix B.2]{BCEHK24}.
For $d\ge 5$, one can take $\cB^{d}_4$ (resp. $\cB^{d}_2$) consisting of all permutations that are products of at most 4 (resp. 2) transpositions; see Appendix \ref{a:d-1}.
For $d\in\{3,4\}$, the bases are presented in Appendices \ref{a:d3l2}, \ref{a:d3l4} and \ref{a:d4l4}.

\begin{example}
We computed the first two relaxations of \eqref{e:relax} in the case $d=3$ for all 853 connected graphs on $n=7$ vertices.
The list of graphs was generated using Nauty \cite{nauty}.
To construct the SDP forms of \eqref{e:relax}
we used
noncommutative Gr\"obner bases computed with Magma \cite{magma};
alternately, the results of Appendices \ref{a:d3l2} and \ref{a:d3l4} could be employed. The produced SDPs were solved 
on a Macbook Air laptop
using Mathematica\footnote{\url{https://www.wolfram.com/mathematica}}.
The second relaxation was (up to numerical precision) exact on all seven vertex graphs. On the other hand, the first relaxation performed very poorly. The reason is that in low degrees (so degree $\leq2$ when working with the first relaxation)
the nontrivial relation
\eqref{eq:qdit-rel} defining the $3$-swap algebra
does not enter computations. One is thus essentially only optimizing over the corresponding group algebra, where the solution is trivially found; cf.~Subsection \ref{ssec:larged}.

Thus the $3$-QMC provides a large class of examples where the 
second NPO relaxation clearly outperforms the first one.
\end{example}

\section{Quantum Max $d$-Cut and irreps}\label{sec:6}

The decomposition of the $d$-swap algebra $M^{\text{\textnormal{Sw}}_d}_n(\mathbb{C})$ described in Section \ref{ssec:sw} is a valuable tool for calculating the eigenvalues of the qudit Quantum Max Cut Hamiltonian of a complete graph on $n$ vertices.
Recall from \eqref{eq:HamSw} that given a graph $G,$ 
the $d$-QMC irrep Hamiltonian $H_G^{d}$ is defined as 
	\begin{align*}
		H_G^{d} =   \rho_n^{(d)} \left(
		\sum_{(i,j) \in \text{E}(G)} 
		2w_{ij} \,\big(\id - (i\ j)\big)\right) =
		\sum_{(i,j) \in \text{E}(G)} 
		2w_{ij} \,\big(I - \text{Swap}^{(d)}_{ij}\big).
	\end{align*}
Here the $\text{Swap}^{(d)}_{ij}$ denote the qudit swap matrices in $M^{\text{Sw}_d}_n(\mathbb{C}).$

\begin{definition}
	Let $G$ be a graph on $n$ vertices with edge set E$(G)$ and edge weights $w_{ij}.$ \looseness=-1
Let $\lambda \vdash n$ be a partition labelling an irrep of $S_n.$ The {\bf QMC irrep Hamiltonian} $H_G^{\lambda}$ is defined as
\begin{align*}
	H_G^{\lambda} =  \rho_\lambda \left(
	\sum_{(i,j) \in \text{E}(G)} 
	2w_{ij} \,\big(\id - (i\ j)\big)\right).
\end{align*}
\end{definition}

The following is a straightforward corollary of Theorem \ref{th: s-w-swaps}. 

\begin{corollary}
	The spectrum of the $d$-\text{\textnormal{QMC}} Hamiltonian of a graph $G$ is the union of the spectra of all the Hamiltonians corresponding to the irreps of $S_n$ with at most $d$ rows. That is,
	$$
	\text{\textnormal{eigs}}(H_G^d) = \bigcup_{\substack{\lambda\vdash n \\[.5mm] \het(\lambda)\le d}} \text{\textnormal{eigs}}(H_G^\lambda),
	$$
	and, in particular,
	$$
	\text{\textnormal{eig}}_{\max}(H_G^d) = \max_{\substack{\lambda\vdash n \\[.5mm] \het(\lambda)\le d}} \left( \text{\textnormal{eig}}_{\max}(H_G^\lambda)  \right).
	$$
\end{corollary}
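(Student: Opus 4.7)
The plan is to observe that $H_G^d$ belongs to the $d$-swap algebra $M^{\text{Sw}_d}_n(\CC)$, because by definition
$$H_G^d = \sum_{(i,j)\in \EE(G)} 2w_{ij}\big(I-\rho_n^{(d)}((i\,j))\big) = \rho_n^{(d)}\!\left(\sum_{(i,j)\in \EE(G)} 2w_{ij}\big(1-(i\,j)\big)\right),$$
and $\rho_n^{(d)}$ has image equal to $M^{\text{Sw}_d}_n(\CC)$. Denote the element $\sum_{(i,j)} 2w_{ij}(1-(i\,j))\in \CC[S_n]$ by $h$, so that $H_G^d=\rho_n^{(d)}(h)$ and $H_G^\lambda=\rho_\lambda(h)$ for each partition $\lambda\vdash n$.

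Next, I would invoke Theorem \ref{th: s-w-swaps}, which provides an explicit $*$-algebra isomorphism
$$\Phi: M^{\text{Sw}_d}_n(\CC)\;\xrightarrow{\;\cong\;}\bigoplus_{\substack{\lambda\vdash n \\ \het(\lambda)\le d}} \rho_\lambda(\CC S_n).$$
Under this isomorphism, any element $\rho_n^{(d)}(x)\in M^{\text{Sw}_d}_n(\CC)$ is sent to $\bigoplus_{\lambda} \rho_\lambda(x)$, since both sides are quotient maps from $\CC[S_n]$ and agree on the generators (the transpositions). Applied to $x=h$, this gives
$$\Phi(H_G^d)=\bigoplus_{\substack{\lambda\vdash n \\ \het(\lambda)\le d}} H_G^\lambda.$$

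Since $\Phi$ is a $*$-isomorphism between finite-dimensional $*$-algebras, it preserves spectra of self-adjoint elements: indeed, the spectrum of a self-adjoint element in a finite-dimensional C*-algebra is an invariant of the $*$-algebra structure (it is the set of $\mu\in\CC$ for which $\mu-a$ is not invertible). Hence
$$\operatorname{eigs}(H_G^d)=\operatorname{eigs}\!\Big(\bigoplus_{\lambda} H_G^\lambda\Big) = \bigcup_{\substack{\lambda\vdash n \\ \het(\lambda)\le d}}\operatorname{eigs}(H_G^\lambda),$$
because the spectrum of a block-diagonal operator is the union of the spectra of the blocks. Taking the maximum over the right-hand side yields the second equality of the corollary.

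There is no real obstacle here: the content is entirely absorbed by Theorem \ref{th: s-w-swaps}, and the only small thing to verify is that both $\rho_n^{(d)}$ and $\bigoplus_\lambda \rho_\lambda$ factor through the same quotient of $\CC[S_n]$ (namely $\mathcal{A}^{\text{Sw}_d}_n$), so that $\Phi$ identifies $H_G^d$ with the direct sum of the $H_G^\lambda$. This is immediate from Theorem \ref{th:iso} and Theorem \ref{th: s-w-swaps} together.
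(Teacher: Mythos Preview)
Your proof is correct and follows exactly the approach the paper intends: the corollary is stated as ``a straightforward corollary of Theorem \ref{th: s-w-swaps}'' with no further argument, and you have simply spelled out that under the $*$-isomorphism of Theorem \ref{th: s-w-swaps} the element $H_G^d=\rho_n^{(d)}(h)$ corresponds to $\bigoplus_\lambda H_G^\lambda$, whence the spectra agree.
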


\subsection{Exact solution for sufficiently large $d$}\label{ssec:larged}

We record the largest eigenvalue of the Hamiltonian
$$H_G^d = \sum_{(i,j) \in \EE(G)} 2 w_{ij} \left(I - \textnormal{Swap}^{(d)}_{ij}\right)
$$
if $d\ge n=|\VV(G)|$ and $w_{ij}\ge0$ for all $(i,j)\in \EE(G)$. 

\begin{proposition}\label{prop:d>n}
If all the edge weights in $G$ are nonnegative and $d\ge n$, the largest eigenvalue of $H_G^d$ is $4\sum_{i,j} w_{ij}$.
\end{proposition}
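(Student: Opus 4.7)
The plan is to establish the claim via matching upper and lower bounds, the upper bound being essentially a norm estimate and the lower bound being realized on the totally antisymmetric subspace of $(\CC^d)^{\otimes n}$.

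For the upper bound, I would observe that each swap operator $\text{Swap}^{(d)}_{ij}$ is self-adjoint with $(\text{Swap}^{(d)}_{ij})^2=I$ and thus has spectrum contained in $\{-1,+1\}$. Hence $I-\text{Swap}^{(d)}_{ij}$ is positive semidefinite with operator norm at most $2$. Since all weights $w_{ij}$ are nonnegative by assumption, $H_G^d$ is a nonnegative linear combination of PSD operators, so $H_G^d\succeq 0$ and its largest eigenvalue coincides with $\|H_G^d\|$. The triangle inequality then gives
\[
\text{eig}_{\max}(H_G^d)\;\le\;\sum_{(i,j)\in\EE(G)}2w_{ij}\,\|I-\text{Swap}^{(d)}_{ij}\|\;\le\;4\sum_{(i,j)\in\EE(G)}w_{ij}.
\]

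For the lower bound, I would exhibit a unit vector on which every swap operator acts as $-I$. When $d\ge n$, the totally antisymmetric tensor
\[
\psi\;=\;\frac{1}{\sqrt{n!}}\sum_{\pi\in S_n}\operatorname{sgn}(\pi)\,e_{\pi(1)}\otimes e_{\pi(2)}\otimes\cdots\otimes e_{\pi(n)}\;\in\;(\CC^d)^{\otimes n}
\]
is well-defined and nonzero (the standard basis vectors $e_1,\dots,e_n$ of $\CC^d$ are distinct). By construction, $\rho_n^{(d)}(\tau)\psi=-\psi$ for every transposition $\tau\in S_n$, so $\text{Swap}^{(d)}_{ij}\psi=-\psi$ for every pair $(i,j)$, and therefore $H_G^d\psi=\bigl(4\sum_{(i,j)\in\EE(G)}w_{ij}\bigr)\psi$.

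Combining the two bounds yields the stated value. From the representation-theoretic viewpoint of Theorem \ref{th: s-w-swaps}, this is exactly the eigenvalue produced by the sign representation $\rho_{(1^n)}$ of $S_n$, which appears in the decomposition of $(\CC^d)^{\otimes n}$ precisely when $\het((1^n))=n\le d$. No real obstacle is expected; the only subtlety is remembering to use nonnegativity of the weights to conclude that the operator norm equals the largest eigenvalue, and to ensure the condition $d\ge n$ is exactly what makes the antisymmetrizer nonvanishing.
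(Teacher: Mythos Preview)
Your proposal is correct and essentially identical to the paper's proof: both use the triangle inequality on $\|I-\text{Swap}^{(d)}_{ij}\|\le 2$ for the upper bound and the totally antisymmetric tensor as an eigenvector for the matching lower bound. Your remark that $H_G^d\succeq 0$ (hence $\text{eig}_{\max}=\|H_G^d\|$) is a slight elaboration, but not strictly needed since eigenvalues of a self-adjoint operator are bounded by its norm anyway.
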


\begin{proof}
Clearly,
$$
\|H_G^d\| \le \sum_{(i,j) \in \EE(G)} 2 w_{ij} \left\|I - \textnormal{Swap}^{(d)}_{ij}\right\|
=4\sum_{i,j} w_{ij},
$$
so the largest eigenvalue of $H_G^d$ is at most $4\sum_{i,j} w_{ij}$. If $d\ge n$, then
$$v=
\sum_{\pi\in S_n} {\rm sgn}(\pi) e_{\pi(1)}\otimes\cdots\otimes e_{\pi(n)}
$$
satisfies $\textnormal{Swap}^{(d)}_{ij}v=-v$ for all $i\neq j$. Therefore 
$H_G^dv=\left(4\sum_{i,j} w_{ij}\right)v$.
\end{proof}

Let us end this short subsection with a comment on the case $d=n-1$. While $M^{\text{Sw}_n}_n(\mathbb{C})\cong \CC[S_n]$, the swap algebra $M^{\text{Sw}_{n-1}}_n(\mathbb{C})$ is isomorphic to the direct sum of all the irreps of $S_n$ apart from the one-dimensional sign representation of $S_n$. The latter is, as a sub-representation of $\CC[S_n]$, spanned by $a=\frac{1}{n!}\sum_{\pi \in S_n}\operatorname{sgn}(\pi)\pi$, the antisymmetrizer in $\CC[S_n]$.
Thus $M^{\text{Sw}_{n-1}}_n(\mathbb{C})$ is, as a C*-algebra, isomorphic to the orthogonal complement of $a$ in $\CC[S_n]$. Under this identification, the Hamiltonian $H_G^{n-1}$ corresponds to
\begin{equation}\label{e:n-1}
\sum_{(i,j) \in \EE(G)} 2 w_{ij} \big(\id - (i\ j)\big) -4\left(\sum_{(i,j) \in \EE(G)} w_{ij}\right)a\in \CC[S_n]
\end{equation}
because the projection of $\id - (i\ j)$ onto the span of $a$ equals $2a$. While \eqref{e:n-1} lacks the sparsity (2-locality) of $H_G^{n-1}$, it can at least be viewed as an operator on a slightly smaller space of dimension $n!<(n-1)^n$ via the left regular representation of $S_n$.
We speculate that $M^{\text{Sw}_{n-1}}_n(\mathbb{C})$ differing from $\CC[S_n]$ only for the (very simple) sign representation might offer further insight into the $(n-1)$-QMC problem, which is currently beyond reach.

\subsection{Exact solutions for clique Hamiltonians with uniform edge weights}\label{subsec ev-3}
We now present the main steps in the computation of the spectrum of the $d$-QMC Hamiltonian of a complete graph  with uniform edge weights. For the rest of this section we assume all edge weights $w_{ij}=1$. 

The clique is the easiest graph for tackling the $d$-QMC problem since 
the isotypic components of the $d$-QMC Hamiltonian are scalar matrices in this case.

\begin{lemma}\label{lemma scalar}
	Let $\lambda \vdash n$ be a partition. Then
	\begin{align}\label{eq: eta}
			H_{K_n}^\lambda = \eta_\lambda I,
	\end{align}
	where $\eta_\lambda$ is a scalar depending only on the irrep $\lambda$ and $I$ is the identity matrix of the appropriate dimension.
\end{lemma}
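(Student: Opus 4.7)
The plan is to recognize that the sum of transpositions in $\CC[S_n]$ is a class sum, hence central, so its image under any irrep is scalar by Schur's lemma.

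Concretely, writing $H_{K_n}^\lambda = \rho_\lambda\big(2\binom{n}{2}\,\mathrm{id} - 2\sum_{1\le i<j\le n}(i,j)\big)$, the key observation is that the element
\[
T := \sum_{1\le i<j\le n}(i,j)\ \in\ \CC[S_n]
\]
is the sum over the entire conjugacy class of transpositions in $S_n$ (all transpositions being conjugate). Class sums lie in the center $Z(\CC[S_n])$, since for any $\pi\in S_n$, conjugation by $\pi$ permutes the transpositions among themselves and thus fixes $T$. Consequently $T$ commutes with $\rho_\lambda(\sigma)$ for every $\sigma\in S_n$, so $\rho_\lambda(T)$ is an $S_n$-intertwiner of the irrep $V_\lambda$. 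By Schur's lemma (the irreducibility of $V_\lambda$ is provided by the Specht-module description recalled in the preliminaries), every such intertwiner is a scalar multiple of the identity. Hence $\rho_\lambda(T) = t_\lambda I$ for some $t_\lambda\in\CC$, and therefore
\[
H_{K_n}^\lambda = \bigl(2\tbinom{n}{2} - 2 t_\lambda\bigr) I =: \eta_\lambda I.
\]

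To make the scalar explicit (though this is not strictly needed for the lemma, only for the subsequent Theorem \ref{th:clique}), one takes traces: $\Tr\rho_\lambda(T) = \binom{n}{2}\,\chi_\lambda\bigl((1\ 2)\bigr)$, whence
\[
\eta_\lambda = 2\binom{n}{2}\left(1 - \frac{\chi_\lambda\bigl((1\ 2)\bigr)}{\dim V_\lambda}\right),
\]
which depends only on $\lambda$ as claimed. There is essentially no obstacle here; the lemma is a one-line consequence of centrality plus Schur. The genuine work, deferred to Proposition \ref{prop:etaD}, is to evaluate the character ratio $\chi_\lambda((1\ 2))/\dim V_\lambda$ in closed form (via Frobenius' formula or the content statistic of $\lambda$) and to optimize the resulting expression over partitions $\lambda\vdash n$ with $\het(\lambda)\le d$.
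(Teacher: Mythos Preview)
Your proof is correct and is precisely the standard argument: the class sum $T=\sum_{i<j}(i\ j)$ is central, so Schur's lemma forces $\rho_\lambda(T)$ to be scalar. The paper's own proof merely cites \cite[Lemma~2.11]{BCEHK24}, which is the same argument; your trace computation of $\eta_\lambda$ also anticipates exactly what the paper does in the lemma immediately following (equation~\eqref{eq:etafromchi}).
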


\begin{proof}
	Follows by \cite[Lemma 2.11]{BCEHK24}.
\end{proof}

For any partition $\lambda \vdash n,$
the dimension of the irrep $\rho_\lambda$ of $S_n$ is the value of the corresponding character $\chi_\lambda : S_n \to \mathbb{C}$ at the identity element $e \in S_n.$ So
$$
\chi_\lambda (\pi) = \text{Tr}(\rho_\lambda(\pi)), \quad \quad \pi \in S_n,
$$
and, in particular,
$$
\chi_\lambda (e) = \text{Tr}(\rho_\lambda(e))
$$
is the dimension of the irrep $\rho_\lambda$ of $S_n.$ 
From Lemma \ref{lemma scalar} if follows that the eigenvalue $\eta_\lambda$ can be expressed through the values of the character $\chi_\lambda$ at the identity $e$ and at any transposition $(i\,j).$

\begin{lemma}\label{l:char}
For any $\lambda \vdash n$ let $\chi_\lambda$ be the character corresponding to $\rho_\lambda$ and let $\eta_\lambda$ be as in Lemma \ref{lemma scalar}. Then
    \begin{align}\label{eq:etafromchi}
\eta_\lambda = 2 \binom{n}{2} \bigg(1-\frac{\chi_\lambda\big((i\ j)\big)}{\chi_\lambda(e)}\bigg).	
\end{align}
\end{lemma}

\begin{proof}
For $\lambda \vdash n,$ the constant $\eta_\lambda$ can be explicitly computed by taking the trace on both sides of \eqref{eq: eta}. Indeed, since 
$$
H_{K_n}^{\lambda} =  \rho_\lambda \left(
\sum_{(i,j) \in \text{E}(G)} 
2\,\big(I - (i\ j)\big)\right),
$$
we get, by taking the trace, that
$$
\text{Tr} \big[ H_{K_n}^{\lambda}\big] = \sum_{(i,j) \in \text{E}(G)} 
2 \,\big[\chi_\lambda(e) - \chi_\lambda\big((i\ j)\big)\big] = 2\binom{n}{2}\big[\chi_\lambda(e) - \chi_\lambda\big((i\ j)\big)\big].
$$
On the other hand,
$$
\text{Tr} \big[ H_{K_n}^{\lambda}\big] = \eta_\lambda\, \chi_\lambda(e),
$$
so that
\begin{align*}
\eta_\lambda & = 2 \binom{n}{2} \bigg(1-\frac{\chi_\lambda\big((i\ j)\big)}{\chi_\lambda(e)}\bigg).\qedhere
\end{align*}
\end{proof}

\begin{example}\label{rem:eta2}
	For a two-row partition $\lambda = (n-k,k),$ it was computed in \cite[Lemma 2.12]{BCEHK24} that
	$$
	\eta_\lambda = 2k(n+1)-2k^2.
	$$
\end{example}

We now compute the eigenvalue $\eta_\lambda$ for any partition $\lambda$ using a formula by Frobenius \cite{Fro01}. For a more direct approach, where we explicitly compute the value of $\chi_\lambda$ at a transposition using the well-known hook-length formula, see Appendix \ref{appC}.

\begin{proposition} \label{prop:etaD}
Let $\eta_\lambda$ be as in Lemma \ref{lemma scalar}. For any $\lambda \vdash n$ with rows $\lambda_1\ge\cdots\ge \lambda_d$,
\begin{equation}\label{eq:etaD}
\eta_\lambda=
n^2 +\frac{d(d-1)(2d-1)}{6}
-\sum_{k=1}^d\big( \lambda_k - (k-1)\big)^2.
\end{equation}
\end{proposition}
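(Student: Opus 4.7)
The plan is to start from the character ratio formula \eqref{eq:etafromchi}, namely
$$
\eta_\lambda = n(n-1)\left(1 - \frac{\chi_\lambda((i\ j))}{\chi_\lambda(e)}\right),
$$
and evaluate the ratio $\chi_\lambda((i\ j))/\chi_\lambda(e)$ using Frobenius's formula. Recall that a transposition is a central element of $\CC[S_n]$ scaled, and its image under the irrep $\rho_\lambda$ acts as a scalar; this scalar is the sum of the contents $c(i,j)=j-i$ of the cells of the Young diagram of $\lambda$ divided by $\binom{n}{2}$. Equivalently,
$$
\binom{n}{2} \frac{\chi_\lambda((i\ j))}{\chi_\lambda(e)} \;=\; \sum_{k=1}^d \sum_{j=1}^{\lambda_k} (j-k).
$$

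The next step is to compute the inner sum explicitly:
$$
\sum_{j=1}^{\lambda_k}(j-k) \;=\; \binom{\lambda_k+1}{2} - k\lambda_k \;=\; \tfrac{1}{2}\lambda_k(\lambda_k-2k+1),
$$
so that
$$
n(n-1)\,\frac{\chi_\lambda((i\ j))}{\chi_\lambda(e)} \;=\; \sum_{k=1}^d \lambda_k(\lambda_k-2k+1) \;=\; \sum_{k=1}^d \lambda_k^2 - \sum_{k=1}^d (2k-1)\lambda_k.
$$
Substituting back yields
$$
\eta_\lambda \;=\; n(n-1) - \sum_{k=1}^d \lambda_k^2 + \sum_{k=1}^d (2k-1)\lambda_k.
$$

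Finally, one checks by direct expansion that this agrees with the right-hand side of \eqref{eq:etaD}. Expanding $(\lambda_k-(k-1))^2 = \lambda_k^2 - 2(k-1)\lambda_k + (k-1)^2$ and using the identity $\sum_{k=1}^d (k-1)^2 = \tfrac{d(d-1)(2d-1)}{6}$, the claimed expression becomes
$$
n^2 - \sum_{k=1}^d \lambda_k^2 + 2\sum_{k=1}^d (k-1)\lambda_k.
$$
Comparing with the formula obtained above, the discrepancy is $n^2 - n(n-1) - \sum_k \lambda_k = n - \sum_k \lambda_k$, which vanishes because $\lambda\vdash n$.

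There is no real obstacle here; the only nontrivial input is Frobenius's content formula for $\chi_\lambda$ evaluated on a transposition, which is a classical result and could be cited from \cite{Fro01} (or proved via Jucys--Murphy elements, since the sum $\sum_{i<j}(i\ j)$ is central and its eigenvalue on $V_\lambda$ is the sum of contents). The remaining work is bookkeeping that combines the partial sums $\sum_k(k-1)$, $\sum_k(k-1)^2$, and $\sum_k \lambda_k = n$.
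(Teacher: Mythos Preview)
Your proof is correct and follows essentially the same approach as the paper: both start from the character-ratio identity \eqref{eq:etafromchi} and plug in Frobenius's formula for $\chi_\lambda$ on a transposition, then do the algebra. The only cosmetic difference is that the paper quotes the formula in the form $\binom{n}{2}\chi_\lambda((i\ j))/\chi_\lambda(e)=\sum_k\big[\binom{\lambda_k}{2}-\binom{\lambda'_k}{2}\big]$ and then invokes Stanley's identity $\sum_k\binom{\lambda'_k}{2}=\sum_k(k-1)\lambda_k$, whereas you compute the content sum $\sum_{(k,j)\in\lambda}(j-k)$ directly; these are two ways of writing the same quantity, so the arguments are equivalent.
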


\begin{proof}
Let $\lambda \vdash n$ be a partition with rows $\lambda_1\ge\cdots\ge \lambda_d\geq1.$
    Recall that the conjugate partition $\lambda^\prime$ of $\lambda$ is the partition of $n,$ whose $k$th row is the $k$th column of $\lambda.$ It follows from \cite[p.~534]{Fro01} (or \cite[Theorem 4]{Lassalle}) that for any transposition $(i\,j),$
    $$
    \binom{n}{2} \frac{\chi_\lambda\big((i\ j)\big)}{\chi_\lambda(e)} = 
    \sum_{k=1}^d \left[ \binom{\lambda_k}{2} - \binom{\lambda^\prime_k}{2}    \right].
    $$
    Moreover, by \cite[Proposition 1.8.3]{Sta99} we have
    $$
   \sum_{k=1}^d  \binom{\lambda^\prime_k}{2} = \sum_{i=1}^d (k-1)\lambda_k.
    $$
    Hence,
    \begin{align*}
        \eta_\lambda & = 2 \binom{n}{2} \bigg(1-\frac{\chi_\lambda\big((i\ j)\big)}{\chi_\lambda(e)}\bigg)\\
        & = 2 \binom{n}{2} - 2 \sum_{k=1}^d \left[ \binom{\lambda_k}{2} - \binom{\lambda^\prime_k}{2}    \right]\\
        & = 2 \binom{n}{2} - 2 \sum_{k=1}^d \left[ \binom{\lambda_k}{2} -  (k-1)\lambda_k  \right]\\
        & = n(n-1) + \sum_{k=1}^d \lambda_k - \sum_{k=1}^d \left(\lambda_k^2 - 2(k-1)\lambda_k\right) \\
        & =  n^2 - \sum_{k=1}^d \left(\lambda_k^2 - 2(k-1)\lambda_k\right)\\
        & = n^2 + \frac{d(d-1)(2d-1)}{6} - \sum_{k=1}^d \left(\lambda_k - (k-1)\right)^2.\qedhere
    \end{align*}
\end{proof}

Since $\frac{d(d-1)(2d-1)}{6}=1^2+\cdots+(d-1)^2$, the formula \eqref{eq:etaD} is valid even for $\lambda\vdash n$ with $\het(\lambda)\le d$, i.e., $\lambda=(\lambda_1,\dots,\lambda_d)$ with $\lambda_1\ge\cdots\ge \lambda_d\ge0$.

Using Proposition \ref{prop:etaD}, one can deduce the solution to the $d$-QMC problem for a clique, i.e., the maximal $\eta_\lambda,$  where $\lambda \vdash n$ ranges over all partitions with at most $d$ rows. 
Moreover, the form \eqref{eq:etaD} of $\eta_\lambda$ eases the computation of the precise partition $\lambda \vdash n$ at which the maximum is obtained.

\begin{corollary}\label{prop:69++}
	The maximum value of $\eta_\lambda$ among all partitions $\lambda\vdash n$ with $\het(\lambda)\leq d$ is obtained at
	\begin{equation}\label{eq:answ}
	\lambda=\Big( 
\underbrace{1+\frac{n-r}d, \ldots, 1+\frac{n-r}d}_{r},\ 
\underbrace{\frac{n-r}d, \ldots, \frac{n-r}d}_{d-r}
	\Big)
	\end{equation}
	for $n\equiv r\mod d.$ Moreover, the solution to the $d$-QMC problem for an $n$-clique is \begin{equation}\label{eq:etamax}
	     n^2 + (d-1) n + r^2 - r(d+1) - \frac{n^2-r^2}{d}.
	\end{equation}
\end{corollary}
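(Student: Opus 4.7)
The plan is to combine Proposition \ref{prop:etaD} with a short exchange argument and a direct arithmetic computation. First I would reduce the problem to a sum-of-squares minimization: expanding the square in \eqref{eq:etaD} and using $\sum_{k=1}^d (k-1)^2 = \frac{(d-1)d(2d-1)}{6}$ rewrites Proposition \ref{prop:etaD} as
\[
\eta_\lambda = n^2 - \sum_{k=1}^d \lambda_k^2 + 2\sum_{k=1}^d (k-1)\lambda_k .
\]
A partition $\lambda$ with $\het(\lambda)<d$ may be padded with trailing zeros; since $\sum_{k=\het(\lambda)+1}^d (k-1)^2$ exactly compensates the change in the constant term of \eqref{eq:etaD}, the value of $\eta_\lambda$ is unchanged by this padding. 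So maximizing $\eta_\lambda$ over partitions of $n$ with height at most $d$ is equivalent to minimizing $f(\lambda):=\sum_{k=1}^d (\lambda_k-(k-1))^2$ over weakly decreasing $d$-tuples of nonnegative integers summing to $n$.

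The conceptual heart is a single box-moving step. Suppose $\lambda_1-\lambda_d\geq 2$; let $i$ be the \emph{largest} index with $\lambda_i=\lambda_1$ and $j$ the \emph{smallest} index with $\lambda_j=\lambda_d$. Then $i<j$, $\lambda_i>\lambda_{i+1}$, and $\lambda_{j-1}>\lambda_j$, so the tuple $\mu$ obtained by replacing $\lambda_i$ with $\lambda_i-1$ and $\lambda_j$ with $\lambda_j+1$ is still a partition with the same sum. Writing $a=\lambda_i-(i-1)$ and $b=\lambda_j-(j-1)$, a short computation gives
\[
f(\mu)-f(\lambda) = (a-1)^2-a^2+(b+1)^2-b^2 = 2\bigl(1-(\lambda_i-\lambda_j)-(j-i)\bigr) \leq -4 .
\]
Hence any minimizer satisfies $\lambda_1-\lambda_d\leq 1$, so its parts lie in $\{q,q+1\}$; the partition order then forces the shape \eqref{eq:answ} with $q=\lfloor n/d\rfloor$ and $r=n-qd$.

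To finish, I would substitute this optimal $\lambda$ into the rewritten expression for $\eta_\lambda$. For $r$ copies of $q+1$ followed by $d-r$ copies of $q$, elementary sums give $\sum_k \lambda_k^2=dq^2+2rq+r$ and $\sum_k (k-1)\lambda_k=\binom{r}{2}+q\binom{d}{2}$, whence
\[
\eta_{\max} = n^2 + qd(d-1)-dq^2-2rq+r^2-2r .
\]
Using $qd=n-r$ to eliminate $q$ (noting $-dq^2-2rq = -q(n+r) = -(n^2-r^2)/d$ and $qd(d-1)=(d-1)n-(d-1)r$) collapses the right-hand side to \eqref{eq:etamax}. The main obstacle is purely bookkeeping in this last simplification; the genuine content is concentrated in the one-line exchange step that forces balance, and the rest is routine arithmetic.
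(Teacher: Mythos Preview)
Your proof is correct and uses the same core exchange argument as the paper: move a box from a longest row to a shortest row to strictly decrease $f$, forcing the minimizer to be balanced. You streamline the paper's argument in one respect: rather than first proving separately (as the paper does) that the maximizer must have height exactly $d$ and only then balancing among partitions with $d$ positive parts, you pad with trailing zeros from the outset and handle the height-increase and balancing in a single exchange step over weakly decreasing nonnegative $d$-tuples. The two arguments are equivalent in content, but yours is tidier; your choice of $i$ (largest index with $\lambda_i=\lambda_1$) and $j$ (smallest with $\lambda_j=\lambda_d$) also avoids the mild indexing awkwardness in the paper's ``two jumps'' formulation when there is a single jump of size at least $2$.
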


\begin{proof}
The statement for $d=n$ is routine (or see Proposition \ref{prop:d>n}). We thus assume $d<n$.

First, for each partition $\lambda\vdash n$ with
$e=\het(\lambda)<d$ we find a partition $\tilde\lambda\vdash n$ with 
$\het(\tilde\lambda)=e+1\leq d$ such that $\eta_{\tilde\lambda}>\eta_\lambda$. Let $e'$ be the largest index for which $\lambda_{e'}>1$ 
($e'$ exists since $d<n$). Then construct $\tilde\lambda\vdash n$ with $\het(\tilde\lambda)=e+1$ as follows:
\[
\tilde\lambda_j=
\begin{cases}
\lambda_j & 1\leq j\leq e, \; j\neq e'\\
\lambda_j-1 & j=e'\\
1 & j=e+1.
\end{cases}
\]
Now 
\[
\begin{split}
\eta_{\lambda}-\eta_{\tilde\lambda} & =
n^2 +\frac{e(e-1)(2e-1)}{6}
-\sum_{k=1}^e\big( \lambda_k - (k-1)\big)^2
 \\
 &\phantom{{}={}} -
\Big(
n^2 +\frac{(e+1)e(2e+1)}{6}
-\sum_{k=1}^{e+1}\big( \tilde\lambda_k - (k-1)\big)^2
\Big)
\\
&= 
-e^2 - (\la_{e'}-(e'-1))^2 + (\la_{e'}-1-(e'-1))^2 + (1-e)^2 
\\
&= 
-2(\la_{e'}+e-e')<0,
\end{split}
\]
as desired.

Thus the solution to the $d$-QMC problem is attained at $\lambda\vdash n$ with $\het(\lambda)=d$.
Since $n,d$ are fixed, maximizing $\eta_\lambda$ is by Proposition \ref{prop:etaD} equivalent to minimizing
\begin{equation}\label{eq:businesseta}
f(\lambda):=\sum_{k=1}^d\big( \lambda_k - (k-1)\big)^2
\end{equation}
over partitions $\lambda\vdash n$ with $\het(\lambda)=d$. That is,
$\sum_{k=1}^d \lambda_k=n$ and 
$\lambda_1\geq\lambda_2\geq\cdots\geq\lambda_d\geq1$.

We claim that any minimizer $\lambda^\star$ of \eqref{eq:businesseta} has at most one jump, i.e., $\lambda^\star_1-\lambda^\star_d\leq1.$ Assume otherwise. Then there are $d\geq k > \ell \geq2$ such that
\[
\lambda^\star_d \leq \cdots \leq \lambda^\star_k < \lambda^\star_{k-1} \leq\cdots\leq\lambda^\star_\ell<\lambda^\star_{\ell-1}\leq\cdots \leq \lambda_1.
\]
We now replace $\lambda^*_k$ with $\lambda^*_k+1$ 
and $\lambda^*_{\ell-1}$ with $\lambda^*_{\ell-1}-1$ to obtain a new partition $\lambda^\dagger\vdash n$ with $\het(\lambda^\dagger)\leq d$. Then
\[
\begin{split}
f(\lambda^*)-f(\lambda^\dagger)
 & = 
\big( \lambda_k - (k-1)\big)^2 + 
\big( \lambda_{\ell-1} - (\ell-2)\big)^2  \\
&\phantom{{}={}}
-
\big( \lambda_k+1 - (k-1)\big)^2 
-
\big( \lambda_{\ell-1}-1 - (\ell-2)\big)^2 
\\
&=
2 (k-\ell) + 2 (\lambda_{\ell-1}-\lambda_k) >0,
 \end{split}
\]
contradicting the minimality of $\lambda^\star$. 
Since a minimizer $\lambda$ of \eqref{eq:businesseta} satisfies $\lambda_1-\lambda_d\leq1,$
\eqref{eq:answ} follows.

It is routine to check that the solution to the $d$-QMC problem
for an $n$-clique, obtained by plugging \eqref{eq:answ} into the formula \eqref{eq:etaD} for $\eta_\lambda,$ is in fact \eqref{eq:etamax}.
\end{proof}

\begin{remark}
    The solution \eqref{eq:etamax} to the $d$-QMC problem  for an $n$-clique is indeed an integer, since
    $$
    \frac{n^2-r^2}{d} = \frac{(n-r)(n+r)}{d}
    $$
    and $n-r$ is divisible by $d$ (as $n\equiv r\mod d$).
\end{remark}

\section{Graph clique decomposition}
\label{sec:graph_clique_decomposition}
In this section we refine an algorithm from \cite[Section 6]{BCEHK24}, called \textit{graph clique decomposition}, to solve the $d$-QMC problem for a larger family of graphs, namely star graphs and a large class of complete bipartite graphs. 
The clique decomposition expresses the $d$-QMC Hamiltonian of a given graph as an alternating sum of $d$-QMC Hamiltonians associated with cliques and simple graphs, in a form suitable for eigenvalue analysis.

\subsection{Exact solutions for star graphs}\label{ss:star}
Let $n\ge 2$, and consider the star graph $\starn$ on $n$-vertices and observe that if we label the vertices of $\starn$ so that $n$ corresponds to the central vertex, then
\begin{equation}\label{eq:star}
    \starn = K_n - K_{n-1}.
\end{equation}
E.g., for $n=8$ we have
\[
\tikz 
\graph { subgraph I_n [n=7, clockwise] -- 8};
\quad \raisebox{2.5em}{$=$}  \quad
 \tikz \graph { subgraph K_n [n=7, clockwise] -- 8};
\quad \raisebox{2.5em}{$-$} \quad
 \tikz  \graph { subgraph K_n [n=7, clockwise] };
\]
Here, we view $K_{n-1}$ as a graph on $n$ vertices, in which the vertex $n$ is disconnected from the rest.
This is the clique decomposition of $\starn,$ which together with the Young branching rule \cite[§\,2]{Sag01} facilitates the computation of the eigenvalues of $H_{\starn}^{d}$ significantly. The spectrum of  $H_{\starn}^{d}$ in the case $d=2$ was computed in \cite{BCEHK24}.
Note that $H_\starn^{(n)}=0$ for the 1-row partition $(n)\vdash n$.

\begin{example}[{\cite[Lemma 6.1]{BCEHK24}}]\label{p:star2}
Let $n\ge 2$ and $\lambda=(\lambda_1,\lambda_2)$. If $\lambda_1>\lambda_2$ then $H_\starn^\lambda$ has two eigenvalues
$$e_1 = 2(n-\lambda_1), \quad \quad e_2 = 2(n-\lambda_2+1).
$$
If $\lambda_1=\lambda_2$ then $H_\starn^\lambda$ has only one eigenvalue $e_1 =2(n-\lambda_2+1)=n+2$.
The solution to the $2$-QMC problem for $\starn$ is $2n$, attained at the partition $\lambda=(n-1,1)$.
\end{example}

We extend this result by computing the eigenvalues of  $H_\starn^{d}.$
\begin{theorem}\label{th:Eigstarn}
    If $\lambda = (\lambda_1,\ldots, \lambda_e) \vdash n$ has $e \leq d$ rows $\lambda_1 \geq \cdots \geq \lambda_e \geq 1$, then the eigenvalues of the $d$-QMC irrep Hamiltonian $H_\starn^\lambda$  form a subset of 
$$
\{2(n-\lambda_1), \,2(n-\lambda_2+1),\, \ldots,\, 2(n-\lambda_{e}+e-1)\}
$$
containing the value $\eta_\star =2(n-\lambda_{e}+e-1).$ Hence, the solution to the $d$-QMC problem for $\starn$ {and $2\le d\le n$} is $2(n+d-2),$ attained at any partition with $\lambda_d=1.$
\end{theorem}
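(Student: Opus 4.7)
The clique decomposition \eqref{eq:star}, namely $\starn=K_n-K_{n-1}$ where we view $K_{n-1}$ as acting on the first $n-1$ vertices, yields
\[
H_\starn^\lambda = H_{K_n}^\lambda - \widetilde H_{K_{n-1}}^\lambda,
\]
where $\widetilde H_{K_{n-1}}^\lambda$ is the restriction of the clique Hamiltonian to permutations fixing the vertex $n$; it lies in the image of $\CC[S_{n-1}]$ under $\rho_\lambda$. By Lemma \ref{lemma scalar}, $H_{K_n}^\lambda=\eta_\lambda I$ is scalar, so the spectral problem for $H_\starn^\lambda$ reduces to diagonalizing $\widetilde H_{K_{n-1}}^\lambda$ as an $S_{n-1}$-operator on $V_\lambda$.

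The key structural input is the Young branching rule \cite[§\,2]{Sag01}: as an $S_{n-1}$-module,
\[
V_\lambda \downarrow_{S_{n-1}} \;\cong\; \bigoplus_{\mu} V_\mu,
\]
the sum running over partitions $\mu\vdash n-1$ obtained from $\lambda$ by removing a single removable corner box (i.e.\ one from a row $k$ with $\lambda_k>\lambda_{k+1}$, where $\lambda_{e+1}:=0$). On each such $V_\mu$, the operator $\widetilde H_{K_{n-1}}^\lambda$ is by Lemma \ref{lemma scalar} again scalar, equal to $\eta_\mu I$. Therefore the spectrum of $H_\starn^\lambda$ consists of the values $\eta_\lambda-\eta_\mu$ as $\mu$ ranges over removable corners of $\lambda$.

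Now I evaluate $\eta_\lambda-\eta_\mu$ using Proposition \ref{prop:etaD} (one first checks that padding by zero rows leaves the formula invariant, so that the same number $e$ of rows can be used for $\lambda$ and $\mu$). If $\mu$ is obtained by removing a box in row $k$, then only the $k$-th square term changes, and with $b=\lambda_k-k$ a direct computation gives
\[
\eta_\lambda-\eta_\mu
=(2n-1)-\bigl((b+1)^2-b^2\bigr)
=2(n-\lambda_k+k-1).
\]
This proves the inclusion of the spectrum in $\{2(n-\lambda_k+k-1):1\le k\le e\}$, and the value $\eta_\star=2(n-\lambda_e+e-1)$ is realized because row $e$ (with $\lambda_e\ge 1$) is always a removable corner.

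Finally, to deduce the solution to the $d$-QMC problem for $\starn$, I maximize $2(n-\lambda_k+k-1)$ over all $\lambda\vdash n$ with $\het(\lambda)\le d$ and all removable corners $k$. Since $\lambda_k\ge 1$ whenever $k\le\het(\lambda)$, we have $k-\lambda_k\le k-1\le d-1$, and equality throughout forces $k=\het(\lambda)=d$ and $\lambda_d=1$. Such partitions exist (e.g.\ the hook $(n-d+1,1,\dots,1)$), so the maximum eigenvalue is $2(n+d-2)$. The only non-routine step is a careful bookkeeping with the branching rule and the convention on row counts in Proposition \ref{prop:etaD}; everything else is a direct computation.
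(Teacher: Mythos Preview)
Your proposal is correct and follows essentially the same approach as the paper's proof: both use the clique decomposition $\starn=K_n-K_{n-1}$, invoke Lemma \ref{lemma scalar} to make $H_{K_n}^\lambda$ scalar, apply the Young branching rule to decompose $V_\lambda\!\downarrow_{S_{n-1}}$, and compute $\eta_\lambda-\eta_\mu=2(n-\lambda_k+k-1)$ for each removable corner before maximizing. Your explicit check that Proposition \ref{prop:etaD} is invariant under padding by zero rows is a nice bit of bookkeeping that the paper leaves implicit.
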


\begin{proof}
By Lemma \ref{lemma scalar}, $H_{K_n}^\lambda$ is a scalar matrix for every partition $\lambda$.
From \eqref{eq:star} we deduce 
that $H_\starn^\lambda$ is similar to $H_{K_n}^\lambda - H_{K_{n-1}}^\lambda {\otimes I_d.}$
The eigenvalues of $H_{K_{n-1}}^\lambda$ can be computed using the Young branching rule \cite[§\,2]{Sag01}. It states that the restriction of any irrep, say labeled by the partition $\lambda$, of $S_n$ to the subgroup $S_{n-1}$ decomposes as a direct sum of all the irreps of $S_{n-1}$ which can be obtained from $\lambda$ by removing one box.

Hence, the eigenvalues of $H_\starn^\lambda$ are obtained by subtracting from $\eta_\lambda$ (which is the single eigenvalue of $H_{K_n}^\lambda$) each of the (at most $e$) eigenvalues of $H_{K_{n-1}}^\lambda.$ 
Precisely, we use the following procedure: Let the index $j$ run from $e$ to $1$ and let 
$\eta_\lambda$ be the eigenvalue of $H_{K_n}^\lambda$ and $\eta_{\mu_j}$ the eigenvalue of $H_{K_{n-1}}^{\mu_j},$ where $\mu_j$ is obtained from $\lambda$ by removing one box from the $j$th row. Now let
\begin{align*}
\eta_*(j) = \eta_\lambda - \eta_{\mu_j} =
     2(n-\lambda_j+j-1).
\end{align*}
As noted, $\eta_* = \eta_*(e)$ is an eigenvalue of $H_\starn^\lambda$ if $\lambda_e \geq 1,$ since in this case one can remove a box from the last row of the Young diagram of $\lambda$ to obtain a valid Young diagram of a partition of $n-1.$ Next, consider $j=e-1.$ If $\lambda_{e-1} > \lambda_e$ (hence, $\lambda_{e-1}\geq 2$), then $\eta_*(e-1)$ is an eigenvalue of $H_\starn^\lambda,$ because one box can be removed from $\lambda_{e-1}$ to obtain a valid partition of $n-1;$ 
otherwise, if $\lambda_{e-1} = \lambda_e,$ proceed to $j=e-2$ and so on.

It is now immediate that the largest eigenvalue of $H_\starn^{d}$ (for $d\le n$) is $2(n+d-2),$ which is obtained by plugging $j=d$ and $\lambda_d=1$ into the expression for $\eta_\star.$
\end{proof}

To give a more precise description of the spectrum of $H_\starn^\lambda$ for $\lambda$ with $\het(\lambda) \le d$, it is easier to look at that of $nI-\frac12 H_\starn^\lambda$: its eigenvalues are obtained from the strictly decreasing sequence
$$\lambda_1,\, \lambda_2-1,\, \dots,\, \lambda_d-(d-1)$$
by keeping only the smallest element of any subsequence of consecutive values, and then removing $-(d-1)$ if necessary. Indeed, the subsequences of consecutive values correspond to rows in $\lambda$ with equal length (thus when restricting the irrep to $S_{n-1}$, a box can be removed only from the lowest such row), while removing $-(d-1)$ corresponds to the possibility that $\lambda$ has less than $d$ rows. 

As an example we now explicitly present the spectrum of $H_\starn^{3}.$

\begin{example}\label{p:star3} Let $\lambda = (\lambda_1,\lambda_2,\lambda_3)$ be a partition of $n$ with three rows. The $d$-QMC irrep Hamiltonian $H_\starn^{\lambda}$ has at most three distinct eigenvalues, namely
\begin{enumerate}[\rm (1)]
\item if $\lambda_1>\lambda_2>\lambda_3$ then it has three eigenvalues
$$
e_1 = 2(n-\lambda_3+2), \quad e_2 =  2(n-\lambda_2+1), \quad e_3=2(n-\lambda_1),
$$
\item if $\lambda_1=\lambda_2>\lambda_3$ then it has two eigenvalues
$$
e_1 = 2 (n-\lambda_3+2), \quad e_2 = 2(n-\lambda_2+1),
$$
\item if $\lambda_1>\lambda_2=\lambda_3$ then it has two eigenvalues
$$
e_1 = 2 (n-\lambda_3+2), \quad e_2 = 2(n-\lambda_1),
$$
\item if $\lambda_1=\lambda_2=\lambda_3$ then it has one eigenvalue
$$
e_1 = 2 (n-\lambda_3+2).
$$
\end{enumerate}
The solution to the $3$-QMC problem for $\starn$ is $2(n+1),$ attained at partition of the form $\lambda=(\lambda_1,\lambda_2,1)$.
\end{example}

\begin{corollary}\label{c:dist}
Let $n\ge2$. If $\lambda,\mu$ are partitions of $n$ with distinct parts, then
$$\spec(H_{\starn}^\lambda)=\spec(H_{\starn}^\mu)
\quad\iff\quad\lambda=\mu.$$
\end{corollary}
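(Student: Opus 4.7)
My plan is to argue the reverse implication (the forward direction being trivial) by extracting from Theorem \ref{th:Eigstarn} that, under the distinct-parts hypothesis, the containment there becomes an equality with no repetitions among the eigenvalues, after which the partition can be recovered unambiguously from the spectrum.

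Concretely, for $\lambda = (\lambda_1, \ldots, \lambda_e) \vdash n$ with $\lambda_1 > \cdots > \lambda_e \geq 1$, I would prove the identity
\[
\spec(H_{\starn}^\lambda) = \{ 2(n - \lambda_j + j - 1) : 1 \leq j \leq e \}
\]
with the $e$ values on the right all distinct. Theorem \ref{th:Eigstarn} already provides the inclusion $\subseteq$. For the reverse inclusion, I would reuse the inductive procedure inside the proof of that theorem: the value $\eta_*(j) = 2(n - \lambda_j + j - 1)$ appears in the spectrum whenever a box can be removed from row $j$ of the Young diagram of $\lambda$, which, by the Young branching rule, requires $\lambda_{j-1} > \lambda_j$ for $j \geq 2$ (automatic under the distinct-parts hypothesis) and $\lambda_e \geq 1$ for the boundary case $j = e$. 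Distinctness of the $e$ values then follows from the gap estimate
\[
(\lambda_j - (j-1)) - (\lambda_{j+1} - j) = \lambda_j - \lambda_{j+1} + 1 \geq 2,
\]
which moreover shows that $j \mapsto 2(n - \lambda_j + j - 1)$ is strictly increasing on $\{1, \ldots, e\}$.

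With this description in hand, the reverse implication is bookkeeping. If $\spec(H_{\starn}^\lambda) = \spec(H_{\starn}^\mu)$, then the cardinality of the spectrum forces $\het(\lambda) = \het(\mu) = e$, and listing the eigenvalues of each side in increasing order matches them to the index $j = 1, \ldots, e$. This yields $\lambda_j - (j-1) = \mu_j - (j-1)$ for every $j$, hence $\lambda_j = \mu_j$ and $\lambda = \mu$. I do not foresee any genuine obstacle here; the only point deserving care is upgrading the inclusion in Theorem \ref{th:Eigstarn} to an equality under the distinct-parts hypothesis, which I would flag as a short remark extracted from that theorem's proof rather than reproving from scratch.
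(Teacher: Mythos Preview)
Your proposal is correct and is exactly the intended argument: the paper states this corollary without proof, relying on the spectrum description of $H_{\starn}^\lambda$ given just before it (equivalently, the proof of Theorem~\ref{th:Eigstarn}), and your write-up spells out precisely those details---that distinct parts make every row removable so the inclusion becomes an equality, and that the gap $\lambda_j-\lambda_{j+1}+1\ge2$ makes the resulting $e$ eigenvalues distinct and hence recoverable.
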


\begin{remark}\label{r:dist}
The assumption about distinct parts in Corollary \ref{c:dist} is necessary. Indeed, $\lambda=(4,2,2,2,2)$ and $\mu=(5,5,1,1)$ satisfy
$\spec(H_{\bigstar_{12}}^\lambda)
=\{16,28\}=
\spec(H_{\bigstar_{12}}^\mu)$.
With a bit more effort, one also can find distinct partitions with equal height such that $H_{\starn}^\lambda$ and $H_{\starn}^\mu$ have the same eigenvalues:
\begin{align*}
\lambda&= (8,5,5,5,5,2,2),\\
\mu&= (9,9,4,4,2,2,2)
\end{align*}
satisfy $\spec(H_{\bigstar_{32}}^\lambda)
=\{24, 31, 36\}=
\spec(H_{\bigstar_{32}}^\mu)$.
\end{remark}

\subsection{Exact solutions for complete bipartite graphs} \label{sec:bipartite}
The star graph $\starn$ from the previous section is a special example of a complete bipartite graph. We now describe how the solution to the $d$-QMC problem can be obtained for a large class of complete bipartite graphs $K_{n-k,k}$ with $k \leq n/2$; by this we mean a graph whose vertices are separated into two subsets $N_1$ and $N_2$ of size $n-k$ and $k$ respectively, each with no internal connections and such that every vertex in $N_1$ is connected to every vertex in $N_2.$
The complement of $K_{n-k,k}$ consists of two cliques $K_{n-k}$ and $K_k$ (cf.~Equation \eqref{eq:star}), hence
\begin{equation}\label{eq:signed}
K_{n-k,k} = K_{n} - (K_{n-k} \oplus K_k).
\end{equation}
E.g., if $n=6$ and $k=2,$ then we have
	\begin{center}
\begin{tikzpicture}[scale=1]
	\begin{scope}[yshift=2cm, xshift=2cm]
	\foreach \i in {1,...,4} {
		\node (u\i) at (0, -\i) {\i};
	}
	\node (v1) at (2, -1.5 - 0.5) {5};
	\node (v2) at (2, -1.5 - 1.5) {6};
	\node at (2.5,-2.5) {$=$};
	\foreach \i in {1,...,4} {
		\foreach \j in {1,...,2} {
			\draw (u\i) -- (v\j);
		}
	}
\end{scope}
\begin{scope}[xshift=6cm,yshift=-0.5cm]
\graph { subgraph K_n [n=6, clockwise]};

\node at (1.4,0) {$-$};
\node at (4.5,0) {$\oplus$};
\end{scope}
\begin{scope}[xshift=9cm, yshift=-0.5cm]
 \graph { subgraph K_n [n=4, clockwise]};
\end{scope}
	\begin{scope}[xshift=11cm,yshift=-2cm]
	 \graph [clockwise] {
	5 -- 6; 
	};
	\end{scope}
\end{tikzpicture}
\end{center}
This gives a formula for the $d$-QMC Hamiltonian $H_{K_{n-k,k}}^{d},$ 
$$
H_{K_{n-k,k}}^{d} = H_{K_{n}}^{d} - 
\left(H_{K_{n-k}}^{d}{\otimes I_{d^k}} + {I_{d^{n-k}}\otimes }H_{K_k}^{d}\right).
$$
Note that the summands on the right-hand side commute.
Thus, for $\lambda \vdash n$ with at most $d$ rows,
\begin{equation}\label{eq:pre_n+m}
H_{K_{n-k,k}}^\lambda = H_{K_{n}}^\lambda - 
\left( (H_{K_{n-k}}\otimes I)^\lambda + (I\otimes H_{K_k})^\lambda\right).
\end{equation}
First, note that $H_{K_{n}}^\lambda$ is a known scalar matrix (by Lemma \ref{lemma scalar}). 
Second, the matrix $H_{K_{n-k}}\otimes I+I\otimes H_{K_k}$ belongs to the image of the subspace $\RR[S_{n-k}]\otimes \id+\id\otimes \RR[S_k]$ under the representation $\rho_n^{(d)}$ of $S_n$. 
This subspace is contained in the subalgebra $\CC[S_{n-k}]\otimes \CC[S_k]\cong \CC[S_{n-k}\times S_k]$ of $\CC[S_n]$.
In order to determine eigenvalues of \eqref{eq:pre_n+m}, it therefore suffices to consider the restriction of the irrep $\lambda$ of $S_n$ to a representation of $S_{n-k}\times S_k$.
The matrices $(H_{K_{n-k}}\otimes I)^\lambda$ and $(I\otimes H_{K_k})^\lambda$ commute, so the eigenvalues of $(H_{K_{n-k}}\otimes I)^\lambda+(I\otimes H_{K_k})^\lambda$ are sums of matching eigenvalues of $(H_{K_{n-k}}\otimes I)^\lambda$ and $(I\otimes H_{K_k})^\lambda$.
While the latter matrices are similar to $H_{K_{n-k}}^\lambda\otimes I$ and $I\otimes H_{K_k}^\lambda$, respectively, the transition matrices involved in similarities are distinct, because projecting to the irrep $\lambda$ within $\rho_n^{(d)}$ does not preserve the tensor decomposition $(\CC^d)^{\otimes (n-k)}\otimes (\CC^d)^{\otimes k}$. 
To compute the eigenvalues of \eqref{eq:pre_n+m}, one therefore needs to understand how the restriction of the irrep $\lambda$ on $S_n$ to $S_{n-k}\times S_k$ decomposes as a direct sum of irreducible representations of $S_{n-k}\times S_k$.
Similarly to the case of the star graph, a branching rule is invoked, this time the Littlewood-Richardson rule \cite[Section 4.9]{Sag01} together with the Frobenius reciprocity \cite[Theorem 1.12.6]{Sag01}. More precisely, the restriction of the irreducible module of $S_n$ corresponding to $\lambda\vdash n$ decomposes as
\begin{equation}\label{e:LR}
V_\lambda^{\downarrow S_{n-k}\times S_k}=\bigoplus_{\substack{\mu\vdash n-k,\\ \nu\vdash k}} 
\big(V_\mu \otimes V_\nu\big)^{c^\lambda_{\mu\nu}},
\end{equation}
where $c^\lambda_{\mu\nu}$ is the Littlewood–Richardson coefficient \cite[Section 4.9]{Sag01}.
Combining \eqref{eq:pre_n+m} and \eqref{e:LR}, the eigenvalues of $H_{K_{n-k,k}}^\lambda$ are
\begin{equation}\label{eq:lmn}
\Delta(\la,\mu,\nu):=\eta_\lambda-(\eta_\mu+\eta_\nu)
\end{equation}
over all pairs of $\mu\vdash n-k$ and $\nu\vdash k$ such that the Littlewood–Richardson coefficient $c^\lambda_{\mu\nu}$ is nonzero.

\subsubsection{Littlewood-Richardson coefficients and balanced partitions}\label{sec:LR}

In order to explain which triples of partitions $(\lambda,\mu,\nu)$ are admissible in \eqref{eq:lmn}, we introduce some further terminology. Let $\mu$ be contained in $\lambda$, in the sense that $\mu_i\le \lambda_i$ for all $i$. A filling $T$ of the skew-shaped Young diagram $\lambda/\mu$ with natural numbers is a \textit{Littlewood-Richardson (LR) tableau} if
\begin{enumerate}
    \item it is a semistandard Young tableau (its entries weakly increase along each row and strictly increase down each column), and
    \item the concatenation of reversed rows in $T$ is a lattice word (a word in which every prefix contains at least as many $i$s as $(i + 1)$s).
\end{enumerate}
The \textit{content} of a tableau $T$ is the partition whose $i$th part counts the number of $i$s in $T$. 
By \cite[Theorem 4.9.4]{Sag01}, the Littlewood-Richardson coefficient $c^\lambda_{\mu\nu}$ counts the number of LR tableaux of shape $\lambda/\mu$ with content $\nu$.

\begin{remark}
To apply the clique decomposition  with more than two summands to the $d$-QMC problem, let
  \begin{align*}
        H =  H^d_{G_1} \otimes  I_{d^{n-n_1}} + 
I_{d^{n_1}} \otimes H^d_{G_2} \otimes I_{d^{n-n_1-n_2}} + 
        \cdots + I_{d^{n-n_r}}\otimes H^d_{G_r}  ,
    \end{align*}
    where the graphs $G_i$ act on pairwise disjoint sets of indices of size $n_i$ with $n_1 + \cdots +n_r = n.$
   Then
   \begin{align*}
        H^\la =  (H^d_{G_1} \otimes  I_{d^{n-n_1}})^\la + \cdots +(I_{d^{n-n_r}}\otimes H^d_{G_r} )^\la
    \end{align*}
   and the eigenvalues of $H^\la$ are of the form
   $$
   \alpha_{1} + \cdots + \alpha_{r},
   $$
   where $\alpha_i$ are eigenvalues of $H^{\la_i}_{G_i}$ and 
   $\la_i \vdash n_i$ are such that the \textbf{iterated LR coefficient} $c^\la_{\la_1,\ldots,\la_r}$  is nonzero. The iterated LR coefficients are inductively defined via the usual LR coefficients $c_{\la_i, \la_{i+1}}^\cdot$ (see  \cite{KLMS12,GL20}). In fact,
\begin{align*}
   c^\la_{\la_1,\ldots,\la_r} = \sum_{\zeta \vdash n-n_r} c^\zeta_{\la_1,\ldots,\la_{r-1}} \cdot c^\la_{\zeta\,\la_r} 
   = \sum_{\zeta_1,\ldots,\zeta_{k-2}} c^{\zeta_1}_{\la_1\la_2} c^{\zeta_2}_{\zeta_1 \,\la_3} \cdots  c^{\zeta_{r-2}}_{\zeta_{r-3}\la_{r-1}}c^\la_{\zeta_{k-2}\,\la_r},
\end{align*}
where $\zeta_i \vdash n_1+\cdots +n_i.$
The iterated LR coefficients are also invariant under any permutation of the partitions.

In the special case when $G_i = K_{n_i}$ for all $i,$
 the eigenvalues of $H^\la$ are of the form
   $$
   \eta_{\la_1} + \cdots + \eta_{\la_r},
   $$
   where $\la_i \vdash n_i$ are such that $c^\la_{\la_1,\ldots,\la_r}$  is nonzero.

The above observation may be used to reduce the $d$-QMC problem for a given graph $G$ to simpler $d$-QMC problems. 
Every graph $G$ admits a unique tree clique decomposition \cite[Theorem 6.3]{BCEHK24} as a signed sum of simpler graphs $G_i$ (e.g., a complete bipartite graph is a clique minus two cliques as in \eqref{eq:signed}). The Hamiltonian $H_G$ then decomposes as a signed sum of Hamiltonians $H_{G_i}$ \cite[Theorem 6.5]{BCEHK24}. 
Hence, the iterated LR coefficients dictate as above how eigenvalues of $H_G^\lambda$ express as signed sums of eigenvalues of $H_{G_i}^{\lambda_i}$.\footnote{
In \cite[Section 6.5]{BCEHK24}, it is erroneously asserted that eigenvalues of $H_G^\lambda$ are signed Minkowski sums of eigenvalues of $H_{G_i}^{\lambda_i}$ for all $\lambda_i$ (without the non-vanishing condition on the iterated LR coefficients). 
}
In particular, if $G$ is a signed sum of cliques, the iterated LR coefficients and the formula \eqref{eq:etaD} for $\eta_\lambda$ give a combinatorial approach to solving the $d$-QMC for $G$.
\end{remark}

We conjecture that the following is true.

\begin{conjecture}\label{conj:bipartite}
The maximum of $\Delta(\lambda,\mu,\nu)$ as in \eqref{eq:lmn} is attained at a triple of partitions $\lambda \vdash n$,  $\mu \vdash n-k$ and $\nu \vdash k$ such that at most one part of $\lambda$ is the sum of a part in $\mu$ and a part in $\nu$, while the other parts of $\lambda$ are distributed in $\mu$ and $\nu$.
\end{conjecture}

Conjecture \ref{conj:bipartite} is supported by numerical experiments for all values $n,k,d$ with $2k \leq n$, $d < n$, $n \leq 26$.

Given $\mu\vdash n-k$ and $\nu\vdash k$, let $\mu\uplus\nu\vdash n$ be the partition obtained by merging and sorting the parts of $\mu$ and $\nu$ (i.e., if partitions are viewed as multisets of rows, then $\mu\uplus\nu$ is the disjoint union of $\mu$ and $\nu$). We say that $\mu$ is a \textbf{subpartition} of $\lambda$ if $\mu$ is obtained from $\lambda$ by discarding some rows (this is a stronger condition than $\mu$ being contained in $\lambda$). In particular, $\mu$ and $\nu$ are subpartitions of $\mu\uplus\nu$.

 \begin{proposition}\label{prop:02}
When $k\leq 4$ or $d\le3$, the maximum of $\Delta(\lambda,\mu,\nu)$ as in \eqref{eq:lmn} is attained at a triple of partitions $\mu \vdash n-k$, $\nu \vdash k$ and $\lambda=\mu\uplus\nu \vdash n$. 
For such a triple, the coefficient $c^\lambda_{\mu \nu}$ is nonzero.
 \end{proposition}

For the rest of this section we focus on maximizing $\Delta(\mu\uplus\nu,\mu,\nu)$.
The proof of Proposition \ref{prop:02} is rather technical, and is presented in Appendix \ref{app:LR}.

\begin{remark}\label{rem:conj1}
The triple $(n,k,d)=(10,5,5)$ is the first where the conclusion of Proposition \ref{prop:02} fails.
The solution to the $d$-QMC problem for $K_{5,5}$ is 
72 attained at
\[
    \la=(2,2,2,2,2),\; \mu=(2,2,1),\; \nu=(2,2,1).
\]
Maximizing $\Delta(\la,\mu,\nu)$ over triples $(\la,\mu,\nu)$ as in Proposition \ref{prop:02} yields 70, attained at
\[
    \la=(3, 2, 2, 2, 1),
    \; \mu= (2, 2, 1),\; \nu=(3, 2)
    \quad \text{and} \quad
    \la=(3, 2, 2, 2, 1),\; \mu=(3, 2), \; \nu=({2, 2, 1}).
\]

\end{remark}

\begin{lemma}\label{lem:03}
The expression $\Delta(\mu\uplus\nu,\mu,\nu)$ is maximized when $\mu$ and $\nu$ are balanced.
\end{lemma}

\begin{proof} 
Let $\lambda=\mu\uplus\nu$. 
Choose the largest index $k,$ for which there is an $l>k$ such that
$\lambda_k=\mu_i,\lambda_l=\mu_j$ for some $i<j$ and
$\lambda_l -\lambda_k \geq 2.$ Then construct $\lambda^\dagger$ (and $\mu^\dagger$) by moving a box from $\lambda_l$ ($\mu_j$ resp.) to $\lambda_k$ ($\mu_i$ resp.). Note that by the choice of $k,$ the obtained $\lambda^\dagger$ and $\mu^\dagger$ are indeed valid partitions. Then
    \begin{align*}
        \eta_\lambda - \eta_\mu -\eta_\nu- (\eta_{\lambda^\dagger} - \eta_{\mu^\dagger} -\eta_\nu) = & - (\lambda_k-k+1)^2 + (\lambda_k^\dagger-k+1)^2 \\
        & - (\lambda_l-l+1)^2  + (\lambda_l^\dagger-l+1)^2\\
       &+ (\mu_{i}-i+1)^2 - (\mu_i^\dagger-i+1)^2 \\
        &+ (\mu_j-j+1)^2  - (\mu_j^\dagger-j+1)^2\\
         = \ & 2 (j-i - l + k) <0
    \end{align*}
    since clearly, $j-i \leq l-k.$ After repeating this procedure inductively we deduce that a balanced $\mu$ gives the highest value of \eqref{eq:lmn}.
    Since the procedure does not affect $\nu$ and the rows of $\mu$ are still disjoint from the rows of $\nu,$ the coefficient $c_{\mu^\dagger \nu}^\lambda$ is nonzero.
    By symmetry, the same holds for $\nu$ instead of $\mu.$
\end{proof}

\begin{lemma}\label{lem:04}
The expression $\Delta(\mu\uplus\nu,\mu,\nu)$ is maximized when $\het(\mu\uplus\nu)=d$.
\end{lemma}

\begin{proof}
Let $\lambda=\mu\uplus\nu$. 
    If $\lambda \vdash n$ has height $f <d,$ let $l_0$ be the biggest index such that $\lambda_{l_0}>1.$ 
    The partition $\lambda^\dagger \vdash n$ is obtained from $\lambda$ 
    by moving the last box of the $l_0$th row of $\lambda$ to a new row (so that $\lambda^\dagger$ has $f+1$ rows).

Denote $e = \text{ht}(\mu).$
    If $\lambda_{l_0}=\mu_j$ for some $j,$ then $\mu^\dagger$ is constructed from $\mu$ by  moving the last box of the $j$th row of $\mu$ to a new row (so that $\mu^\dagger$ has $e+1$ rows). Note that by the definition of $l_0,$  we indeed obtain a valid partition. Hence,
    \begin{align*}
        \eta_\lambda  - \eta_\mu  - (\eta_{\lambda^\dagger}- \eta_{\mu^\dagger}) = \,& \,  \frac{f(f-1)(2f-1)}{6} - \frac{(f+1)f(2f+1)}{6} \\
        & - \frac{e(e-1)(2e-1)}{6} + \frac{(e+1)e(2e+1)}{6}\\
        \phantom{{}={}}& -   (\lambda_{l_0}-l_0+1)^2 + (\lambda^\dagger_{l_0}-l_0+1)^2 + (\lambda^\dagger_{f+1}-f)^2\\
        \phantom{{}={}}& +  (\mu_j-j+1)^2 - (\mu^\dagger_j-j+1)^2 - (\mu^\dagger_{e+1}-e)^2\\
        = \, & \,2 (e - j -f + l_0)  \leq 0. 
    \end{align*}
    If $\lambda_{l_0}=\nu_k$ for some $k,$ then construct $\nu^\dagger$ from $\nu$ by  moving the last box of the $k$th row of $\nu$ to a new row. By analogy with the above computation, we have 
    $$
    \eta_\lambda  - \eta_\nu  - (\eta_{\lambda^\dagger}- \eta_{\nu^\dagger}) \leq 0.
    $$
    It is finally clear that $c_{\mu \nu^\dagger}^\lambda >0.$ 
\end{proof}

By Lemmas \ref{lem:03} and \ref{lem:04}, we restrict to $\het(\mu\uplus\nu)=d$ and balanced $\mu\vdash n-k$, $\nu\vdash k$ for the rest of the section.
If $\mu\vdash n-k$ and $\nu\vdash k$ are such that the last part of $\mu$ is not smaller than the first part of $\nu$, we write $\mu\uplus\nu$ as $(\mu,\nu)$, to stress that this partition of $n$ obtained by concatenating $\mu$ and $\nu$. In the following lemmas, whenever $(\mu,\nu)\vdash n$ is referred to, it is assumed that $\mu$ and $\nu$ are suitable for $(\mu,\nu)$ to be valid.

\def\la{\lambda}
\begin{lemma}\label{lem:41}
Suppose $\mu\vdash n-k$ and $\nu\vdash k$ are such that $(\mu,\nu)$ is valid. 
Letting $e=\het(\mu)$, we have
\begin{equation}\label{eq:delta}
    \Delta((\mu,\nu),\mu,\nu)= 2 k (e + n -k).
\end{equation}
\end{lemma}

\begin{proof}
Let $\lambda=(\mu,\nu)$.
Using the formula \eqref{eq:etaD} from Proposition \ref{prop:etaD}, we have
\begin{equation}\label{eq:deltapart1}
    \begin{split}
\Delta(\la,\mu,\nu) & = n^2-(n-k)^2-k^2 \\ & 
\phantom{{}={}}
+ \frac{d(d-1)(2d-1)}{6} - \frac{e(e-1)(2e-1)}{6} - \frac{(d-e)(d-e-1)(2d-2e-1)}{6} \\
&
\phantom{{}={}}
 -\sum_{j=1}^d\big( \la_j - (j-1)\big)^2
+ \sum_{j=1}^e\big( \mu_j - (j-1)\big)^2
+ \sum_{j=1}^{d-e}\big( \nu_j - (j-1)\big)^2
    \end{split}
\end{equation}
Since the first $e$ rows of $\la$ form $\mu$, 
the third line in \eqref{eq:deltapart1} simplifies into
    \begin{align*}
&-\sum_{j=e+1}^d\big( \la_j - (j-1)\big)^2
+ \sum_{j=1}^{d-e}\big( \nu_j - (j-1)\big)^2  \\
= &-\sum_{j=1}^{d-e}\big( \la_{j+e} - (j+e-1)\big)^2
+ \sum_{j=1}^{d-e}\big( \nu_j - (j-1)\big)^2 \\
= &-\sum_{j=1}^{d-e}\big( \nu_j - (j+e-1)\big)^2
+ \sum_{j=1}^{d-e}\big( \nu_j - (j-1)\big)^2 \\
= & \sum_{j=1}^{d-e} e \big( 2 \nu_j - 2j -e +2\big) 
= e (d - d^2 - e + d e + 2 k).
    \end{align*}
Putting this back into \eqref{eq:deltapart1} and simplifying the obtained expression yields \eqref{eq:delta}.
\end{proof}

\begin{corollary}\label{cor:d=2}
The solution to the $2$-QMC problem for $K_{n-k,k}$ is
$2k(1+n-k)$.
\end{corollary}

\begin{proof}
As explained in the beginning of this section, we are maximizing \eqref{eq:delta}. 
By Proposition \ref{prop:02}, this is maximized when 
$\la=(n-k,k)$, $\mu=(n-k)$ and $\nu=(k)$. The desired value is then given in Lemma \ref{lem:41}.
\end{proof}

\begin{lemma}\label{lem:42}
Assume $\mu\vdash n-k$ and $\nu\vdash k$ are balanced, 
$\het(\mu)=e$, 
and $\lambda=(\mu,\nu)\vdash n$.
Suppose $\mu^\dagger$ is the balanced partition of $n-k$ on $e+1$ rows, $\nu^\dagger$ is the balanced partition of $k$ on $d-e-1$ rows. If $\lambda^\dagger=(\mu^\dagger,\nu^\dagger)$ is a (valid) partition of $n$, then
\[
    \Delta(\la^\dagger,\mu^\dagger,\nu^\dagger) > 
    \Delta(\la,\mu,\nu).
\]
\end{lemma}

\begin{proof}
Immediate from Lemma \ref{lem:41}.
\end{proof}

\def\dag{\dagger}
\begin{lemma}\label{lem:43}
Suppose $\mu\vdash n-k$ and $\nu\vdash k$ are balanced, 
$\het(\mu)=e$, 
and $\lambda=(\mu,\nu)\vdash n$.
Suppose $\mu^\dagger$ is the balanced partition of $n-k$ on $e+1$ rows, $\nu^\dagger$ is the balanced partition of $k$ on $d-e-1$ rows. Assume $\la^\dag=(\nu^\dag,\mu^\dag)\vdash n$.
Then
\[
    \Delta(\la,\mu,\nu)-\Delta(\la^\dag,\mu^\dag,\nu^\dag) =
    2 \big((-1 + d) k + (1 - d + e) n\big).
\]
In particular, 
\[\Delta(\la,\mu,\nu)\geq \Delta(\la^\dag,\mu^\dag,\nu^\dag) 
\quad\iff\quad
e\geq(d-1) \Big(1-\frac{k}{n}\Big).
\]
\end{lemma}

\begin{proof}
This is again immediate from Lemma \ref{lem:41}.
\end{proof}

Further analysis splits into two main cases, according to the following definition.

\begin{definition}\label{def:unbalancing}
We call a triple $(n,k,d)$ \textbf{balancing} if the
rows of the balanced partition of $n$ of height $d$ 
can be partitioned into a (balanced) partition of $n-k$
and a (balanced) partition of $k$. Otherwise we call the triple
$(n,k,d)$ \textbf{unbalancing}.
\end{definition}

\begin{remark}\label{rem:bal}
Letting $q=\floor{\frac nd}$ and $r=n-qd$, the triple
$(n,k,d)$ is balancing iff there are integers $0\le s\le r$ and $0\le t\le d-r$ such that $k=s(q+1)+tq$. This is equivalent to the existence of an integer $\max\{0,\frac{k-q(d-r)}{q+1}\}\le s\le \min\{r,\frac{k}{q+1}\}$ such that $k\equiv s\mod q$.
\end{remark}

\subsubsection{Unbalancing triples}

First, we maximize $\Delta(\mu\uplus\nu,\mu,\nu)$ when $(n,k,d)$ is unbalancing.

\begin{lemma}\label{lem:balancing}
Let $k,n,d$ be positive integers with $k< n$ and $n\ge d$.  If $\frac{dk}{n}\in\mathbb{N}$, then the rows of the balanced partition $\la$ of $n$ of height $d$ can be split into a partition of $k$ and a complementary partition of $n-k$.
\end{lemma}

\begin{proof}
Write
\[
n = qd + r,
\qquad
0 \le r < d,
\]
so that the balanced partition $\lambda$ 
consists of $r\times (q+1)$ and $(d-r)\times q$.

Set $$t=\frac{dk}{n}\in\mathbb{N}.$$
Because $k< n$, one has $t<d$.  Now define
\[
x =\frac{rt}{d},
\qquad
y =\frac{(d-r)t}{d}.
\]
Then \[
    x=\frac{(n-qd)t}{d}= \frac{nt}d-qt=\frac{n dk}{nd}-qt=k-qt,
\]
 so $x,y\in\mathbb{N}$. Further, $x+y=t$, $x\le r$, and $y\le d-r$.  Finally,
\begin{align*}
x(q+1)+y\,q
= q(x+y)+x 
= qt + \frac{r\,t}{d} 
= (qd+r)\frac{t}{d} 
= n\frac{t}{d} 
= k.
\end{align*}
Hence taking $x$ of the ``large'' parts of $\la$ and $y$ of the ``small'' parts gives a partition of $k$, and the remaining parts sum to $n-k$.
\end{proof}

\begin{lemma}\label{lem:unbalancing}
Suppose $(n,k,d)$ is unbalancing. 
Let $e$ be the largest integer such that  
$\lfloor \frac{n-k}{e}\rfloor \ge \lceil\frac{k}{d-e}\rceil$, i.e., the tail of the balanced partition of $n-k$ of height $e$
is at least as big as the head of the balanced partition of $k$ of height $d-e$.

Then \begin{equation}\label{eq:unbalancing}
e=\left\lfloor d \Big(1-\frac kn\Big)\right\rfloor.
\end{equation}
\end{lemma}

\begin{proof}
Since $(n,k,d)$ is unbalancing, the head of the balanced partition of $n-k$ of height $e+1$
is at most as big as the tail of the balanced partition of $k$ of height $d-e-1$. This yields the following two inequalities,
\[
\left\lfloor \frac{n-k}{e}\right\rfloor \ge \left\lceil\frac{k}{d-e}\right\rceil,
\quad
    \left\lceil \frac{n-k}{e+1}\right\rceil \le \left\lfloor\frac{k}{d-e-1}\right\rfloor.
\]
In particular,
\[
     \frac{n-k}{e} \ge \frac{k}{d-e},
\quad
     \frac{n-k}{e+1} \le \frac{k}{d-e-1}.
\]
Clearing denominators yields
\[
    d (n-k)-e n\geq0, \quad 
    d (n-k)-(e+1) n \leq0.
\]
The two inequalities imply
\[
    e\leq \frac{d(n-k)}n, \quad
    e \geq \frac{d(n-k)}n-1.
\]
Since $e$ is an integer, this is equivalent to
\[
       \left\lceil\frac{d(n-k)}n\right\rceil-1\leq 
       e\leq \left\lfloor\frac{d(n-k)}n\right\rfloor.
\]
As $(n,k,d)$ is unbalancing, $\frac{dk}n\not\in\N$ by Lemma \ref{lem:balancing},
whence
\[
\left\lceil\frac{d(n-k)}n\right\rceil-1=
\left\lfloor\frac{d(n-k)}n\right\rfloor,
\]
and \eqref{eq:unbalancing} follows.
\end{proof}

\begin{lemma}\label{lem:unbalanced2}
Let $k,n,d$ be positive integers with $2k\le n$ and $n\ge d$.  Then
\begin{equation}\label{eq:unbalanced2}
\left\lfloor d\Bigl(1-\frac{k}{n}\Bigr)\right\rfloor
\;\ge\;
(d-1)\Bigl(1-\frac{k}{n}\Bigr)
\end{equation}
holds if and only if, letting $dk = r\bmod n$,
one has
\[
r\in\{0\}\;\cup\;\{k,k+1,\dots,n-1\}.
\]
Equivalently, inequality \eqref{eq:unbalanced2} fails precisely when
\[
1 \le r \le k-1.
\]
\end{lemma}

\begin{proof}
If $r=0$, then \eqref{eq:unbalanced2} holds. Thus assume $r>0$.
Set
\begin{equation}\label{eq:ABun2}
A = d\Bigl(1-\frac{k}{n}\Bigr),
\quad
B = (d-1)\Bigl(1-\frac{k}{n}\Bigr).
\end{equation}
Then
\[
\lfloor A\rfloor \ge B
\;\iff\;
\lfloor A\rfloor +\Bigl(1-\frac{k}{n}\Bigr)\;\ge\;A
\;\iff\;
\{A\}\;\le\;1-\frac{k}{n},
\]
where $\{A\}=A-\lfloor A\rfloor$.  

From \eqref{eq:ABun2} we have
\[
\{A\}
=1-\frac rn.
\]
Thus
\[
\{A\}\le 1-\frac{k}{n}
\;\iff\;
r\geq k,
\]
as claimed.
\end{proof}

\begin{corollary}\label{cor:unbalancing}
Suppose $(n,k,d)$ is unbalancing, and let $dk = r\bmod n$ for $0< r<n$.
The maximum of $\Delta(\mu\uplus\nu,\mu,\nu)$ equals
$$\left\{
\begin{array}{ll}
2k\left(\left\lfloor d\Bigl(1-\frac{k}{n}\Bigr)\right\rfloor+ n-k\right) & \text{if }r\ge k\\
2 (n-k)\left(d - \left\lfloor d\Bigl(1-\frac{k}{n}\Bigr)\right\rfloor-1 + k\right) & \text{if }r<k.
\end{array}
\right.$$
\end{corollary}

\begin{proof}
Let $e'=\lfloor d(1-\frac{k}{n})\rfloor$. By Lemma \ref{lem:unbalancing}, $e'$ is the largest $e$ such that $\lambda=(\mu,\nu)$, with balanced $\mu\vdash n-k$ of height $e$ and balanced $\nu\vdash k$, is a valid partition of $n$.
If $r\ge k$, then $e'\ge (d-1)(1-\frac{k}{n})$ by Lemma \ref{lem:unbalanced2}. Thus, $\Delta(\mu\uplus\nu,\mu,\nu)$ is maximized at the balanced $\mu\vdash n-k$ of height $e'$, the balanced $\nu\vdash k$ and $\mu\uplus\nu=(\mu,\nu)$ by Lemmas \ref{lem:42} and \ref{lem:43}.
If $0<r<k$, then Lemmas \ref{lem:unbalanced2} and \ref{lem:43} show that $\Delta(\mu\uplus\nu,\mu,\nu)$ is maximized at the balanced $\mu\vdash n-k$ of height $e'+1$, the balanced $\nu\vdash k$ and $\mu\uplus\nu=(\nu,\mu)$.
In both cases, 
the value of $\Delta(\mu\uplus\nu,\mu,\nu)$
is then given by Lemma \ref{lem:41}.
\end{proof}

\subsubsection{Balancing triples}

Next, we 
maximize $\Delta(\mu\uplus\nu,\mu,\nu)$ when $(n,k,d)$ is balancing. 
Throughout this section let $n,k,d\in\N$ satisfy 
$2k\le n$ and $n\ge d$.  
Put  
\begin{equation}\label{eq:qr}
  q=\Bigl\lfloor\frac{n}{d}\Bigr\rfloor,\qquad
  r=n-qd\quad(0\le r<d),
\end{equation}
so the balanced partition of weight $n$ and height $d$ is  
\[
  \lab=(\underbrace{q+1,\dots ,q+1}_{r\text{ rows}},\;
           \underbrace{q,\dots ,q}_{d-r\text{ rows}}).
\]

\begin{lemma}\label{lem:prebalast}
Suppose $d \mid n$, and let $q=\frac nd$. 
If $q \nmid n-k $, then the triple $(n,k,d)$ is unbalanced and thus handled by Corollary \ref{cor:unbalancing} above.
If $q\mid n-k$ and $n-k=eq$, then 
the maximum of $\Delta(\mu\uplus\nu,\mu,\nu)$ is
attained at $\mu=(q^e)$, $\nu=(q^{d-e})$ and equals
\begin{equation}\label{eq:prebalast}
    2 k (n-k) \Big(1+\frac dn\Big). 
\end{equation}
\end{lemma}

\begin{proof}
Let $\lambda=\mu\uplus\nu$. 
Letting $\mu^+$ denote the balanced partition of $n-k$ of height $e+1$, and $\nu^+$ the balanced partition of $k$ of height $d-e-1$, we have $\la^+=(\nu^+,\mu^+)\vdash n$. Thus
\[
    \Delta(\la^+,\nu^+,\mu^+) = 2(n-k)(k+d-e-1),
\]
and using $e=\frac{d(n-k)}n$, we get
\[
    \Delta(\la,\nu,\mu)-\Delta(\la^+,\nu^+,\mu^+) = 2 (n-k) >0.
\]
Increasing $e$ further will decrease the height of $\nu^+$ and thus by Lemma \ref{lem:42} only decrease the value of $\Delta$. Thus the maximum $\Delta$ is attained at 
$\mu=(q^e)$ and $\nu=(q^{d-e})$, as claimed. The formula \eqref{eq:prebalast} now follows from Lemma \ref{lem:41}.
\end{proof}

\begin{proposition}\label{prop:sth}
Suppose $\mu\vdash n-k$ and $\nu\vdash k$ are balanced, 
with $\mu_1=\nu_1$ and 
$\het(\mu)=e$.
Let $s$ be such that $\mu_1=\cdots=\mu_s>\mu_{s+1}=\cdots=\mu_e$. 
Then 
\begin{equation}\label{eq:deltaS}
    \Delta(\mu\uplus\nu,\mu,\nu) = 
2 (-k^2 + k (n + s) + (d - e) (e-s) \mu_e),
\end{equation}
and this function of $e$ is maximized at
\begin{equation*}
    e^* = \frac{d}{2} + \frac{n-2k}{2(q+1)}.
\end{equation*}
\end{proposition}

\begin{proof}
Since $\mu,\nu$ are balanced and $\mu_1=\nu_1$, we have
$$\mu\uplus\nu=(\mu_1,\ldots,\mu_s,\nu,\mu_{s+1},\ldots,\mu_e).$$
    A similar calculation to the one in Lemma \ref{lem:41} gives
\begin{equation}\label{eq:deltab}
    \Delta(\mu\uplus\nu,\mu,\nu) = 
2 (-k^2 + k (n + s) + (d - e) \sigma),
\end{equation}
where $\sigma=\mu_{s+1}+\cdots+\mu_e=(e-s)\mu_e$. Note that \eqref{eq:deltab} becomes \eqref{eq:delta} when $s=e$, and hence $\mu\uplus\nu = (\mu, \nu).$
Moreover, when $s=0$ and $\mu\uplus\nu = (\nu,\mu)$, then $e\mu_e = n-k$ and hence \eqref{eq:deltab} becomes $2(n-k) (d - e + k)$, i.e., \eqref{eq:delta} with $k$ and $e$ replaced by $n-k$ and $d-e$ respectively.

Expressing $s$ from
$$
n-k = s (q+1) + (e-s)q,
$$
where $q = \lfloor \frac{n}{d}\rfloor,$ and plugging it into \eqref{eq:deltab} produces a concave quadratic function in $e,$ namely
\begin{equation}\label{eq:deltas}
\Delta=2\big(-(q+q^2)e^2 + (d q-2 k q+ n q + d q^2 )e
 -2 k^2 + 2 k n   + d k q  - d n q\big).
\end{equation}
This function attains its maximum at
\begin{equation}\label{eq:defestar}
    e^* = \frac{d}{2} + \frac{n-2k}{2(q+1)}. \qedhere
\end{equation} 
\end{proof}

\begin{remark}\label{rem:boundestar}
Using the obvious inequalities 
\[
    \frac nd-1\leq\floor{\frac nd}\leq \frac nd   , 
\]
we obtain the following bounds on $e^*$:
\[
     \frac{d}{2} 
     +\frac{d (n-2 k)}{2 (d+n)}
     \leq e^*\leq d \Big(1-\frac kn\Big) .
\]
\end{remark}

Assume $(n,k,d)$ is balancing, and w.l.o.g., $n-k\geq k$. 
The strategy of the proof is as follows: 
by Lemmas \ref{lem:03} and \ref{lem:04}, we know that $\Delta(\mu\uplus\nu,\mu,\nu)$ in 
\eqref{eq:lmn} is maximized with $\het(\mu)=\het(\nu)=d$ and
$\mu\vdash n-k$, $\nu\vdash k$ being balanced. 
Such pairs are thus uniquely determined by $e=\het(\mu)$. 
Note that $\mu\uplus\nu=(\mu,\nu)$ for small $e$, and $\mu\uplus\nu=(\nu,\mu)$ for large $e$.
By Lemma \ref{lem:42}, $\Delta$ increases while $\mu$ forms the top rows of $\mu\uplus\nu$ and $e$ increases. Conversely, $\Delta$ decreases when $\nu$ is the top of $\mu\uplus\nu$ and $e$ increases.
It is thus key to analyze transitions, where the rows of $\mu$, $\nu$ appear mixed in $\mu\uplus\nu$.

For ease of notation,   
let $\mu[e]$ denote the balanced partition of $n-k$
 with height $e$. Similarly, 
$\nu[e]$ denotes the balanced partition of $k$ of height $d-e$. 
 Let
\begin{equation}\label{eq:defe0}
    e_0=\max \left\{e\in\{1,\ldots,d-1\} \colon \floor{\frac{n-k}e} \geq \bigg\lceil\frac{k}{d-e}\bigg\rceil  \right\}.
\end{equation}
Since $n-k\geq k$, $e=1$ belongs to the right-hand side set.
Similarly, let
\begin{equation}\label{eq:defe1}
    e_1=\min \left\{e\in\{1,\ldots,d-1\} \colon
\floor{\frac{k}{d-e}} \geq 
     \bigg\lceil\frac{n-k}e\bigg\rceil \right\}.
\end{equation}
If the set in the definition of $e_1$ is empty
(e.g., $n=5$, $k=2$, $d=2$), we set $e_1=d-1$. 
By Lemma \ref{lem:42}, the  $e$ maximizing the $d$-QMC Hamiltonian for $K_{n-k,k}$ will satisfy $e_0\leq e\leq e_1$.

Let $\lab$
denote the balanced partition of $n$ with height $d$.
Set 
\[
    \mathfrak E:=\{e \mid \mu[e] \text{ is a subpartition of }\lab \} \subseteq \{e_0, e_0+1,\ldots,e_1\}. 
\]
 Since $(n,k,d)$ is balancing, $\mathfrak E\neq\emptyset$.
 Further, $e\in\mathfrak E$ iff 
$e=\frac{n-k}{q+1}\le r$, or
$\floor{\frac{n-k}e}=q$ and $\frac{n-k-r}{q}\le e\le \frac{d+n-k-r}{q+1}$.

\begin{lemma}\label{lem:frakinterval}
The feasible set $\mathfrak E$ is an interval, that is,
$\mathfrak E=\{\min \mathfrak E, \min \mathfrak E+1,\ldots, \max \mathfrak E \}$.
\end{lemma}

\begin{proof}
Recall the unique balanced partition $\lab$ of $n$ into $d$ parts consists of $r$ parts of size $q+1$ and $d-r$ parts of size $q$.
Suppose $e<e'$ are both in $\mathfrak E$.
Thus there exists a height $e$ subpartition $\mu\vdash n-k$ of $\lab$. Similarly, $\mu'\leq\lab$ is a height $e'$ subpartition with weight $n-k$.
We claim that for any integer $e''$ such that $e < e'' < e'$, 
there exists a height $e''$ subpartition $\mu''\vdash n-k$ of $\lab$.

The partition $\mu$ is of the form, say, $((q+1)^x,q^y)$ for some $x,y\in\N_0$. Thus
\begin{equation}
\begin{aligned}
    x + y &= e, \\
    x(q+1) + yq &= n-k, \label{eq:sum1} \\
    0 \le x \le r &\quad \text{and} \quad 0 \le y \le d-r.
\end{aligned}
\end{equation}
Substituting $y = e-x$ into the second equation of  \eqref{eq:sum1} gives
\begin{align}
    x + eq &= n- k \label{eq:simple_sum1}.
\end{align}
Likewise,
if $\mu'=((q+1)^{x'},q^{y'})$ for some $x',y'\in\N_0$, then
\begin{align}
    x' + e'q &= n-k \label{eq:simple_sum2}.
\end{align}

Equating \eqref{eq:simple_sum1} and \eqref{eq:simple_sum2}, we have
\begin{align}
    x - x' &= q(e' - e) \label{eq:x_diff}.
\end{align}
Since $e < e'$,  $x > x'$. 
Further, by considering the difference in the number of $q$-sized parts:
\begin{equation}
\begin{aligned}
    y' - y &= (e' - x') - (e - x) = (e' - e) + (x - x') 
    \\
    &= (e' - e) + q(e' - e) 
    = (1+q)(e' - e) \label{eq:y_diff},
\end{aligned}
\end{equation}
whence $y' > y$.

Now, let $e''$ be an integer such that $e < e'' < e'$. Define $\delta  = e'' - e$. Then $0 < \Delta e < e' - e$.
We propose constructing $\mu''$  composed of $x''$ parts of size $q+1$ and $y''$ parts of size $q$, defined as follows:
\begin{equation}
\begin{aligned}
    x'' &= x - q \delta \label{eq:def_x_pp} \\
    y'' &= y + (1+q) \delta 
\end{aligned}
\end{equation}
We shall verify that $\mu''$ satisfies the required conditions.

 Firstly, the height of $\mu$ is
    \begin{align*}
    \het(\mu'') & =
        x'' + y'' = (x - q \delta) + (y + (1+q) \delta ) = (x+y) +\delta  = e + (e'' - e) = e''.
    \end{align*}
Next, $\mu''\vdash n-k$, as
    \begin{align*}
        x''(q+1) + y''q &= \big(x - q \delta \big)(q+1) + \big(y + (1+q) \delta\big) q\\  &= \big(x(q+1) + yq\big) - (q^2+q)\delta + (q+q^2)\delta = x(q+1) + yq = n-k.
    \end{align*}
    
Finally, we need to show $0 \le x'' \le r$ and $0 \le y'' \le d-r$. Clearly, 
     $x'' = x - q \delta < x \leq r$. 
  Also, $x'' = x - q \delta \ge x - q(e' - e) = x' \geq 0$.
To verify the desired properties of $y''$, we have
$y'' = y + (1+q)\delta > y\geq0$, and
$y'' = y + (1+q)\delta e < y + (1+q)(e' - e) = y'\leq d-r$.  This completes the proof.
\end{proof}

By definition of $e_0$ and $e_1,$ whenever $e \leq e_0,$ the partition $\lambda\vdash n$ made of rows of $\mu[e],\nu[e]$ is $(\mu[e], \nu[e])$. If $e \in \mathfrak{E},$ then this partition $\la$ equals $\lab$, and is of the form
    $\la=(\mu_1,\ldots,\mu_s,\nu,\mu_{s+1},\ldots,\mu_e),$
where $\mu_1=\cdots=\mu_s>\mu_{s+1}=\cdots=\mu_e$
(note the two edge cases, $s=0$ or $s=e$ can also both occur). 
If $e\geq e_1,$ then $\la$ is of the form $(\nu[e], \mu[e]).$  Note that $(\mu[e_0], \nu[e_0])$
may or may not be balanced, and the same holds for $(\nu[e_1], \mu[e_1]).$ 
 
\begin{lemma}\label{lem:frakinterval+}
If $e_0\not\in\mathfrak E$, then $e_0+1\in\mathfrak E$.
Likewise, if $e_1\not\in\mathfrak E$, then $e_1-1\in\mathfrak E$. In particular, $\mathfrak E$ is one of the following four discrete intervals:
\[
    \{e_0,e_0+1,\ldots,e_1\},
    \;
    \{e_0,e_0+1,\ldots,e_1-1\},
    \;
    \{e_0+1,e_0+2,\ldots,e_1\},
    \;
    \{e_0+1,e_0+2,\ldots,e_1-1\}.
\]
\end{lemma}

\begin{proof}
Suppose $e_0\not\in\mathfrak E$, and consider $e_0+1$. By definition of $e_0,$
\[
    \floor{\frac{n-k}{e_0+1}} < \bigg\lceil\frac{k}{d-e_0-1}\bigg\rceil.
\]
Since $\min\mathfrak E\geq e_0+1$, we deduce
\begin{equation}\label{eq:squeeze}
\floor{\frac{n-k}{\min\mathfrak E}} \leq \floor{\frac{n-k}{e_0+1}} < \bigg\lceil\frac{k}{d-e_0-1}\bigg\rceil \leq 
 \bigg\lceil\frac{k}{d-\min\mathfrak E}\bigg\rceil
.
\end{equation}
By definition of $\mathfrak E$, the partitions 
$\mu[\min\mathfrak E]$, $\nu[\min\mathfrak E]$ combine to form $\lab$, 
whence 
\[
    \bigg\lceil\frac{k}{d-\min\mathfrak E}\bigg\rceil-\floor{\frac{n-k}{\min\mathfrak E}}\leq 1.
\]
Hence the same holds with $\min\mathfrak E$ replaced by $e_0+1$ by
\eqref{eq:squeeze}. Thus $\mu[e_0+1]$, $\nu[e_0+1]$ must combine to form $\lab$, too. That is, $e_0+1\in\mathfrak E$.

The proof for $e_1$ is the same.
\end{proof}

\def\close#1{\left\lfloor #1 \right\rceil}

\subsubsection{Summary}

We combine the preceding results on unbalancing and balancing triples into the following statement.

\begin{theorem}\label{th:dbip}
Let $n\geq2k$ and $n>d$. 
Let $e_\star$ be the closest integer in 
$\mathfrak E$ to $e^*$. Then the maximum of $\Delta(\mu\uplus\nu)=\eta_{\mu\uplus\nu}-\eta_\mu-\eta_\nu$ for $\mu\vdash n-k$ and $\nu\vdash k$ is attained at one of the following pairs:
\begin{enumerate}[\rm (a)]
    \item
$\mu$ is balanced of height $e_0$ and $\nu$ is  balanced of height $d-e_0$;
   \item
$\mu$ is balanced of height $e_\star$ and $\nu$ is  balanced of height $d-e_\star$;
   \item
$\mu$ is balanced of height $e_1$ and $\nu$ is  balanced of height $d-e_1$.
\end{enumerate}
\end{theorem}

\begin{proof}
If the triple $(n,k,d)$ is unbalancing, then $\mathfrak{E}=\emptyset$, and the maximum is attained at (a) or (c) by Corollary \ref{cor:unbalancing}.
Now suppose that the triple $(n,k,d)$ is balancing. 
By Lemma \ref{lem:42}, the maximum of $\Delta(\mu\uplus\nu,\mu,\nu)$ is attained for a balanced $\mu\vdash n-k$ of height $e$ and a balanced $\nu\vdash k$ of height $d-e$, where $e$ is an integer between $e_0$ and $e_1$.
The characterization of $\mathfrak{E}$ as in Lemma \ref{lem:frakinterval+} and maximization of the function \eqref{eq:deltaS} at $e^*$ in Proposition \ref{prop:sth} show that the maximum is attained for $e\in\{e_0,e_1,e_\star\}$, which gives rise to the cases (a), (b) and (c).
\end{proof}

Proposition \ref{prop:02} and Theorem \ref{th:dbip} resolve the $d$-QMC problem for $K_{n-k,k}$ when either $k$ or $d$ is small.

\begin{corollary}\label{cor:dbip}
Let $k\le 4$ and $d<n$. 
Let $e_\star$ be the closest integer in 
$\mathfrak E$ to $e^*$. Then the solution to the $d$-QMC problem for $K_{n-k,k}$ is the biggest of the following three values 
\[\eta_{\mu\uplus\nu}-\eta_\mu-\eta_\nu\]
for $\mu\vdash n-k$ and $\nu\vdash k$, where either
\begin{enumerate}[\rm (a)]
    \item
$\mu$ is balanced of height $e_0$ and $\nu$ is  balanced of height $d-e_0$;
   \item
$\mu$ is balanced of height $e_\star$ and $\nu$ is  balanced of height $d-e_\star$;
   \item
$\mu$ is balanced of height $e_1$ and $\nu$ is  balanced of height $d-e_1$.
\end{enumerate}
\end{corollary}

\begin{corollary}\label{cor:d=3}
The solution to the 3-QMC problem for $K_{n-k,k}$ is
\[
    \begin{cases}
2 ( k +1) (n-k)
& \text{if }n<3k \\
2 k ( n - k + 2)
& \text{if }n\geq 3k
    \end{cases}.
\]
\end{corollary}

\begin{proof}
Observe that $e_1=2$ for all $n\geq2k$, and that
$e_0=2$ iff $n\geq3k$; otherwise $e_0=1$.
Thus if $n\geq3k$ we are in case (c) of Theorem \ref{th:dbip}.

By Theorem \ref{th:dbip}, there are only two cases to consider:
$\la=\mu\uplus\nu$ with
\[
    \mu=(n-k), \; \nu=\left(\left\lceil \frac k2\right\rceil, \left\lfloor \frac k2\right\rfloor  \right) 
    \quad \text{or} \quad
    \mu=\left(\left\lceil \frac {n-k}2\right\rceil, \left\lfloor \frac {n-k}2\right\rfloor  \right),  \; \nu=(k).
\]
In the first case $\lambda=(\mu,\nu)$ and thus
\begin{equation}\label{eq:e=1}
\begin{split}
    \eta_{\mu\uplus\nu}-\eta_\mu-\eta_\nu & = -2 (k-n-1) \left(\left\lceil \frac{k}{2}\right\rceil +\left\lfloor
   \frac{k}{2}\right\rfloor \right) \\
   & = 2 k (n-k+1).
\end{split}
\end{equation}

Now assume $n<3k$, and consider the second case. Then 
$\mu\uplus\nu=\left(k, \left\lceil \frac {n-k}2\right\rceil, \left\lfloor \frac {n-k}2\right\rfloor  \right)$,
so
\begin{equation}\label{eq:e=2}
\begin{split}
    \eta_{\mu\uplus\nu}-\eta_\mu-\eta_\nu & =-\left(\left\lceil \frac{n-k}{2}\right\rceil +\left\lfloor
   \frac{n-k}{2}\right\rfloor \right)^2+\left(\left\lceil
   \frac{n-k}{2}\right\rceil +\left\lfloor
   \frac{n-k}{2}\right\rfloor +k\right)^2 
    +\left\lceil \frac{n-k}{2}\right\rceil ^2
\\
   & \phantom{{}={}}
   -\left(\left\lceil \frac{n-k}{2}\right\rceil -1\right)^2
   -\left(\left\lfloor
   \frac{n-k}{2}\right\rfloor -2\right)^2+\left(\left\lfloor
   \frac{n-k}{2}\right\rfloor -1\right)^2-k^2+4 \\
   & = -\left(-\left\lfloor \frac{n-k}{2}\right\rfloor
   -k+n-1\right)^2+\left(-\left\lfloor \frac{n-k}{2}\right\rfloor
   -k+n\right)^2
   \\
   & \phantom{{}={}}
   -\left(\left\lfloor \frac{n-k}{2}\right\rfloor-2\right)^2+\left(\left\lfloor \frac{n-k}{2}\right\rfloor
   -1\right)^2-k^2-(n-k)^2+n^2+4 
   \\
   & = 2 ( k+1) (n-k)
\end{split}
\end{equation}
The difference between \eqref{eq:e=2} and \eqref{eq:e=1} is
\[
    2 k (-1 + k - n) - 2 (1 + k) (k - n) = 2 n - 4k \geq 0
\]
since $n\geq2k$. In particular, for $d=3$ we are always in case (c) of Theorem \ref{th:dbip}.

If $n\geq3k$, then 
$\mu\uplus\nu=\left( \left\lceil \frac {n-k}2\right\rceil, \left\lfloor \frac {n-k}2\right\rfloor ,k \right)$, and
a similar calculation to the above shows that 
\begin{equation}\label{eq:e=2again}
    \eta_{\mu\uplus\nu}-\eta_\mu-\eta_\nu  =
   2 k ( n - k + 2),
\end{equation}
concluding the proof.
\end{proof}

\begin{table}[ht]
\begin{minipage}[t]{0.48\textwidth}
\centering
\[
\begin{array}{c|c|c|c|c|c}
(n,k,d) & e_0 & e_1 & e^* & e_{\rm max} & \mathfrak E \\[1mm]
\hline
(4, 2, 3)& 1 & 2& \frac32 & {1, 2} & \{1, 2\}\\[1mm]
(5, 2, 3) &  1 & 2 & \frac74 & {2} & \{2\}\\[1mm]
(5, 2, 4) & 2 & 3 & \frac94 & {2} & \{2, 3\} \\[1mm]
(8, 2, 3) & 2 & 2 & \frac{13}6 & {2} & \{2\} \\[1mm]
(9, 2, 8) & 6 & 7 & \frac{21}4 & {6} & \{6, 7\} \\[1mm]
(6, 3, 4) &  1& 3 & 2 & {2} & \{2\} \\[1mm]
(7, 3, 4) & 2 & 3 & \frac94 & {2} & \{2\} \\[1mm]
(11, 3, 5) &  3 & 4 & \frac{10}3 & {3} & \{4\} \\[1mm]
\end{array}
\]
\end{minipage}
\hfill
\begin{minipage}[t]{0.48\textwidth}
\centering
\[
\begin{array}{c|c|c|c|c|c}
(n,k,d) & e_0 & e_1 & e^* & e_{\rm max} & \mathfrak E \\[1mm]
\hline
(8, 4, 2)& 1 & 1& 1 & {1} & \{1\}\\[1mm]
(8, 4, 3)& 1 & 2& \frac32 & {1,2} & \{\}\\[1mm]
(8, 4, 6)& 2 & 4& 3 & {3} & \{2,3,4\}\\[1mm]
(9, 4, 3)& 1 & 2& \frac{13}{8} & {2} & \{\}\\[1mm]
(9, 4, 4)& 2 & 3& \frac{13}{6} & {2} & \{2\}\\[1mm]
(9, 4, 5)& 2 & 3& \frac{11}{4} & {3} & \{3\}\\[1mm]
(11, 4, 4)& 2 & 3& \frac52 & {2} & \{\}\\[1mm]
\end{array}
\]
\end{minipage}
\caption{A selection of triples and the values of $e_0$, $e_1$, $e^*$, $e_{\rm max}$, and $\mathfrak{E}$ for these triples. The triple $(11,3,5)$ is an example where the optimal $\la=\mu\uplus\nu$ is not balanced. The triples on the right side demonstrate that any assortment of cases (a),(b),(c) in Corollary \ref{cor:dbip} can give rise to the maximum value. 
}
\label{tab:evalues}
\end{table}

\section{Separation of irreps in $d$-QMC}\label{sec:sep}

Given $n,d\in\NN$ and a partition $\lambda\vdash n$ with at most $d$ rows, we are interested in adapting the NPO hierarchy in Section \ref{s:npo} to compute the largest eigenvalue of $H_G^\lambda$ for a general weighted graph $G$ on $n$ vertices.
To isolate the irreducible representation $\rho_\lambda$ of $\cA^{\text{Sw}_d}_n=\cF_n / \mathcal{I}^{\text{Sw}_d}_n$ corresponding to the partition $\lambda$ of $n$, one needs to adjoin to $\mathcal{I}^{\text{Sw}_d}_n$ some polynomials in $\cF_n$ that vanish in $\rho_\lambda(\CC S_n)$, but not in $\rho_\mu(\CC S_n)$ for any $\mu\neq\lambda$. 

In general it suffices a add a single polynomial, chosen as follows.
Given a partition $\lambda\vdash n$ let $s_\lambda\in \CC S_n$ be a Young symmetrizer corresponding to $\lambda$ \cite[Section 9.2.2]{Probook}. Then $\frac{\dim\rho_\lambda}{n!}s_\lambda$ is a primitive idempotent in $\CC S_n$ that generates $\rho_n$ as a left ideal \cite[Theorems 9.2.4.1 and 9.2.4.2]{Probook}. Hence
$$\hat{s}_\lambda=\frac{\dim\rho_\lambda}{(n!)^2}\sum_{\sigma\in S_n} \sigma s_\lambda \sigma^{-1}$$
is a centrally primitive idempotent in $\CC S_n$, generating $\rho_\lambda$ as a two-sided ideal.

\begin{proposition}\label{prop:isolate_irrep}
Let $\lambda\vdash n$.
Then $\rho_\lambda(\id-\hat{s}_\lambda)=0$ and $\rho_\mu(\id-\hat{s}_\lambda)\neq0$ every partition $\mu\neq\lambda$.
\end{proposition}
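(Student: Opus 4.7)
The plan is to deduce the proposition directly from the Wedderburn block decomposition of the semisimple group algebra $\CC S_n$ together with the characterization of $\hat{s}_\lambda$ as a centrally primitive idempotent recalled in the paragraph preceding the statement. Concretely, I would use the isomorphism
\[
\CC S_n \;\cong\; \bigoplus_{\nu\vdash n} \rho_\nu(\CC S_n) \;\cong\; \bigoplus_{\nu\vdash n} M_{\dim\rho_\nu}(\CC),
\]
whose center is a direct sum of one-dimensional summands indexed by partitions of $n$. Under this decomposition, the centrally primitive idempotents are exactly the elements mapping to the identity in one block and to zero in every other block.

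Next I would observe that $\hat{s}_\lambda$ is central (it is a class function by construction, being a sum of conjugates of $s_\lambda$) and that by the cited \cite[Theorems 9.2.4.1 and 9.2.4.2]{Probook} it is a centrally primitive idempotent whose two-sided ideal is the isotypic component of $\rho_\lambda$. The only candidate for such an idempotent under Wedderburn is the one that equals the block identity in the $\rho_\lambda$-component and vanishes elsewhere. Hence $\rho_\lambda(\hat{s}_\lambda)=\id_{V_\lambda}$ and $\rho_\mu(\hat{s}_\lambda)=0$ for every $\mu\neq\lambda$, from which both assertions $\rho_\lambda(\id-\hat{s}_\lambda)=0$ and $\rho_\mu(\id-\hat{s}_\lambda)=\id_{V_\mu}\neq0$ follow at once.

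There is essentially no obstacle, since the nontrivial content is already packaged in the cited statements that identify $\hat{s}_\lambda$ with the centrally primitive idempotent attached to $\lambda$. If one wanted a self-contained verification bypassing those references, the natural route would be character-theoretic: by Schur's lemma the central element $\rho_\mu(\hat{s}_\lambda)$ is a scalar $c_{\lambda\mu}\id$, and computing $\Tr\rho_\mu(\hat{s}_\lambda) = \frac{\dim\rho_\lambda}{(n!)^2}\sum_\sigma \chi_\mu(\sigma s_\lambda\sigma^{-1}) = \frac{\dim\rho_\lambda}{n!}\chi_\mu(s_\lambda)$ and invoking the orthogonality relation $\chi_\mu(s_\lambda)=\frac{n!}{\dim\rho_\lambda}\delta_{\lambda\mu}\dim\rho_\mu$ for Young symmetrizers gives $c_{\lambda\mu}=\delta_{\lambda\mu}$, concluding the argument. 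The normalization constant $\frac{\dim\rho_\lambda}{n!}$ in the definition of $\hat{s}_\lambda$ is precisely what makes this work, so that is the one computation worth double-checking.
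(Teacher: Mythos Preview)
Your main argument is correct and is essentially the paper's own proof: both rest on $\hat{s}_\lambda$ being the centrally primitive idempotent attached to $\lambda$, hence acting as the identity on the $\lambda$-block and as zero on every other block, from which $\rho_\lambda(\id-\hat{s}_\lambda)=0$ and $\rho_\mu(\id-\hat{s}_\lambda)=\id_{V_\mu}\neq0$ follow at once. The paper merely packages the same Wedderburn fact as the orthogonality relation $(\id-\hat{s}_\lambda)\hat{s}_\mu=0$ for $\mu=\lambda$ and $=\hat{s}_\mu$ for $\mu\neq\lambda$.

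One remark on your character-theoretic aside, since you flagged the normalization as worth double-checking: the formula $\chi_\mu(s_\lambda)=\frac{n!}{\dim\rho_\lambda}\,\delta_{\lambda\mu}\dim\rho_\mu$ carries a spurious factor $\dim\rho_\mu$. Because $\frac{\dim\rho_\lambda}{n!}s_\lambda$ is a \emph{rank-one} idempotent in the $\lambda$-block, one has $\chi_\lambda(s_\lambda)=\frac{n!}{\dim\rho_\lambda}$ and $\chi_\mu(s_\lambda)=0$ for $\mu\neq\lambda$. Plugging this into your trace computation gives $c_{\lambda\lambda}=\frac{1}{\dim\rho_\lambda}$ rather than $1$, which shows that the element $\hat{s}_\lambda$ as literally defined in the paragraph preceding the proposition is $\frac{1}{\dim\rho_\lambda}$ times the genuine centrally primitive idempotent. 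The proposition (and both proofs) require the correctly normalized idempotent; this is a harmless slip in the scalar prefactor of the definition, not in the logic of either argument.
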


\begin{proof}
The centrally primitive idempotents $\{\hat{s}_\lambda\}_{\lambda\vdash n}$ are pairwise orthogonal, so $(\id-\hat{s}_\lambda)\hat{s}_\mu=0$ if $\lambda=\mu$ and $\hat{s}_\mu$ otherwise.
Therefore $\rho_\lambda (\id-\hat{s}_\lambda)=0$ and $\rho_\mu(\id-\hat{s}_\lambda)\neq0$ for $\mu\neq\lambda$.
\end{proof}

Lifting $\hat{s}_\lambda$ to an element of $\cF_n$ yields the desired polynomial. However, this polynomial has high degree (not much smaller than $n$). On the other hand, the symmetric group relations \eqref{eq:snrel} have degree at most $3$, and the degree-reducing antisymmetrizer relation \eqref{eq:qdit-rel} has degree $d$. Therefore the above approach is not appealing from a computational perspective.
Instead, it is preferable to find low-degree polynomials that distinguish $\lambda$ from other partitions of $n$ with at most $d$ rows.

In \cite{BCEHK24} it was shown that for $d=2$, the value $\eta_\lambda$ from Example \ref{rem:eta2} separates irreps with at most two rows. 
Therefore, $\eta_\lambda-h_{K_n}$ is a linear polynomial that separates irreps of $\cA^{\text{Sw}_2}_n$.
In particular, the largest eigenvalue of $H_G^\lambda$ for a two-row partition $\lambda\vdash n$ equals the NPO problem
$$
\min\left\{
\alpha:\alpha-h_G=\sum_k s_k^*s_k+q  
\text{ for some } 
s_k\in\cF_n,\ q\in \mathcal{I}^{\text{Sw}_d}_n
+\big(\eta_\lambda-h_{K_n}\big)
\right\},
$$
and can thus be handled using standard SDP-based NPO hierarchies.

The same does not apply when $d=3$, as $\eta_\lambda$ in \eqref{eq:etaD} does not separate irreps with at most three rows. For example, partitions $\lambda = (4,1,1)$  and $\mu = (3,3)$ of $n=6$ give $\eta_{\lambda} = \eta_{\mu}=24$. 
Even more, $\eta_\lambda$ does not separate irreps with three rows; e.g., $\lambda=(5,2,2)$ and $\mu=(4,4,1)$ give $\eta_{\lambda}=\eta_{\mu}=60$.
Below, we present a method of separating irreps with at most three rows in the spirit of $3$-QMC, and a method of separating general irreps that is suitable for solving the localized $d$-QMC problem of finding the largest eigenvalue of $H_G^\lambda$.

\subsection{Separation of irreps with at most three rows via two graphs}

First we show that the spectra of the Hamiltonians corresponding to the clique $K_n$ and the star graph $\starn$ (which were analyzed in Subsections \ref{subsec ev-3} and \ref{ss:star}, respectively) distinguish partitions with at most three rows.

\begin{proposition}\label{p:sep3}
Let $n\ge2$. The following are equivalent for partitions $\lambda,\mu\vdash n$ with at most three rows:
\begin{enumerate}[\rm (i)]
\item $\spec(H_{K_n}^{\lambda})=\spec(H_{K_n}^{\mu})$ and $\spec(H_{\starn}^{\lambda})\subseteq\spec(H_{\starn}^{\mu})$;
\item $\spec(H_{\starn}^{\lambda})=\spec(H_{\starn}^{\mu})$;
\item ${\lambda}={\mu}$.
\end{enumerate}
\end{proposition}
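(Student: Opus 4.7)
The implications (iii) $\Rightarrow$ (i) and (iii) $\Rightarrow$ (ii) are immediate, so it remains to show (ii) $\Rightarrow$ (iii) and (i) $\Rightarrow$ (iii). Both will follow from identifying $\lambda$ via the shifted coordinates $\tilde\lambda_i := \lambda_i - (i-1)$, padded with zeros to length $3$. Combining Example \ref{p:star3} with Example \ref{p:star2} and the trivial case $\lambda=(n)$ gives
\[
\spec(H_\starn^\lambda) = \{2(n - \tilde\lambda_i) : i \in R(\lambda)\},
\quad
R(\lambda) := \{i\in\{1,2,3\}: \lambda_i > \lambda_{i+1}\},
\]
with $\lambda_4:=0$; moreover $\sum_i \tilde\lambda_i = n-3$ and, via Proposition \ref{prop:etaD}, $\eta_\lambda = n^2 + 5 - \sum_i \tilde\lambda_i^2$. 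The cardinality $|R(\lambda)| = |\spec(H_\starn^\lambda)|$ is read directly from the spectrum and classifies $\lambda$ into six row-patterns: three rectangular shapes when $|R|=1$; the patterns $(\lambda_1,\lambda_2)$, $(\lambda_1,\lambda_1,\lambda_3)$, $(\lambda_1,\lambda_2,\lambda_2)$ when $|R|=2$; and three distinct nonzero rows when $|R|=3$.

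For (ii) $\Rightarrow$ (iii) I show that $\spec(H_\starn^\lambda)$ determines $\lambda$. The $|R|=3$ case reads $(\tilde\lambda_1,\tilde\lambda_2,\tilde\lambda_3)$ off the spectrum directly. The $|R|=1$ case is covered by the three candidate singleton values $0$, $n+2$, $\tfrac{4n+12}{3}$, which are pairwise distinct for $n\ge 2$. The subtle $|R|=2$ case requires checking non-collision among the three row-patterns: each one yields a distinct affine equation connecting the two observed shifted values $v>w$ to $n$, namely $v+w=n-1$, $2v+w=n-4$, or $v+2w=n-4$; pairwise these force $v=-3$, $w=-3$, or $v=w$, each of which contradicts positivity of the rows. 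Within a fixed pattern, the pair $(v,w)$ with $n$ uniquely determines $\lambda$.

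For (i) $\Rightarrow$ (iii), the hypothesis reads $\sum_i \tilde\lambda_i^2 = \sum_i \tilde\mu_i^2$ together with $\{\tilde\lambda_i\}_{i\in R(\lambda)} \subseteq \{\tilde\mu_j\}_{j\in R(\mu)}$. If these sets coincide, the previous paragraph closes the argument, so suppose $|R(\lambda)|<|R(\mu)|$. When $|R(\lambda)|=2$ and $|R(\mu)|=3$, the linear constraint $\sum_i \tilde\mu_i = n-3$ alone pins down the ``extra'' entry of $\tilde\mu$: depending on the row-pattern of $\lambda$ it equals $-2$ (forcing $\mu_3=0$), or exactly one more than a visible $\tilde\lambda_i$ (forcing two consecutive $\tilde\mu$-values to differ by $1$, hence $|R(\mu)|<3$). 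When $|R(\lambda)|=1$, both the linear and the quadratic constraints enter; in every resulting subcase the only integer solution is $\tilde\mu=\tilde\lambda$, which retroactively yields $|R(\mu)|=1$, contradicting $|R(\mu)|>1$. The main obstacle is the bookkeeping of the roughly fifteen subcase matchings, but in the $\tilde\lambda$-coordinates each reduces to a short computation.
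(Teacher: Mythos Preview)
Your proof is correct and follows essentially the same route as the paper's: pass to the shifted coordinates $\tilde\lambda_i=\lambda_i-(i-1)$, identify $\spec(H_{\starn}^\lambda)$ with $\{\tilde\lambda_i:i\in R(\lambda)\}$, and split on $|R(\lambda)|$, using the linear constraint $\sum_i\tilde\lambda_i=n-3$ throughout and the quadratic constraint from $\eta_\lambda$ only in the $|R(\lambda)|=1$ case. The only real difference is in which parts are spelled out: you make the three affine equations for the $|R|=2$ patterns explicit (and check their pairwise incompatibility), whereas the paper dispatches $|\spec|\ge2$ in one line via the observation that fixing two of the $\tilde\mu_i$ and the sum forces the third to coincide with the hidden $\tilde\lambda$-value; conversely, the paper works out all three $|R(\lambda)|=1$ subcases of (i)$\Rightarrow$(iii) in full, while you leave these as ``short computations''.
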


\begin{proof}
It is clear that (iii) implies both (i) and (ii).

(i)$\Rightarrow$(iii): By Example \ref{p:star2} and Example \ref{p:star3}, 
the eigenvalues of $nI-\frac12 H_{\starn}^{\lambda}$ are obtained from the sequence
${\lambda}_1>{\lambda}_2-1>{\lambda}_3-2$ by keeping only the smallest element of any subsequence of consecutive values, and then removing $-2$ if necessary.
Consequently,
\begin{equation}\label{e:samen}
{\lambda}_1+({\lambda}_2-1)+({\lambda}_3-2)
=n-3=
{\mu}_1+({\mu}_2-1)+({\mu}_3-2)
\end{equation}
and
$\spec(H_{\starn}^{\lambda})\subseteq\spec(H_{\starn}^{\mu})$ 
immediately imply $\lambda=\mu$ if $|\spec(H_{\starn}^{\lambda})|\ge 2$. 
Now assume $|\spec(H_{\starn}^{\lambda})|=1$. By Proposition \ref{prop:etaD}, $\spec(H_{K_n}^{\lambda})=\spec(H_{K_n}^{\mu})$ implies
\begin{equation}\label{e:sameeta}
{\lambda}_1^2+({\lambda}_2-1)^2+({\lambda}_3-2)^2
={\mu}_1^2+({\mu}_2-1)^2+({\mu}_3-2)^2
\end{equation}
We distinguish three cases:
\begin{enumerate}[\rm(1)]
\item ${\lambda}=(\frac{n}{3},\frac{n}{3},\frac{n}{3})$, and the sole eigenvalue of $nI-\frac12 H_{\starn}^{\lambda}$ is $\frac{n}{3}-2$. Since ${\mu}_1\ge\frac{n}{3}$, if follows that ${\mu}_2-1=\frac{n}{3}-2$ or ${\mu}_3-2=\frac{n}{3}-2$.
In the latter case $\mu=\lambda$, so let us suppose the former holds. Then \eqref{e:samen} and \eqref{e:sameeta} imply
\begin{align*}
n-3&={\mu}_1+(\tfrac{n}{3}-2)+({\mu}_3-2),\\
(\tfrac{n}{3})^2+(\tfrac{n}{3}-1)^2+(\tfrac{n}{3}-2)^2
&={\mu}_1^2+(\tfrac{n}{3}-2)^2+({\mu}_3-2)^2.
\end{align*}
Expressing ${\mu}_3=\frac{2n}{3}+1-{\mu}_1$ gives
$$(\tfrac{n}{3})^2+(\tfrac{n}{3}-1)^2
={\mu}_1^2+(\tfrac{2n}{3}-{\mu}_1-1)^2,$$
which has solutions ${\mu}_1=\frac{n}{3}$ and ${\mu}_1=\frac{n}{3}-1$. The first one implies ${\mu}={\lambda}$, while the second one contradicts the fact that ${\mu}$ is a partition of $n$.

\item ${\lambda}=(\frac{n}{2},\frac{n}{2})$, and the sole eigenvalue of $nI-\frac12 H_{\starn}^{\lambda}$ is $\frac{n}{2}-1$. Since ${\mu}_3<\frac{n}{2}$, it follows that ${\mu}_1=\frac{n}{2}-1$ or ${\mu}_2-1=\frac{n}{2}-1$.
In the latter case ${\mu}={\lambda}$, so let us assume the former holds.
Then \eqref{e:samen} and \eqref{e:sameeta} imply
\begin{align*}
n-3&=(\tfrac{n}{2}-1)+({\mu}_2-1)+({\mu}_3-2),\\
(\tfrac{n}{2})^2+(\tfrac{n}{2}-1)^2+(0-2)^2
&=(\tfrac{n}{2}-1)^2+({\mu}_2-1)^2+({\mu}_3-2)^2.
\end{align*}
Expressing ${\mu}_2=\tfrac{n}{2}+1-{\mu}_3$ gives
$$(\tfrac{n}{2})^2+4
=(\tfrac{n}{2}-{\mu}_3)^2+({\mu}_3-2)^2,$$
which has solutions ${\mu}_3=0$ and ${\mu}_3=\frac{n}{2}+2$. The first one implies ${\mu}={\lambda}$, while the second one contradicts the fact that ${\mu}$ is a partition of $n$.

\item ${\lambda}=(n)$, and the sole eigenvalue of $nI-\frac12 H_{\starn}^{\lambda}$ is $n$. Then ${\mu}_3\le{\mu}_2\le n$ implies ${\mu}_1=n$, and so ${\mu}={\lambda}$.
\end{enumerate}

(ii)$\Rightarrow$(iii): As in the previous paragraph we see that (ii) and \eqref{e:samen} imply ${\lambda}={\mu}$ if $|\spec(H_{\starn}^{\lambda})|\ge2$. 
On the other hand, if $|\spec(H_{\starn}^{\lambda})|=|\spec(H_{\starn}^{\mu})|=1$ then ${\lambda},{\mu}\in\{(n),(\frac{n}{2},\frac{n}{2}),(\frac{n}{3},\frac{n}{3},\frac{n}{3})\}$. For these three cases, the star graph Hamiltonian has eigenvalue $0$, $n+2$ or $\frac43 n+4$. Thus $\spec(H_{\starn}^{\lambda})=\spec(H_{\starn}^{\mu})$ only if ${\lambda}={\mu}$.
\end{proof}

\begin{remark}
Let $n=9$. For ${\lambda}=(3,3,3)$ and ${\mu}=(6,2,1)$ we have
$$\spec(H_{\starn}^{\lambda})
=\{16\}\subseteq\{20,16,6\}=
\spec(H_{\starn}^{\mu}).$$
Therefore the role of $K_n$ in Proposition \ref{p:sep3}(i) is essential (note that $\eta_{\lambda}=72$ and $\eta_{\mu}=48$).
Likewise, the restriction to partitions with at most three rows is required (cf. Remark \ref{r:dist}).
Namely, let $n=21$, ${\lambda}=(7,7,7)$ and $\mu=(9,6,5,1)$. Then $\eta_\lambda=336=\eta_\mu$ and
$$\spec(H_{\starn}^{\lambda})
=\{16\}\subseteq\{12, 16, 18, 23\}=
\spec(H_{\starn}^\mu).$$
\end{remark}

Let $\lambda\vdash n$ be a three-row partition, and let $m$ be the minimal polynomial of $H_{\starn}^\lambda$; note that $m$ is of degree at most 3, and determined by Example \ref{p:star3}.
As a consequence of Proposition \ref{p:sep3}, the largest eigenvalue of $H_G^\lambda$ for $d=3$ is the solution of the NPO problem
$$
\min\left\{
\alpha:\alpha-h_G=\sum_k s_k^*s_k+q  
\text{ for some } 
s_k\in\cF_n,\ q\in \mathcal{I}^{\text{Sw}_3}_n
+\big(\eta_\lambda-h_{K_n},m(h_{\starn})\big)
\right\}.
$$

\subsection{Separation of irreps via low-degree central elements}

In this section we show that partitions $\lambda\vdash n$ with at most $d$ rows can be distinguished by $d$ relations of degrees $1,\dots,d$, which can be used in an NPO problem for solving the localized $d$-QMC problem, i.e., finding the largest eigenvalue of $H_G^\lambda$.

For $2\le k\le n$ let $q_k\in \CC[S_n]$ be the sum of all $k$-cycles in $S_n$ (there are $(k-1)!\binom{n}{k}$ of them).
Since $q_k$ is central in $\CC[S_n]$, we have
$\rho_\lambda(q_k)=\gamma_{k,\lambda}I$,
where
\begin{equation}\label{e:gamma}
\gamma_{k,\lambda}\,\chi_\lambda(e) 
= \text{Tr}\big( \rho_\lambda(q_k) \big)=
(k-1)!\binom{n}{k}\,\chi_\lambda((1\,\dots\,k)).
\end{equation}
Note that
$$\gamma_{2,\lambda}=\eta_\lambda=
n^2 +\frac{d(d-1)(2d-1)}{6}
-\sum_{k=1}^d\big( \lambda_k - (k-1)\big)^2
$$
by Proposition \ref{prop:etaD}.
The other values $\gamma_{k,\lambda}$ can be computed using the normalized character formula \cite[Theorem 4]{Lassalle} (with the Murnaghan-Nakayama rule, cf. Appendix \ref{sec:MNR}, at its core), and are in particular integers.
For example,
\begin{align*}
\gamma_{3,\lambda}
&=\frac13\cdot n(n-1)(n-2)\,
\frac{\chi_\lambda((1\,2\,3))}{\chi_\lambda(e)}\\
&=\sum_{k=1}^d\sum_{j=1}^{\lambda_k}(j-k)^2-\binom{n}{2}\\
&=-\binom{n}{2}+\sum_{k=1}^d\left(\frac{(\lambda_k-k)(\lambda_k-k+1)(2(\lambda_k-k)+1)}{6}
+\frac{(k-1)k(2k-1)}{6}
\right)\\
&=\frac{d(d-1)^2(d-2)}{2}-\binom{n}{2}+\sum_{k=1}^d\frac{(\lambda_k-k)(\lambda_k-k+1)(2(\lambda_k-k)+1)}{6}
\end{align*}
using the formula after \cite[Theorem 4]{Lassalle}.
The values $\gamma_{k,\lambda}$ separate irreps as follows.

\begin{theorem}\label{t:sep_all_rows}
If 
$\lambda,\mu\vdash n$ have at most $d$ rows, then
$$\lambda=\mu
\quad\iff\quad \gamma_{k,\lambda}=\gamma_{k,\mu} \text{ for all }k=2,\dots,d.$$
\end{theorem}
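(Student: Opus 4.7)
The forward implication is immediate. For the converse, the plan is to express each $\gamma_{k,\lambda}$ as a polynomial in the shifted row lengths $z_i:=\lambda_i-i$ (with the convention $\lambda_i:=0$ for $\het(\lambda)<i\le d$, so that $z_i=-i$ in that range), identify its top-degree contribution as a nonzero multiple of the power sum $p_k(z):=\sum_{i=1}^d z_i^k$, recover those power sums inductively in $k$, and conclude via Newton's identities together with a monotonicity argument.

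Concretely, Lassalle's normalized-character formula \cite[Theorem~4]{Lassalle}, already invoked in the derivation of $\gamma_{3,\lambda}$ just above, yields for every $k\ge 2$ a representation
\begin{equation*}
\gamma_{k,\lambda}=\alpha_k\,p_k(z_1,\dots,z_d)+F_k\bigl(p_1(z),\dots,p_{k-1}(z)\bigr)+C_k(n,d),
\end{equation*}
with $\alpha_k\in\QQ\setminus\{0\}$, $F_k$ a universal polynomial, and $C_k(n,d)$ depending only on $n$, $d$ (and $k$). Proposition~\ref{prop:etaD} realises the case $k=2$ (with $\alpha_2=-1$), and the formula for $\gamma_{3,\lambda}$ displayed just before Theorem~\ref{t:sep_all_rows} realises the case $k=3$ (with $\alpha_3=\tfrac13$, since $z_k(z_k+1)(2z_k+1)/6$ expands to $\tfrac13 z_k^3+\tfrac12 z_k^2+\tfrac16 z_k$). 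Because $p_1(z)=n-\binom{d+1}{2}$ is fixed by $n$ and $d$, an induction on $k=2,\dots,d$ using $\alpha_k\ne 0$ shows that $\gamma_{2,\lambda},\dots,\gamma_{d,\lambda}$ together with $n,d$ determine $p_1(z),\dots,p_d(z)$. Newton's identities then recover the multiset $\{z_1,\dots,z_d\}$, and the strict inequalities $z_i-z_{i+1}=(\lambda_i-\lambda_{i+1})+1\ge 1$ force $(z_1,\dots,z_d)$ to be the unique decreasing arrangement of that multiset; thus $\lambda_i=z_i+i$ is determined, and $\lambda=\mu$ follows.

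The sole obstacle is establishing $\alpha_k\ne 0$ for every $k$. Conceptually this is the assertion that the normalized character $\hat\chi_\lambda((1,\dots,k))=k\gamma_{k,\lambda}$, viewed as a function of $\lambda$, is a shifted symmetric function of weighted degree $k$ with a non-trivial leading shifted-power-sum component, which is the foundational result of Kerov--Olshanski--Lassalle on polynomial functions on Young diagrams. Pragmatically, $\alpha_k\ne 0$ can be read off Lassalle's closed form, or from the Murnaghan--Nakayama rule by isolating the contribution of a single horizontal border strip of length $k$ placed at the end of the longest row of $\lambda$, which produces a term of order $z_1^k$ with nonzero coefficient and hence contributes nontrivially to $p_k(z)$.
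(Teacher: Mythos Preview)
Your approach is essentially the same as the paper's: both pass to shifted row-lengths, express $\gamma_{k,\lambda}$ as $p_k$ of those shifted coordinates plus a polynomial in lower power sums, use that $p_1$ is fixed by $n,d$, and conclude by the fact that $p_1,\dots,p_d$ determine the multiset of (strictly decreasing) coordinates. The paper simply cites \cite[Lemma~5.1]{VK} or \cite[Propositions~1.4,~3.3,~3.4]{IO} for the precise identity $k\gamma_{k,\lambda}=p_{d,k}+P_k(p_{d,1},\dots,p_{d,k-1})$ evaluated at $\lambda_i-i+\tfrac12$, which makes your ``sole obstacle'' $\alpha_k\neq0$ immediate (indeed $\alpha_k=1/k$); this is cleaner than your sketched Murnaghan--Nakayama argument, but the underlying idea is the same.
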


\begin{proof}
Let $p_{d,k}=x_1^k+\cdots+x_d^k$ denote the $k$th power-sum symmetric polynomial in $d$ variables. 
Let $\lambda\vdash n$ have at most $d$ rows, and write $\lambda_i=0$ for $\het(\lambda)<i\le d$.
By \eqref{e:gamma} we have
$$k \gamma_{k,\lambda}
=n^{\downarrow k}\,
\frac{\chi_\lambda((1\,\dots\,k))}{\chi_\lambda(e)},$$
where $n^{\downarrow k}$ is the falling factorial.
By \cite[Lemma 5.1]{VK} or \cite[Propositions 1.4, 3.3 and 3.4]{IO},
\begin{equation}\label{e:symmfun}
k\gamma_{k,\lambda} = 
\Big(
p_{d,k}+P_k\big(p_{d,1},\dots,p_{d,k-1}\big)
\Big)
\big(
\lambda_1-1+\tfrac12,\lambda_2-2+\tfrac12,\dots,\lambda_d-d+\tfrac12
\big)
\end{equation}
for some polynomial $P_k$ in $k-1$ variables. 
Also note that
$$p_{d,1}\big(
\lambda_1-1+\tfrac12,\lambda_2-2+\tfrac12,\dots,\lambda_d-d+\tfrac12
\big)=n-\tfrac{d^2}{2}.$$
Now assume that $\gamma_{k,\lambda}=\gamma_{k,\mu}$ for all $k=2,\dots,d$. By \eqref{e:symmfun}, 
$$p_{d,k}\big(
\lambda_1-1+\tfrac12,\lambda_2-2+\tfrac12,\dots,\lambda_d-d+\tfrac12
\big)
=p_{d,k}\big(
\mu_1-1+\tfrac12,\mu_2-2+\tfrac12,\dots,\mu_d-d+\tfrac12
\big) \qquad 
$$
for all $k=1,\dots,d$. Since 
$$\lambda_1-1+\tfrac12>\lambda_2-2+\tfrac12>\cdots>\lambda_d-d+\tfrac12,\quad
\mu_1-1+\tfrac12>\mu_2-2+\tfrac12>\cdots>\mu_d-d+\tfrac12
$$
and the power-sum symmetric polynomials distinguish points up to a coordinate shuffle,
it follows that $\lambda=\mu$.
\end{proof}
For $k\in\N$ denote
$$c_k=\sum_{\substack{
1\le i_0,\dots,i_k \le d\\[.5mm]
\text{pairwise distinct},\\[.5mm]
i_0<i_j\text{ for }j\ge1
}}
\text{swap}_{i_0i_1}\text{swap}_{i_0i_2}\cdots\text{swap}_{i_0i_k}\in \cF_n
$$
which corresponds to $q_{k+1}\in\CC[S_n]$.
By Theorem \ref{t:sep_all_rows}, finding the largest eigenvalue of the localized $d$-QMC Hamiltonian $H_G^\lambda$ (for $\het(\lambda)\le d$) is equivalent to the NPO problem
$$
\min\left\{
\alpha:\alpha-h_G=\sum_k s_k^*s_k+q  
\text{ for some } 
s_k\in\cF_n,\ q\in \mathcal{I}^{\text{Sw}_d}_n
+\big(c_k-\gamma_{k+1,\lambda}\colon k\le d-1\big)
\right\}.
$$
As in Section \ref{s:npo}, this NPO can be solved through a hierarchy of SDP relaxations.

\newpage
 \appendix

\def\CR#1{{\color{red}#1}}

\section{Linear subspace of $M_{d^n}(\mathbb{C})$ spanned by the products of at most $d-1$ swap matrices}\label{a:d-1}
 Here we prove that in $M^{\text{Sw}_d}_n(\mathbb{C}),$ there are no relations of order at most $d-1$ (in the swap matrices, represented by transpositions) other than \eqref{eq:from sn}.\looseness=-1

\begin{remark}
Note that if a permutation $\sigma$ is a product of disjoint cycles of lengths $\ell_1,\dots,\ell_k$ respectively,  then $\sigma$ is can be written as a product of $\sum_{i=1}^k(\ell_i-1)$ transpositions (and not fewer than than many transpositions).
\end{remark}

First we note that linear independence of swap operators is preserved if the local dimension increases.

\begin{lemma}\label{cor:lin-indep}
If a set of products of swap operators is linearly independent in $M^{\text{\textnormal{Sw}}_d}_n(\mathbb{C})$, then it is linearly independent in $M^{\text{\textnormal{Sw}}_{d+1}}_n(\mathbb{C})$.
\end{lemma}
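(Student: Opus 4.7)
The plan is to construct a surjective $*$-algebra homomorphism
\[
\pi : M^{\text{Sw}_{d+1}}_n(\mathbb{C}) \twoheadrightarrow M^{\text{Sw}_d}_n(\mathbb{C}),\qquad \text{Swap}^{(d+1)}_{ij}\longmapsto \text{Swap}^{(d)}_{ij},
\]
and then to deduce the lemma in one line. Given such a $\pi$, the argument is immediate: if $p_1,\dots,p_r$ are products of swap symbols whose evaluations in $M^{\text{Sw}_d}_n(\mathbb{C})$ are linearly independent, then any hypothetical relation $\sum_i c_i\, p_i = 0$ among their evaluations in $M^{\text{Sw}_{d+1}}_n(\mathbb{C})$ would be carried by $\pi$ to the same relation in $M^{\text{Sw}_d}_n(\mathbb{C})$ (since $\pi$ is a homomorphism sending generators to generators), forcing $c_1=\cdots=c_r=0$.

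To produce $\pi$ I would invoke Theorem \ref{th: s-w-swaps}. The natural surjection $\rho_n^{(e)} : \CC S_n \twoheadrightarrow M^{\text{Sw}_e}_n(\mathbb{C})$ sending $(i\ j)\mapsto \text{Swap}^{(e)}_{ij}$ has kernel
\[
J_e \;=\; \bigoplus_{\substack{\lambda\vdash n \\ \het(\lambda)>e}} \rho_\lambda(\CC S_n),
\]
and since $\het(\lambda)>d+1$ implies $\het(\lambda)>d$, one has $J_{d+1}\subseteq J_d$. The induced quotient map $\CC[S_n]/J_{d+1} \twoheadrightarrow \CC[S_n]/J_d$ is then exactly the desired $\pi$ under the identifications provided by Theorem \ref{th: s-w-swaps}.

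The only step with any actual content is the existence of $\pi$, and it poses no genuine obstacle: the substance is just the observation that $M^{\text{Sw}_e}_n(\mathbb{C})$ becomes a successively smaller quotient of $\CC S_n$ as $e$ decreases. A Schur-Weyl-free alternative would be to verify directly that the defining relations of $\mathcal{I}^{\text{Sw}_{d+1}}_n$ already hold in $M^{\text{Sw}_d}_n(\mathbb{C})$: the symmetric group relations \eqref{eq:from sn} are obvious, and the sole nontrivial point is that the antisymmetrizer on $d+2$ letters lies in the two-sided ideal generated by the antisymmetrizer on $d+1$ letters (any $A_{d+2}$ contains $A_{d+1}$ on a $(d+1)$-subset as a factor), and the latter vanishes in $M^{\text{Sw}_d}_n(\mathbb{C})$ by Proposition \ref{prop:degred}.
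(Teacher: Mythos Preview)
Your proposal is correct and follows essentially the same approach as the paper: both invoke Theorem \ref{th: s-w-swaps} to exhibit $M^{\text{Sw}_d}_n(\mathbb{C})$ as a quotient of $M^{\text{Sw}_{d+1}}_n(\mathbb{C})$ (with swap generators mapping to swap generators), and then observe that linear independence in a quotient lifts to the domain. Your write-up is in fact more detailed than the paper's, spelling out the kernel inclusion $J_{d+1}\subseteq J_d$ and offering the alternative antisymmetrizer-ideal justification.
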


\begin{proof}
By Theorem \ref{th: s-w-swaps}, the algebra $M^{\text{\textnormal{Sw}}_d}_n(\mathbb{C})$ can be obtained as a quotient of $M^{\text{\textnormal{Sw}}_{d+1}}_n(\mathbb{C}),$ namely by modding out the direct summands $\rho_{\lambda} (\mathbb{C}S_n),$ where $\lambda$ is a partition of $n$ with exactly $d$ rows. 
The statement then follows since any linearly independent set in a quotient $M^{\text{\textnormal{Sw}}_{d+1}}_n(\mathbb{C})$ is linearly independent in $M^{\text{\textnormal{Sw}}_{d+1}}_n(\mathbb{C}).$
\end{proof}

The following statement is the main result of this section.

\begin{proposition}\label{prop:d-1-indep}
	The set of all products (that correspond to distinct permutations) of at most $d-1$ swap matrices 
	is a basis of the subspace of $M^{\text{Sw}_d}_n(\mathbb{C})$ of polynomials in the Swap$_{ij}$ of degree at most $d-1.$ 
\end{proposition}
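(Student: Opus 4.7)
My plan is to establish spanning and independence separately. For the spanning part, the symmetric group relations \eqref{eq:snrel} allow any word in $\text{swap}_{ij}$ of length $\le d-1$ to be reduced to a canonical product corresponding to a unique permutation $\sigma\in S_n$ of length $\ell(\sigma):=n-c(\sigma)\le d-1$, and conversely every such permutation arises as a reduced word of length exactly $\ell(\sigma)$. Hence the degree-$\le(d-1)$ subspace of $M^{\text{Sw}_d}_n(\CC)$ is spanned by $\{\rho(\sigma):\ell(\sigma)\le d-1\}$, and the content of the proposition reduces to linear independence of this spanning set.

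For independence, I would exploit the faithful action of $M^{\text{Sw}_d}_n(\CC)$ on $(\CC^d)^{\otimes n}$. Given a hypothetical relation $x=\sum_{\ell(\sigma)\le d-1}c_\sigma \rho(\sigma)=0$, applying $x$ to a generic rank-one tensor $v_1\otimes\cdots\otimes v_n$ with $v_i=\sum_a x_{i,a}e_a$ and comparing coefficients in the standard basis yields the family of linear constraints
\[
\sum_{\sigma\in\sigma_0\cdot\text{stab}_f}c_\sigma=0,\qquad f:[n]\to[d],\ \sigma_0\in S_n,
\]
where $\text{stab}_f=\prod_j S_{f^{-1}(j)}$ is the Young subgroup preserving $f$. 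I would then proceed by descending induction on $\ell(\sigma^*)$ from $d-1$ downward: for each $\sigma^*\in\mathcal S=\{\sigma:\ell(\sigma)\le d-1\}$, I aim to select $f$ so that the coset $\sigma^*\cdot\text{stab}_f$ contains no other permutation of length $\le\ell(\sigma^*)$; the inductive hypothesis then collapses the associated constraint to $c_{\sigma^*}=0$. Since every cycle of $\sigma^*$ has length at most $\ell(\sigma^*)+1\le d$, a \emph{proper coloring} of the cycle graph of $\sigma^*$ by $[d]$ exists, i.e., an $f$ assigning pairwise distinct colors to the elements of each cycle; any transposition $(i,j)\in\text{stab}_f$ then lies in a single level set, so $i,j$ belong to different cycles of $\sigma^*$, and $\sigma^*(i,j)$ has cycle count exactly one less, hence length exactly one more.

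The principal obstacle is that higher-order elements of $\text{stab}_f$, formed as products of disjoint transpositions within level sets, can interleave the cycles of $\sigma^*$ with no net change in cycle count (for example $\sigma^*=(12)(34)$, $\pi=(13)(24)$, $\sigma^*\pi=(14)(23)$, all of length $2$). To circumvent this, the choice of $f$ must be refined: most level sets are kept of size $1$, with the few larger ones carefully placed so that $\text{stab}_f$ contains no non-identity element $\pi$ with $\ell(\sigma^*\pi)\le\ell(\sigma^*)$. Verifying, for every $\sigma^*\in\mathcal S$, that such a refined coloring always exists—by a case analysis on the cycle structure of $\sigma^*$—is the combinatorial crux of the argument.
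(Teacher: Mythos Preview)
Your framework coincides with the paper's: evaluating on an elementary tensor $e_{f(1)}\otimes\cdots\otimes e_{f(n)}$ is exactly what Lemmas~\ref{lemma:uniq} and~\ref{lemma:v} do, and your coset constraint is the correct translation ($\sigma(v)=\tau(v)$ iff $\tau\in\sigma\cdot\text{stab}_f$). The paper organizes the induction slightly differently---it handles only the top degree $d-1$ by the tensor argument and then reduces the remaining degrees to the statement for $M^{\text{Sw}_{d-1}}_n(\CC)$ via Lemma~\ref{cor:lin-indep}---but the hard step $\ell(\sigma^*)=d-1$ is the same in both schemes, so nothing is gained or lost either way.

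The gap you flag is precisely the crux, and your $(12)(34)$ example is the right obstruction. The refinement the paper supplies is this: beyond rainbow-coloring each cycle of $\sigma^*$, require that any \emph{pair} of cycles shares at most one color, and that each repeated color occurs exactly twice on the support. Under this hypothesis Lemma~\ref{lemma:uniq} shows that $\sigma^*$ is the unique permutation of length $\le d-1$ in its coset $\sigma^*\cdot\text{stab}_f$, by tracing how a non-identity $\pi\in\text{stab}_f$ is forced to merge two cycles of $\sigma^*$ into one (re-splitting them without increasing the cycle count would need two shared colors between that pair). Lemma~\ref{lemma:v} then constructs such an $f$ with $d$ colors for any $\sigma^*$ of length $d-1$ with $k$ nontrivial cycles: order the cycles by length, color the first $d$ of the $d+k-1$ support positions by $1,\dots,d$, and color the remaining $k-1$ positions by borrowing the first (and, if needed, second) colors from the already fully colored early cycles, one borrowed color per late cycle. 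So your plan is correct but incomplete; the missing piece is exactly this pairwise-sharing condition and the verification that it can always be achieved.
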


\begin{remark}
In other words, Proposition \ref{prop:d-1-indep} states that permutations, which are products of at most $d-1$ transpositions, are linearly independent as elements of $M^{\text{Sw}_d}_n(\mathbb{C})$.
In Section \ref{s:npo}, we mentioned another natural linearly independent subset of $M^{\text{Sw}_d}_n(\mathbb{C})$. Recall that a permutation $\pi\in S_n$ is called \textit{$(d+1)$-good} if there is no increasing sequence $j_0<\cdots<j_d$ such that $\pi(j_0)>\cdots>\pi_(j_d)$. Then $(d+1)$-good permutations form a basis of $M^{\text{Sw}_d}_n(\CC)$ by \cite[Theorem 8]{Pro21}.
However, Proposition \ref{prop:d-1-indep} is not a direct consequence of this result. Namely, a product of at most $d-1$ transpositions is not necessarily a $(d+1)$-good permutation if $d\ge 3$. Concretely, the product of $d-1$ disjoint transpositions $\pi=\prod_{i=1}^{d-1}(i,2d-1-i)$ satisfies $\pi(1)>\pi(2)>\cdots>\pi(2d-2)$, so it is not $2(d-1)$-good (and in particular, not $(d+1)$-good if $d\ge3$).
\end{remark}

Before proving Proposition \ref{prop:d-1-indep}, we require two lemmas.
    
\begin{lemma}\label{lemma:uniq} 
	Let $\sigma$ be a permutation in $S_n$ that is a product of $d-1$ transpositions and cannot be written as a product of fewer than $d-1$ transpositions.
	   Let $v \in (\mathbb{C}^d)^{\otimes n}$ be an elementary tensor whose factors are standard basis vectors (so that $v$ has at most $d$ distinct indices). Suppose that for any product $\tau$ of $k$ disjoint cycles of $\sigma,$ where $k=1,2,$ the part of $v$ on which $\tau$ acts has at most $k-1$ indices that are repeated and they occur at most twice. Then $\sigma$ acts uniquely on $v$ among the products of at most $d-1$ transpositions. \looseness=-1
\end{lemma}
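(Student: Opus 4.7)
The strategy is to prove the contrapositive: suppose there is another product $\sigma'$ of at most $d-1$ transpositions with $\sigma'(v) = \sigma(v)$, and show $\sigma' = \sigma$. Setting $\pi := \sigma^{-1}\sigma'$ (so $\pi(v) = v$) and recalling that the minimum number of transpositions needed to express a permutation $\tau \in S_n$ equals $n-c(\tau)$ (where $c(\tau)$ denotes the number of cycles of $\tau$), the hypotheses translate into $c(\sigma) = n-(d-1)$ and $c(\sigma\pi) \ge c(\sigma)$. The goal thus reduces to showing $c(\sigma\pi) < c(\sigma)$ for every nontrivial $\pi$ in the stabilizer of $v$.

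I would first extract the combinatorial consequences of the hypotheses. Since $\pi$ preserves $v$, every nontrivial $\pi$-cycle consists of positions sharing a common $v$-value; the $k=1$ condition then forces those positions to lie in pairwise distinct $\sigma$-cycles, and the $k=2$ condition forces any two distinct $\pi$-cycles to share at most one common $\sigma$-cycle (otherwise two different repeated values would appear between a single pair of $\sigma$-cycles, violating the ``at most one repeated index'' clause). For a chosen nontrivial $\pi$-cycle $D = (p_1,\dots,p_t)$, the star decomposition $D = (p_1\,p_t)(p_1\,p_{t-1})\cdots(p_1\,p_2)$ then yields $c(\sigma D) = c(\sigma) - (t-1)$, since each transposition in this factorization swaps elements lying in two distinct current cycles (the first by the $k=1$ consequence, each subsequent one because $p_1$ is in the growing merged cycle while $p_k$ still resides in its original $\sigma$-cycle).

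The remaining $\pi$-cycles are then processed on top of $\sigma D$, and the shared-at-most-one-$\sigma$-cycle constraint ensures that each further $\pi$-cycle $D'$ of length $s$ contributes either another decrease of $s-1$ (when it touches $\sigma$-cycles disjoint from those already merged) or at worst fewer, handled by careful bookkeeping. The main obstacle arises when several $\pi$-cycles pairwise share $\sigma$-cycles, so that the multigraph on $\sigma$-cycles recording these overlaps acquires graph-theoretic cycles: a naive count of ``merges'' against ``splits'' can underestimate the true drop in $c(\sigma\pi)$, and the cleanest resolution is to verify directly (or via the Riemann--Hurwitz identity for products in $S_n$) that each extra graph cycle forces a strict further decrease rather than recovering any lost cycles. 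Combining these pieces gives $c(\sigma\pi) < c(\sigma)$ whenever $\pi \ne e$, contradicting the assumption and forcing $\pi = e$.
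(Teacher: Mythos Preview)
Your reformulation is clean and the first two paragraphs are correct: translating the problem to showing $c(\sigma\pi)<c(\sigma)$ for every nontrivial $\pi$ in the stabilizer of $v$, extracting from the $k=1$ hypothesis that each nontrivial $\pi$-cycle meets pairwise distinct $\sigma$-cycles, and from the $k=2$ hypothesis that two $\pi$-cycles share at most one $\sigma$-cycle, and then handling a single $\pi$-cycle via the star factorization --- all of this is sound.

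The third paragraph, however, is not a proof. You correctly identify the obstacle (once several $\pi$-cycles are processed, two positions of a later $\pi$-cycle can land in the same already-merged block, producing a split rather than a merge), but you do not resolve it. The appeal to Riemann--Hurwitz does not close the gap as stated: on a transitive $\langle\sigma,\pi\rangle$-orbit of size $N$ with $a$ $\sigma$-cycles, $b$ nontrivial $\pi$-cycles of lengths $t_1,\dots,t_b$, one gets
\[
c(\sigma\pi)\ \le\ N+2-c(\sigma)-c(\pi)\ =\ 2-a-b+\textstyle\sum t_i,
\]
and you would need $\sum t_i<2a+b-2$ to conclude $c(\sigma\pi)<a$. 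This inequality is \emph{not} a formal consequence of ``linear hypergraph'' incidence (any two $\pi$-cycles share at most one $\sigma$-cycle): already with $a=4$ $\sigma$-cycles and $b=\binom{4}{2}=6$ $\pi$-transpositions (one per pair of $\sigma$-cycles, all constraints satisfied) one has $\sum t_i=12=2a+b-2$, so the genus-zero bound gives only $c(\sigma\pi)\le a$, not strict inequality. To finish along your lines you would have to show the genus is positive in every such configuration, or carry out the ``careful bookkeeping'' you allude to; neither is done, and neither is routine.

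By contrast, the paper's argument avoids global cycle-counting entirely. It reconstructs the permutation directly from $v$ and $\sigma(v)$ by tracing one position at a time (start at a position, look up its value in $\sigma(v)$, find where that value sits in $v$, repeat), and shows that each time an ambiguity arises (a value repeated across two $\sigma$-cycles), choosing the ``wrong'' continuation joins two $\sigma$-cycles and hence strictly increases the transposition count of the reconstructed permutation. This local, step-by-step analysis sidesteps exactly the global interaction between multiple $\pi$-cycles that your approach leaves open.
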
	

\begin{proof}
	Let $\sigma$ and $v$ be as in Lemma \ref{lemma:uniq}. If $\sigma$ is a permutation on strictly less than $n$ letters, add to its cyclic structure the singletons corresponding to the missing letters in $\{1,\ldots,n\}.$

	First suppose $\tau$ is one of the disjoint cycles of $\sigma$ and let $w$ be the part of $v$ on which $\tau$ acts. 
	If $w$ has an index that appears at least twice, then one can construct at least one other permutation $\tau^\prime$ that is a product of at most as many transpositions as $\tau$, and gives the same result when applied to $w.$ Indeed, to find the first cycle of $\tau^\prime,$ start with the index of $w$ that is repeated, then find its image among the remaining indices, take the image of the latter and continue until the starting index occurs again. Since the starting index occurs at least twice in $w,$ the produced cycle is of smaller length compared to $\sigma.$ Now repeat the procedure starting with any other index to deduce what the other disjoint cycles are. This way we break the action of $\sigma$ on $w$ into an action of a product of disjoint cycles (some of them may have length one) whose lengths sum up to the length of $\sigma.$ Hence their product can be written as a product
	of strictly less than $d-1$ transpositions.
	
	Now let $\tau = \tau_1 \tau_2$ be a product of two disjoint cycles of $\sigma$ and let  $w_1$ and $w_2$ be the parts of $v$ on which $\tau_1$ and $\tau_2$ act, respectively. Suppose that none of $w_1$ and $w_2$ has repeated indices, but they do share at least two indices. For simplicity suppose they share exactly two indices. We want to construct another permutation $\tau^\prime$ that is a product of at most as many transpositions as $\tau$, and gives the same result when applied to $w_1\otimes w_2.$ As before, for the first cycle of $\tau^\prime$ start with one of the indices with two occurrences in $w_1,$ say $j_1,$ then find its image among the remaining indices, take the image of the latter and continue until $j_1$ occurs again. Note that the first time when the image of a letter is the other index with two occurrences, call it $j_2,$ there are two choices: to consider the image of $j_2$ by either $\tau_1$ or $\tau_2$ and then continue the process until the image of a letter is $j_1$ again. Note that if we choose to continue with $\tau_1(j_2),$ we get back $\tau,$ but if we consider
	$\tau_2(j_2),$ we switch to the other cycle and finish with the factor $e_{j_1}$ of $w_2.$ Hence, the second option produces the first cycle of the permutation $\tau^\prime$ we are looking for. For the second cycle of $\tau^\prime$ restart the procedure with the index $j_2$ of a factor in $w_1.$ By construction, $\tau^\prime$ is also a product of at most $d-1$ transpositions.

	It remains to prove that if $\sigma$ meets the conditions in Lemma \ref{lemma:uniq}, then it acts on $v$ uniquely among the products of at most $d-1$ transpositions. This is true by the same procedure as above of deducing the cyclic structure by comparing $v$ to its image $\sigma(v).$ Indeed, start with any index, take its image and continue until the starting index occurs again.
	
	The only time we have two options during this process is when we hit an index $j$ in $v$ that occurs (exactly) twice across two cycles, say $\tau_1$ and $\tau_2.$ Denote the parts of $v$ on which $\tau_i$ acts by $w_i.$ Suppose we started the process with $j$ in $\tau_1.$ When we hit $j$ again, we can either terminate the process (which yields the cycle $\tau_1$) or continue with the image of $j$ by $\tau_2.$ In this case we join the two cycles $\tau_1$ and $\tau_2,$ which means that the resulting permutation must have at least one  transposition more than $\sigma.$ 
	
	If we start with an index $k \neq j$ of $\tau_1,$ then, when we first hit $j,$ there are again two options: either to continue with $\tau_1(j)$ or $\tau_2(j).$ The choice $\tau_1(j)$ at the end reproduces $\tau_1$ (actually, it may happen that the index $k$ occurs in another cycle, say $\tau_3,$ and we may switch to $\tau_3$ after hitting $k$ again, but this case was already treated before). With the choice $\tau_2(j)$ we switch to the other cycle $\tau_2$ and since $k$ does not occur in $w_2,$ the process does not terminate in $w_2$ (meaning that we need to switch cycle once more before terminating). This again means that we join (at least) two cycles and the resulting permutation must have at least one  transposition more than $\sigma.$

	This shows that $\sigma$ is the only permutation that can be written as a product of at most $d-1$ transpositions and gives the result $\sigma(v)$ when applied to $v.$
\end{proof}

\begin{example}
	Let $d=4,n=5$ and define $\sigma = (12)(345).$ Let
	$$
	v_1 = e_1 \otimes e_2 \otimes e_1 \otimes e_2 \otimes e_3 \quad \text{ and }\quad 
	v_2 = e_1 \otimes e_2 \otimes e_1 \otimes e_3 \otimes e_4.
	$$
	It is easy to see that since $\sigma$ has two disjoint cycles and $v_1$ has two indices that occur twice, $\sigma$ acts on $v_1$ in the same way as $\sigma^\prime = (14)(253).$ On the other hand, by the algorithm of deducing the cyclic structure by comparing a vector to its image, $\sigma$ acts on $v_2$ uniquely among the products of at most $3$ transpositions in $S_5.$
\end{example}

\begin{lemma}\label{lemma:v}
	Let $\sigma$ in $S_n$ be a product of $d-1$ transpositions that cannot be written as a product of fewer than $d-1$ transpositions. Then there is a vector $v \in (\mathbb{C}^d)^{\otimes n}$ whose tensor factors are standard basis vectors with at most $d$ distinct indices that meets the conditions in Lemma \ref{lemma:uniq}.
\end{lemma}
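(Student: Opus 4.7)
The plan is to construct $v$ explicitly by assigning standard basis indices to the positions $\{1,\dots,n\}$ according to the disjoint cycle decomposition of $\sigma$. Write the non-trivial cycles of $\sigma$ as $\tau_1,\dots,\tau_k$ with disjoint supports $S_1,\dots,S_k\subseteq\{1,\dots,n\}$ and lengths $\ell_i=|S_i|\ge2$. The minimality hypothesis on $\sigma$ (that it cannot be written as fewer than $d-1$ transpositions) gives exactly
\[
\sum_{i=1}^k (\ell_i-1)=d-1,
\]
from which it also follows that $\ell_i\le d-k+1\le d$ for each $i$, so that each cycle can in principle be labelled with $\ell_i$ distinct indices from $\{1,\dots,d\}$.

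The key observation is that the sum $d-1$ on the right matches $|\{2,\dots,d\}|$. Thus I can partition $\{2,\dots,d\}$ into blocks $B_1,\dots,B_k$ with $|B_i|=\ell_i-1$, and set $A_i:=B_i\cup\{1\}\subseteq\{1,\dots,d\}$. By construction $|A_i|=\ell_i$ and $A_i\cap A_j=\{1\}$ for $i\neq j$. For each $i$ I choose an arbitrary bijection $\phi_i\colon S_i\to A_i$, and define $v\in(\CC^d)^{\otimes n}$ by placing $e_{\phi_i(j)}$ at position $j$ for $j\in S_i$, and $e_1$ at every position fixed by $\sigma$. Then $v$ is an elementary tensor in standard basis vectors using at most $d$ distinct indices.

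It remains to verify the hypotheses of Lemma \ref{lemma:uniq}. For $k=1$, each non-trivial cycle $\tau_i$ receives pairwise distinct indices (via the bijection $\phi_i$), while any fixed-point cycle has a single factor and hence no repeats. For $k=2$, the three possible subcases — two non-trivial cycles, one non-trivial cycle paired with a singleton, or two singletons — are handled uniformly by the fact that $1\in A_i$ for every $i$ and that every fixed position carries index $1$: in each subcase, the combined indices on the paired cycles coincide in exactly one value (namely $1$), which appears exactly twice. Therefore the conditions of Lemma \ref{lemma:uniq} are met.

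I do not anticipate a real obstacle: the entire proof rests on the exact numerical match $\sum(\ell_i-1)=d-1=|\{2,\dots,d\}|$, which enables the partition, together with the trick of singling out the index $1$ as a universal pivot shared between all cycles. The only small care is to remember to accommodate fixed-point cycles in the $k=2$ check, which is why assigning $e_1$ (rather than an arbitrary vector) to the fixed positions is essential.
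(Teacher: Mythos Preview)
Your construction is correct and, in my view, cleaner than the paper's. Both proofs are explicit constructions of $v$, but they differ in organizing principle. The paper orders the cycles by length, assigns the labels $1,\dots,d$ sequentially to the first $d$ positions in $\operatorname{supp}(\sigma)$, and then reuses carefully chosen labels (the ``first'' and ``second'' letters of the already-labelled cycles) for the remaining $k-1$ positions; verifying the pair condition then requires tracking which cycles are entirely contained in the first $d$ positions versus which straddle the boundary. Your approach is cycle-centric rather than position-centric: the numerical identity $\sum_i(\ell_i-1)=d-1=|\{2,\dots,d\}|$ lets you partition $\{2,\dots,d\}$ into blocks $B_i$ and give cycle $\tau_i$ the label set $A_i=\{1\}\cup B_i$. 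This makes the pair check immediate, since $A_i\cap A_j=\{1\}$ by construction, and the inclusion of singletons is automatic because every cycle---trivial or not---carries the label $1$. What the paper's approach buys is perhaps a more ``hands-on'' feel illustrated by the worked example; what yours buys is a uniform one-line verification of both the $k=1$ and $k=2$ conditions with no case analysis on cycle placement.
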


\begin{proof}
	Let $\sigma \in S_n$ be a product of $d-1$ transpositions which cannot be written as a product of less than $d-1$ transpositions. Suppose $\sigma$ is a product of $k$ disjoint cycles for some $k=1,\ldots,d-1$ (all the cycles being of length $2$ or more). Hence, $\sigma$ is a permutation on $d+k-1$ letters, but the factors of $v$ are chosen among the $d$ standard basis vectors of $\CC^d.$ 
	
Without loss of generality assume that the letters of $\sigma$ are $1,\ldots,d+k-1.$	
Order the cycles of $\sigma$ increasingly by their lengths.
Then assign the indices $i_j$  to the first $d+k-1$ factors $e_{i_j}$ of $v$ in the following way: 

Assign indices $1,\ldots,d$ (e.g., increasingly according to the position of the factors) to the part of $v$ on which the cycles of $\sigma$ involving letters $1,\ldots,d$ act (the cycle with $d$ may involve larger indices and hence we do not yet assign the indices to all of the factors of $v$ on which this cycle acts). Now $v$ has $k-1$ more factors to be labeled (with indices between $1$ and $d$), hence the corresponding part of $\sigma$ has at most $\lfloor (k-1)/2 \rfloor$ disjoint cycles. Since we have $k$ cycles in total (each of length at least $2$), the part of $\sigma$ on the letters $1,\ldots,d$ is a product of at least $\lfloor (k+1)/2 \rfloor$ disjoint cycles. 

So assign to the next $\lfloor (k-1)/2 \rfloor$ unlabeled factors of $v$  (e.g., increasingly according to the position of the factors) the first letters of the cycles of $\sigma$ on the letters $1,\ldots,d.$ Finally, assign to the remaining unlabeled factors of $v$ the second letters of the cycles of $\sigma$ on the letters $1,\ldots,d.$

This way we labeled the first $d+k-1$ factors of $v.$ To the remaining factors just assign the index $1.$
It is now clear from the construction that the obtained vector meets the conditions in Lemma \ref{lemma:uniq}. 
\end{proof}

\begin{example}
Let us illustrate Lemmas \ref{lemma:uniq} and \ref{lemma:v} in the case $n=8$ and $d=5.$
	Suppose $\sigma \in S_8$ is a product of $4$ transpositions and it cannot be written as a product of less than $4$ transpositions (here we omit writing the singletons in $\sigma$). Then we have $4$ options:
	\begin{enumerate}[\rm (a)]
		\item If $\sigma$ is a $5$-cycle, e.g., $\sigma = (12345),$ then a suitable vector is
		$$
		v = e_1 \otimes e_2 \otimes e_3 \otimes e_4 \otimes e_5 \otimes e_1 \otimes e_1 \otimes e_1.
		$$
		\item If $\sigma$ has two cycles, there are two possible cyclic structures: two $3$-cycles or a product of a transposition and a $4$-cycle. E.g., $\sigma = (123)(456)$ or $\sigma = (12)(3456).$ In both cases we can take
		$$
		v = e_1 \otimes e_2 \otimes e_3 \otimes e_4 \otimes e_5 \otimes e_1 \otimes e_1 \otimes e_1.
		$$
		\item If $\sigma$ has $3$ cycles, then it must be a product of two transpositions and a $3$-cycle. E.g., if $\sigma = (12)(34)(567),$ we can take
		$$
		v = e_1 \otimes e_2 \otimes e_3 \otimes e_4 \otimes e_5 \otimes e_1 \otimes e_3 \otimes e_1.
		$$ 
		\item If $\sigma$ has $4$ cycles, then it must be a product of four transpositions. If, e.g., $\sigma = (12)(34)(56)(78),$ we can take
		$$
		v = e_1 \otimes e_2 \otimes e_3 \otimes e_4 \otimes e_5 \otimes e_1 \otimes e_2 \otimes e_4.
		$$ 
	\end{enumerate}
\end{example}

\begin{proof}[Proof of Proposition \ref{prop:d-1-indep}] 
Denote the set of all products (that correspond to distinct permutations) of at most $d-1$ swap matrices by $\tilde{\mathcal{B}}_{d-1}.$ Suppose
$$\sum_{s\in \tilde{\mathcal{B}}_{d-1}}\alpha_s s=0$$
for some scalars $\alpha_s$. By Lemma \ref{lemma:uniq} and Lemma \ref{lemma:v}, for each product $s$ of $d-1$ transpositions that cannot be written as a product of less than $d-1$ transpositions there is an elementary tensor vector $v_s \in (\mathbb{C}^d)^{\otimes n}$ such that $s$ acts uniquely on $v_s$ among the elements of $\tilde{\mathcal{B}}_{d-1}.$ 
Since elements of $\tilde{\mathcal{B}}_{d-1}$ act on $(\mathbb{C}^d)^{\otimes n}$ as permutations of tensor factors, this means that $s\cdot v_s$ is linearly independent of $\{t\cdot v_s\colon t\in \tilde{\mathcal{B}}_{d-1}\setminus\{s\}\}$.
Hence, $\alpha_s=0$ for all $s\in\tilde{\mathcal{B}}_{d-1}$ that cannot be written as products of less than $d-1$ transpositions. Now use induction and Lemma \ref{cor:lin-indep} to finish the proof.
\end{proof}

\section{Proof of Proposition \ref{prop:02}}\label{app:LR}

\def\SoC{\Sigma}

Recall that for any partition $\zeta = (\zeta_1, \dots, \zeta_d)$ of an integer $m$, we define the function $\eta(\zeta)$ as:
\[
\eta(\zeta) = m^2 + \frac{d(d-1)(2d-1)}{6} - \sum_{i=1}^d (\zeta_i - (i-1))^2.
\]
Our first goal in this section is to derive a formula for
\begin{equation}\label{eq:lmn1}
\Delta(\lambda,\mu,\nu) = \eta(\lambda) - \eta(\mu) - \eta(\nu)
\end{equation}
in terms of the contents of the boxes of the skew-shaped Young diagrams associated with these partitions. The \textit{content of a box} at row $r$ and column $c$ in a skew-shaped Young diagram $\zeta$ is defined as $\text{content}(\text{box}) = c - r$ (not to be confused with the content of a Young tableau as in Section \ref{sec:LR}). Let $\SoC(\zeta)$ denote the sum of contents of boxes in the skew-shaped Young diagram $\zeta$ (or the Young diagram of $\zeta$, if the latter is a partition).

\begin{lemma}\label{l:boxcontent}
Let $\la \vdash n$, $\mu \vdash n-k$ and $\nu \vdash k$. If $\mu$ is contained in $\la$, then
\[
\Delta(\lambda,\mu,\nu) = 2k(n-k) - 2\SoC(\lambda/\mu) +2\SoC(\nu).
\]
\end{lemma}

\begin{proof}
By \cite[Theorem 4]{Lassalle} or \cite{Fro01}, contents of boxes in a Young diagram of $\zeta\vdash m$ are related to the character of $\zeta$ as $\frac{\chi_\zeta((ij))}{\chi_\zeta(e)}=\binom{n}{2}^{-1} \SoC(\zeta)$. By Lemma \ref{l:char}, 
\begin{equation} \label{eq:eta_simplified}
\eta(\zeta) = m^2 - m - 2 \SoC(\zeta).
\end{equation}
We replace $\eta(\lambda),\eta(\mu),\eta(\nu)$ in $\Delta(\la,\mu,\nu)$ with \eqref{eq:eta_simplified},
and note that $\SoC(\la/\mu)=\SoC(\lambda)-\SoC(\mu)$ since $\mu$ is contained in $\lambda$.
\end{proof}

The rest of this appendix consists of the proof of Proposition \ref{prop:02} (stated below for convenience) divided in several cases.

\begin{proposition}\label{prop:02A}
When $k\leq 4$ or $d\le3$, the expression \eqref{eq:lmn} is maximized at a triple of partitions $\lambda \vdash n$,  $\mu \vdash n-k$ and $\nu \vdash k$ such that $\lambda=\mu\uplus\nu$.
For such partitions, the coefficient $c^\lambda_{\mu \nu}$ is nonzero.
\end{proposition}

We shall also paraphrase $\lambda=\mu\uplus\nu$ by saying that each row of $\lambda$ coincides either with a row of $\mu$ or a row of $\nu$. 
For such a triple $(\lambda,\mu,\nu)$, we first check that $c^\lambda_{\mu \nu}\neq0$. 
Since $c^\lambda_{\mu \nu}=c^\lambda_{\nu\mu}$, we can without loss of generality assume that the first row of $\lambda$ equals the first row of $\mu$. 
Then $c^\lambda_{\mu\nu}=c^{\lambda'}_{\mu'\nu}$, where $\lambda',\mu'$ are obtained from $\lambda,\mu$ by deleting the first row. In the triple $(\lambda',\mu',\nu)$, each row of $\lambda'$ coincides either with a row of $\mu'$ or a row of $\nu$. Thus, we can continue inductively until one of $\mu$ or $\nu$ is the empty partition, and the other one equals $\lambda$ (in which case the corresponding Littlewood-Richardson coefficient is nonzero).

Such triples in particular arise as follows. Let $\mu$ be contained in $\lambda$. If $\mu$ and $\lambda/\mu$ do not share any rows, then $\mu$ is a subpartition of $\lambda$ and $\lambda/\mu$ is a partition, and $\lambda=\mu\uplus\lambda/\mu=(\mu,\lambda/\mu)$. The relation ``$\mu$ and $\lambda/\mu$ do not share any rows'' will thus frequently appear in the proof of Proposition \ref{prop:02A} below.

The proof of the first (main) statement of Proposition \ref{prop:02A} relies on particular joint rearrangements of partitions $\lambda,\mu,\nu$ that arise from moving a single box.  
Assume $c_{\mu\nu}^\lambda\neq0$ (so in particular, $\mu$ is contained in $\la$). 
That is, one can label the skew-shaped diagram $\la/\mu$ as an LR tableau with content $\nu$. 
A move of a box of $\la / \mu$ is called a \textbf{Robin Hood move} if the change in the box's content is non-decreasing and, after the move, there exists an LR tableau on the obtained skew-shaped diagram $\lambda'/\mu'$ with content $\nu'$ whose height is not larger than the height of $\nu$.
In particular, a Robin Hood move returns a triple $(\lambda',\mu',\nu')$ with $c^{\lambda'}_{\mu'\nu'}\neq0$, and does not decrease the $\Delta$ value by Lemma \ref{l:boxcontent}: 
$\Delta(\lambda,\mu,\nu)\le \Delta(\lambda',\mu',\nu')$.
Furthermore, it suffices to only consider moving the boxes of $\la/\mu$ with minimal contents (as these minimize the increment of $\SoC(\la/\mu)$).

\subsection{Case $k=2$}

The partitions of $2$ satisfy $\eta_{(2)} < \eta_{(1,1)}$. 
To prove Proposition \ref{prop:02A} we separate three cases, based on $d-e\in\{0,1,2\}$.

\subsubsection{Case $d-e=2$}

Here, $\la_{d-1},\la_d$ equal 1 and form $\nu=(1,1)$. Hence, $\mu$ and $\la / \mu$ do not share rows.

\subsubsection{Case $d-e=1$}

(a) If $e = d-1$ and $\la_d=2,$ then $\la$ and $\la / \mu$ do not share rows.

(b-1) If $\la_d=1$ and $\mu_{d-1} = 2,$ then there is one $\la /\mu$ box in row $d.$ Moving the other $\la / \mu$ box next to it (is a Robin Hood move and hence) yields a higher $\Delta$ value.

(b-2)  If $\la_d=1$ and $\mu_{d-1} = 1,$ let $j_0$ be the smallest row index $j$ such that $\mu_j =1.$ Then move the $\la /\mu$ box that is not in row $d$ to row $j_0.$ This is a Robin Hood move by definition of $j_0,$ hence $\Delta$ increases. By exchanging $\la_{d-1}$ with $\mu_{j_0},$ we produce a triple of partitions with the same $\Delta$ value such that $\la / \mu$ and $\mu$ do not share rows.

\subsubsection{Case $d=e$}
(a) If $\mu_{d-1}\geq 2,$ then move the the $\mu$ boxes from row $d$ to the first row and place the two $\la / \mu$ boxes in row $d$ (Robin Hood moves). This clearly increases the value  of $\Delta.$

(b) If $\mu_{d-1}=1,$ then $\mu_d=1.$ Move both $\mu_{d-1}$ and $\mu_d$ to the first row and place the $\la / \mu$ boxes in row $d-1$ and $d,$ respectively.
Now the contents of these two boxes are lowered, but the height of $\nu$ may increase. We shall prove that the value of $\Delta$ increases in any case.

Assume $\nu = (2)$ at the start. So, after the moves, $\SoC(\nu)$ decreases by $1.$
Denote the two $\la/\mu$ boxes by b1 and b.
Since the initial content of b1 is $\geq 2-(d-2)=4-d,$ $\SoC(\la/\mu)$ increases by $>1$ after the move, causing
the increase in the value of $\Delta.$ 
After the move, $\nu$ is the tail of $\la,$ so $\la = \mu \uplus \nu.$

\subsection{Case $k=3$}

The partitions of 3 satisfy $\eta_{(3)} < \eta_{(2,1)} < \eta_{(1,1,1)}$. The proof of Proposition \ref{prop:02A} again splits in several cases.

\subsubsection{Case $d-e=3$}
The tail of $\la$ equals $\nu = (1,1,1).$ In this case $\mu$ and $\la / \mu$ do not share any rows and we are done.

\subsubsection{Case $d-e=2$}
The three boxes of $\la / \mu$ are not stacked on top of each other. So a minimal enumeration of $\la / \mu$ does not include $3$.
Since the goal is to maximize $\Delta$ as in \eqref{eq:lmn1}, we can assume $\nu \neq (1,1,1)$ (if we can increase the value of $\Delta(\la,\mu,(2,1)),$ we can also increase the value of $\Delta(\la,\mu,(1,1,1))$).
In this case, the minimal enumeration of $\la / \mu$ includes $2$, so we can assume $\nu = (2,1).$

(a) If the tail of $\la$ equals $\nu,$ i.e., $\lambda_{d-1} = 2, \lambda_d=1,$ then $\la$ and $\la / \mu$ do not share any rows and we are done.

(b) If $\nu$ is not the tail of $\la,$ then two $\la / \mu$ boxes are in the $(d-1)$st and $d$th row, respectively, and the third is either in row $d-1$ (in which case $\lambda=\mu\uplus\nu$ and we are done) or at the end of some $\mu$-row. Assume the latter.

(b-1) If $\mu_2\geq 2,$ then the third box is moved to the $(d-1)$st row. In this case $\mu$ and $\nu$ stay the same, but $\la$ is changed via a Robin Hood move. Hence the value of $\Delta$ increases. After the move, $\lambda=\mu\uplus\nu$ and we are done.

(b-2) If $\mu_e=1,$  let $j$ be the largest row index such that $\mu_j>1.$ Move the third box of $\la/\mu$ at the end of row $\mu_{j+1}.$ This move is clearly a Robin Hood move by definition of $j.$ Since $\mu$ and $\nu$ do not change, the value of $\Delta$ increases. Now by exchanging $\la_{d-1}$ with $\mu_{j+1},$ we see that each row of $\la$ is either a row of $\mu$ or a row of $\nu,$ i.e., $\lambda=\mu\uplus\nu.$

\subsubsection{Case $d-e=1$}

(a) If $\mu_{d-1}\geq 3,$ then move the $2$ boxes (that are potentially not yet in the $d$th row) of $\la / \mu$ to the $d$th row. Since this process only involves Robin Hood moves and $\mu$ does not change, the value of $\Delta$ increases. After this move, clearly, $\lambda=\mu\uplus\nu.$

(b) If $\mu_{d-1}=2,$ then move to the $d$th row one of the two boxes of $\la/\mu$ that are not in the $d$th row (this is a Robin Hood move, hence $\Delta$ increases). 

Similarly as in (b-2) in the previous subsection, let $j$ be the largest row index such that $\mu_j>2.$ Move the third box of $\la/\mu$ at the end of row $\mu_{j+1}.$ This move is again a Robin Hood move by definition of $j.$ Since $\mu$ and $\nu$ do not change, the value of $\Delta$ increases. By exchanging $\la_{d-1}$ with $\mu_{j+1},$ we see that each row of $\la$ is either a row of $\mu$ or a row of $\nu.$ Thus, $\lambda=\mu\uplus\nu.$

(c) Now assume $\mu_{d-1}=1.$

(c-1.1) If $\lambda_{d-1} = \mu_{d-1}=1$ and $\mu_{d-2}>1,$ then exchange the $\mu$ box in row $d-1$ with one $\la/\mu$ box in row $d-2.$ To compute tha change in the value of $\Delta$ assume the worst case where $\nu=(3).$  After the exchange, $\SoC(\nu)$ decreases by $3$ and $\SoC(\la/\mu)$ increases by $2.$
Now move the remaining $\la/\mu$ box that is not in row $d-1$ or row $d$ to the $(d-1)$st row. This is clearly a Robin Hood move, so the value of $\Delta $ increases by at least $2$ after this move. The value of $\Delta$ hence increases after these two moves and at the end, $\nu$ is the tail of $\la,$ i.e., $\lambda=\mu\uplus\nu.$

(c-1.2) If $\lambda_{d-1} = \mu_{d-1}=\mu_{d-2}=\la_{d-2}=1,$ then move the $\mu$ boxes from the $(d-1)$st and $(d-2)$nd row to the first and the second row, respectively and move the two $\la/\mu$ boxes that are not in the $d$th row to the $(d-1)$st and $(d-2)$nd row, respectively. 
Again, assume the worst case, where $\nu = (3)$ at the start.
Note that after the move, $\nu=(1,1,1).$  So, after the move, $\SoC(\nu)$ decreases by $6.$
However, by assumption, the contents of the two $\la/\mu$ boxes  that are not in the first column (call them b2 and b3), satisfy content(b2) $\geq 2-d+3$ and content(b3) $\geq 3-d+3.$
Hence, $\SoC(\la/\mu)$ increases by at least $6$ after the move and hence the value of $\Delta$ does not decrease.
Moreover, after the move, $\nu$ is the tail of $\la,$ so $\lambda=\mu\uplus\nu.$

(c-1.3)  If $\lambda_{d-1} = \mu_{d-1}=\mu_{d-2}=1,$ but $\la_{d-2}>1,$ then first exchange the $\mu$ box in row $d-1$ with a $\la/\mu$ box  in row $d-2.$
Again, assume $\nu =(3)$ at the beginning. Note that after this move, $\nu = (2,1),$ so $\SoC(\nu)$ decreases by $3$ and $\SoC(\la/\mu)$ increases by $2.$ 
Next, move the remaining $\la / \mu$ box (the one not in the $1$st column) to row $d-1.$ This move increases the value of $\Delta$ by at least $1,$ so the value of $\Delta$ in fact increases after these two moves. Moreover, $\nu$ becomes the tail of $\la,$ so $\lambda=\mu\uplus\nu.$

(c-2) Assume $\mu_{d-1}=1$ and $\lambda_{d-1}=2.$ 

(c-2.1) If $\lambda_d =2,$ move the $\mu$ box from row $d-1$ to the first row and rearrange the $\nu$ boxes so that $\lambda_{d-1}=2$ and $\lambda_d=1.$
It is easy to compute that the value of $\Delta$ does not change after this move.
Moreover, $\nu$ becomes the tail of $\la,$ so $\lambda=\mu\uplus\nu$ and we are done.

(c-2.2) If $\mu_{d-1}=1, \lambda_{d-1}=2$ and $\la_d=1,$ then move one $\la / \mu$ box to row $d$ (which is a Robin Hood move) to end up with case (c-2.1).

\subsubsection{Case $d=e$}
(a) If $\mu_{d-1}\geq 3,$ then move the $\mu$ boxes in row $d$ to the first row and move the $\la / \mu$ boxes to row $d.$ these are Robin Hood moves, hence the value of $\Delta$ increases. Now $\nu$ is the tail of $\la,$ so $\lambda=\mu\uplus\nu.$

(b) If $\mu_{d-1}= 2,$ then move the $\mu$ boxes in rows $d-1$ and $d$ to the first row and place the $\la / \mu$ boxes in rows $d-1$ and $d$ so that $\la_{d-1}=2$ and $\la_d=1.$ Doing so, the contents of these three boxes decrease, but the height of $\nu$ might increase. 

So assume that $\nu = (3)$ at the start. Then, after the move,
$\SoC(\nu)$ decreases by $3.$ 
 By assumption ($\mu_{d-1}=2$), each of the three boxes in $\la /\mu$ satisfies
$\text{content(box)} > 3-d$. 
Hence, $\SoC(\la/\mu)$ increases by  $>3$ and the value of $\Delta$ increases,
After this moves, $\nu$ is the tail of $\la,$ so $\lambda=\mu\uplus\nu.$

(c) If $\mu_{d-1}=1,$ then move $\mu_d,\mu_{d-1}$ and $\mu_{d-2}$ to the first row and move the three $\la/\mu$ boxes to rows $d,d-1,d-2,$ respectively, so that $\la_d = \la_{d-1} = \la_{d-2}=1.$ Note that $\SoC(\nu)$ may decrease by $ 6$ if $\nu$ changes from $(3)$ to $(1,1,1).$ However, $\SoC(\la/\mu)$ at the start is at least
$
2 \cdot 3(2-(d-2)) = 2 \cdot (12-3d)
$
and its final value is 
$
2\cdot(1-d+2-d+3-d) = 2\cdot (6-3d).
$
Hence, after these moves, the value of $\Delta$ does not decrease.
Moreover, $\nu$ becomes the tail of $\la,$ so $\lambda=\mu\uplus\nu.$

\subsection{Case $k=4$}

The partitions of $4$ satisfy
$$
\eta_{(4)} = 0 < \eta_{(3,1)} = 8 < \eta_{(2,2)} = 12 < \eta_{(2,1,1)} = 16 < \eta_{(1,1,1,1)} = 24.
$$
The sum of contents $\SoC(\nu)$ for each of these partitions $\nu$ is
 $$
 \SoC((4)) = 6 > \SoC((3,1)) = 2 > \SoC((2,2)) = 0 >
 \SoC((2,1,1)) = -2 > \SoC((1,1,1,1)) = -6.
 $$
As before, to prove Proposition \ref{prop:02A} we separate several cases.

\subsubsection{Case $d-e=4$}
In this case $\la_d = \la_{d-1} = \la_{d-2} = \la_{d-3} = 1$, and none of these boxes belong to $\mu$. Hence, $\la / \mu$ and $\mu$ do not share rows and $\la = \mu \uplus \nu.$

\subsubsection{Case $d-e=3$}
(a) If $\mu_e \geq 2,$ then move the $\la/\mu$ box that is potentially not in the last three rows to row $d-2$ so that $\la_{d-2}=2, \la_{d-1}=\la_d=1.$ This is a sequence of Robin Hood moves (note that the height of $\nu$ at the start must have been $3,$ so $\nu$ did not change), hence the value of $\Delta$ increases. Moreover, $\nu$ becomes the tail of $\la,$ so $\la = \mu \uplus \nu.$

(b) If $\mu_e=1,$ then let $j_0$ be the smallest $j$ such that $\mu_j=1.$ Move the $\la/\mu$ box that is not in the last three rows to row $j_0.$ Now exchanging the $\mu$ box in row $j_0$ with the $\nu$ box in row $d-2$ yields a (actually the same) triple of partitions $(\la,\mu,\nu)$ such that each row of $\la$ is either a row of $\mu$ or a row of $\nu,$ so $\la = \mu \uplus \nu.$

\subsubsection{Case $d-e=2$} In this case $\nu$ is either $(3,1)$ or $(2,2)$ at  start.

(a) If $\mu_e \geq 2,$ then move the two $\la/\mu$ boxes that are potentially not in the last two rows to the last two rows so that $\la_{d-1} = \la_d =2.$ Now assume the worst case where $\nu = (3,1)$ in the beginning. Clearly, after these moves, $\nu$ changes into $(2,2).$
Recall that
$$
\Delta = 2k(n-k) + 2 \SoC(\nu)-2\SoC(\la/\mu).
$$
After the moves, the value of $\Delta$ decreases by $2 \cdot 2$ because $\nu$ changes. However,  $\SoC(\la/\mu)$ decreases by at least $4$ because the content of each of the two moved boxes increases by at lest two. Hence, the value of $\Delta$ increases with this construction.

(b) Assume $\mu_e = 1.$ 

(b-1) Assume $\mu_{e-1} \geq 2.$ Move $\mu_e$ to the first row.

(b-1.1) If $\mu_{e-1} = \la_{e-1}=2,$ then the possible starting positions with smallest content of the remaining two boxes were 
$(2,d-1),(2,d-2).$ In this case exchange $\mu_e$ with $\la_d$ to obtain a (the same) triple of partitions $(\la,\mu,\nu)$ such that $\la = \mu \uplus \nu.$

If the two remaining $\la/\nu$ boxes were not at positions $(2,d-1),(2,d-2),$ then at least one of them had content $3-(d-4)=7-d$ or higher.
Now move the $\la/\mu$ boxes so that $\la_{d-2}=2,\la_{d-1}=1,\la_d=1.$
If $\nu=(3,1)$ at the start (worst case), then $\SoC(\nu)$ decreases by at most $4,$ but $\SoC(\la/\mu)$ increases by at least $4$ as well (because the box with content $\geq 7-d$ goes to position $(1,d-2)$). Hence $\Delta$ does not decrease, but now $\la = \mu \uplus \nu$.

(b-1.2) If $\mu_{e-1} = 2$ and $\la_{e-1} \geq 3,$ then $d\geq 5$ and $\la_{d-4}\geq 3.$ 
If $\mu_{e-2}\geq 3,$ move the $\la/\mu$ boxes to rows $d-1$ and $d,$ respectively, so that $\la_{d-1}=\la_d=2.$ Also move one $\mu$ box from row $d-4$ to row $d-2$ (so that $\la_{d-2}=\mu_e=2$). Now $\nu$ changes from $(3,1)$ to $(2,2),$ hence $\SoC(\nu)$ decreases by $2,$ and $\SoC(\la/\mu)$ increases by at least $4,$ so $\Delta$ increases (and now $\la = \mu \uplus \nu$). 

If $\mu_{e-2}=2,$ then $\nu=(2,2)$ at the start. Move two $\la/\mu$ boxes to rows $d-1$ and $d-2$ so that $\la_{d-2}=2 = \mu_e+1,\la_{d-1}=2,\la_d=1.$ After these moves, $\nu$ does not change and $\SoC(\la/\mu)$ increases, hence $\Delta$ increases. Now exchange $\mu_e$ with $\la_d$ to obtain a triple of partitions $(\la,\mu,\nu)$ such that $\la = \mu \uplus \nu.$

(b-1.3) Assume $\mu_{e-1}\geq3.$ If no $\la/\mu$ box is at position $(2,d-1),$ then it is easy to see (as before) that moving the $\la/\mu$ boxes so that $\la_{d-2}=2,\la_{d-1}=1,\la_d=1,$ increases $\Delta$ and yields $\la = \mu \uplus \nu$.   (Note that in this case $\nu$ changes from $(3,1)$ or $(2,2)$ to $(2,1,1)$).

Otherwise, if two $\la/\mu$ boxes are at positions $(2,d-1),(2,d-2),$ then do the procedure from (b-1.1).

(b-2) If $\mu_{e-1}=1,$ then let again $j_0$ be the smallest $j$ such that $\mu_j =1.$ Move $\mu_e$ to row $j_0$ and move three  $\la/\mu$ boxes so that $\la_{d-2}=1,\la_{d-1}=1,\la_d=1$ and move one $\la/\mu$ box to row $j_0+1 \leq d-2.$ Now assume $\nu=(3,1)$ at the start (worst case). 
Then after the moves, $\SoC(\nu)$ decreases by at most $4.$ The positions for $\la/\mu$ boxes with lowest content are $(2,d-3),(3,d-3).$ Hence it is easy to see that the value of $\Delta$ increases. Again, switch $\la_{d-2}$ with $\mu_{j_0+1}$ to obtain a triple of partitions $(\la,\mu,\nu)$ such that $\la = \mu \uplus \nu$.

Note that if $\nu=(2,2)$ at the start, then the remaining two $\la/\mu$ boxes could have been at positions $(2,d-2),(2,d-3).$ But in this case  $\SoC(\nu)$ decreases by  $2$ and $\SoC(\la/\mu)$ increases by at least $2$ as well.

\subsubsection{Case $d-e=1$}
In this case the height of $\nu$ is at most $3$ at  start.

(a) If $\mu_e \geq 4,$ then move all the $\la/\mu$ boxes to row $d.$ This clearly increases $\Delta$ and produces a triple $(\la,\mu,\nu)$ such that $\la = \mu \uplus \nu$. 

(b) If $\mu_e = 3,$ then let $j_1$ be the smallest $j$ such that $\mu_j=3.$ Move the $\la/\mu$ box that is not in row $d$ to row $j_1.$ Now exchanging the $\mu$ boxes in row $j_0$ with the $\nu$ boxes in row $d$ yields a (in fact the same) triple of partitions $(\la,\mu,\nu)$ such that $\la = \mu \uplus \nu$. 

(c) Assume $\mu_e=2.$ Note that since $n-k\geq k,$ we have $d\geq 3.$ If needed, move another $\la / \mu$ box to row $d$ so that $\la_d=2.$ This move clearly increases $\Delta.$ 

(c-1) If $\mu_{e-1}=4,$ then move the two remaining $\la/\mu$ boxes to the end of row $d-1.$ Doing so, the value of $\Delta$ increases (even if one of these boxes was located in row $d$ - in that case $\nu = (3,1)$ at the beginning and $(4)$ at the end, which makes up for moving one box to the row above). We also have  $\la = \mu \uplus \nu$ in the end.

(c-2) If $\mu_{e-1} = 3,$ then move $\mu_{e}$ to the first row and move the $\la/\mu$ boxes so that $\la_{d-1}=3,\la_d=1.$ Now if  $\nu=(4)$ at the start, $\SoC(\nu)$ decreases by $4$ and in that case $\SoC(\la/\mu)$ increases by at least $4.$ Indeed, it decreases by $2$ because $\la_d$ gets moved to the row $d-1$ and it then increases by at least $6$ since one box gets moved to row $d$.
In the case where $\la_d=3$ at the start, we have $\nu=(3,1)$ at the start. Then both $\SoC(\nu)$  and $\SoC(\la/\mu)$ stay the same so that $\Delta$ does not decrease.

(c-3) If $\mu_{e-1} = 2,$ then move $\mu_e$ to the first row and move the two remaining $\la/\mu$ boxes to row $d-1$ so that $\la_{d-1}=\la_d=2.$
If $\nu=(3,1)$ at the start, that the worst case is when $\la_{d-2}=\la_{d-1}=3$ (the remaining $\la/\mu$ boxes are in rows $d-2$ and $d-1,$ respectively). In that case $\SoC(\nu)$ decreases by $2,$ but $\SoC(\la/\mu)$ increases by $4,$ hence $\Delta$ increases.
If $\nu=(4)$ at the start, then  $\SoC(\nu)$ decreases by $6,$ but also $\SoC(\la/\mu)$ increases by at least $6,$ so that the value of $\Delta$ does not decrease. However, now $\la = \mu \uplus \nu$.

(d) Assume $\mu_e=1.$ 

(d-1) If $\mu_{e-1}\geq 4,$ then move  three $\la/\mu$ boxes to row $d-1$ and leave one in row $d.$ In this way $\nu = (4)$ at the end.
If only one $\la/\mu$ box was in row $d$ at the start, then this construction is a series of Robin Hood moves. Thus $\Delta$ increases.
if there were two $\la/\mu$ boxes in row $d$ at the start (note that there could not be three or more such boxes in row $d$ at the start), then after the described change, $\SoC(\nu)$ increases by $4$ and $\SoC(\la/\mu)$ does not decrease. Hence, $\Delta$ increases and now $\la = \mu \uplus \nu$.

(d-2) Assume $\mu_{e-1}=3,$ then again move $\mu_e$ to the first row.

If $\la_d=2$ (note that $\la_d\leq2$), then move the $\la/\mu$ boxes so that $\la_{d-1}=\la_d=2.$ In this way $\nu$ changes from $(3,1)$ to $(2,2),$ hence $\SoC(\nu)$ decreases by $2,$  and $\SoC(\la/\mu)$ increases by at least $2.$ Thus $\Delta$ does not decrease but we now have $\la = \mu \uplus \nu$. 

If $\la_d=1,$ then move the three $\la/\mu$ boxes not in row $d$ to row $d-1.$ Now $\SoC(\nu)$ decreases by at most $4,$ but  $\SoC(\la/\mu)$ increases by at least $4,$ hence $\Delta$ does not decrease and we again have  $\la = \mu \uplus \nu$.

(d-3) If $\mu_{e-1} = 2,$ then note that $d \geq 4.$ Now move $\mu_e$ to the first row and place the $\la/\mu$ boxes in rows $d-1$ and $d$ so that $\la_{d-1}=\la_d=2.$  

Now if at the start we had $\la_{d-1}=\la_d=2$ (and $\mu_e=\mu_{d-1}=1$), then $\nu=(3,1)$ in the beginning. Hence, with the described construction,  $\SoC(\nu)$ decreases by $2$ and  $\SoC(\la/\mu)$ increases by at least $3.$ Thus $\Delta$ increases and we  have  $\la = \mu \uplus \nu$. 

If $\la_d=1$ at the start, then $\nu$ could have been $(4)$ at the start and 
with the construction, $\SoC(\nu)$ decreases by at most $6.$ But   $\SoC(\la/\mu)$ in this case increases by more than $6$ (at least two $\la/\mu$ boxes are in row $d-2$ or higher). Thus $\Delta$ increases and we obtain $\la = \mu \uplus \nu$.

(d-4)  If $\mu_{e-1} = 1,$ recall again that $j_0$ is the smallest $j$ such that $\mu_j=1.$ Now move $\mu_e$ to the first row and move $\mu_{e-1}$ to row $j_0.$ Move two $\la/\mu$ boxes to rows $d-2$ and $d-1,$ respectively and move one $\la/\mu$ box to row $j_0+1 \leq d-2.$

(d-4.1) If $\nu=(4)$ at the start, then the possible positions with smallest contents of the  $\la/\mu$ boxes (that are not the one at position $(1,d)$) are:
\begin{enumerate}
    \item $(2,d-2),(3,d-3),(4,d-3)$ if $\mu_{e-2} \leq 2.$ Then $\SoC(\nu)$ decreases by $8$ and it is easy to count that $\SoC(\la/\mu)$ also increases by at least $8.$ 
    \item $(2,d-2),(3,d-2),(4,d-2)$ if $\mu_{e-2} \geq 3.$  In this case the described construction does not work. Instead, still move $\mu_e$ to the first row, then move the $\la/\mu$ boxes to rows $d-1$ and $d$ so that $\la_{d-1}=\la_d=2$ and finally, move one $\mu$ box from row $d-3$ to row $d-2.$ Then $\SoC(\nu)$ decreases by $6$ (as $\nu$ changes from $(4)$ to $(3,1)$), but  $\SoC(\la/\mu)$ increases by at least $8.$
\end{enumerate}

In all the cases $\Delta$ either increases or does not decrease after the described procedure and in the end we obtain $\la=\mu \uplus \nu.$

(d-4.2) If $\nu=(3,1)$ at the start, then the possible positions with smallest contents of the  $\la/\mu$ boxes (that are not the one at position $(1,d)$) are
\begin{enumerate}
    \item $(2,d-1),(2,d-2),(3-d-2)$ if $\mu_{e-2}\geq 3.$ Here again the described construction does not work. Instead, still move $\mu_e$ to the first row, then move the $\la/\mu$ boxes to rows $d-1$ and $d$ so that $\la_{d-1}=\la_d=2$ and finally, move one $\mu$ box from row $d-3$ to row $d-2.$ Then $\SoC(\nu)$ decreases by $2$ (as $\nu$ changes from $(3,1)$ to $(2,2)$), but  $\SoC(\la/\mu)$ increases by at least $5.$
    \item $(2,d-1),(2,d-2),(3,d-3)$ if $\mu_{e-2}\leq 2.$ In this case we use the construction, where $\nu$ changes from $(3,1)$ to $(2,1,1).$ The value of $\SoC(\nu)$ thus decreases by $4,$ but it is easy to check that  $\SoC(\la/\mu)$ increases by at least $4.$ Hence $\Delta$ does not decrease.
\end{enumerate}
Again, in all the above  cases $\Delta$ either increases or does not decrease after the described procedure and in the end we obtain $\la=\mu \uplus \nu.$
Note that we cannot have $\nu=(2,2)$ at the start.

(d-4.3) If $\nu=(2,1,1)$ at the start, then the possible positions with smallest contents of the  $\la/\mu$ boxes (that are not the one at position $(1,d)$) are $(2,d),(2,d-1),(2,d-2).$ Then $\nu$ does not change during the described construction. Note that $\SoC(\la/\mu)$ also does not change, but we get $\la=\mu \uplus \nu.$

\subsubsection{Case $d=e$}

(a) If $\mu_{d-1}\geq 4,$ then move $\mu_1$ to the first row and move all the $\la/\mu$ boxes to row $d.$ This construction (is a series of Robin Hood moves and hence) clearly increases $\Delta$ and produces $\la = \mu \uplus \nu.$

(b) If $\mu_{d-1} = 3,$ let again $j_1$ be the smallest $j$ such that $\mu_j=3.$ Move $\mu_d$ to the first row, move three $\la/\mu$ boxes to row $d$ and move the last $\la/\mu$ box to the end of row $j_1.$ This is a series of Robin Hood moves (note that $\nu$ changes into $(4),$ which is of the lowest possible height). Thus $\Delta$ increases. By exchanging $\la_d$ with $\mu_{j_0},$ we identify this triple of partitions with a triple of partitions $(\la,\mu,\nu)$ such that $\la = \mu \uplus \nu.$

(c) Assume $\mu_{d-1} =  2.$ Note that in this case $d\geq 3.$ Move $\mu_d$ to the first row and  unless $\mu_{d-2}\geq 4,$ move the $\la/\mu$ boxes to rows $d-1$ and $d,$ respectively, so that $\la_{d-1}=\la_d=2.$ We have the following cases:

\begin{enumerate}
    \item $\nu=(4)$ at the start. Since $\mu_{d-2}\leq 3,$ then the possible positions with smallest contents of the  $\la/\mu$ boxes are $(1,d),(2,d),(3,d-1),(4,d-2).$ Now $\SoC(\nu)$ decreases by $6$ (as $\nu$ changes from $(4)$ to $(2,2)$) and  $\SoC(\la/\mu)$ increases by at least $4$ as well. So $\Delta$ does not decrease.
    \item $\nu=(3,1)$ at  start. Then the possible positions with smallest contents of the  $\la/\mu$ boxes are $(1,d),(2,d),(3,d),(3,d-1).$ Now $\SoC(\nu)$ decreases by $2$ (as $\nu$ changes from $(3,1)$ to $(2,2)$) and  $\SoC(\la/\mu)$ increases by at least $2$ as well. So $\Delta$ does not decrease.
    \item $\nu=(2,2)$ or $\nu=(2,1,1)$ at the start. In this case we have at most three $\la/\mu$ boxes in row $d,$ hence all moves in the described construction are Robin Hood moves (the height of $\nu$ clearly does not increase). Thus $\Delta$ increases.
\end{enumerate}
In all three cases we obtain $\la = \mu \uplus \nu.$
\vspace{3mm}

If $\mu_{d-2}\geq 4,$ then still move  $\mu_d$ to the first row and move one $\mu$ box from row $d-2$ to row $d-1.$ Now move the $\la/\mu$ boxes as in (b). The argument in (b) shows that this construction increases $\Delta,$ while it  produces a triple $(\la,\mu,\nu)$ with $\la = \mu \uplus \nu.$

(d) Assume $\mu_{d-1} = 1.$ Then clearly $d \geq 3.$ Move $\mu_d$ to the first row.

(d-1) If $\mu_{d-2}\geq 2,$ then move $\mu_{d-1}$ to the first row as well and move the $\la/\mu$ boxes to rows $d-1$ and $d$ so that $\la_{d-1}=\la_d = 2.$ We separate the following cases:
\begin{enumerate}
    \item $\nu=(4)$ at the start. Then the possible positions with smallest contents of the  $\la/\mu$ boxes are $(2,d-1),(3,d-1),(4,d-1),(5,d-1).$ Now $\SoC(\nu)$ decreases by $6$ (as $\nu$ changes from $(4)$ to $(2,2)$) and it is easy to see that  $\SoC(\la/\mu)$ increases by at least $11.$ So $\Delta$ increases.
    \item $\nu=(3,1)$ at the start. Then the possible positions with smallest contents of the  $\la/\mu$ boxes are $(2,d),(2,d-1),(3,d-1),(4,d-1).$ Now $\SoC(\nu)$ decreases by $2$ (as $\nu$ changes from $(3,1)$ to $(2,2)$) and  $\SoC(\la/\mu)$ increases by at least $6.$ So $\Delta$  increases.
    \item $\nu=(2,2)$ or $\nu=(2,1,1)$ or $\nu=(1,1,1,1)$ at the start.  In this case we have at most two $\la/\mu$ boxes in row $d,$ hence all moves in the described construction are Robin Hood moves (the height of $\nu$ clearly does not increase). Thus $\Delta$ increases.
\end{enumerate}
In all three cases we again obtain $\la = \mu \uplus \nu.$
\vspace{3mm}

(d-2) If $\mu_{d-2} = 1,$ note that $d\geq4.$ Now move $\mu_d$ to the first row.

If $\mu_{d-3} \geq 2,$ do as in step (d-4) when $d-e=1.$

If $\mu_{d-3} = 1,$ move $\mu_{d-1}$ and $\mu_{d-2}$ to the first row as well. Move three $\la/\mu$ boxes to rows $d-2,d-1,d,$ respectively, so that $\la_d=\la_{d-1}=\la_{d-2}=1.$ Move the last $\la/\mu$ box to row $j_0.$ (Recall that $j_0$ is the smallest $j$ such that $\mu_j=1.$) We separate the following cases:
\begin{enumerate}
    \item $\nu=(4)$ at the start. Then the possible positions with smallest contents of the  $\la/\mu$ boxes are $(2,d-3),(3,d-3),(4,d-3),(5,d-3)$ (in this case $j_0=d-3$). Now $\SoC(\nu)$ decreases by $8$ (as $\nu$ changes from $(4)$ to $(2,1,1)$) and it is easy to count that $\SoC(\la/\mu)$ increases by more than $8.$ So $\Delta$ increases.
    \item $\nu=(3,1)$ at the start. Then the possible positions with smallest contents of the  $\la/\mu$ boxes are $(2,d-3),(2,d-2),(3,d-3),(4,d-3)$ (again with $j_0=d-3$). Now $\SoC(\nu)$ decreases by $4$ (as $\nu$ changes from $(3,1)$ to $(2,1,1)$) and it is easy to count that $\SoC(\la/\mu)$ increases by at least $11.$ So $\Delta$ increases.
    \item $\nu=(2,2)$ at the start. Then the possible positions with smallest contents of the  $\la/\mu$ boxes are $(2,d-3),(2,d-2),(3,d-3),(3,d-2)$ (again with $j_0=d-3$). Now $\SoC(\nu)$ decreases by $2$ (as $\nu$ changes from $(2,2)$ to $(2,1,1)$) and $\SoC(\la/\mu)$ increases by at least $9.$ So $\Delta$ increases.
    \item $\nu=(2,1,1)$ or $\nu=(1,1,1,1)$ at the start. In this case we have at most one $\la/\mu$ box in rows $d$ and $d-1,$ hence all moves in the described construction are Robin Hood moves (the height of $\nu$ clearly does not increase). Thus $\Delta$ increases.
\end{enumerate}
All three cases produce a triple of partitions $(\la,\mu,\nu)$ with $\la = \mu \uplus \nu.$

\subsection{Small $d$}

Lastly, we prove Proposition \ref{prop:02A} for $d=2,3$. 

\subsubsection{$d=2$}
If $d=2,$ then given a triple $(\la,\mu,\nu),$ both $\mu$ and $\nu$ have at most $2$ rows. Now clearly, moving all $\la/\mu$ boxes to the second row and moving all $\mu$ boxes to the first row decreases $\SoC(\la/\mu)$. Since the height of $\nu$ decreases as well, the triple $((n-k,k),(n-k),(k))$ maximizes $\Delta.$

\subsubsection{$d=3$}

(a) Assume first that $d-e=2.$ In this case move the $\la/\mu$ boxes from the first row to the second row (note that since $2k\leq n,$ this gives a valid partition). Doing so $\nu$ does not change (since it must have been $\mu_1 = n-k >\la_2$ at the start)  and  $\SoC(\la/\mu)$ decreases, hence $\Delta$ increases and we are done.

(b) Assume $d-e=1.$ Let $a_1=\la_1-\mu_1, a_2 = \la_2-\mu_2,a_3 = \la_3.$ We will call two consecutive rows of $\la/\mu$ disjoint if they do not share a column. Otherwise, we will say that such two columns meet. 

If $a_1,a_2,a_3$ are all disjoint, then joint their boxes into one row and insert the row above or in between or below $\mu_1,\mu_2$ to make a valid partition. Doing so $\nu$ and $\mu$ do not change (note that $\nu=(k)$) and $\SoC(\la)$ decreases. Indeed, another way of seeing this construction is the following: move the $\la/\mu$ boxes down and left so that first all spots below $\mu_2$ are filled, then (if possible) all spots to the right of $\mu_2$ and below $\mu_1$ and finally, leave the (potentially) remaining boxes in the first row. Doing so, $\SoC(\la/\mu)$ decreases and $\nu$ does not change, so $\Delta$ increases. If $k<\mu_2,$ all boxes are in row three and we are done. If $\mu_2 <k<\mu_1,$ then exchange $\la_3$ with $\mu_2$ to identify the correct triple $(\la,\mu,\nu)$ with a triple such that $\mu$ and $\la/\mu$ do not share rows.
Similarly, if $k>\mu_1,$ first exchange $\la_3$ with $\mu_2$ and then $\la_2$ with $\mu_1.$

If any of  $a_1,a_2,a_3$ meet (not that they could all share a column), then move the boxes of $a_3$ that share rows with $a_2$ to the first row and do the same with the boxes of $a_2$ that share rows with $a_1.$ Doing so, $\SoC(\la/\mu)$ increases, but $\SoC(\nu)$ increases even more. In fact, the height of $\nu$ decreases by at least one, and it is easy to see that the path that each box travels in $\la/\mu$ is shorter (in the horizontal direction) than the path it travels in $\nu.$
So this construction increases $\Delta$ and brings us to the case where $a_1,a_2,a_3$ are all disjoint, which we have dealt with already.

(c) Lastly, assume $d=e.$ If $a_1,a_2,a_3$ are disjoint, then join their boxes into one row and insert the row above or in between or below $\mu_1+\mu_3$ and $\mu_2$ to make a valid partition. The same argument from above shows that this construction increases $\Delta.$ 

If any of  $a_1,a_2,a_3$ meet, then move the boxes of $a_3$ that share rows with $a_2$ to the first row and do the same with the boxes of $a_2$ that share rows with $a_1.$ Now repeat the procedure from above, where $a_1,a_2,a_3$ were disjoint.

 \section{Swap matrices on $(\CC^3)^{\otimes n}$ and $(\CC^4)^{\otimes n}$}\label{a:3and4}

Here we give some results specific to the cases $d=3$ and $d=4$.
 
\subsection{Linear space spanned by the products of at most two swap matrices}\label{a:d3l2}

We prove that in $M^{\text{Sw}_3}_n(\mathbb{C}),$ there are no relations of order two other than \eqref{eq:from sn}.

\begin{proposition}\label{propo: 2-indep}
	The set $\mathcal{B}_2$ consisting of
	\begin{align*}
		I& \\
		\textnormal{Swap}_{ij}& \quad i < j\\
		\textnormal{Swap}_{ij} \textnormal{Swap}_{jk}& \quad i < j < k\\
		\textnormal{Swap}_{ij} \textnormal{Swap}_{ik}& \quad i < j < k\\
		\textnormal{Swap}_{ij} \textnormal{Swap}_{kl}& \quad i < j, \  i < k < l
	\end{align*}
is a basis of the subspace of $M^{\text{Sw}_3}_n(\mathbb{C})$ of polynomials in the Swap$_{ij}$ of degree at most two.
\end{proposition}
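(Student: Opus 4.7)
The plan is to prove the two standard pieces of the basis claim separately: that $\mathcal{B}_2$ spans every degree-$\le 2$ polynomial in the $\text{Swap}_{ij}$, and that the elements of $\mathcal{B}_2$ are linearly independent in $M^{\text{Sw}_3}_n(\mathbb{C})$. The key observation is that linear independence is essentially free: every element of $\mathcal{B}_2$ is a product of at most two transpositions, so Proposition \ref{prop:d-1-indep} applied with $d=3$ will handle it, provided we can verify that the elements of $\mathcal{B}_2$ represent pairwise distinct permutations.

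First I would check that distinctness. The identity, the transpositions $\text{Swap}_{ij}$ with $i<j$, and the double transpositions $\text{Swap}_{ij}\text{Swap}_{kl}$ with $\{i,j\}\cap\{k,l\}=\emptyset$ visibly live in different conjugacy classes or are parameterized bijectively by their supports. The only subtle point is that both $\text{Swap}_{ij}\text{Swap}_{jk}$ and $\text{Swap}_{ij}\text{Swap}_{ik}$ (with $i<j<k$) give $3$-cycles on $\{i,j,k\}$; a direct calculation shows they correspond to $(i\,j\,k)$ and $(i\,k\,j)$ respectively, so they are distinct and exhaust the two $3$-cycles supported on $\{i,j,k\}$. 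Thus $\mathcal{B}_2$ lists each element of $S_n$ of transposition-length at most $2$ exactly once, and Proposition \ref{prop:d-1-indep} (with $d=3$) gives linear independence.

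Next I would verify spanning by a short case analysis on an arbitrary product $\text{Swap}_{ab}\text{Swap}_{cd}$, reducing it to a member of $\mathcal{B}_2$ using only \eqref{eq:snrel}. If $\{a,b\}=\{c,d\}$ the product collapses to $I$. If $\{a,b\}\cap\{c,d\}=\emptyset$, the two swaps commute, and one rewrites the product as $\text{Swap}_{ij}\text{Swap}_{kl}$ with $i$ the smallest of the four indices, $j$ its partner, and $k<l$ the remaining pair. If $\{a,b\}\cap\{c,d\}$ is a singleton, pick the indices $p<q<r$ of the three involved letters; the six possible products split into two cosets of three, and the identities $\text{swap}_{ij}\text{swap}_{jk}=\text{swap}_{ik}\text{swap}_{ij}=\text{swap}_{jk}\text{swap}_{ik}$ from \eqref{eq:from sn} rewrite each coset to either $\text{Swap}_{pq}\text{Swap}_{qr}$ or $\text{Swap}_{pq}\text{Swap}_{pr}$, both in $\mathcal{B}_2$. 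Degree-$1$ and degree-$0$ terms lie in $\mathcal{B}_2$ by inspection.

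There is no real obstacle here, since the antisymmetrizer relation \eqref{eq:qdit-rel} has degree $d=3$ and therefore does not interact with polynomials of degree at most $2$: at this degree the $3$-swap algebra behaves exactly like the group algebra $\CC S_n$ cut off at transposition-length $2$. The only thing to be careful about is the bookkeeping in the span step, which amounts to a finite case check on $S_n$-cosets of products of two transpositions.
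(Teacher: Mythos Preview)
Your proposal is correct, and in fact slightly overengineered: Proposition~\ref{prop:d-1-indep} with $d=3$ already asserts that the permutations of transposition-length at most $2$ form a \emph{basis} (not merely a linearly independent set) of the degree-$\le 2$ subspace of $M^{\text{Sw}_3}_n(\mathbb{C})$, so once you verify that $\mathcal{B}_2$ enumerates exactly those permutations without repetition, you are done---your separate spanning argument via the case split on $\{a,b\}\cap\{c,d\}$ is correct but redundant.

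This differs from the paper's own proof, which is self-contained and does not invoke Appendix~\ref{a:d-1}. The paper proves linear independence directly by evaluating an assumed dependence relation on the test vector $v=e_1\otimes e_2\otimes e_3\otimes e_1\otimes\cdots\otimes e_1\in(\mathbb{C}^3)^{\otimes n}$ and reading off, for each element of $\mathcal{B}_2$, a simple tensor in the image that no other element can produce (e.g.\ only $\text{Swap}_{12}\text{Swap}_{34}$ sends $v$ to $e_2\otimes e_1\otimes e_1\otimes e_3\otimes e_1\cdots$). Your route is more economical because it reuses the general machinery of Lemmas~\ref{lemma:uniq}--\ref{lemma:v}; the paper's route has the virtue of being a two-line computation that a reader can verify without chasing cross-references between appendices. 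Both arguments ultimately rest on the same idea---a well-chosen tensor separates the permutations---so there is no essential mathematical gap between them.
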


\begin{proof}
	To prove the linear independence of $\mathcal{B}_2$ suppose 
	\begin{equation}\label{eq: 2-lin indep}
	\begin{split}
		a I + \sum_{i<j}b_{ij}\,\textnormal{Swap}_{ij} + 
		&\sum_{i<j<k}c_{ijk}\,\textnormal{Swap}_{ij}\textnormal{Swap}_{jk} +
		\sum_{i<j<k}d_{ijk}\,\textnormal{Swap}_{ij}\textnormal{Swap}_{ik} +	\\	
		&\sum_{\substack{i<j \\[.5mm] i<k<l}}e_{ijkl}\,\textnormal{Swap}_{ij}\textnormal{Swap}_{kl} = 0.
	\end{split}
\end{equation}
for some scalars $a,b_{ij},c_{ijk},d_{ijk},e_{ijkl}.$ 

To prove that the $e_{ijkl}$ must all be zero, first consider the vector 
$$v = e_1 \otimes e_2 \otimes e_3 \otimes e_1 \otimes e_1 \cdots \otimes e_1 \in \big(\mathbb{C}^d\big)^{\otimes n}.$$
 Evaluate \eqref{eq: 2-lin indep} on $v$ to see that the term Swap$_{1,2}$\,Swap$_{3,4}$ is the only one that yields $e_2 \otimes e_1 \otimes e_1 \otimes e_3 \otimes e_1 \cdots \otimes e_1.$ Hence, $e_{1234}$ must be zero and by analogy, all of the $e_{ijkl}$ must be zero as well.

A similar argument allows us to get rid of the $c_{ijk}$ and the $d_{ijk}.$ Indeed, after evaluating \eqref{eq: 2-lin indep} on $v = e_1 \otimes e_2 \otimes e_3 \otimes e_1 \otimes e_1 \cdots \otimes e_1,$ the term Swap$_{1,2}$\,Swap$_{2,3}$ is the only one that gives
$e_3 \otimes e_1 \otimes e_2 \otimes e_1 \otimes e_1 \cdots \otimes e_1$ and Swap$_{1,2}$\,Swap$_{1,3}$ is the only one that gives
$e_2 \otimes e_3 \otimes e_1 \otimes e_1 \otimes e_1 \cdots \otimes e_1.$

Finally, we are left with a linear combination of single swap matrices and the identity, which are clearly linearly independent.
\end{proof}

\subsection{Gell-Mann matrices of size $3 \times 3$}\label{subsec: gm3}
Recall the definition of the Gell-Mann matrices from Subsection \ref{subsec:gm}. For $d=3,$ there are eight Gell-Mann matrices, namely
\begin{align*}
	&\lambda_1=
	\begin{pmatrix}
		0&1&0\\ 1&0&0 \\ 0&0&0
	\end{pmatrix} \ 
 \lambda_2 = 
 \begin{pmatrix}
 	0&-\mathfrak{i}&0\\ \mathfrak{i}&0&0 \\ 0&0&0
 \end{pmatrix}\ 
\lambda_3 = 
\begin{pmatrix}
	1&0&0\\ 0&-1&0 \\ 0&0&0
\end{pmatrix} \  
\lambda_4 = 
\begin{pmatrix}
	0&0&1\\ 0&0&0 \\ 1&0&0
\end{pmatrix} \\
&\lambda_5 = 
\begin{pmatrix}
	0&0&-\mathfrak{i}\\ 0&0&0 \\ \mathfrak{i}&0&0
\end{pmatrix}\ 
\lambda_6 = 
\begin{pmatrix}
	0&0&0\\ 0&0&1 \\ 0&1&0
\end{pmatrix}\ 
\lambda_7 = 
\begin{pmatrix}
	0&0&0\\ 0&0&-\mathfrak{i} \\ 0&\mathfrak{i}&0
\end{pmatrix}\ 
\lambda_8 = \frac{1}{\sqrt{3}}
\begin{pmatrix}
	1&0&0\\ 0&1&0 \\ 0&0&-2
\end{pmatrix}.
\end{align*}
They are self-adjoint, have trace zero and together with the identity $\lambda_0 := I,$ they form a basis for $M_3(\CC).$ They satisfy
\begin{align}\label{eq: lambda x}
	\lambda_a \lambda_b = \frac{2}{3}\, \delta_{a,b}\,I + \sum_{c=1}^8 (d^{a,b,c} + \mathfrak{i}f^{a,b,c}) \, \lambda_c, 
\end{align}
where $\delta_{a,b}$ is the Kronecker delta and the $f^{a,b,c}$ and $d^{a,b,c}$ are structure constants with
$$
f^{a,b,c} = -\frac{1}{4} \mathfrak{i}\, \text{\textnormal{tr}}(\lambda_a[\lambda_b,\lambda_c]) \quad \text{and} \quad
d^{a,b,c} = \frac{1}{4}\, \text{\textnormal{tr}}(\lambda_a\{\lambda_b,\lambda_c\}).
$$
Here $[A,B]=AB-BA$ and $\{A,B\} = AB+BA$ denote the commutator and the anticommutator respectively. Note that the $f^{a,b,c}$ are antisymmetric and the $d^{a,b,c}$ are symmetric under the interchange of any pair of indices. 
The nonzero $f^{a,b,c}$ are
\begin{align*}
	f^{1,2,3}=1, \quad f^{1,4,7} = f^{1,6,5} = f^{2,4,6} = f^{2,5,7} = f^{3,4,5} = f^{3,7,6} = \frac{1}{2}, \quad
	 f^{4,5,8} = f^{6,7,8} = \frac{\sqrt{3}}{2},
\end{align*}
while the nonzero $d^{a,b,c}$ are
\begin{align*}
	d^{1,4,6} = d^{1,5,7} = &d^{2,5,6} = d^{3,4,4} = d^{3,5,5}  = \frac{1}{2}, \quad
	d^{2,4,7} = d^{3,6,6} = d^{3,7,7} = -\frac{1}{2},\\
	d^{1,1,8}& = d^{2,2,8} = d^{3,3,8} = \frac{1}{\sqrt{3}}, \quad
	 d^{8,8,8} = -\frac{1}{\sqrt{3}}, \\
	 &d^{4,4,8} = d^{5,5,8} = d^{6,6,8} = d^{7,7,8} = -\frac{1}{2 \sqrt{3}}.
\end{align*}

Fix $n \in \NN$. As in Section \ref{sec:gm}, denote
$$
\lambda_a^{j} := \underbrace{I \otimes \cdots \otimes I}_{j-1} \otimes \lambda_a \otimes I \otimes \cdots \otimes I \in M_{3^n}(\CC)
$$
for $a \in \{0,\ldots,8\}$. 
Then, 
\begin{align}\label{eq: basis-gm}
\{\lambda_{a_1}^1 \lambda_{a_2}^2 \cdots \lambda_{a_n}^n \ | \ a_j \in \{0,\ldots,8\},\  j=1,\ldots,n\}
\end{align}
is a basis of $M_{3^n}(\CC)$, and $\lambda_{a_i}^i$ and $\lambda_{a_j}^j$ commute for $i \neq j.$
By Proposition \ref{prop:swaptogm}, each qutrit swap matrix can be written as a linear combination of the Gell-Mann matrices as follows:
\begin{align}\label{eq: sw-gm}
\text{\textnormal{Swap}}_{ij}= \frac{1}{3} I + \frac{1}{2} \sum_{a=1}^8 \lambda_a^i \lambda_a^j.
\end{align}

\subsection{Linear subspace of  $M_{3^n}(\CC)$ spanned by the products of at most three swap matrices}\label{a:d3l3}
Throughout, any two tuples $(i,j)$ and $(k,l)$ are compared w.r.t.~the lex ordering.

\begin{proposition}\label{prop: indep3}
	The set $\mathcal{B}_3$ consisting of $\mathcal{B}_2$ and the three types of cubics
	\begin{equation}\label{eq: type 6}
		\begin{split}
		&\textnormal{Swap}_{ij} \textnormal{Swap}_{kl} \textnormal{Swap}_{pq} \quad 
		i < j, \ k < l,\  p < q,  \, (i,j) < (k,l) < (p,q);\\
	\end{split}
\end{equation}
\begin{equation}\label{eq: type 5}
	\begin{split}
		&\textnormal{Swap}_{ij} \textnormal{Swap}_{jk} \textnormal{Swap}_{pq} \quad i < j < k,\  p<q,\ p,q \notin \{i,j,k\},\\
		&\textnormal{Swap}_{ij} \textnormal{Swap}_{ik} \textnormal{Swap}_{pq} \quad i < j < k,\ p<q, \ p,q \notin \{i,j,k\};\\
		\end{split}
	\end{equation}
\begin{equation}\label{eq: type 4}
	\begin{split}
	&\textnormal{Swap}_{ij} \textnormal{Swap}_{jk} \textnormal{Swap}_{kl}, \ 
	\textnormal{Swap}_{ij} \textnormal{Swap}_{jl} \textnormal{Swap}_{kl},\ 
	\textnormal{Swap}_{ik} \textnormal{Swap}_{jk} \textnormal{Swap}_{jl},\\
	&\textnormal{Swap}_{ik} \textnormal{Swap}_{kl} \textnormal{Swap}_{jl},\ 
	\textnormal{Swap}_{il} \textnormal{Swap}_{jl} \textnormal{Swap}_{jk}
	\quad i < j < k < l;\\		
\end{split}
\end{equation}
	is a basis of the subspace of $M^{\text{Sw}_3}_n(\mathbb{C})$ of polynomials in the Swap$_{ij}$ of degree at most three. 
\end{proposition}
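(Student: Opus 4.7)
The proof has two parts---spanning and linear independence---in the style of Propositions~\ref{prop:d-1-indep} and~\ref{propo: 2-indep}. For spanning, case-analyze the permutation $\sigma\in S_n$ associated to a product of three swap matrices. If $\sigma$ has depth less than three (i.e., can be written with fewer than three transpositions), the symmetric group relations \eqref{eq:from sn} reduce the product to an element of $\mathcal{B}_2\subset\mathcal{B}_3$ via Proposition~\ref{propo: 2-indep}. Otherwise $\sigma$ has cycle type $(2,2,2)$, $(3,2)$, or $(4)$. In the first case, commutativity from \eqref{eq:from sn} orders the three disjoint transpositions lexicographically, yielding \eqref{eq: type 6}; in the second, the braid relations from \eqref{eq:from sn} normalize the 3-cycle factor and commutativity places the disjoint transposition last, yielding \eqref{eq: type 5}. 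If $\sigma$ is a 4-cycle on $\{i,j,k,l\}$ with $i<j<k<l$, the six 4-cycles on $\{i,j,k,l\}$ admit six canonical products of three swap matrices after applying \eqref{eq:from sn}, five of which appear in \eqref{eq: type 4}; the sixth is rewritten via the degree-reducing relation \eqref{eq:qdit-rel} on $\{i,j,k,l\}$, which expresses the sum of the six canonical 4-cycle products as a linear combination of depth-$\leq$-2 elements on $\{i,j,k,l\}$.

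For linear independence, suppose $\sum_{s\in\mathcal{B}_3}\alpha_s s=0$ in $M^{\text{Sw}_3}_n(\mathbb{C})$. Adapting Lemmas~\ref{lemma:uniq} and~\ref{lemma:v}, for each cubic $s\in\mathcal{B}_3\setminus\mathcal{B}_2$ I select a test vector $v_s\in(\mathbb{C}^3)^{\otimes n}$ by placing the basis vectors $e_1,e_2,e_3$ at the support of $\sigma_s$ in a pattern tailored to the cycle type of $\sigma_s$---for instance, $e_1\otimes e_2\otimes e_3\otimes e_1\otimes e_2\otimes e_3$ on the six sites of a \eqref{eq: type 6}-type, $e_1\otimes e_2\otimes e_3\otimes e_1\otimes e_2$ on the five sites of a \eqref{eq: type 5}-type, and $e_1\otimes e_2\otimes e_3\otimes e_1$ on the four sites of a \eqref{eq: type 4}-type---with $e_1$ at the remaining positions. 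Evaluating the identity $\sum_s\alpha_s s=0$ on each $v_s$ and matching coefficients of basis vectors of $(\mathbb{C}^3)^{\otimes n}$, together with stabilizer analysis analogous to Lemma~\ref{lemma:uniq}, produces linear equations on the $\alpha_s$ that force the cubic coefficients to vanish; Proposition~\ref{propo: 2-indep} then handles the remaining degree-$\leq$-2 terms.

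The main obstacle is the \eqref{eq: type 4} case. Since $(\mathbb{C}^3)^{\otimes 4}$ admits only three distinct basis vectors across the four sites of a 4-subset $\{i,j,k,l\}$, any test vector on $\{i,j,k,l\}$ has a repeated basis vector, so its $S_4$-stabilizer is nontrivial and each 4-cycle's output on $v_s$ coincides with that of some depth-$\leq$-2 permutation on $\{i,j,k,l\}$ (e.g., on the pattern $(e_1,e_2,e_3,e_1)$, the 4-cycle $(i\,j\,k\,l)$ acts identically to the 3-cycle $(j\,k\,l)$). A single test vector cannot isolate an individual 4-cycle coefficient; several test vectors---varying which basis vector is repeated and where---must be combined to produce a linear system of full column rank for the five 4-cycle coefficients per 4-subset. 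Alternatively, one may close the argument by a dimension count: the $\binom{n}{4}$ antisymmetrizers on 4-subsets are linearly independent elements of the depth-$\leq$-3 subspace of $\mathbb{C}[S_n]$ and lie in the kernel of $\mathbb{C}[S_n]\to M^{\text{Sw}_3}_n(\mathbb{C})$, matching the excess $N-|\mathcal{B}_3|=\binom{n}{4}$, where $N$ is the number of depth-$\leq$-3 permutations in $S_n$.
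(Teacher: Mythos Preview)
Your spanning argument is correct and matches the paper. For linear independence you take a different route---test vectors in $(\mathbb{C}^3)^{\otimes n}$, extending Proposition~\ref{propo: 2-indep}---whereas the paper expands each swap product in the Gell-Mann tensor basis~\eqref{eq: basis-gm} via~\eqref{eq: sw-gm} and reads off coefficients at specific basis monomials. For types~\eqref{eq: type 6} and~\eqref{eq: type 5} the paper isolates each coefficient through its degree-six (resp.\ degree-five) Gell-Mann terms; for type~\eqref{eq: type 4} it sets up an explicit linear system in the five coefficients $\alpha_1,\dots,\alpha_5$ per 4-subset by comparing coefficients at seven monomials such as $\lambda_3^i\lambda_1^j\lambda_6^k\lambda_4^l$, and checks directly that its only solution is zero. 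Your test-vector idea for types~\eqref{eq: type 6} and~\eqref{eq: type 5} is plausible, but note that Lemmas~\ref{lemma:uniq} and~\ref{lemma:v} are proved for depth $d-1$, not depth $d$, so the ``adaptation'' is not automatic and would need to be written out.

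The substantive gap is type~\eqref{eq: type 4}. Your ``multiple test vectors'' route is only a plan: you neither exhibit the vectors nor verify that the resulting system has full column rank, and this is precisely the hard step the paper carries out (in the Gell-Mann basis). The dimension-count alternative is logically incomplete. Exhibiting $\binom{n}{4}$ independent antisymmetrizers in the kernel shows that the depth-$\le 3$ part of the kernel has dimension at least $\binom{n}{4}$, hence $\dim(\text{image})\le N-\binom{n}{4}=|\mathcal{B}_3|$; but this is the same inequality already given by spanning, so nothing is gained. To conclude that $\mathcal{B}_3$ is a basis you need the reverse bound---that the depth-$\le 3$ part of the kernel has dimension at most $\binom{n}{4}$, i.e., that every depth-$\le 3$ relation is a $\mathbb{C}$-linear combination of the 4-letter antisymmetrizers. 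Theorem~\ref{th:iso} only tells you the antisymmetrizer generates the kernel as a two-sided ideal; it does not say the antisymmetrizers span the depth-$\le 3$ slice as a vector space, and establishing that is equivalent to what you are trying to prove.
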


\begin{remark}
	As it can be seen from the proof, any of the cubics in \eqref{eq: type 4} can be replaced by Swap$_{il}$ Swap$_{kl}$ Swap$_{jk}.$
\end{remark}

\begin{proof}
  For the spanning property of $\mathcal{B}_3,$ first note that by Proposition \ref{propo: 2-indep}, every product of three swap matrices involving at least five indices is in the linear span of $\mathcal{B}_3$ and by 
  \eqref{eq:swap-rel}, every product of three swap matrices involving four indices is in the linear span of $\mathcal{B}_3$ as well. This is because a product of three swap matrices involving five (resp.~six) indices corresponds to a product of a $3$-cycle and a disjoint transposition (resp.~a product of three disjoint transpositions; these are in the span of $\mathcal{B}_2$). Similarly,  a product of three swap matrices involving four indices corresponds to either a $4$-cycle or a product of two disjoint transpositions (the latter being in the span of $\mathcal{B}_2$). Moreover, any product of three swap matrices involving three indices or less clearly corresponds to an element in $\mathcal{B}_2$ (either to a $3$-cycle, a transposition or to the identity). This proves the spanning property of $\mathcal{B}_3.$

The proof of the linear independence of $\mathcal{B}_3$ relies heavily on the properties of the Gell-Mann matrices presented in Subsection \ref{subsec: gm3}. Suppose there is a linear dependence among the elements of $\mathcal{B}_3.$ 
Then, using \eqref{eq: sw-gm}, express each of the appearing terms w.r.t.~the basis \eqref{eq: basis-gm} consisting of different combinations of tensor products of the eight Gell-Mann matrices.

First, consider the elements in \eqref{eq: type 6} and observe that for any choice of $i < j, k < l, p < q$ with $(i,j) < (k,l) < (p,q),$ the highest order terms in the expansion of Swap$_{ij}$Swap$_{kl}$Swap$_{pq}$ are of the form
$$
\lambda_a^i\, \lambda_a^j\, \lambda_b^k\, \lambda_b^l\, \lambda_c^p\, \lambda_c^q, \quad \quad a,b,c \in \{1,\ldots,8\}.
$$
Likewise, considering the elements in \eqref{eq: type 5}, for any choice of $i < j < k,  p<q$ with $p,q \notin \{i,j,k\},$ the highest order terms in the expansion of Swap$_{ij}$Swap$_{jk}$Swap$_{pq}$ are of the form
$$
\lambda_a^i\, \lambda_a^j\, \lambda_b^j\, \lambda_b^k\, \lambda_c^p\, \lambda_c^q, 
\quad \quad a,b,c \in \{1,\ldots,8\},
$$
while for Swap$_{ij}$Swap$_{ik}$Swap$_{pq}$ they are of the form
$$
\lambda_a^i\, \lambda_a^j\, \lambda_b^i\, \lambda_b^k\, \lambda_c^p\, \lambda_c^q = \lambda_a^j\,\lambda_a^i\,  \lambda_b^i\, \lambda_b^k\, \lambda_c^p\, \lambda_c^q ,
\quad \quad a,b,c \in \{1,\ldots,8\}.
$$
As for the elements in \eqref{eq: type 4}, for any choice of $i < j < k < l,$ the highest order terms e.g.~in the expansion of Swap$_{ij}$Swap$_{jk}$Swap$_{kl}$ are of the form
$$
\lambda_a^i\, \lambda_a^j\, \lambda_b^j\, \lambda_b^k\, \lambda_c^k\, \lambda_c^l, 
\quad \quad a,b,c \in \{1,\ldots,8\}
$$
and similarly for the other four cases in \eqref{eq: type 4}.

We now gradually eliminate the terms in the linear dependence equation:
(a) By the product formula \eqref{eq: lambda x}, the elements in \eqref{eq: type 6} are the only ones that have terms of order six and more precisely, for any choice of $i < j, k < l, p < q$ with $(i,j) < (k,l) < (p,q),$ the element Swap$_{ij}$Swap$_{kl}$Swap$_{pq}$ has the term $\lambda_1^i\, \lambda_1^j\, \lambda_2^k\, \lambda_2^l\, \lambda_3^p\, \lambda_3^q,$
which does not appear in the expansion of any other element of $\mathcal{B}_3.$ Hence, the coefficients next to each of the elements in \eqref{eq: type 6} have to be zero. 

(b) Now the elements in \eqref{eq: type 5} are the only ones that have terms of order five.
By \eqref{eq: lambda x},
\begin{align*}
\lambda_1& \lambda_3 =  \mathfrak{i}f^{1,3,2}\,\lambda_2 = -\mathfrak{i}f^{1,2,3}\,\lambda_2 = -\mathfrak{i}\,\lambda_2, \\
\lambda&_2\lambda_3 = \mathfrak{i}f^{2,3,1}\lambda_1 = \mathfrak{i}f^{1,2,3}\,\lambda_1 = \mathfrak{i}\,\lambda_1,\\
&\lambda_1 \lambda_6  = d^{1,6,4}\lambda_4 = d^{1,4,6}\lambda_4 = \frac{1}{2} \lambda_4.
\end{align*}
 Hence, for any choice of $i < j < k,  p<q$ with $p,q \notin \{i,j,k\},$ the element Swap$_{ij}$Swap$_{jk}$Swap$_{pq}$ has in its expansion
 \begin{align*}
\lambda_2^i\, \lambda_2^j\, \lambda_3^j\, \lambda_3^k\, \lambda_5^p\, \lambda_5^q + \lambda_1^i\, \lambda_1^j\, \lambda_6^j\, \lambda_6^k\, \lambda_5^p\, \lambda_5^q =
\mathfrak{i}\,\lambda_2^i\, \lambda_1^j\, \lambda_3^k\, \lambda_5^p\, \lambda_5^q + \frac{1}{2} \lambda_1^i\, \lambda_4^j\, \lambda_6^k\, \lambda_5^p\, \lambda_5^q.
 \end{align*}
 But Swap$_{ij}$Swap$_{ik}$Swap$_{pq}$ has in its expansion
 $$
 \lambda_1^i\, \lambda_1^j\, \lambda_3^i\, \lambda_3^k\, \lambda_5^p\, \lambda_5^q + \lambda_1^i\, \lambda_1^j\, \lambda_6^i\, \lambda_6^k\, \lambda_5^p\, \lambda_5^q =
 -\mathfrak{i}\,\lambda_2^i\, \lambda_1^j\, \lambda_3^k\, \lambda_5^p\, \lambda_5^q + \frac{1}{2} \lambda_1^i\, \lambda_4^j\, \lambda_6^k\, \lambda_5^p\, \lambda_5^q.
 $$
 Since the quotient of any two coefficients next to the same basis element in the expansion of Swap$_{ij}$Swap$_{jk}$Swap$_{pq}$ and Swap$_{ij}$Swap$_{ik}$Swap$_{pq}$ must be the same,
 the above implies that the coefficient next to each of the elements in \eqref{eq: type 5}
has to be zero.

(c) So the elements in \eqref{eq: type 4} are now the only ones in the linear dependence equation that have terms of order four and for any choice of $i < j < k<l,$ only the five  products listed in \eqref{eq: type 4} have basis elements of the form $\lambda_a^i \lambda_b^j \lambda_c^k \lambda_d^l.$ Denote the coefficients in the linear dependence equation before the products in \eqref{eq: type 4} by $\alpha_1,\alpha_2,\ldots,\alpha_5$ respectively.

We now consider the equations that we get by reading off the coefficients next to the basis elements $\lambda_a^i \lambda_b^j \lambda_c^k \lambda_d^l$ for several choices of $a,b,c,d$ with $a,b,c,d$ being all different numbers. First one can compute the following part of the expansion of $\textnormal{Swap}_{ij} \textnormal{Swap}_{jk} \textnormal{Swap}_{kl},$
\begin{align*}
	-\frac12\, \lambda_3^i \lambda_1^j \lambda_4^k \lambda_6^l + \frac12\, \lambda_3^i \lambda_1^j \lambda_6^k \lambda_4^l + \frac12\,\lambda_3^i \lambda_4^j \lambda_6^k \lambda_1^l -\frac12\,\lambda_3^i \lambda_6^j \lambda_4^k \lambda_1^l.
\end{align*}
Note that by permuting $i,j,k,l,$ we can obtain four terms in the expansion of the other four elements in \eqref{eq: type 4}.
E.g., by interchanging $k$ and $l,$ we see that $\textnormal{Swap}_{ij} \textnormal{Swap}_{jl} \textnormal{Swap}_{kl}$ has in its expansion the four terms
\begin{align*}
	-\frac12\, \lambda_3^i \lambda_1^j \lambda_6^k \lambda_4^l + \frac12\, \lambda_3^i \lambda_1^j \lambda_4^k \lambda_6^l + \frac12\,\lambda_3^i \lambda_4^j \lambda_1^k \lambda_6^l -\frac12\,\lambda_3^i \lambda_6^j \lambda_1^k \lambda_4^l.
\end{align*}
Using this, one can easily obtain the following equations by comparing the coefficients next to several terms of the form $\lambda_a^i \lambda_b^j \lambda_c^k \lambda_d^l:$
\begin{align*}
	&\lambda_3^i \lambda_1^j \lambda_6^k \lambda_4^l : \quad -\alpha_1 + \alpha_2 + \alpha_4 = 0 \\
	&\lambda_3^i \lambda_6^j \lambda_4^k \lambda_1^l : \quad -\alpha_1 + \alpha_3 + \alpha_5 = 0 \\
	&\lambda_3^i \lambda_4^j \lambda_1^k \lambda_6^l : \quad \alpha_2-\alpha_3 + \alpha_4 + \alpha_5 = 0 \\
	&\lambda_4^i \lambda_1^j \lambda_3^k \lambda_6^l : \quad -\alpha_1 + \alpha_3 + \alpha_4 = 0 \\
	&\lambda_4^i \lambda_6^j \lambda_1^k \lambda_3^l : \quad \alpha_1 - \alpha_2 + \alpha_5 = 0 \\
	&\lambda_4^i \lambda_1^j \lambda_6^k \lambda_3^l : \quad -\alpha_2 + \alpha_3 + \alpha_4 -\alpha_5 = 0 \\
	&\lambda_5^i \lambda_2^j \lambda_3^k \lambda_7^l : \quad -\alpha_1 + \alpha_3 - \alpha_4 = 0.
\end{align*}
The above system of equations has a unique solution $\alpha_1=\alpha_2=\cdots=\alpha_5=0.$ This proves that for any choice of $i<j<k<l,$ the coefficients in the linear dependence equation before the elements in \eqref{eq: type 4} are zero.

(d) We are left with a linear dependence involving terms of degree at most two, which contradicts linear independence of $\mathcal{B}_2$ as shown in Proposition \ref{propo: 2-indep}.
\end{proof}

\subsection{Linear subspace of  $M_{3^n}(\CC)$ spanned by the products of at most four swap matrices}\label{a:d3l4} Recall from the previous subsection that we compare tuples $(i,j)$ and $(k,l)$ w.r.t.~the lex ordering.

\begin{proposition}\label{prop:basis43}
    The set $\hat{\cB}_4$  consisting of $\cB_3$ and the following quartics
    \begin{equation}\label{3:2+2+2+2}
        \begin{split}
            \textnormal{Swap}_{ij} \textnormal{Swap}_{kl} \textnormal{Swap}_{pq} \textnormal{Swap}_{rs} \quad \ \,
			&i < j, \ k < l,\  p < q,\ r < s,  \\ (&i,j) < (k,l) < (p,q)<(r,s);\\
        \end{split}
    \end{equation}
    \vspace{2mm}
    \begin{equation}\label{3:3+2+2}
        \begin{split}
            &\textnormal{Swap}_{ij} \textnormal{Swap}_{jk} \textnormal{Swap}_{pq} \textnormal{Swap}_{rs}\quad \quad  i < j < k,\  p<q, r<s, \\
			 &\textnormal{Swap}_{ij} \textnormal{Swap}_{ik} \textnormal{Swap}_{pq} \textnormal{Swap}_{rs}  \quad\quad p,q,r,s \notin \{i,j,k\}, (p,q)<(r,s);\\
        \end{split}
     \end{equation}
      \vspace{2mm}
     \begin{equation}\label{3:4+2}
         \begin{split}
             &\textnormal{Swap}_{ij} \textnormal{Swap}_{jk} \textnormal{Swap}_{kl} \textnormal{Swap}_{pq} , \ 
	\textnormal{Swap}_{ij} \textnormal{Swap}_{jl} \textnormal{Swap}_{kl} \textnormal{Swap}_{pq},\\
	&\textnormal{Swap}_{ik} \textnormal{Swap}_{jk} \textnormal{Swap}_{jl} \textnormal{Swap}_{pq},\ 
	\textnormal{Swap}_{ik} \textnormal{Swap}_{kl} \textnormal{Swap}_{jl} \textnormal{Swap}_{pq},\\
	&\textnormal{Swap}_{il} \textnormal{Swap}_{jl} \textnormal{Swap}_{jk}\textnormal{Swap}_{pq}
	\quad i < j < k < l,\ p<q,\ p,q \notin \{i,j,k,l\};\\
         \end{split}
     \end{equation}
      \vspace{2mm}
        \begin{equation}\label{3:3+3}
            \begin{split}
                	\textnormal{Swap}_{ij} \textnormal{Swap}_{jk} \textnormal{Swap}_{pq} \textnormal{Swap}_{qr}&, \ 
			\textnormal{Swap}_{ij} \textnormal{Swap}_{jk} \textnormal{Swap}_{pq} \textnormal{Swap}_{pr},\\
			\textnormal{Swap}_{ij} \textnormal{Swap}_{ik} \textnormal{Swap}_{pq} \textnormal{Swap}_{pr}&,
			\quad i < j < k, \ p<q<r, \ i<p,\\	
			 &\quad\{i,j,k\} \cap \{p,q,r\} = \emptyset;
            \end{split}
        \end{equation}
         \vspace{2mm}
\begin{equation}\label{3:5}
    \begin{split}
     \textnormal{Swap}_{ij} \textnormal{Swap}_{ik} \textnormal{Swap}_{jl} \textnormal{Swap}_{jm},\ 
       &\textnormal{Swap}_{ij} \textnormal{Swap}_{ik} \textnormal{Swap}_{jl} \textnormal{Swap}_{km},\\
       \textnormal{Swap}_{ij} \textnormal{Swap}_{ik} \textnormal{Swap}_{kl} \textnormal{Swap}_{km},\ 
    &\textnormal{Swap}_{ij} \textnormal{Swap}_{il} \textnormal{Swap}_{jk} \textnormal{Swap}_{jm},\\ 
 \textnormal{Swap}_{ij} \textnormal{Swap}_{im} \textnormal{Swap}_{jk} \textnormal{Swap}_{jl},\
  &\textnormal{Swap}_{ij} \textnormal{Swap}_{il} \textnormal{Swap}_{im} \textnormal{Swap}_{jk},\\
 \textnormal{Swap}_{ij} \textnormal{Swap}_{ik} \textnormal{Swap}_{im} \textnormal{Swap}_{jl},\
        &\textnormal{Swap}_{ij} \textnormal{Swap}_{ik} \textnormal{Swap}_{im} \textnormal{Swap}_{kl},\\
       \textnormal{Swap}_{ij} \textnormal{Swap}_{ik} \textnormal{Swap}_{il} \textnormal{Swap}_{im}, \ 
          &\textnormal{Swap}_{ij} \textnormal{Swap}_{ik} \textnormal{Swap}_{il} \textnormal{Swap}_{lm}, \\
          \textnormal{Swap}_{ij} \textnormal{Swap}_{ik} \textnormal{Swap}_{il} \textnormal{Swap}_{km},\
          &\textnormal{Swap}_{ij} \textnormal{Swap}_{ik} \textnormal{Swap}_{il} \textnormal{Swap}_{jm},\\
             &i<j<k<l<m,
    \end{split}
\end{equation}
 is a basis of the subspace of $M^{\text{Sw}_3}_n(\mathbb{C})$ of polynomials  of degree at most four in the Swap$_{ij}.$
\end{proposition}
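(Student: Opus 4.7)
The plan is to extend the strategy used for Propositions \ref{propo: 2-indep} and \ref{prop: indep3}. The statement has two parts, spanning and linear independence of $\hat\cB_4$ inside the degree-$\le 4$ subspace of $M^{\text{Sw}_3}_n(\CC)$, and both are attacked by stratifying according to the support (number of distinct indices) of the permutation underlying each quartic.

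For spanning, any product of four swap matrices represents a permutation whose cycle structure is supported on $k\in\{5,6,7,8\}$ indices; if $k\le 4$ then relations \eqref{eq:from sn} reduce the product to degree $\le 3$, landing in $\cB_3$. For $k=8$ the permutation is a product of four disjoint transpositions, handled by sorting factors lexicographically into \eqref{3:2+2+2+2}. For $k=7$ (3-cycle plus two disjoint transpositions) one normalizes the 3-cycle to $\textnormal{Swap}_{ij}\textnormal{Swap}_{jk}$ or $\textnormal{Swap}_{ij}\textnormal{Swap}_{ik}$ and sorts the remaining factors, yielding \eqref{3:3+2+2}. For $k=6$ the support either carries a 4-cycle plus disjoint transposition (normalize the 4-cycle to one of the five canonical forms already used in \eqref{eq: type 4}, giving \eqref{3:4+2}) or two disjoint 3-cycles, giving \eqref{3:3+3}. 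The crucial case is $k=5$: there are $4!=24$ five-cycles on each 5-element subset, while \eqref{3:5} retains only twelve. The other twelve are eliminated using the degree-reducing antisymmetrizer relation \eqref{eq:qdit-rel} for $d=3$, which is a degree-3 identity on any 4-subset of $\{i,j,k,l,m\}$; multiplying it left or right by the swap that involves the remaining index produces a quartic identity among 5-cycles, and a systematic sweep through the $5$ choices of 4-subset expresses every excluded 5-cycle as a combination of elements in \eqref{3:5} modulo $\cB_3$.

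For linear independence, I would mimic the Gell-Mann argument of Proposition \ref{prop: indep3}: expand each generator via \eqref{eq: sw-gm} and, in a putative dependence relation, peel off coefficients layer by layer according to the top tensor order in the basis \eqref{eq: basis-gm}.
\begin{enumerate}[\rm(a)]
\item Top order $8$: only \eqref{3:2+2+2+2} contributes; for each admissible 8-index pattern, picking the Gell-Mann tuple $\lambda_1\lambda_1\lambda_2\lambda_2\lambda_3\lambda_3\lambda_a\lambda_a$ (with $a$ varying) at the corresponding positions isolates each coefficient.
\item Top order $7$: only \eqref{3:3+2+2} remains; distinguishing the two normalized 3-cycle shapes uses the same identities $\lambda_1\lambda_3=-\mathfrak{i}\lambda_2$, $\lambda_2\lambda_3=\mathfrak{i}\lambda_1$ and $\lambda_1\lambda_6=\tfrac12\lambda_4$ that drove step (b) of Proposition \ref{prop: indep3}.
\item Top order $6$: the survivors are \eqref{3:4+2} and \eqref{3:3+3}, immediately separated by the index pattern (overlapping support vs.\ two disjoint 3-index blocks); within \eqref{3:4+2} the 5-equation system of Proposition \ref{prop: indep3}(c) reappears with the spectator factor $\textnormal{Swap}_{pq}$ contributing only diagonal $\lambda_a^p\lambda_a^q$ terms.
\item Top order $5$: only \eqref{3:5} is left, and one verifies by explicit Gell-Mann expansion that the twelve chosen 5-cycles produce linearly independent leading terms, by exhibiting $12$ Gell-Mann tuples $(a,b,c,d,e)\in\{1,\ldots,8\}^5$ such that the resulting $12\times 12$ coefficient matrix is nonsingular.
\end{enumerate}
Once all degree-$4$ coefficients are shown to vanish, what remains is a relation among elements of $\cB_3$, which is trivial by Proposition \ref{prop: indep3}.

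The main obstacle is the case $k=5$ in both halves of the argument. For spanning one must verify that multiplying \eqref{eq:qdit-rel} by single swaps really produces twelve independent reductions on each 5-element subset; for independence one must construct the $12\times 12$ Gell-Mann coefficient system for \eqref{3:5} and check its invertibility. The remaining cases are essentially bookkeeping on top of the Proposition \ref{prop: indep3} argument.
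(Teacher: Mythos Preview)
Your approach matches the paper's strategy: stratify by support size, use the $d=3$ antisymmetrizer for spanning, and peel off Gell-Mann orders for independence. Two points need tightening.

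\textbf{Interference from $\cB_3$ at orders $6$ and $5$.} In step (c) you write ``the survivors are \eqref{3:4+2} and \eqref{3:3+3}'', and in step (d) ``only \eqref{3:5} is left''. Neither is correct. At order $6$, products of three disjoint transpositions (already in $\cB_2\subset\cB_3$) also contribute top-order terms. At order $5$, the elements \eqref{eq: type 5} of $\cB_3$ (a $3$-cycle times a disjoint transposition on five indices) contribute as well. The paper handles both interferences by choosing Gell-Mann test tuples $(a,b,c,d,e)$ or $(a,b,c,d,e,f)$ with \emph{all entries distinct}: since the spectator $\textnormal{Swap}_{pq}$ factor contributes only diagonal $\lambda_c^p\lambda_c^q$ terms, and three disjoint transpositions give at most three distinct Gell-Mann labels, such tuples receive no contribution from the offending $\cB_3$ elements. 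Your step (c) separation of \eqref{3:4+2} from \eqref{3:3+3} also works by exactly this mechanism (the former always repeats a Gell-Mann label at $p,q$), not by any direct ``index-pattern'' argument.

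\textbf{The twelve missing $5$-cycles.} Your proposed mechanism---multiply the four-letter antisymmetrizer relation by a swap involving the fifth index---is in the right spirit, but you would still have to extract twelve independent reductions from the resulting family and check they eliminate precisely the omitted $5$-cycles. The paper does not do this symbolically: it computes the twelve expansions explicitly via a noncommutative Gr\"obner basis and lists them in full (each one a lengthy linear combination over $\cB_3$ and the retained quartics). So your spanning argument for $k=5$ is a plan rather than a proof; the verification is genuinely laborious.
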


\begin{proof}
 The spanning property of $\hat{\cB}_4$ follows after identifying the products of swap matrices with permutations in $S_n$ using the degree-reducing relation \eqref{eq:degred} with $d=3.$ Indeed, considering the elements which correspond to the product of a $4$-cycle and a disjoint transposition, the type $\textnormal{Swap}_{il} \textnormal{Swap}_{kl} \textnormal{Swap}_{jk} \text{Swap}_{pq}$ missing in \eqref{3:4+2}   is clearly in the span of  $\hat{\cB}_4$ by \eqref{eq:degred}.
 As for the elements that correspond to $5$-cycles, there are  $12$ of the total $24$ $5$-cycles on the letters $i,j,k,l,m$ missing in \eqref{3:5}.
 Their expansions in terms of the elements of $\hat{\cB}_4$ are given in Subsection \ref{sssec:5-cyc}.
 
 Now suppose there is a linear dependence between the elements of $\hat{\cB}_4$  and express the appearing terms w.r.t. the basis \eqref{eq: basis-gm} using the formula  \eqref{eq: sw-gm}. We gradually eliminate terms from this relation starting with the ones with highest order terms.

 (a) Consider the elements in \eqref{3:2+2+2+2}. For any choice of indices $i < j, k < l,  p < q, r < s$ with  $(i,j) < (k,l) < (p,q)<(r,s),$ the highest order terms in the expansion of Swap$_{ij}$Swap$_{kl}$Swap$_{pq}$Swap$_{rs}$ are of the form
	$$
	\lambda_a^i\, \lambda_a^j\, \lambda_b^k\, \lambda_b^l\, \lambda_c^p\, \lambda_c^q \, \lambda_d^r\, \lambda_d^s, \quad \quad a,b,c,d \in \{1,\ldots,8\}.
	$$
	The product formula \eqref{eq: lambda x} implies that the elements in \eqref{3:2+2+2+2} are the only ones in $\hat{\cB}_4$ with such terms and more precisely, for any choice of $i < j, k < l,  p < q, r < s$ with  $(i,j) < (k,l) < (p,q)<(r,s),$ the element Swap$_{ij}$Swap$_{kl}$Swap$_{pq}$Swap$_{rs}$  has the term $\lambda_1^i\, \lambda_1^j\, \lambda_2^k\, \lambda_2^l\, \lambda_3^p\, \lambda_3^q\lambda_4^r\, \lambda_4^s,$
	which does not appear in the expansion of any other element of $\hat{\cB}_4.$ Hence, by analogy, the coefficients next to each of the elements in \eqref{3:2+2+2+2} are zero.

 (b) Now the elements in \eqref{3:3+2+2} are the only ones with highest order terms of degree $7,$ meaning, involving $7$ distinct indices. Using part (b) of the proof of Proposition \ref{prop: indep3}, for any choice of $i < j < k, p<q, r<s$ with $p,q,r,s \notin \{i,j,k\}, (p,q)<(r,s),$  the element Swap$_{ij}$Swap$_{jk}$Swap$_{pq}$Swap$_{rs}$ has in its expansion
	\begin{align*}
		\lambda_2^i\, \lambda_2^j\, \lambda_3^j\, \lambda_3^k\, \lambda_5^p\, \lambda_5^q\, \lambda_9^r\, \lambda_9^s + \lambda_1^i\, \lambda_1^j\, \lambda_6^j\, \lambda_6^k\, \lambda_5^p\, \lambda_5^q\, \lambda_9^r\, \lambda_9^s =
		\mathfrak{i}\,\lambda_2^i\, \lambda_1^j\, \lambda_3^k\, \lambda_5^p\, \lambda_5^q\, \lambda_9^r\, \lambda_9^s + \frac{1}{2} \lambda_1^i\, \lambda_4^j\, \lambda_6^k\, \lambda_5^p\, \lambda_5^q\, \lambda_9^r\, \lambda_9^s.
	\end{align*}
	But Swap$_{ij}$Swap$_{ik}$Swap$_{pq}$ has in its expansion
	$$
	\lambda_1^i\, \lambda_1^j\, \lambda_3^i\, \lambda_3^k\, \lambda_5^p\, \lambda_5^q\, \lambda_9^r\, \lambda_9^s + \lambda_1^i\, \lambda_1^j\, \lambda_6^i\, \lambda_6^k\, \lambda_5^p\, \lambda_5^q\, \lambda_9^r\, \lambda_9^s =
	-\mathfrak{i}\,\lambda_2^i\, \lambda_1^j\, \lambda_3^k\, \lambda_5^p\, \lambda_5^q\, \lambda_9^r\, \lambda_9^s + \frac{1}{2} \lambda_1^i\, \lambda_4^j\, \lambda_6^k\, \lambda_5^p\, \lambda_5^q\, \lambda_9^r\, \lambda_9^s.
	$$
	By the same argument as in part (b) of the proof of Proposition \ref{prop: indep3}, all the coefficients next to the elements in \eqref{3:3+2+2} are zero.

 (c) The elements in \eqref{3:4+2} and \eqref{3:3+3} are now the only ones with  with highest order terms of degree $6.$ We first consider those in  \eqref{3:3+3}. For fixed $i < j < k, p<q<r, i<p$ with $\{i,j,k\} \cap \{p,q,r\} = \emptyset,$ denote the coefficients next to the elements 
$$\textnormal{Swap}_{ij} \textnormal{Swap}_{jk} \textnormal{Swap}_{pq} \textnormal{Swap}_{qr}, \textnormal{Swap}_{ij} \textnormal{Swap}_{jk} \textnormal{Swap}_{pq} \textnormal{Swap}_{pr}, \textnormal{Swap}_{ij} \textnormal{Swap}_{ik} \textnormal{Swap}_{pq} \textnormal{Swap}_{pr}$$ 
by $\beta_1,\beta_2$ and $\beta_3$ respectively. Clearly, these are the only elements in \eqref{3:3+3} whose highest order terms involve precisely the positions $i,j,k,p,q,r.$
So comparing the coefficients next to the basis elements $\lambda_2^i\, \lambda_{4}^j\, \lambda_{6}^k\, \lambda_1^p\, \lambda_5^q\, \lambda_7^r, \, 
\lambda_2^i\, \lambda_5^j\, \lambda_{6}^k\, \lambda_1^p\, \lambda_4^q\, \lambda_7^r$ and $\lambda_1^i\, \lambda_2^j\, \lambda_{3}^k\, \lambda_4^p\, \lambda_5^q\, \lambda_8^r$ give the following equations
	\begin{align*}
		&\lambda_2^i\, \lambda_{4}^j\, \lambda_{6}^k\, \lambda_1^p\, \lambda_5^q\, \lambda_7^r : \quad -\mathfrak{i}\frac14\beta_1 -\mathfrak{i}\frac14\beta_2 + \mathfrak{i}\frac14\beta_3 =0,\\
		&\lambda_2^i\, \lambda_5^j\, \lambda_{6}^k\, \lambda_1^p\, \lambda_4^q\, \lambda_7^r : \quad -\mathfrak{i}\frac14\beta_1 +\mathfrak{i}\frac14\beta_2 + \mathfrak{i}\frac14\beta_3 =0,\\
		&\lambda_1^i\, \lambda_2^j\, \lambda_{3}^k\, \lambda_4^p\, \lambda_5^q\, \lambda_8^r
: \quad -\frac{\sqrt{3}}{2}\beta_1 +\frac{\sqrt{3}}{2}\beta_2-\frac{\sqrt{3}}{2}\beta_3 =0.
	\end{align*}
 The above system has a unique solution $\beta_1=\beta_2=\beta_3=0.$
 Note that each of the highest order terms of the elements in \eqref{3:4+2} necessarily has one of the Gell-Mann matrices $\lambda$ repeated twice. So the coefficients next to the basis elements $\lambda_2^i\, \lambda_{4}^j\, \lambda_{6}^k\, \lambda_1^p\, \lambda_5^q\, \lambda_7^r, \, 
\lambda_2^i\, \lambda_5^j\, \lambda_{6}^k\, \lambda_1^p\, \lambda_4^q\, \lambda_7^r$ and $\lambda_1^i\, \lambda_2^j\, \lambda_{3}^k\, \lambda_4^p\, \lambda_5^q\, \lambda_8^r$ in the expansion of the elements in  \eqref{3:4+2} are zero.
Similarly, each of the highest order terms of the elements corresponding to products of three disjoint transpositions has (at most) three distinct Gell-Mann matrices, each repeated twice. Hence, the coefficients next to the basis elements $\lambda_2^i\, \lambda_{4}^j\, \lambda_{6}^k\, \lambda_1^p\, \lambda_5^q\, \lambda_7^r, \, 
\lambda_2^i\, \lambda_5^j\, \lambda_{6}^k\, \lambda_1^p\, \lambda_4^q\, \lambda_7^r$ and $\lambda_1^i\, \lambda_2^j\, \lambda_{3}^k\, \lambda_4^p\, \lambda_5^q\, \lambda_8^r$ in the expansions of those elements are zero as well.
 We conclude that the coefficients next to the elements
 $$\textnormal{Swap}_{ij} \textnormal{Swap}_{jk} \textnormal{Swap}_{pq} \textnormal{Swap}_{qr}, \textnormal{Swap}_{ij} \textnormal{Swap}_{jk} \textnormal{Swap}_{pq} \textnormal{Swap}_{pr}, \textnormal{Swap}_{ij} \textnormal{Swap}_{ik} \textnormal{Swap}_{pq} \textnormal{Swap}_{pr},$$ 
are zero and by analogy, the coefficients next to all the elements in  \eqref{3:3+3} are zero.

Having eliminated the elements in \eqref{3:3+3}, the fact that the coefficients next to the elements in  \eqref{3:4+2} are zero  easily follows from part (c)  of the proof of Proposition \ref{prop: indep3}.

(d) Now the elements in \eqref{3:5} are the only ones in $\hat{\cB}_4$ with highest order terms of degree $5.$ For fixed $i<j<k<l<m$ denote the coefficients next to the quartics in \eqref{3:5} by $\gamma_1,\ldots,\gamma_{12}$ respectively and note that these are the only elements in \eqref{3:5} whose highest order terms involve precisely the positions $i,j,k,l,m.$ Similar to before, we now compare the coefficients next to several basis elements 
of the form $\lambda_a^i\, \lambda_b^j\, \lambda_c^k\, \lambda_d^l\, \lambda_e^m$ to get a system of equations. We only consider coefficients next to elements $\lambda_a^i\, \lambda_b^j\, \lambda_c^k\, \lambda_d^l\, \lambda_e^m$ with $a,b,c,d,e$ all distinct to ensure that none of them appears in the expansions of the elements \eqref{eq: type 5} corresponding to a product of a $3$-cycle and  disjoint transposition. From the system
\begin{align*}
     &\lambda_4^i\, \lambda_1^j\, \lambda_2^k\, \lambda_8^l\, \lambda_5^m: \quad 2 \gamma_1 - \gamma_5 - \gamma_6 + \gamma_7 + \gamma_8 + \gamma_9 - 2 \gamma_{10} - \gamma_{12} = 0,\\
&\lambda_4^i\, \lambda_1^j\, \lambda_3^k\, \lambda_5^l\, \lambda_2^m: \quad \gamma_1 - \gamma_2 - \gamma_4 + \gamma_7 - \gamma_{10} + \gamma_{11} - \gamma_{12} = 0,\\
&\lambda_4^i\, \lambda_1^j\, \lambda_3^k\, \lambda_8^l\, \lambda_7^m: \quad-2 \gamma_1 - \gamma_5 - \gamma_6 + \gamma_7 + \gamma_8 + \gamma_9 - 2 \gamma_{10} + \gamma_{12} = 0, \\
&\lambda_4^i\, \lambda_1^j\, \lambda_5^k\, \lambda_8^l\, \lambda_2^m: \quad-\gamma_1 + \gamma_2 + \gamma_3 + \gamma_4 - \gamma_5 - \gamma_6 - 2 \gamma_{11} + 2 \gamma_{12} = 0,\\
&\lambda_4^i\, \lambda_1^j\, \lambda_6^k\, \lambda_3^l\, \lambda_8^m: \quad\gamma_1 + \gamma_2 - \gamma_3 + \gamma_4 - \gamma_5 + \gamma_6 - 2 \gamma_7 + 2 \gamma_8 = 0,\\
&\lambda_4^i\, \lambda_1^j\, \lambda_3^k\, \lambda_7^l\, \lambda_8^m: \quad\gamma_1 + \gamma_2 - \gamma_4 + 2 \gamma_6 + \gamma_7 - 2 \gamma_9 + \gamma_{10} + \gamma_{11} + \gamma_{12} = 0,\\
&\lambda_4^i\, \lambda_2^j\, \lambda_6^k\, \lambda_8^l\, \lambda_3^m: \quad-\gamma_1 + \gamma_2 + \gamma_3 - \gamma_4 + \gamma_5 + \gamma_6 - 2 \gamma_{11} + 2 \gamma_{12} = 0,\\
&\lambda_4^i\, \lambda_2^j\, \lambda_8^k\, \lambda_1^l\, \lambda_5^m: \quad-2 \gamma_2 + 2 \gamma_3 - 2 \gamma_4 + \gamma_5 - \gamma_6 + \gamma_7 - \gamma_8 - \gamma_9 + \gamma_{12} = 0,\\
&\lambda_4^i\, \lambda_3^j\, \lambda_7^k\, \lambda_1^l\, \lambda_8^m: \quad\gamma_1 + \gamma_2 - \gamma_3 - \gamma_4 + \gamma_5 - \gamma_6 - 2 \gamma_7 + 2 \gamma_8 = 0,\\
\end{align*}
 we deduce $\gamma_6=0.$ Adding the equations
\begin{align*}
&\lambda_4^i\, \lambda_2^j\, \lambda_1^k\, \lambda_5^l\, \lambda_8^m: \quad    \gamma_1 + \gamma_2 + \gamma_4 + \gamma_7 + 2 \gamma_9 - \gamma_{10} - \gamma_{11} - \gamma_{12} = 0,\\
&\lambda_4^i\, \lambda_3^j\, \lambda_8^k\, \lambda_2^l\, \lambda_6^m: \quad-2 \gamma_2 + 2 \gamma_3 + 2 \gamma_4 + \gamma_5 + \gamma_7 - \gamma_8 - \gamma_9 - \gamma_{12} = 0,\\
&\lambda_4^i\, \lambda_5^j\, \lambda_1^k\, \lambda_8^l\, \lambda_2^m: \quad-2 \gamma_2 + \gamma_3 - 2 \gamma_4 + 2 \gamma_7 - \gamma_8 - \gamma_{10} + \gamma_{11} = 0\\
\end{align*}
yields $\gamma_9=0.$
Moreover, from
$$
\lambda_4^i\, \lambda_3^j\, \lambda_8^k\, \lambda_6^l\, \lambda_1^m: \quad \gamma_1 - \gamma_2 - \gamma_4 + 2 \gamma_5 - \gamma_7 + \gamma_{10} + \gamma_{11} - \gamma_{12} = 0,
$$
we obtain $\gamma_7=\gamma_{11}=0.$
Finally,
\begin{align*}
    \lambda_4^i\, \lambda_2^j\, \lambda_8^k\, \lambda_6^l\, \lambda_3^m: & \quad -\gamma_1 + \gamma_2 - \gamma_4 - 2 \gamma_5 + \gamma_{10} - \gamma_{12} = 0\\
    \lambda_4^i\, \lambda_5^j\, \lambda_2^k\, \lambda_3^l\, \lambda_1^m: & \quad -\gamma_3+\gamma_8+\gamma_{10}-\gamma_{11}=0
\end{align*}
yields $\gamma_1=\gamma_2=\gamma_3=\gamma_4=\gamma_5=\gamma_8=\gamma_{10}=\gamma_{12}=0.$

(e) What remains is a linear dependence involving terms of degree at most three, which contradicts Proposition \ref{prop: indep3}.
\end{proof}

\subsubsection{Expansions of the remaining $5$-cycles} \label{sssec:5-cyc}
To complete the proof of Proposition \ref{prop:basis43} we list the expansions of the $5$-cycles not contained in the basis.
These were produced with the help of noncommutative Gr\"obner bases, but can be readily verified by direct matrix calculation.
.
\end{align*}
Any product of two such matrices can be expanded in this basis according to a similar formula to \eqref{eq: lambda x},
\begin{align}\label{eq: lambda4 x}
	\lambda_a \lambda_b = \frac{1}{2}\, \delta_{a,b}\,I + \sum_{c=1}^{15} (d^{a,b,c} + \mathfrak{i}f^{a,b,c}) \, \lambda_c, 
\end{align}
where the structure constants $f^{a,b,c}$ and $d^{a,b,c}$ can be again computed via
$$
f^{a,b,c} = -\frac{1}{4} \mathfrak{i}\, \text{\textnormal{tr}}(\lambda_a[\lambda_b,\lambda_c]) \quad \text{and} \quad
d^{a,b,c} = \frac{1}{4}\, \text{\textnormal{tr}}(\lambda_a\{\lambda_b,\lambda_c\}).
$$
In this case the nonzero $f^{a,b,c}$ are
\begin{align*}
	f^{1, 2, 3}=1,& \quad \quad
	f^{1, 5, 6} = f^{1, 10, 11}= f^{3, 6, 7} = f^{3, 11, 12} = f^{4, 10, 13}= f^{6, 12, 13}= -\frac{1}{2},\\
	f^{1, 4, 7}=& f^{1, 9, 12}= f^{2, 4, 6}=  
	f^{2, 5, 7}= f^{2, 9, 11}= f^{2, 10, 12}= f^{3, 4, 5}= f^{3, 9, 10}= \\
	f^{4, 9, 14}& =  f^{5, 9, 13}= f^{5, 10, 14} = f^{6, 11, 14}=  f^{7, 11, 13}= f^{7, 12, 14}=\frac{1}{2},\\
	&f^{4, 5, 8}= f^{6, 7, 8}=\frac{\sqrt{3}}{2},  \quad \quad
	f^{8, 9, 10}= f^{8, 11, 12}=\frac{1}{2 \sqrt{3}}, \\
	f&^{8, 13, 14}=-\frac{1}{\sqrt{3}}, \quad \quad
	f^{9, 10, 15}= f^{11, 12, 15}= f^{13, 14, 15}=\sqrt{\frac{2}{3}},
\end{align*}
and the nonzero $d^{a,b,c}$ are
\begin{align*}
	d^{1, 1, 8} = d&^{2, 2, 8}=d^{3, 3, 8}=\frac{1}{\sqrt{3}}, \quad \quad d^{8, 8, 8}=d^{8, 13, 13}= d^{8, 14, 14}=-\frac{1}{\sqrt{3}}, \\
	d^{1, 1, 15}= d^{2, 2, 15}&=d^{3, 3, 15}=d^{4, 4, 15}=d^{5, 5, 15}=d^{6, 6, 15}=d^{7, 7, 15}=d^{8, 8, 15}=\frac{1}{\sqrt{6}}, \\
	d^{9, 9, 15}= d&^{10, 10, 15}= d^{11, 11, 15}=d^{12, 12, 15}= d^{13, 13, 15}= d^{14, 14, 15}=-\frac{1}{\sqrt{6}}, \\
	d^{1, 4, 6} = d^{1, 5, 7}&= d^{1, 9, 11}= d^{1, 10, 12}=d^{2, 5, 6}=d^{2, 10, 11}=d^{3, 4, 4}=d^{3, 5, 5}=d^{3, 9, 9}= \\
	d^{3, 10, 10}=& \ d^{4, 9, 13}=d^{4, 10, 14}=d^{5, 10, 13}=d^{6, 11, 13}= d^{6, 12, 14}=d^{7, 12, 13}=\frac{1}{2},\\
	d^{2, 4, 7}=d^{2, 9, 12}&=d^{3, 6, 6}= d^{3, 7, 7}=d^{3, 11, 11}=d^{3, 12, 12}= d^{5, 9, 14}=d^{7, 11, 14}=-\frac{1}{2},\\
	d^{4, 4, 8}=& \ d^{5, 5, 8}=d^{6, 6, 8}=d^{7, 7, 8}=-\frac{1}{2 \sqrt{3}}, \quad \quad
	d^{15, 15, 15}=-\sqrt{\frac{2}{3}}\\
	&d^{8, 9, 9}= d^{8, 10, 10}= d^{8, 11, 11}=d^{8, 12, 12}=\frac{1}{2 \sqrt{3}}. 
\end{align*}
Note that the structure constants $f^{a,b,c}$ and $d^{a,b,c}$ with $a,b,c\in\{1,\ldots,8\}$ coincide with the structure constants pertaining to the $3 \times 3$ Gell-Mann matrices.

By Proposition \ref{prop:swaptogm}, each swap matrix Swap$^{(4)}_{ij}$ can be written in terms of the $4 \times 4$ Gell-Mann matrices as follows
\begin{align}\label{eq: sw-gm 4}
	\text{\textnormal{Swap}}^{(4)}_{ij}= \frac{1}{4} I + \frac{1}{2} \sum_{a=1}^{15} \lambda_a^i \lambda_a^j.
\end{align}
\subsection{Linear subspace of $M_{4^n}(\mathbb{C})$ spanned by the products of at most $4$ swap matrices}\label{a:d4l4}

Again, any two tuples $(i,j)$ and $(k,l)$ are compared w.r.t.~the lex ordering. Let $\tilde{\mathcal{B}}_3$ be the set of all products of at most $3$ swap matrices that correspond to different permutations in $S_n.$
For fixed $i<j<k<l$ denote by $\mathcal{B}_{ijkl}$ the set consisting of the cubics
\begin{equation}\label{eq: cubics}
	\begin{split}
&\textnormal{Swap}_{ij} \textnormal{Swap}_{jk} \textnormal{Swap}_{kl}, \ 
\textnormal{Swap}_{ij} \textnormal{Swap}_{jl} \textnormal{Swap}_{kl},\ 
\textnormal{Swap}_{ik} \textnormal{Swap}_{jk} \textnormal{Swap}_{jl},\\
&\textnormal{Swap}_{ik} \textnormal{Swap}_{kl} \textnormal{Swap}_{jl},\ 
\textnormal{Swap}_{il} \textnormal{Swap}_{jl} \textnormal{Swap}_{jk},\
\textnormal{Swap}_{il} \textnormal{Swap}_{kl} \textnormal{Swap}_{jk}
 \end{split}
\end{equation}
and for fixed $i<j<k<l<m$ denote by $\mathcal{B}_{ijklm}$ the set consisting of the quartics
\begin{equation}\label{eq: quart}
 \begin{split}
 	&\textnormal{Swap}_{ij} \textnormal{Swap}_{jk} \textnormal{Swap}_{kl} \textnormal{Swap}_{lm}, \ 
 	\textnormal{Swap}_{ij} \textnormal{Swap}_{jk} \textnormal{Swap}_{km} \textnormal{Swap}_{lm}, \ \\
 	&\textnormal{Swap}_{ij} \textnormal{Swap}_{jl} \textnormal{Swap}_{kl} \textnormal{Swap}_{km}, \ 
 	\textnormal{Swap}_{ij} \textnormal{Swap}_{jl} \textnormal{Swap}_{lm} \textnormal{Swap}_{km}, \ \\
 	&\textnormal{Swap}_{ij} \textnormal{Swap}_{jm} \textnormal{Swap}_{km} \textnormal{Swap}_{kl}, \  
 	\textnormal{Swap}_{ij} \textnormal{Swap}_{jm} \textnormal{Swap}_{lm} \textnormal{Swap}_{kl}, \ \\
 	&\textnormal{Swap}_{ik} \textnormal{Swap}_{jk} \textnormal{Swap}_{jl} \textnormal{Swap}_{lm}, \ 
 	\textnormal{Swap}_{ik} \textnormal{Swap}_{jk} \textnormal{Swap}_{jm} \textnormal{Swap}_{lm}, \ \\
 	&\textnormal{Swap}_{ik} \textnormal{Swap}_{kl} \textnormal{Swap}_{jl} \textnormal{Swap}_{jm}, \ 
 	\textnormal{Swap}_{ik} \textnormal{Swap}_{kl} \textnormal{Swap}_{lm} \textnormal{Swap}_{jm}, \ \\
 	&\textnormal{Swap}_{ik} \textnormal{Swap}_{km} \textnormal{Swap}_{jm} \textnormal{Swap}_{jl}, \ 
 	\textnormal{Swap}_{ik} \textnormal{Swap}_{km} \textnormal{Swap}_{lm} \textnormal{Swap}_{jl}, \ \\
 	&\textnormal{Swap}_{il} \textnormal{Swap}_{jl} \textnormal{Swap}_{jk} \textnormal{Swap}_{km}, \ 
 	\textnormal{Swap}_{il} \textnormal{Swap}_{jl} \textnormal{Swap}_{jm} \textnormal{Swap}_{km}, \ \\
 	&\textnormal{Swap}_{il} \textnormal{Swap}_{kl} \textnormal{Swap}_{jk} \textnormal{Swap}_{jm}, \  
 	\textnormal{Swap}_{il} \textnormal{Swap}_{kl} \textnormal{Swap}_{km} \textnormal{Swap}_{jm}, \ \\
 	&\textnormal{Swap}_{il} \textnormal{Swap}_{lm} \textnormal{Swap}_{jm} \textnormal{Swap}_{jk}, \ 
 	\textnormal{Swap}_{il} \textnormal{Swap}_{lm} \textnormal{Swap}_{km} \textnormal{Swap}_{jk}, \ \\
 	&\textnormal{Swap}_{im} \textnormal{Swap}_{jm} \textnormal{Swap}_{jk} \textnormal{Swap}_{kl}, \ 
 	\textnormal{Swap}_{im} \textnormal{Swap}_{jm} \textnormal{Swap}_{jl} \textnormal{Swap}_{kl}, \ \\
 	&\textnormal{Swap}_{im} \textnormal{Swap}_{km} \textnormal{Swap}_{jk} \textnormal{Swap}_{jl}, \
 	\textnormal{Swap}_{im} \textnormal{Swap}_{km} \textnormal{Swap}_{kl} \textnormal{Swap}_{jl}, \ \\
 	&\textnormal{Swap}_{im} \textnormal{Swap}_{lm} \textnormal{Swap}_{jl} \textnormal{Swap}_{jk}.   
 \end{split}
\end{equation}

\begin{proposition}
	The set $\mathcal{B}_4$ consisting of $\tilde{\mathcal{B}}_3$ and the quartics
	\begin{equation}\label{eq: type 8}
		\begin{split}
			\textnormal{Swap}_{ij} \textnormal{Swap}_{kl} \textnormal{Swap}_{pq} \textnormal{Swap}_{rs} \quad \ \,
			&i < j, \ k < l,\  p < q,\ r < s,  \\ (&i,j) < (k,l) < (p,q)<(r,s);\\
		\end{split}
	\end{equation}
\vspace{0.1mm}
	\begin{equation}\label{eq: type 7}
		\begin{split}
			&\textnormal{Swap}_{ij} \textnormal{Swap}_{jk} \textnormal{Swap}_{pq} \textnormal{Swap}_{rs}\quad \quad  i < j < k,\  p<q, r<s, \\
			 &\textnormal{Swap}_{ij} \textnormal{Swap}_{ik} \textnormal{Swap}_{pq} \textnormal{Swap}_{rs}  \quad\quad p,q,r,s \notin \{i,j,k\}, (p,q)<(r,s);\\
		\end{split}
	\end{equation}
\vspace{0.1mm}
	\begin{equation}\label{eq: type 6-a}
		\begin{split}
			t\cdot\textnormal{Swap}_{pq}, \quad t \in \mathcal{B}_{ijkl},\ \ i < j < k < l,\ p<q,\ p,q \notin \{i,j,k,l\};\\		
		\end{split}
	\end{equation}
\vspace{0.1mm}
	\begin{equation}\label{eq: type 6-b}
		\begin{split}
			\textnormal{Swap}_{ij} \textnormal{Swap}_{jk} \textnormal{Swap}_{pq} \textnormal{Swap}_{qr}&, \ 
			\textnormal{Swap}_{ij} \textnormal{Swap}_{jk} \textnormal{Swap}_{pq} \textnormal{Swap}_{pr},\\
			\textnormal{Swap}_{ij} \textnormal{Swap}_{ik} \textnormal{Swap}_{pq} \textnormal{Swap}_{pr}&,
			\quad i < j < k, \ p<q<r, \ i<p,\\	
			 &\quad\{i,j,k\} \cap \{p,q,r\} = \emptyset;	
		\end{split}
	\end{equation}
\vspace{2mm}
	\begin{equation}\label{eq: type 55}
		\begin{split}
			t \in \mathcal{B}_{ijklm},\ \quad \quad \quad i < j < k < l < m;
		\end{split}
	\end{equation}
	is a basis of the subspace of $M^{\text{Sw}_4}_n(\mathbb{C})$ of polynomials in the Swap$_{ij}$ of degree at most four. 
\end{proposition}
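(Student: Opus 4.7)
The plan is to follow the same two-part strategy as in the proof of Proposition \ref{prop:basis43} for $d=3$: first establish the spanning property of $\cB_4$ using the algebraic relations, and then establish its linear independence via the Gell-Mann expansion \eqref{eq: sw-gm 4}.

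For spanning, I would identify each product of swap matrices with a permutation in $S_n$. Using the symmetric group relations \eqref{eq:from sn}, every product of at most $4$ transpositions corresponds (up to reordering via these relations) to a permutation whose cycle structure falls into one of the types governing \eqref{eq: type 8}--\eqref{eq: type 55} or an element already in $\tilde\cB_3$. The only nontrivial case is 5-cycles, since on $5$ letters there are $4!=24$ of them, while each $\cB_{ijklm}$ contains $23$ quartics. The missing 5-cycle is then eliminated using the degree-reducing relation \eqref{eq:qdit-rel} specialised to $d=4$ (which is an equation on the letters $i,j,k,l,m$), analogous to how the $d=3$ relation was used in Appendix \ref{sssec:5-cyc} to expand the $12$ ``missing'' 5-cycles. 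Products of $\le 4$ transpositions corresponding to 2 disjoint 3-cycles, a 4-cycle plus a transposition, or the ``sixth'' cubic $\mathrm{Swap}_{il}\mathrm{Swap}_{kl}\mathrm{Swap}_{jk}$ are likewise handled via symmetric group relations as in Proposition \ref{prop: indep3}.

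For linear independence, I would expand every element of $\cB_4$ with respect to the Gell-Mann basis \eqref{eq: basis-gm d} for $d=4$ via \eqref{eq: sw-gm 4}, and proceed by the ``highest-order elimination'' template from the $d=3$ proof:
\begin{enumerate}[\rm(a)]
\item The elements in \eqref{eq: type 8} are the only ones with Gell-Mann terms of total degree $8$, and each such product contains a term of the form $\lambda_a^i\lambda_a^j\lambda_b^k\lambda_b^l\lambda_c^p\lambda_c^q\lambda_d^r\lambda_d^s$ with $a,b,c,d$ all distinct that does not appear in any other element; pick one such term to kill the corresponding coefficient.
\item After (a), the elements in \eqref{eq: type 7} are the only ones with degree $7$ terms. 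Using the $d=3$ structure constants of the embedded $\lambda_1,\dots,\lambda_8$ within $\lambda_1,\dots,\lambda_{15}$ (which coincide with those in Appendix \ref{subsec: gm3}), the argument from part (b) of the proof of Proposition \ref{prop: indep3} carries over verbatim, tensored with an extra $\lambda_c^r\lambda_c^s$ factor in an independent Gell-Mann index.
\item After (b), the degree-$6$ elements are those in \eqref{eq: type 6-a} and \eqref{eq: type 6-b}. The elements of \eqref{eq: type 6-b} correspond to products of two disjoint 3-cycles and their highest-order terms live on six \emph{distinct} indices, each appearing exactly once in a Gell-Mann basis monomial, whereas those in \eqref{eq: type 6-a} have a Gell-Mann index repeated (since a single 4-cycle contributes three consecutive swaps sharing indices). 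Thus by choosing basis elements $\lambda_a^i\lambda_b^j\lambda_c^k\lambda_d^p\lambda_e^q\lambda_f^r$ with $a,\dots,f$ pairwise distinct, one isolates the coefficients of the three elements in \eqref{eq: type 6-b} via a $3\times3$ system (as in part (c) of Proposition \ref{prop:basis43}); then a separate linear system among the five quartics $t\cdot \mathrm{Swap}_{pq}$ in \eqref{eq: type 6-a} is resolved exactly like part (c) of Proposition \ref{prop: indep3}.
\end{enumerate}

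The genuine obstacle is step (d): eliminating the 23 quartics in each $\cB_{ijklm}$ from \eqref{eq: type 55}, whose highest-order terms have Gell-Mann degree $5$ on five distinct indices. One must set up a linear system by comparing coefficients of basis monomials $\lambda_a^i\lambda_b^j\lambda_c^k\lambda_d^l\lambda_e^m$ with $a,b,c,d,e$ pairwise distinct (to avoid contamination from the lower-type elements already handled). Unlike the $12\times12$ system in part (d) of the proof of Proposition \ref{prop:basis43} (which exploited the rich structure constants of $\lambda_1,\dots,\lambda_8$), here one has $23$ unknowns; I would choose comparison monomials built from triples of Gell-Mann matrices known to yield nonzero $f^{a,b,c}$ and $d^{a,b,c}$ from Appendix \ref{subsec: gm4}, giving a sufficient number of linearly independent equations to force all $23$ coefficients to vanish. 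Once this is done, what remains is a linear dependence involving only $\tilde\cB_3$, which is already known to be linearly independent in $M^{\text{Sw}_4}_n(\CC)$ by Proposition \ref{prop:d-1-indep} (applied with $d=4$), completing the proof.
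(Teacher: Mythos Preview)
Your proposal follows essentially the same approach as the paper's proof: spanning via the $d=4$ degree-reducing relation (one missing $5$-cycle per $\mathcal{B}_{ijklm}$), then linear independence via the Gell-Mann expansion with elimination from degree $8$ down to $5$. Two minor corrections: in step (c), $\mathcal{B}_{ijkl}$ contains \emph{six} $4$-cycles (not five---that was the $d=3$ count after the degree-reducing relation), so \eqref{eq: type 6-a} requires a $6\times 6$ system rather than the $5\times 5$ one from part (c) of Proposition~\ref{prop: indep3}; and your final appeal to Proposition~\ref{prop:d-1-indep} for all of $\tilde{\mathcal{B}}_3$ is in fact slightly cleaner than the paper, which instead re-runs the $6\times 6$ cubic system in a separate step (e) before invoking linear independence of $\mathcal{B}_2$.
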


\begin{proof}
	For the spanning property of $\mathcal{B}_4,$ identify the products of the swap matrices with the corresponding permutations in $S_n.$ Note that the only permutations that can be written as a product of at most four transpositions that we omitted from $\mathcal{B}_4$ are the $5$-cycles of the form $(i\,m\,l\,k\,j)$ for $i<j<k<l<m.$ But these are in the span of $\mathcal{B}_4$ by the degree-reducing relation \eqref{eq:qdit-rel} with $d=4.$

	The proof of the linear independence of $\mathcal{B}_4$ again relies on the properties of the $4\times 4$ Gell-Mann matrices presented in Subsection \ref{subsec: gm4}.
	
	Suppose there is a linear dependence among the elements of $\mathcal{B}_4.$ 
	Then, using \eqref{eq: sw-gm 4}, express each of the appearing terms w.r.t.~the basis \eqref{eq: basis-gm d} consisting of different combinations of tensor products of the fifteen $4 \times 4$ Gell-Mann matrices.
	
	(a) First, consider the elements in \eqref{eq: type 8} and observe that for any choice of $i < j, k < l,  p < q, r < s$ with  $(i,j) < (k,l) < (p,q)<(r,s),$ the highest order terms in the expansion of Swap$_{ij}$Swap$_{kl}$Swap$_{pq}$Swap$_{rs}$ are of the form
	$$
	\lambda_a^i\, \lambda_a^j\, \lambda_b^k\, \lambda_b^l\, \lambda_c^p\, \lambda_c^q \, \lambda_d^r\, \lambda_d^s, \quad \quad a,b,c,d \in \{1,\ldots,15\}.
	$$
	By the product formula \eqref{eq: lambda4 x}, the elements in \eqref{eq: type 8} are the only ones that have terms of order eight and more precisely, for any choice of $i < j, k < l,  p < q, r < s$ with  $(i,j) < (k,l) < (p,q)<(r,s),$ the element Swap$_{ij}$Swap$_{kl}$Swap$_{pq}$Swap$_{rs}$  has the term $\lambda_1^i\, \lambda_1^j\, \lambda_2^k\, \lambda_2^l\, \lambda_3^p\, \lambda_3^q\lambda_4^r\, \lambda_4^s,$
	which does not appear in the expansion of any other element of $\mathcal{B}_4.$ Hence, the coefficients next to each of the elements in \eqref{eq: type 8} have to be zero. 
	
	(b) Now the elements in \eqref{eq: type 7} are the only ones in $\mathcal{B}_4$ that have terms of order seven (meaning with seven different positions $i,j,k,p,q,r,s$) in their expansion. For any choice of  $i < j < k, p<q, r<s$ with $p,q,r,s \notin \{i,j,k\}, (p,q)<(r,s),$ the highest order terms in the expansion of Swap$_{ij}$Swap$_{jk}$Swap$_{pq}$Swap$_{rs}$ are of the form
	$$
	\lambda_a^i\, \lambda_a^j\, \lambda_b^j\, \lambda_b^k\, \lambda_c^p\, \lambda_c^q\, \lambda_d^r\, \lambda_d^s, 
	\quad \quad a,b,c,d \in \{1,\ldots,15\},
	$$
	while for Swap$_{ij}$Swap$_{ik}$Swap$_{pq}$Swap$_{rs}$ they are of the form
	$$
	\lambda_a^i\, \lambda_a^j\, \lambda_b^i\, \lambda_b^k\, \lambda_c^p\, \lambda_c^q\, \lambda_d^r\, \lambda_d^s = \lambda_a^j\,\lambda_a^i\,  \lambda_b^i\, \lambda_b^k\, \lambda_c^p\, \lambda_c^q\, \lambda_d^r\, \lambda_d^s ,
	\quad \quad a,b,c,d \in \{1,\ldots,15\}.
	$$
	As noted, the structure constants $f^{a,b,c}$ and $d^{a,b,c}$ with $a,b,c\in\{1,\ldots,8\}$ coincide with the structure constants pertaining to the $3 \times 3$ Gell-Mann matrices.
	Hence, similar to part (b) of the proof of Proposition \ref{prop: indep3}, for any choice of $i < j < k, p<q, r<s$ with $p,q,r,s \notin \{i,j,k\}, (p,q)<(r,s),$  the element Swap$_{ij}$Swap$_{jk}$Swap$_{pq}$Swap$_{rs}$ has in its expansion
	\begin{align*}
		\lambda_2^i\, \lambda_2^j\, \lambda_3^j\, \lambda_3^k\, \lambda_5^p\, \lambda_5^q\, \lambda_9^r\, \lambda_9^s + \lambda_1^i\, \lambda_1^j\, \lambda_6^j\, \lambda_6^k\, \lambda_5^p\, \lambda_5^q\, \lambda_9^r\, \lambda_9^s =
		\mathfrak{i}\,\lambda_2^i\, \lambda_1^j\, \lambda_3^k\, \lambda_5^p\, \lambda_5^q\, \lambda_9^r\, \lambda_9^s + \frac{1}{2} \lambda_1^i\, \lambda_4^j\, \lambda_6^k\, \lambda_5^p\, \lambda_5^q\, \lambda_9^r\, \lambda_9^s.
	\end{align*}
	But Swap$_{ij}$Swap$_{ik}$Swap$_{pq}$ has in its expansion
	$$
	\lambda_1^i\, \lambda_1^j\, \lambda_3^i\, \lambda_3^k\, \lambda_5^p\, \lambda_5^q\, \lambda_9^r\, \lambda_9^s + \lambda_1^i\, \lambda_1^j\, \lambda_6^i\, \lambda_6^k\, \lambda_5^p\, \lambda_5^q\, \lambda_9^r\, \lambda_9^s =
	-\mathfrak{i}\,\lambda_2^i\, \lambda_1^j\, \lambda_3^k\, \lambda_5^p\, \lambda_5^q\, \lambda_9^r\, \lambda_9^s + \frac{1}{2} \lambda_1^i\, \lambda_4^j\, \lambda_6^k\, \lambda_5^p\, \lambda_5^q\, \lambda_9^r\, \lambda_9^s.
	$$
	By the same argument as in part (b) of the proof of Proposition \ref{prop: indep3}, all the coefficients next to the elements in \eqref{eq: type 7} are zero.
	
	(c) Now the elements in \eqref{eq: type 6-a} and \eqref{eq: type 6-b} are the only ones with terms of order six (i.e., with six different positions denoted by either $i,j,k,l,p,q$ or $i,j,k,p,q,r$) in their expansion.

	For the elements in \eqref{eq: type 6-a}, given any choice of $i < j < k < l,\, p<q$ with $p,q \notin \{i,j,k,l\},$ the highest order terms are of the form
	\begin{align*}
		\boldsymbol{\lambda} \cdot \lambda_a^p\, \lambda_a^q,
	\end{align*}
	where $\boldsymbol{\lambda}$ is a highest order term of an element of $\mathcal{B}_{ijkl}$ as in the proof of Proposition \ref{prop: indep3}.
	For the elements in \eqref{eq: type 6-b}, given any choice of $i < j < k, p<q<r, i<p$ with $\{i,j,k\} \cap \{p,q,r\} = \emptyset,$ the highest order terms of the three appearing types are
	\begin{align*}
		\textnormal{Swap}_{ij} \textnormal{Swap}_{jk} \textnormal{Swap}_{pq} \textnormal{Swap}_{qr}: \quad  
		\lambda_a^i\, \lambda_a^j\, \lambda_b^j\, \lambda_b^k\, \lambda_c^p\, \lambda_c^q\, \lambda_d^q\, \lambda_d^r\\
		\textnormal{Swap}_{ij} \textnormal{Swap}_{jk} \textnormal{Swap}_{pq} \textnormal{Swap}_{pr}: \quad  
		\lambda_a^i\, \lambda_a^j\, \lambda_b^j\, \lambda_b^k\, \lambda_c^p\, \lambda_c^q\, \lambda_d^p\, \lambda_d^r\\
		\textnormal{Swap}_{ij} \textnormal{Swap}_{ik} \textnormal{Swap}_{pq} \textnormal{Swap}_{pr}: \quad  
		\lambda_a^i\, \lambda_a^j\, \lambda_b^i\, \lambda_b^k\, \lambda_c^p\, \lambda_c^q\, \lambda_d^p\, \lambda_d^r.
	\end{align*}

First, consider \eqref{eq: type 6-b}. For fixed $i < j < k, p<q<r, i<p$ with $\{i,j,k\} \cap \{p,q,r\} = \emptyset,$ denote the coefficients next to the elements 
$$\textnormal{Swap}_{ij} \textnormal{Swap}_{jk} \textnormal{Swap}_{pq} \textnormal{Swap}_{qr}, \textnormal{Swap}_{ij} \textnormal{Swap}_{jk} \textnormal{Swap}_{pq} \textnormal{Swap}_{pr}, \textnormal{Swap}_{ij} \textnormal{Swap}_{ik} \textnormal{Swap}_{pq} \textnormal{Swap}_{pr}$$ 
by $\beta_1,\beta_2$ and $\beta_3$ respectively. Clearly, these are the only elements in \eqref{eq: type 6-b} whose highest order terms involve precisely the positions $i,j,k,p,q,r.$
So comparing the coefficients next to the basis elements $\lambda_2^i\, \lambda_{10}^j\, \lambda_{11}^k\, \lambda_1^p\, \lambda_4^q\, \lambda_7^r, \, \lambda_2^i\, \lambda_9^j\, \lambda_{11}^k\, \lambda_1^p\, \lambda_5^q\, \lambda_7^r$ and $\lambda_2^i\, \lambda_9^j\, \lambda_{11}^k\, \lambda_1^p\, \lambda_4^q\, \lambda_7^r$ give the following equations
	\begin{align*}
		&\lambda_2^i\, \lambda_{10}^j\, \lambda_{11}^k\, \lambda_1^p\, \lambda_4^q\, \lambda_7^r : \quad -\beta_1 -\beta_2 + \beta_3 =0,\\
		&\lambda_2^i\, \lambda_9^j\, \lambda_{11}^k\, \lambda_1^p\, \lambda_5^q\, \lambda_7^r : \quad -\beta_1 +\beta_2 + \beta_3 =0,\\
		&\lambda_2^i\, \lambda_9^j\, \lambda_{11}^k\, \lambda_1^p\, \lambda_4^q\, \lambda_7^r
: \quad -\beta_1 +\beta_2-\beta_3 =0.
	\end{align*}
 The above system has a unique solution $\beta_1=\beta_2=\beta_3=0.$
 Note that each of the highest order terms of the elements in \eqref{eq: type 6-a} necessarily has one of the Gell-Mann matrices $\lambda$ repeated twice. So the coefficients next to the basis elements $\lambda_2^i\, \lambda_{10}^j\, \lambda_{11}^k\, \lambda_1^p\, \lambda_4^q\, \lambda_7^r, \, \lambda_2^i\, \lambda_9^j\, \lambda_{11}^k\, \lambda_1^p\, \lambda_5^q\, \lambda_7^r$ and $\lambda_2^i\, \lambda_9^j\, \lambda_{11}^k\, \lambda_1^p\, \lambda_4^q\, \lambda_7^r$ in the expansion of the elements in  \eqref{eq: type 6-a} are zero.
 We conclude that the coefficients next to the elements
 $$\textnormal{Swap}_{ij} \textnormal{Swap}_{jk} \textnormal{Swap}_{pq} \textnormal{Swap}_{qr}, \textnormal{Swap}_{ij} \textnormal{Swap}_{jk} \textnormal{Swap}_{pq} \textnormal{Swap}_{pr}, \textnormal{Swap}_{ij} \textnormal{Swap}_{ik} \textnormal{Swap}_{pq} \textnormal{Swap}_{pr},$$ 
are zero and by analogy, the coefficients next to all the elements in  \eqref{eq: type 6-b} are zero.

Now consider \eqref{eq: type 6-a}. For fixed $i < j < k < l,\, p<q$ with $p,q \notin \{i,j,k,l\},$ denote the coefficients next to the elements
\begin{align*}
		&\textnormal{Swap}_{ij} \textnormal{Swap}_{jk} \textnormal{Swap}_{kl}\textnormal{Swap}_{pq}, \ 
		\textnormal{Swap}_{ij} \textnormal{Swap}_{jl} \textnormal{Swap}_{kl}\textnormal{Swap}_{pq},\ 
		\textnormal{Swap}_{ik} \textnormal{Swap}_{jk} \textnormal{Swap}_{jl}\textnormal{Swap}_{pq},\\
		&\textnormal{Swap}_{ik} \textnormal{Swap}_{kl} \textnormal{Swap}_{jl}\textnormal{Swap}_{pq},\ 
		\textnormal{Swap}_{il} \textnormal{Swap}_{jl} \textnormal{Swap}_{jk}\textnormal{Swap}_{pq},\
		\textnormal{Swap}_{il} \textnormal{Swap}_{kl} \textnormal{Swap}_{jk}\textnormal{Swap}_{pq}
\end{align*}
by $\alpha_1,\ldots,\alpha_6$ respectively.
Clearly, these are the only elements in \eqref{eq: type 6-a} whose highest order terms involve precisely the positions $i,j,k,l,p,q.$
So comparing the coefficients next to several basis elements 
of the form $\lambda_a^i\, \lambda_b^j\, \lambda_c^k\, \lambda_d^l\, \lambda_e^p\, \lambda_e^q$ gives the following equations
\begin{equation}\label{eq: sys}
\begin{split}
	&\lambda_{11}^i\, \lambda_1^j\, \lambda_5^k\, \lambda_{13}^l\, \lambda_2^p\, \lambda_2^q:\quad \beta_1-\beta_6=0,\\
	&\lambda_{11}^i\,\lambda_1^j\, \lambda_{13}^k\, \lambda_5^l\, \lambda_2^p\, \lambda_2^q:\quad \beta_2 - \beta_5=0,\\
	&\lambda_{11}^i\, \lambda_5^j\, \lambda_1^k\, \lambda_{13}^l\, \lambda_2^p\, \lambda_2^q:\quad \beta_3 - \beta_4=0,\\
	&\lambda_{11}^i\, \lambda_1^j\, \lambda_3^k\, \lambda_9^l\, \lambda_2^p\, \lambda_2^q: \quad \beta_1 - \beta_3 - \beta_4=0,\\
	&\lambda_{11}^i\, \lambda_3^j\, \lambda_1^k\, \lambda_9^l\, \lambda_2^p\, \lambda_2^q: \quad -\beta_1 + \beta_3 + \beta_4 - \beta_6 =0,\\
	&\lambda_{11}^i\, \lambda_9^j\, \lambda_1^k\, \lambda_3^l\, \lambda_2^p\, \lambda_2^q:\quad -\beta_1 + \beta_2 + \beta_5 - \beta_6 =0.
\end{split}
\end{equation}
The above system has a unique solution $\beta_1= \cdots \beta_6 =0.$ Hence, by analogy, the coefficients next to all of the elements in  \eqref{eq: type 6-a} are zero.

(d) The quartics in \eqref{eq: type 55} are now the only ones in $\mathcal{B}_4$ that have terms of degree five.
For fixed $i<j<k<l<m$ denote the coefficients next to the quartics in \eqref{eq: quart} by $\gamma_1,\ldots,\gamma_{23}$ respectively.
Clearly, these are the only elements in \eqref{eq: type 55} whose highest order terms involve precisely the positions $i,j,k,l,m.$ Similar to before, we now compare the coefficients next to several basis elements 
of the form $\lambda_a^i\, \lambda_b^j\, \lambda_c^k\, \lambda_d^l\, \lambda_e^m$ to get a system of equations. By symmetry note that  if  $\lambda_a^i\, \lambda_b^j\, \lambda_c^k\, \lambda_d^l\, \lambda_e^m$ is, e.g., a term in the expansion of $\textnormal{Swap}_{ij} \textnormal{Swap}_{jk} \textnormal{Swap}_{kl} \textnormal{Swap}_{lm}$ and $\sigma$ is a permutation of the positions $i,j,k,l,m,$ then $\lambda_a^{\sigma(i)}\, \lambda_b^{\sigma(j)}\, \lambda_c^{\sigma(k)}\, \lambda_d^{\sigma(l)}\, \lambda_e^{\sigma(m)}$ is a term in the expansion of $$\textnormal{Swap}_{\sigma(i),\sigma(j)} \textnormal{Swap}_{\sigma(j),\sigma(k)} \textnormal{Swap}_{\sigma(k),\sigma(l)} \textnormal{Swap}_{\sigma(l),\sigma(m)}.$$
By this observation it is easy to quickly deduce several equations, e.g.,
\begin{equation}\label{eq: t5-eq}
	\begin{split}
	&\lambda_{11}^i\, \lambda_9^j\, \lambda_4^k\, \lambda_6^l\, \lambda_3^m: \ 
	-\gamma_1 + \gamma_{21}+\gamma_4 = 0,\\
	&\lambda_{11}^i\, \lambda_9^j\, \lambda_4^k\, \lambda_3^l\, \lambda_6^m: \ 
	-\gamma_2 + \gamma_{22} - \gamma_{23} +\gamma_3 = 0,\\
	&\lambda_{11}^i\, \lambda_9^j\, \lambda_6^k\, \lambda_4^l\, \lambda_3^m: \ 
	-\gamma_3 + \gamma_{22} - \gamma_{19} +\gamma_6 = 0,\\
	&\lambda_{11}^i\, \lambda_9^j\, \lambda_6^k\, \lambda_3^l\, \lambda_4^m: \ 
	-\gamma_4 + \gamma_{21} - \gamma_{20} +\gamma_5 = 0,\\
	&\lambda_{11}^i\, \lambda_9^j\, \lambda_3^k\, \lambda_4^l\, \lambda_6^m: \ 
	-\gamma_5 + \gamma_{20} +\gamma_1 = 0.
	\end{split}
\end{equation}
    The remaining $19$ equations are computed by analogy.
	To apply this technique correctly it is important to keep in mind that  the equation given by $\lambda_{11}^i\, \lambda_9^j\, \lambda_4^k\, \lambda_6^l\, \lambda_3^m$ is in fact $-\gamma_1 + \gamma_{21} - \gamma_{24} +\gamma_4 = 0,$ where $\gamma_{24} = 0$ is the coefficient corresponding to $\textnormal{Swap}_{im} \textnormal{Swap}_{lm} \textnormal{Swap}_{kl} \textnormal{Swap}_{jk}$ (that we excluded from the basis, but that we need to keep in mind to calculate the following equations correctly). 
	Combining all the $24$ equations \eqref{eq: t5-eq} with the ones obtained by considering the terms
	\begin{align*}
		&\lambda_{11}^i\, \lambda_1^j\, \lambda_{15}^k\, \lambda_3^l\, \lambda_9^m,\ 
		\lambda_{11}^i\, \lambda_1^j\, \lambda_{15}^k\, \lambda_9^l\, \lambda_3^m,\ 
		\lambda_{11}^i\, \lambda_1^j\, \lambda_3^k\, \lambda_{15}^l\, \lambda_9^m,\\
		&\lambda_{11}^i\, \lambda_1^j\, \lambda_9^k\, \lambda_3^l\, \lambda_{15}^m,\ 
		\lambda_{11}^i\, \lambda_{15}^j\, \lambda_1^k\, \lambda_9^l\, \lambda_3^m,\ 
		\lambda_{11}^i\, \lambda_9^j\, \lambda_{15}^k\, \lambda_1^l\, \lambda_3^m,
	\end{align*}
	we get a system with unique solution $\gamma_1=\cdots= \gamma_{23}=0$. Hence, by analogy, the coefficients next to all of the quartics in \eqref{eq: type 55} are zero as well.
	
	(e) Now the cubics \eqref{eq: cubics} are the only ones with terms of degree four in their expansion. But comparing the coefficients next to the basis elements 
	\begin{align*}
		&\lambda_{11}^i\, \lambda_1^j\, \lambda_5^k\, \lambda_{13}^l, \ \lambda_{11}^i\,\lambda_1^j\, \lambda_{13}^k\, \lambda_5^l,\
		\lambda_{11}^i\, \lambda_5^j\, \lambda_1^k\, \lambda_{13}^l,\ 
		\lambda_{11}^i\, \lambda_1^j\, \lambda_3^k\, \lambda_9^l,\ 
		\lambda_{11}^i\, \lambda_3^j\, \lambda_1^k\, \lambda_9^l,\ 
		\lambda_{11}^i\, \lambda_9^j\, \lambda_1^k\, \lambda_3^l
	\end{align*}
	 gives back the system \eqref{eq: sys}, which implies that the coefficients next to all of the cubics are zero as well.
	We are left with a linear dependence involving terms of degree at most two, which contradicts linear independence of $\mathcal{B}_2$ as shown in Proposition \ref{propo: 2-indep}.
\end{proof}

\section{Explicit eigenvalue computation for clique Hamiltonians of general $d$-row partitions}\label{appC}

Here, we give an alternative and elementary method to compute the character value $\chi_\lambda ((i\,j))$ of a transposition $(i \,j)$ using the Murnaghan-Nakayama rule \cite[Section 9.9.1]{Probook}. Using formula \eqref{eq:etafromchi}, we then again compute, for any partition $\lambda \vdash n$ with $d$ rows, 
 the eigenvalue $\eta_\lambda$ from  Lemma \ref{lemma scalar}.

\subsection{The Murnaghan-Nakayama rule}\label{sec:MNR}

	To compute the value of the character $\chi_\lambda$ at the conjugacy class of transpositions we use the non-recursive version of the Murnaghan-Nakayama rule. It states that
	\begin{align}\label{eq:mn}
		\chi_\lambda \big( (i\ j) \big) =
		\sum_T (-1)^{\het(T)},
	\end{align}
where the sum runs over all tableaux $T$ of shape $\lambda$ that satisfy:
\begin{itemize}
	\item the boxes of $T$ are filled with numbers $1,2,\ldots, n-1$ such that $1$ appears twice and all the others appear once,
	\item the numbers in every row and column are weakly increasing.
\end{itemize}
Here $\het(T)$ is one if both $1^\prime$s are in the first column and it is zero if they are in the first row. 

Clearly, the set of all such tableaux is in bijection with the set of all standard Young tableaux of shape $\lambda.$ This means that 
\begin{align*}
	\chi_\lambda \big( (i\ j) \big) =\ 
	&\#(\text{standard Young tableaux with }1 \text{ and } 2 \text{ in the first row}) - \\
	&\#(\text{standard Young tableaux with }1 \text{ and } 2 \text{ in the first column}).
\end{align*}

It is easy to see that the standard Young Tableaux with $1$ and $2$ in the first row are in bijection with the standard Young tableaux of the shape that we get by removing the first two boxes from the first row of $\lambda.$ Similarly, the standard Young Tableaux with $1$ and $2$ in the first column are in bijection with the standard Young tableaux of the shape that we get by removing the first two boxes from the first column of $\lambda.$
To count these we use the hook-length formula for skew shaped Young tableaux \cite{Nar,MPP}. 

For a partition $\mu$ denote by $[\mu]$ the diagram (i.e., tableaux without numbers) of shape $\mu.$ If $[\mu]$ is the diagram that we cut out of $[\lambda],$ then denote the resulting skew shaped diagram by $[\lambda / \mu]$ and the number of all standard Young tableaux of shape $\lambda / \mu$ by $f^{\lambda / \mu}.$ For a box  $(i,j) \in [\mu]$ such that the boxes $(i+1,j),(i,j+1),(i+1,j+1) \in [\lambda]$ are not in $[\mu],$ we say that an \textbf{excited move} with respect to $\lambda$ is the replacement of $[\mu]$ by 
$$
\Big([\mu] \backslash \{(i\ j)\}\Big) \cup  \{(i+1\ j+1)\}.
$$
An \textbf{excited diagram} of shape $\lambda / \mu$ is a diagram contained in $[\lambda]$ that can be obtained from $[\mu]$ with a series of excited moves. Denote by $\mathcal{E}(\lambda / \mu)$ the set of all excited diagrams of shape $\lambda / \mu$ (here $\mathcal{E}(\lambda / \mu)$ is empty unless $[\mu] \subseteq [\lambda]$) (see Example \ref{ex:excited}).

The hook-length formula for skew shaped tableaux \cite[Theorem 1.2]{MPP} states that
\begin{align}\label{eq: hl-s}
f^{\lambda / \mu} = |\lambda / \mu |! \sum_{D \in \mathcal{E}(\lambda / \mu)} \prod_{ij} \frac{1}{\text{hook}_{\lambda / \mu}(i,j)}.
\end{align}

In our case, $\chi_\lambda((i,j))=f^{\lambda / \mu}$ where $\mu=(1,1)$ is the two-row partition of two.
Hence, if $(i_1,j_1)$ and $(i_2,j_2)$ are the two distinguished squares in an excited diagram $D$ of shape $\lambda / \mu,$ the summand in \eqref{eq: hl-s} pertaining to $D$ can be expressed as
\begin{align}\label{eq: 2-hook}
	\frac{\chi_\lambda(e)}{n!} \, \text{hook}_{\lambda / \mu}(i_1,j_1)\,\text{hook}_{\lambda / \mu}(i_2,j_2).
\end{align}

\begin{example}\label{ex:excited}
For $n=7$ let $\lambda = (4,3)$ and $\mu = (2).$ Then there are three excited diagrams of shape $\lambda/\mu$,
\begin{center}
\ytableausetup{smalltableaux}
 \begin{ytableau}
 *(gray) & *(gray) & & \\
  & &\\
\end{ytableau}
\qquad
\ytableausetup{smalltableaux}
 \begin{ytableau}
 *(gray) & & & \\
  & & *(gray) \\
\end{ytableau}
\qquad
\ytableausetup{smalltableaux}
 \begin{ytableau}
 {} &  & & \\
  & *(gray)&*(gray)\\
\end{ytableau}
\end{center}
\end{example}

\subsection{Clique eigenvalue computation}
Next we present an alternative method to prove  Proposition \ref{prop:etaD}, which we restate below.
\begin{repprop}{prop:etaD}
Let $\eta_\lambda$ be as in Lemma \ref{lemma scalar}. For any $\lambda \vdash n$ with rows $\lambda_1\ge\cdots\ge \lambda_d$,
\begin{equation}\label{eq:etaD-OLD}
\eta_\lambda=
n^2 +\frac{d(d-1)(2d-1)}{6}
-\sum_{k=1}^d\big( \lambda_k - (k-1)\big)^2.
\end{equation}
\end{repprop}

\begin{proof}[Sketch of proof] Let $\lambda \vdash n$ be a partition with at most $d$ rows $\lambda_1\ge\cdots\ge \lambda_d \geq 0.$
To calculate $\eta_\lambda$ through the formula \eqref{eq:etafromchi} we first compute the value $\chi_\lambda((i\,j))$ of the character $\chi_\lambda$ at the conjugacy class of transpositions. 
For that we use the Murnaghan-Nakayama rule as presented in Subsection \ref{sec:MNR}. 

First  consider the excited diagrams of shape $\lambda / \mu$ with $\mu = (2).$ We only tackle the most general case with $\lambda_d \geq d+1$ (i.e., the case that gives the most excited diagrams).  By a similar reasoning as before, the box $(1,1),$ can be moved only after the box $(1,2)$ had already been moved. If $(1,1)$ is, say, in position $(k,k)$ for $k=1,\ldots,d,$ this means that $(1,2)$ must have been moved to one of the $d-k+1$ positions $(k,k+1),\ldots,(d,d+1).$ So if $(1,1)$ is moved to $(k,k)$ and $(1,2)$ is moved to $(j,j+1)$ for some $j=k,\ldots,d,$ the contribution to \eqref{eq: hl-s} of this excited diagram computed via \eqref{eq: 2-hook} is 
\begin{align*}
	\frac{\chi_\lambda(e)}{n!} \, \text{hook}_{\lambda / (2)}(k,k)\,\text{hook}_{\lambda / (2)}(j,j+1) = \frac{\chi_\lambda(e)}{n!} \, (\lambda_k-(k-1)+d-k)(\lambda_j-j+d-j).
\end{align*}
Hence, 
$$
f^{\lambda/(2)} = \frac{\chi_\lambda(e)}{n(n-1)}  \sum_{k=1}^d \sum_{j=k}^d (\lambda_k-2k+d+1)(\lambda_j-2j+d).
$$

For the excited diagrams of shape $\lambda / \mu$ with $\mu = (1,1),$ we again only consider the case with $\lambda_d \geq d-1,$ which gives the most excited diagrams. In this case, if the box $(1,1)$ is moved to position $(k,k)$ for some $k=1,\ldots,d-1,$ the box $(2,1)$ must have been moved to one of the $d-k$ positions $(k+1,k),\ldots,(d,d-1).$
So if $(1,1)$ is moved to $(k,k)$ and $(2,1)$ is moved to $(j+1,j)$ for some $j=k,\ldots,d-1,$ the contribution to \eqref{eq: hl-s} of this excited diagram computed via \eqref{eq: 2-hook} is  now
\begin{align*}
	&\frac{\chi_\lambda(e)}{n!} \, \text{hook}_{\lambda / (1,1)}(k,k)\,\text{hook}_{\lambda / (1,1)}(j+1,j) \\ & =\frac{\chi_\lambda(e)}{n!} \, (\lambda_k-(k-1)+d-k)(\lambda_j-(j-2)+d-j).
\end{align*}
Hence, 
$$
f^{\lambda/(1,1)} = \frac{\chi_\lambda(e)}{n(n-1)}  \sum_{k=1}^{d-1} \sum_{j=k+1}^d (\lambda_k-2k+d+1)(\lambda_j-2j+d+2).
$$
Putting all together we get
\begin{align*}
	\eta_\lambda & = 2 \binom{n}{2} \bigg(1-\frac{\chi_\lambda\big((i\ j)\big)}{\chi_\lambda(e)}\bigg) =
	2 \binom{n}{2} \bigg(1-\frac{f^{\lambda/(2)}-f^{\lambda/(1,1)}}{\chi_\lambda(e)}\bigg)\\ & =
	n(n-1)-\sum_{k=1}^d \sum_{j=k}^d (\lambda_k-2k+d+1)(\lambda_j-2j+d)\\ \phantom{{}={}}  & +
	\sum_{k=1}^{d-1} \sum_{j=k+1}^d(\lambda_k-2k+d+1)(\lambda_j-2j+d+2)\\
& = n^2-2n -\sum_{k=1}^d( \lambda_k^2 - 2 k \lambda_k)\\
& = n^2 +\frac{d(d-1)(2d-1)}{6}
-\sum_{k=1}^d\big( \lambda_k - (k-1)\big)^2.\qedhere
\end{align*}
\end{proof}


\begin{thebibliography}{99}

\bibitem[ACGNORVL23+]{grinko}
R. Allerstorfer, M. Christandl, D. Grinko, I. Nechita, M. Ozols, D. Rochette, P. Verduyn Lunel, \textit{Monogamy of highly symmetric states},  
preprint \texttt{arXiv:2309.16655}. 

\bibitem[AMG20]{AMG20}
A. Anshu, D. Gosset, K. Morenz, \textit{Beyond Product State Approximations for a Quantum Analogue of Max Cut}, 15th Conference on the Theory of Quantum Computation, Communication and Cryptography (TQC 2020), Leibniz International Proceedings in Informatics 158, 7:1--7:15, Schloss Dagstuhl-Leibniz-Zentrum f\"ur Informatik, 2020.

 \bibitem[Aue94]{Aue}
A. Auerbach, 
\textit{Interacting electrons and quantum magnetism}, 
Springer-Verlag, New York, 1994.

\bibitem[BAMC09]{BAMC}
K. S. D. Beach, F. Alet, M. Mambrini, S. Capponi, 
\textit{${\rm SU}(N)$ Heisenberg
model on the square lattice: A continuous-$N$ quantum Monte Carlo study},
Phys. Rev. B 80 (2009) 184401.

\bibitem[BPT13]{BPT13}
G. Blekherman, P.A. Parrilo, R.R. Thomas, Rekha R. (ed.),
\textit{Semidefinite optimization and convex algebraic geometry}, Society for Industrial and Applied Mathematics (SIAM), 2013.

\bibitem[BDZ08]{BDZ}
I. Bloch, J. Dalibard, W. Zwerger, 
\textit{Many-body physics with ultracold gases}, 
Rev. Mod. Phys. 80 (2008) 885.

\bibitem[BCEHK24]{BCEHK24} A.~Bene Watts, A.~Chowdhury, A.~Epperly, J.~W.~Helton, I.~Klep,
 \textit{Relaxations and Exact Solutions to Quantum Max Cut via the Algebraic Structure of Swap Operators}, Quantum 8 (2024) 1352, 88pp.

\bibitem[BCP97]{magma}
W. Bosma, J. Cannon, C. Playoust, 
\textit{The Magma algebra system. I. The user language}, 
J. Symbolic Comput. 24 (1997) 235--265.


\bibitem[BGKT19]{BGKT19}
 S. Bravyi, D. Gosset, R. K\"{o}nig, K. Temme, \textit{Approximation
 algorithms for quantum many-body problems}, J. Math. Phys. 60 (2019) 032203.

\bibitem[BH13]{BH13}
F. G. S. L. Brandao, A. W. Harrow, \textit{Product-state approximations to
 quantum ground states}, Proceedings of the Forty-Fifth Annual ACM Symposium
 on Theory of Computing (2013) 871--880.

\bibitem[BKP16]{BKP16}
S. Burgdorf, I. Klep, J. Povh,
\textit{Optimization of polynomials in non-commuting variables},
Springer, 2016.

\bibitem[CJKKW23+]{CJKKW23+}
C.~Carlson, Z.~Jorquera, A.~Kolla, S.~Kordonowy, S.~Wayland, \textit{Approximation Algorithms for Quantum Max-$d$-Cut},  
preprint \texttt{arXiv:2309.10957}. 

\bibitem[dCP76]{dCP}
C. de Concini, C. Procesi: 
\textit{A characteristic free approach to invariant theory}, 
Adv. Math. 21 (1976) 330--354.

\bibitem[DLTW08]{DLTW}
A. C. Doherty, Y.-C. Liang, B. Toner, S. Wehner: 
\textit{The quantum moment problem and bounds on entangled multi-prover games}, 
in Proceedings of the 2008 IEEE 23rd Annual Conference on Computational Complexity, 199--210, IEEE Computer Society, 2008.

\bibitem[FJ97]{FJ}
A. Frieze, M. Jerrum, 
\textit{Improved approximation algorithms for MAX $k$-CUT and MAX BISECTION}, 
Algorithmica 18 (1997), 67--81.

\bibitem[Fro00]{Fro01} 
G.~Frobenius,
\textit{\"Uber die Charaktere der symmetrischen Gruppe}, 
Berl. Ber. (1900) 516--534.

\bibitem[FH91]{FH}
W. Fulton, J. Harris, 
{\it Representation theory. A first course},
Grad. Texts in Math., Read. Math. 129,
Springer-Verlag, New York, 1991.

 \bibitem[GH24+]{GH}
S. Gharibian, C. Hecht, 
\textit{Hardness of approximation for ground state problems},  
preprint \texttt{arXiv:2411.04874}. 

\bibitem[GK12]{GK12}
S. Gharibian, J. Kempe, \textit{Approximation algorithms for QMA
complete problems}, SIAM J. Comput.~41 (2012) 1028--1050.

\bibitem[GP19]{GP19}
S. Gharibian, O. Parekh, \textit{Almost Optimal Classical Approximation Algorithms for a Quantum Generalization of Max-Cut}, 
Approximation, Randomization, and Combinatorial Optimization. Algorithms and Techniques (APPROX/RANDOM 2019), 
Leibniz International Proceedings in Informatics 145, 31:1-31:17, Schloss Dagstuhl–Leibniz-Zentrum f\"ur Informatik, 2019.


\bibitem[GL20]{GL20}
E. Giannelli, S. Law, \emph{Sylow branching coefficients for symmetric groups}, 
J. Lond. Math. Soc 103 (2020) 697--728.

\bibitem[GSS25+]{gribling}
S. Gribling, L. Sinjorgo, R. Sotirov:
\textit{Improved approximation ratios for the Quantum Max-Cut problem on general, triangle-free and bipartite graphs}, 
preprint \texttt{arXiv:2504.11120}. 

\bibitem[GW95]{GW95}
M. X. Goemans, D. P. Williamson, \textit{Improved approximation algorithms
 for maximum cut and satisfiability problems using semidefinite programming},
 J.~ACM, 42 (1995) 1115--1145.

\bibitem[Gri+]{Gri}
D. Grinberg:
\textit{Rook sums in the symmetric group algebra},  preprint \texttt{arXiv:2507.22386}.

\bibitem[GNW21]{GNW21}
D. Gross, S. Nezami, M. Walter,
\textit{Schur-Weyl duality for the Clifford group with applications: property testing, a robust Hudson theorem, and de Finetti representations},
Commun. Math. Phys. 385 (2021) 1325--1393.

\bibitem[HM17]{HM17}
A. W. Harrow, A. Montanaro, \textit{Extremal eigenvalues of local
 Hamiltonians}, Quantum 1 (2017) 6. 

\bibitem[HO22]{HO22}
  M. B. Hastings, R. O’Donnell, \textit{Optimizing strongly interacting
 fermionic Hamiltonians}, Proceedings of the 54th Annual ACM SIGACT Symposium on Theory of Computing  (2022) 776--789.

\bibitem[HKL20]{HKL}
D. Henrion, M. Korda, J.-B. Lasserre: 
\textit{The moment-SOS hierarchy: Lectures in probability, statistics, computational geometry, control and nonlinear PDEs}, Optimization and Its Applications 4, World Scientific, 2020.

\bibitem[HM04]{HM}
J. W. Helton, S. A. McCullough,
\textit{A Positivstellensatz for non-commutative polynomials},
Trans. Amer. Math. Soc. 356 (2004) 3721--3737.

\bibitem[HJ85]{HJ}
R. A. Horn, C. R. Johnson, 
\textit{Matrix analysis},
Cambridge University Press, Cambridge, 1985.

\bibitem[HTPG24+]{HTPG}
F. Huber, K. Thompson, O. Parekh, S. Gharibian, \textit{Second order cone relaxations for quantum Max Cut},  
preprint \texttt{arXiv:2411.04120}. 

\bibitem[IO02]{IO}
V. Ivanov, G. Olshanski, 
\textit{Kerov’s central limit theorem for the Plancherel measure on Young diagrams}, 
Symmetric Functions 2001: Surveys of Developments and Perspectives, NATO Science Series 74, Springer, Dordrecht, 2002.

\bibitem[JSZ22+]{jakab}
D. Jakab, Adrian Solymos, Z. Zimbor{\'a}s: 
\textit{Extendibility of Werner states}, preprint \texttt{arXiv:2208.13743}.

\bibitem[KT07]{KT}
N. Kawashima, Y. Tanabe, 
\textit{Ground States of the ${\rm SU}(N)$ Heisenberg
Model}, 
Phys. Rev. Lett. 98 (2007) 057202.

 \bibitem[KKR06]{KKR06}
  J. Kempe, A. Yu. Kitaev, O. Regev, \textit{The complexity of the
 local hamiltonian problem},
 SIAM J. Comput.~35 (2006) 1070--1097.

\bibitem[Kho02]{Kho02}
S. Khot, \textit{On the power of unique 2-prover 1-round games}, Proceedings of
 the thiry-fourth annual ACM symposium on Theory of computing (2002) 767--775.

 \bibitem[KLMS12]{KLMS12}
M. Khovanov, A. D. Lauda, M. Mackaay, M. Stošić, \emph{Extended graphical calculus for categorified quantum $\mathfrak{sl}(2)$}, Mem. Amer. Math. Soc. 1029, 87p., 2012.

\bibitem[Kin23]{Kin23}
 R. King,
 \textit{An Improved Approximation Algorithm for Quantum
 Max-Cut on Triangle-Free Graphs}, Quantum 7 (2023), 1180.

 \bibitem[KSV02]{KSV02}
 A. Yu. Kitaev, A. H. Shen, M. N. Vyalyi, \textit{Classical and
 quantum computation}, Graduate studies in mathematics 47, American
 Mathematical Society, Providence, RI, 2002. 

  \bibitem[LVV15]{LVV15}
 Z. Landau, U. Vazirani, T. Vidick, \textit{A polynomial time algorithm
 for the ground state of one-dimensional gapped local Hamiltonians}, Nat.
 Phys.~11 (2015) 566--569. 
 
\bibitem[Lsa08]{Lassalle}
M. Lassalle, 
\textit{An explicit formula for the characters of the symmetric group}, 
Math. Ann. 340 (2008) 383--405.

\bibitem[Lse01]{Las}
J.-B. Lasserre, 
\textit{Global optimization with polynomials and the problem of moments},
SIAM J. Optim. 11 (2001) 796--817.

\bibitem[Lau09]{Lau}
M. Laurent: 
\textit{Sums of squares, moment matrices and optimization over polynomials}, 
in Emerging applications of algebraic geometry 157--270, IMA Vol. Math. Appl. 149, Springer, New York, 2009.

\bibitem[Lee22]{Lee22}
 E. Lee, \textit{Optimizing quantum circuit parameters via SDP}, 33rd International Symposium on Algorithms and Computation  (ISAAC 2022), Leibniz International Proceedings in Informatics 248, 48:1-48:16, Schloss Dagstuhl-Leibniz-Zentrum f\"ur Informatik, 2022. 

 \bibitem[LM16]{LM16}
  F. Levkovich-Maslyuk, \textit{The Bethe Ansatz}, J. Phys. A: Math. Theor.~49 (2016) 323004. 

\bibitem[LM62]{LM62}
E. Lieb, D. Mattis, \textit{Ordering energy levels of interacting
 spin systems}, J. Math. Phys.~3 (1962) 749--751.

\bibitem[MP14]{nauty}
B.D. McKay, A. Piperno,
\textit{Practical Graph Isomorphism, II},
J. Symbolic Comput. 60 (2014) 94--112.

\bibitem[Mor86]{Mor86}
F.~Mora, \textit{Groebner bases for noncommutative polynomial rings},
  Algebraic algorithms and error correcting codes (Grenoble, 1985), 353--362, Lecture Notes in Comput. Sci. 229, Springer,
 Berlin,
 1986.

\bibitem[MPP18]{MPP}
A. Morales, I. Pak, G. Panova, 
\textit{Hook formulas for skew shapes I. q-analogues and bijections},
J. Combin. Theory Ser. A 154 (2018), 350--405.

\bibitem[Nar14]{Nar}
H. Naruse, 
\textit{Schubert calculus and hook formula},
73rd S{\'e}m. Lothar. Combin., Strobl, Austria, 2014. Talk slides available at \texttt{https://www.mat.univie.ac.at/~slc/wpapers/s73vortrag/
naruse.pdf}.

\bibitem[NPA08]{NPA08}
M. Navascu\'es, S. Pironio, A. Ac\'in,
\textit{A convergent hierarchy of semidefinite programs characterizing the set of quantum correlations},
New J. Phys. 10 (2008) 073013.

\bibitem[Nie23]{Nie23}
J. Nie,
\textit{ Moment and polynomial optimization}, Society for Industrial and Applied Mathematics (SIAM), 2023.


\bibitem[NC10]{NC}
M. A. Nielsen, I. L. Chuang, 
\textit{Quantum computation and quantum information},
Cambridge University Press, Cambridge, 2000.

\bibitem[PT21]{PT21}
 O. Parekh, K. Thompson, \textit{Application of the Level-2 Quantum
 Lasserre Hierarchy in Quantum Approximation Algorithms}, 48th International Colloquium on Automata, Languages, and Programming (ICALP 2021),  Leibniz International Proceedings in Informatics 198, 102:1--102:20, Schloss Dagstuhl-Leibniz-Zentrum f\"ur Informatik, 2021.

 \bibitem[PT22+]{PT22}
 O. Parekh, K. Thompson, \textit{An Optimal Product-State Approximation for 2-Local Quantum Hamiltonians with Positive Terms}, preprint \texttt{arXiv:2206.08342}.

\bibitem[PM17]{PM17}
S. Piddock, A. Montanaro, \textit{The complexity of antiferromagnetic interactions and 2D lattices}, Quantum Inf. Comput.~17 (2017) 636--672.

\bibitem[PM21]{PM21}
S. Piddock, A. Montanaro, 
\textit{Universal qudit Hamiltonians}, 
Commun. Math. Phys. 382 (2021) 721--771.


\bibitem[PNA10]{NPA}
S. Pironio, M. Navascu\'es, A. Ac\'in, 
\textit{Convergent relaxations of polynomial optimization problems with noncommuting variables}, 
SIAM J. Optim. 20 (2010) 2157--2180.


\bibitem[Pro07]{Probook}
C. Procesi, 
\textit{Lie groups. An approach through invariants and representations}, 
Universitext, Springer, New York, 2007.

\bibitem[Pro21]{Pro21}
C. Procesi, 
\textit{A note on the Formanek Weingarten function}, 
Note Mat. 41 (2021) 69--109.

\bibitem[Rag08]{Rag08}
 P. Raghavendra, \textit{Optimal algorithms and inapproximability results for
 every CSP?}, Proceedings of the fortieth annual ACM symposium on Theory
 of computing (2008) 245--254.

 \bibitem[Rag09]{Rag09}
  P. Raghavendra, \textit{Approximating NP-hard problems efficient algorithms and
 their limits}, University of Washington, 2009, PhD thesis.

\bibitem[Sag01]{Sag01}
B.~E.~Sagan, \textit{The Symmetric Group: Representations, Combinatorial Algorithms, and Symmetric Functions},
 Graduate Texts in Mathematics 203, Springer, 2001.


\bibitem[Sta99]{Sta99}
R.P. Stanley, \textit{Enumerative combinatorics}, Vol. 2,  Cambridge University Press, Cambridge,
 1999.

\bibitem[TRZ23+]{TRZ}
J. Takahashi, C. Rayudu, C. Zhou, R. King, K. Thompson,
O. Parekh, 
\textit{An SU(2)-symmetric semidefinite programming hierarchy for Quantum Max Cut}, preprint
\texttt{arXiv:2307.15688}

\bibitem[VK81]{VK}
A. M. Vershik, S. V. Kerov,
\textit{Asymptotic theory of the characters of a symmetric group}, 
Funktsional. Anal. i Prilozhen. 15 (1981) 15--27.

\bibitem[WHSK20]{WHSK}
Y. Wang, Z. Hu, B. C. Sanders, S. Kais,
\textit{Qudits and High-Dimensional Quantum Computing},
Front. Phys. 8 (2020) 589504.


\end{thebibliography}
\end{document}